\newcommand{\subalign}[1]{%
  \vcenter{%
    \Let@ \restore@math@cr \default@tag
    \baselineskip\fontdimen10 \scriptfont\tw@
    \advance\baselineskip\fontdimen12 \scriptfont\tw@
    \lineskip\thr@@\fontdimen8 \scriptfont\thr@@
    \lineskiplimit\lineskip
    \ialign{\hfil$\m@th\scriptstyle##$&$\m@th\scriptstyle{}##$\crcr
      #1\crcr
    }%
  }
}
\def\munderbar#1{\underline{\sbox\tw@{$#1$}\dp\tw@\z@\box\tw@}}
\newcommand*{\Scale}[2][4]{\scalebox{#1}{$#2$}}%
\definecolor{dkgreen}{rgb}{0,0.6,0}
\newcommand{\sss}[1]{\scriptscriptstyle {#1}}
\newcommand{\tp}{^{\sss{T}}}
\newcommand{\sign}[2][{}]{\ensuremath{\text{sign}^{#1}\left(#2\right)}}
\newcommand{\twocol}[2]
 {\begin{bmatrix}
    {#1} \\
    {#2}
  \end{bmatrix}}
\newcommand{\threerow}[3]
 {\begin{bmatrix} {#1}&{#2}&{#3} \end{bmatrix}}
\newcommand{\Rn}[1][n]{\mathbb{R}^{\sss{#1}}}
\newcommand{\Sn}[1][n-1]{\mathcal{S}^{\sss{#1}}}
\newcommand{\Stwo}{\Sn[2]}
\newcommand{\norm}[1]{\left\|{#1}\right\|}
\newcommand{\sk}[1]{\mathcal{S}\left({#1}\right)}
\newcommand{\OP}[1]{\Pi\left({#1}\right)}
\DeclareMathOperator{\diag}{diag}
\newtheorem{thm}{Theorem}
\newtheorem{cor}[thm]{Corollary}
\newtheorem{prop}[thm]{Proposition}
\newtheorem{prob}{Problem}
\newtheorem{rem}[thm]{Remark}
\newtheorem{defn}{Definition}
\newtheorem{exa}{Example}
\newcommand{\amb}{\mathbf{a}}
\newcommand{\bmb}{\mathbf{b}}
\newcommand{\emb}{\mathbf{e}}
\newcommand{\fmb}{\mathbf{f}}
\newcommand{\nmb}{\mathbf{n}}
\newcommand{\umb}{\mathbf{u}}
\newcommand{\vmb}{\mathbf{v}}
\newcommand{\xmb}{\mathbf{x}}
\newcommand{\ymb}{\mathbf{y}}
\newcommand{\zmb}{\mathbf{z}}
\newcommand{\Smb}{\mathbf{S}} 
\newcommand{\Tmb}{\mathbf{T}}
\newcommand{\Idmat}{\text{I}} 							
\newcommand{\Rmat}{\ensuremath{\mathcal{R}}} 	
\newcommand{\zvec}{\mathbf{0}} 				
\newcommand{\onesvec}{\mathbf{1}} 			
\newcommand{\SO}[1][2]{\mathcal{SO}(#1)}
\newcommand{\embi}[1]{\emb_{\sss{#1}}}
\newcommand{\nmbt}[1]{{}_{\sss{#1}}\nmb}
\newcommand{\nmbh}[1]{{}_{\sss{\bar{#1}}}\nmb}
\newcommand{\Rmati}[1]{\Rmat_{\sss{#1}}}
\newcommand{\Omi}[1]{\bm{\omega}_{\sss{#1}}}
\newcommand{\OmiDot}[1]{\dot{\bm{\omega}}_{\sss{#1}}}
\newcommand{\Tmbi}[1][i]{ \Tmb_{\sss{#1}}}
\newcommand{\Ri}[1][i]{\Rmat_{\sss{#1}}}
\newcommand{\RiDot}[1][i]{\dot{\Rmat}_{\sss{#1}}}
\newcommand{\nmbi}[1][i]{\nmb_{\sss{#1}}}
\newcommand{\nmbiDot}[1][i]{\dot{\nmb}_{\sss{#1}}}
\newcommand{\nmbibody}[1][i]{\bar{\nmb}_{\sss{#1}}}
\newcommand{\Oi}[1][i]{\bm{\omega}_{\sss{#1}}}
\newcommand{\OiDot}[1][i]{\dot{\bm{\omega}}_{\sss{#1}}}
\def\paperextended{1}
\def\papercolor{1}
\numberwithin{equation}{section}
\newcommand{\review}[1]{{\color{black}#1}}
\newcommand{\ReviewAdded}[1]{{\color{black}#1}}
\let\oldbibliography\thebibliography
\renewcommand{\thebibliography}[1]{\oldbibliography{#1}
\setlength{\itemsep}{-3pt}} 
\newcommand{\qed}{ }
\begin{document}
	
	\abovedisplayskip=4pt plus 4pt minus 4pt
	\belowdisplayskip=4pt plus 4pt minus 4pt

	\title{Family of Controllers for Attitude Synchronization on the Sphere}
	
	\date{\today}
	\author{
		Pedro~O.~Pereira and Dimos~V.~Dimarogonas
		\thanks{ %
			The authors are with the ACCESS Linnaeus Center, School of Electrical Engineering, KTH Royal Institute of Technology, SE-100 44, Stockholm, Sweden. %
			\texttt{\{ppereira, dimos\}@kth.se}. This work was supported by the EU H2020 (AEROWORKS) project and the Swedish Research Council (VR).
		}
	}
	
	\maketitle	

	\begin{abstract}
		In this paper we study a family of controllers that guarantees attitude synchronization for a network of agents in the unit sphere domain, i.e., $\mathcal{S}^{\sss{2}}$. 
We propose distributed continuous controllers for elements whose dynamics are controllable, i.e., control with torque as command, and which can be implemented by each individual agent without the need of a common global orientation frame among the network, i.e., it requires only local information that can be measured by each individual agent from its own orientation frame. 
The controllers are constructed as functions of distance functions in $\mathcal{S}^{\sss{2}}$, and we provide conditions on those distance functions that guarantee that \emph{i)} a synchronized network of agents is locally asymptotically stable for an arbitrary connected network graph; \emph{ii)} a synchronized network is asymptotically achieved for almost all initial conditions in a tree network graph.
When performing synchronization along a principal axis, we propose controllers that do not require full torque, but rather torque orthogonal to that principal axis; 
while for synchronization along other axes, the proposed controllers require full torque. 
We also study the equilibria configurations that come with specific types of network graphs.
The proposed strategies can be used in attitude synchronization of swarms of under actuated rigid bodies, such as satellites.
	\end{abstract}

	\section{Introduction}
	Decentralized control in a multi-agent environment has been a topic of active research for the last decade, with applications in large scale robotic systems. 
Attitude synchronization in satellite formations is one of those applications~\cite{lawton2002synchronized}, where the control goal is to guarantee that a network of fully actuated rigid bodies acquires a common attitude. Coordination of underwater vehicles in ocean exploration missions can also be casted as an attitude synchronization problem~\cite{leonard2007collective}.

In the literature of attitude synchronization, different solutions for consensus in the special orthogonal group are found~\cite{lawton2002synchronized,sarlette2009autonomous,dimarogonas2009leader,bondhus2005leader,krogstad2006coordinated,cai2014leader,nair2007stable,thunberg2014distributed,song2015distributed}, which focus on \emph{complete} attitude synchronization. 
In this paper, we focus on \emph{incomplete} attitude synchronization, which has not received the same attention: in this scenario, each rigid body has a main direction and the global objective is to guarantee alignment of all rigid bodies' main directions; the space orthogonal to each main direction can be left free of actuation or controlled to accomplish some other goals. 
Complete attitude synchronization requires more measurements when compared to incomplete  attitude synchronization, and it might be the case that a rigid body is not fully actuated but rather only actuated in the space orthogonal to a specific direction, in which case incomplete  attitude synchronization is still feasible.
\review{\label{com:c1}\emph{Incomplete} attitude synchronization is also denoted synchronization on the sphere in~\cite{olfati2006swarms,moshtagh2007distributed,paley2009stabilization,li2014unified,dorfler2014synchronization,sarlette2008global}, where the focus has been on kinematic or point mass dynamic agents, i.e., dynamical agents without moment of inertia.
}

In~\cite{dimarogonas2009leader}, attitude control in a leader-follower network of rigid bodies has been studied, with the special orthogonal group being parametrized with Modified Rodrigues Parameters. 
The proposed solution guarantees attitude synchronization for connected graphs, but it requires all rigid bodies to be aware of a common and global orientation frame. 
In~\cite{bondhus2005leader,krogstad2006coordinated}, a controller for a single-leader single-follower network is proposed that guarantees global attitude synchronization at the cost of introducing a discontinuity in the control laws. 
In~\cite{cai2014leader}, attitude synchronization in a leader-follower network is accomplished by designing a non-linear distributed observer for the leader.
In~\cite{chung2009application,chung2013phase}, a combination of a tracking input and a synchronization input is used; 
the tracking input adds robustness if connectivity is lost and it is designed in the spirit of leader-following, where the leader is a virtual one and it encapsulates a desired trajectory;
however, this strategy requires all agents to be aware of a common and global reference frame.
In another line of work, in~\cite{nair2007stable,sarlette2009autonomous}, attitude synchronization is accomplished without the need of a common orientation frame among agents. Additionally, in~\cite{sarlette2009autonomous}, a controller for switching and directed network topologies is proposed, and local stability of consensus in connected graphs is guaranteed, provided that the control gain is sufficiently high. 
%
In~\cite{lawton2002synchronized}, attitude synchronization is accomplished with controllers based on behavior based approaches and for a bidirectional ring topology. The special orthogonal group is parametrized with quaternions, and the proposed strategy also requires a common attitude frame among agents. In~\cite{6111242}, a quaternion based controller is proposed that guarantees a synchronized network of rigid bodies is a global equilibrium configuration, provided that the network graph is acyclic. This comes at the cost of having to design discontinuous (hybrid) controllers.
A discrete time protocol for complete synchronization of kinematic agents is found in~\cite{tron2012intrinsic}.
The authors introduce the notion of \emph{reshaping function}, and a similar concept is presented in this manuscript. 
The protocol provides almost global convergence to a synchronized configuration, which relies on proving that all other equilibria configurations, apart from the equilibria configuration where agents are synchronized, are unstable.
In~\cite{thunberg2014distributed}, controllers for complete attitude synchronization and for switching topologies are proposed, but this is accomplished at the kinematic level, i.e., by controlling the agents' angular velocity (rather than their torque). This work is extended in~\cite{song2015distributed} by providing controllers at the torque level, and similarly to~\cite{lawton2002synchronized}, stability properties rely of high gain controllers. 

\review{\label{com:c2}In~\cite{olfati2006swarms,moshtagh2007distributed}, incomplete synchronization of kinematic agents on the sphere is studied, with a constant edge weight function for all edges.
In particular, in~\cite{moshtagh2007distributed}, incomplete synchronization is used for accomplishing a flocking behavior, where a group of agents moves in a common direction.
In~\cite{paley2009stabilization}, dynamic agents, which move at constant speed on a sphere, are controlled by a state feedback control law that steers their velocity vector so as to force the agents to attain a collective circular motion; 
since the agents are mass points, the effect of the moment of inertia is not studied.
In~\cite{li2014unified}, dynamic point mass agents, constrained to move on a sphere, are controlled to form patterns on the sphere, by constructing attractive and repelling forces;
in the absence of repelling forces, synchronization is achieved.
Also, the closed-loop dynamics of these agents are invariant to rotations, or symmetry preserving, as those in~\cite{olfati2006swarms,moshtagh2007distributed}, in the sense that two trajectories, whose initial condition -- composed of position and velocity -- differs only on a rotation, are the same at each time instant apart from the previous rotation.
In our framework this property does not hold, since our dynamic agents have a moment of inertia, unlike the agents in~\cite{olfati2006swarms,moshtagh2007distributed,li2014unified}, which is another novelty of the paper in hand.
}

We propose a distributed control strategy for synchronization of elements in the unit sphere domain. 
%
%
The controllers for accomplishing synchronization are constructed as functions of distance functions (or \emph{reshaping functions} as denoted in~\cite{tron2012intrinsic}), and, in order to exploit results from graph theory, we impose a condition on those distance functions that will restrict them to be invariant to rotations of their arguments.
As a consequence, the proposed controllers can be implemented by each agent without the need of a common orientation frame. 
We restrict the proposed controllers to be continuous, which means that a synchronized network of agents cannot be a global equilibrium configuration, since $\mathcal{S}^{\sss{2}}$ is a non-contractible set~\cite{liberzon2003switching}. 
\ReviewAdded{Our main contributions lie in proposing for the first time a controller that does not require full torque when performing synchronization along a principal axis, but rather torque orthogonal to that axis;
in finding conditions on the distance functions that guarantee that a synchronized network is locally asymptotically stable for arbitrary connected network graphs, and that guarantee that a synchronized network is achieved for almost all initial conditions in a tree graph;
in providing explicit domains of attraction for the network to converge to a synchronized network;
and in characterizing the equilibria configurations for some general, yet specific, types of network graphs. }
A preliminary version of this work was submitted to the 2015 IEEE Conference on Decision and Control~\cite{PreliminaryS2}. With respect to this preliminary version, this paper presents significantly more details on the derivation of the main theorems and provides additional results. In particular, the concept of cone has been modified, with a clearer intuitive interpretation; the proof for the proposition that supports the result on local stability of the synchronized network has been simplified; further details on the condition imposed on the distance functions are provided; additional examples on possible distance functions, and their properties, are presented; and supplementary simulations are provided which further illustrate the theoretical results.

The remainder of this paper is structured as follows. In Section~\ref{sec:ProblemStatement}, the problem statement is described; in Section~\ref{sec:ProposedSolution}, the proposed solution is presented; in Sections~\ref{eq:TreeGraphs} and~\ref{eq:NonTreeGraphs}, convergence to a synchronized network is discussed for tree and arbitrary graphs, respectively; and, in Section~\ref{sec:Simulations}, simulations are presented that illustrate the theoretical results.
	
		\section{Notation}
%
$\zvec_{\sss{n}} \in \Rn[n]$ and $\onesvec_{\sss{n}} \in \Rn[n]$ denote the zero column vector and the column vector with all components equal to 1, respectively; 
$\bar{\onesvec}_{\sss{n}} \in \Rn[n]$ denotes a column vector in $\{(x_{\sss{1}},\cdots,x_{\sss{n}}) \in \Rn[n] : x_{\sss{i}} = \pm 1, \forall i = \{1,\cdots,n\}\} $;
when the subscript ${n}$ is omitted, the dimension $n$ is assumed to be of appropriate size. $\Idmat_{\sss{n}} \in \Rn[n \times n]$ stands for the identity matrix, and we omit its subscript when $n=3$.  
%
%
The matrix $\sk{\cdot} \in  \Rn[3\times 3]$ is a skew-symmetric matrix and it satisfies $\sk{\amb} \, \bmb = \amb \times \bmb$, for any $\amb,\bmb\in \Rn[3]$.
The map $\Pi:\{\xmb \in \Rn[3]: \xmb\tp \xmb =1 \} \mapsto\Rn[3\times 3]$, defined as $\OP{\xmb} = \Idmat - \xmb \xmb\tp$, yields a matrix that represents the orthogonal projection operator onto the subspace perpendicular to $\xmb$.	
We denote the Kronecker product between $A \in \Rn[m\times n] $ and $B \in \Rn[s\times t] $ by $A \otimes B \in \Rn[m \, s\times n \, t]$.
Given $A_{\sss{1}}, \cdots, A_{\sss{n}} \in \Rn[m \times m]$, for some $n,m \in \mathbb{N}$, we denote $A = A_{\sss{1}} \oplus \cdots \oplus A_{\sss{n}} \in \Rn[nm \times nm]$ (direct sum of matrices) as the block diagonal matrix with block diagonal entries $A_{\sss{1}}$ to $A_{\sss{n}}$.
%
%
Given $\amb,\bmb \in \Rn[n]$, $\amb = \pm \bmb \Leftrightarrow \amb = \bmb \vee \amb = -\bmb$; 
additionally, we say $\amb \ne \zvec$ and $\bmb \ne \zvec$ have the same direction if there exists $\lambda \in \Rn[{}]$ such that $\bmb = \lambda \amb$.
%
%
We say a function $f:\Omega_{\sss{1}} \mapsto \Omega_{\sss{2}}$ is of class $\mathcal{C}^{\sss{n}}$, or equivalently $f \in \mathcal{C}^{\sss{n}}(\Omega_{\sss{1}},\Omega_{\sss{2}})$,  if its first $n+1$ derivatives (i.e., $f^{\sss{(0)}}$, $f^{\sss{(1)}}$, $\cdots$, $f^{\sss{(n)}}$) exist and are continuous on $\Omega_{\sss{1}}$.
Finally, given a set $\mathcal{H}$, we use the notation $|\mathcal{H}|$ for the cardinality of $\mathcal{H}$.

		\section{Problem Statement}
		\label{sec:ProblemStatement}
		We consider a group of $N$ agents, indexed by the set $\mathcal{N} = \{ 1, \cdots, N\}$, operating in the unit sphere domain, i.e., in $\Stwo = \{\xmb \in \Rn[3]: \xmb\tp \xmb =1 \}$.
The agents' network is modeled as an undirected static graph, $\mathcal{G} = \{\mathcal{N}, \mathcal{E}\}$, with $\mathcal{N}$ as the vertices' set indexed by the team members, and $\mathcal{E}$ as the edges' set. 
For every pair of agents $(i,j) \in \mathcal{N} \times (\mathcal{N}\backslash\{ i \})$, that are \emph{aware of} and can measure each other's relative attitude, we  say that agent $j$ is a neighbor of agent $i$, and vice-versa; also, we denote $\mathcal{N}_i \subset \mathcal{N}$ as the neighbor set of agent $i$.

Each agent $i$ has its own orientation frame (w.r.t. an unknown inertial orientation frame), represented  by $\Rmati{i} \in \SO[3]$. 
Let the unit vector $\nmbi[i] \in \Stwo $ be a direction along agent's $i$ orientation, i.e., $\nmbi[i] = \Rmati{i} \bar{\nmb}_{\sss{i}}$, where $\bar{\nmb}_{\sss{i}} \in \Stwo$ is a constant unit vector, specified in the agent's~$i$ body orientation frame, and known by agent~$i$ and its neighbors.
In this paper, the goal of attitude synchronization is not that all agents share the same \emph{complete} orientation, i.e., that $\Rmati{1} = \cdots = \Rmati{N}$, but rather that all agents share the same orientation along a specific direction, i.e., that $\nmbi[1] = \cdots = \nmbi[N] \Leftrightarrow \Rmati{1}\bar{\nmb}_{\sss{1}} = \cdots = \Rmati{N} \bar{\nmb}_{\sss{N}}$. 
For example, in a group of $N$ satellites that must align their principal axis associated to the smallest moment of inertia, it follows that, for each $i \in \mathcal{N}$, $\bar{\nmb}_{\sss{i}} \in \mathcal{S}^{\sss{2}} : \exists \lambda_{\sss{i}} > 0: J_{\sss{i}} \bar{\nmb}_{\sss{i}} = \lambda_{\sss{i}} \bar{\nmb}_{\sss{i}}$ with $J_{\sss{i}}$ as the satellite's moment of inertia and with $\lambda_{\sss{i}}$ as the smallest eigenvalue of $J_{\sss{i}}$;
and that the desired synchronized network of satellites satisfies $\Rmati{1}\bar{\nmb}_{\sss{1}} = \cdots = \Rmati{N}\bar{\nmb}_{\sss{N}}$. 
Figures~\ref{fig:2RigidBodies} and~\ref{fig:2RigidBodiesExtra} illustrate  the concept of incomplete synchronization for two agents. 
Notice that agent $i$ is not aware of $\nmbi[i]$, since this is specified w.r.t. an unknown inertial orientation frame; instead, agent $i$ is aware of its own direction $\bar{\nmb}_{\sss{i}}$ -- fixed in its own orientation frame -- and the projection of its neighbors directions onto its own orientation frame. 
\begin{figure}
	\centering
	\subcapcentertrue	
	\subfigure[Two non-synchronized rigid bodies.]{
		\includegraphics[clip=true,trim=0cm 0cm 0cm 3cm,width=0.3\textwidth]
		{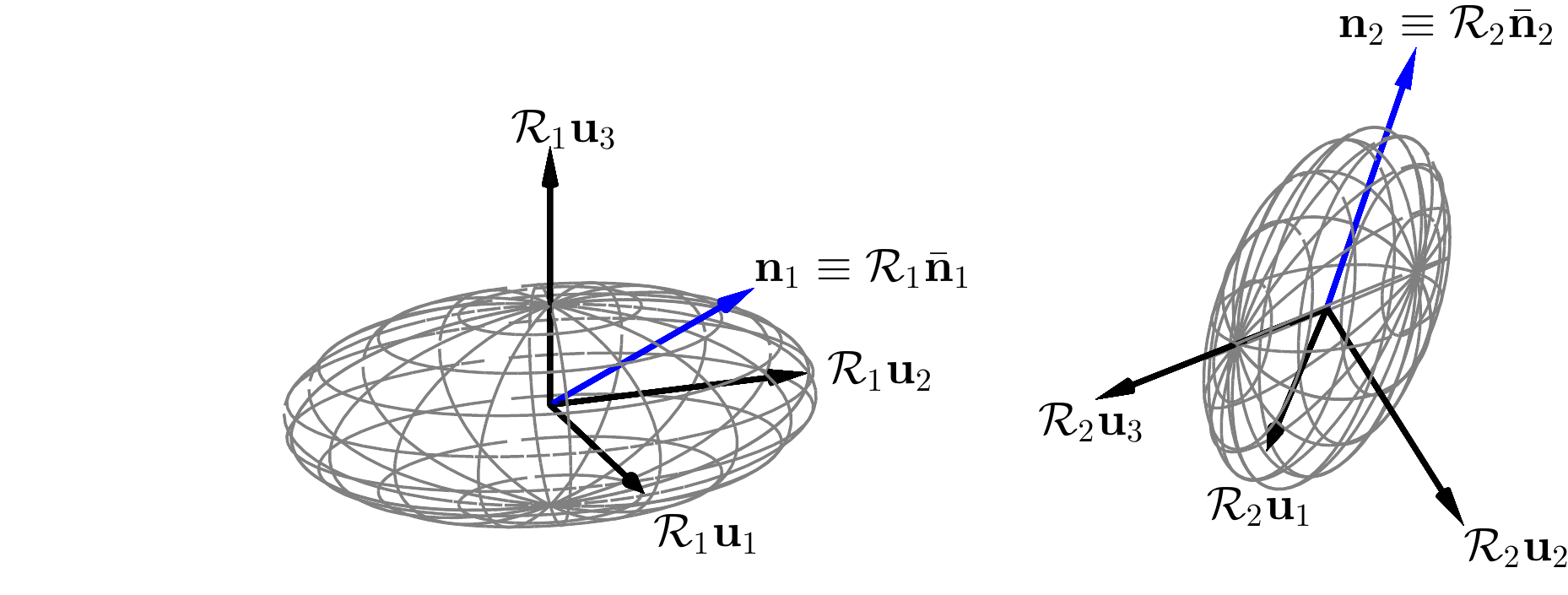}
		\label{fig:Ellipsoid1}	
	}
	\subfigure[Two synchronized rigid bodies.]{
		\includegraphics[clip=true,trim=0cm 0cm 0cm 3cm,width=0.3\textwidth]
		{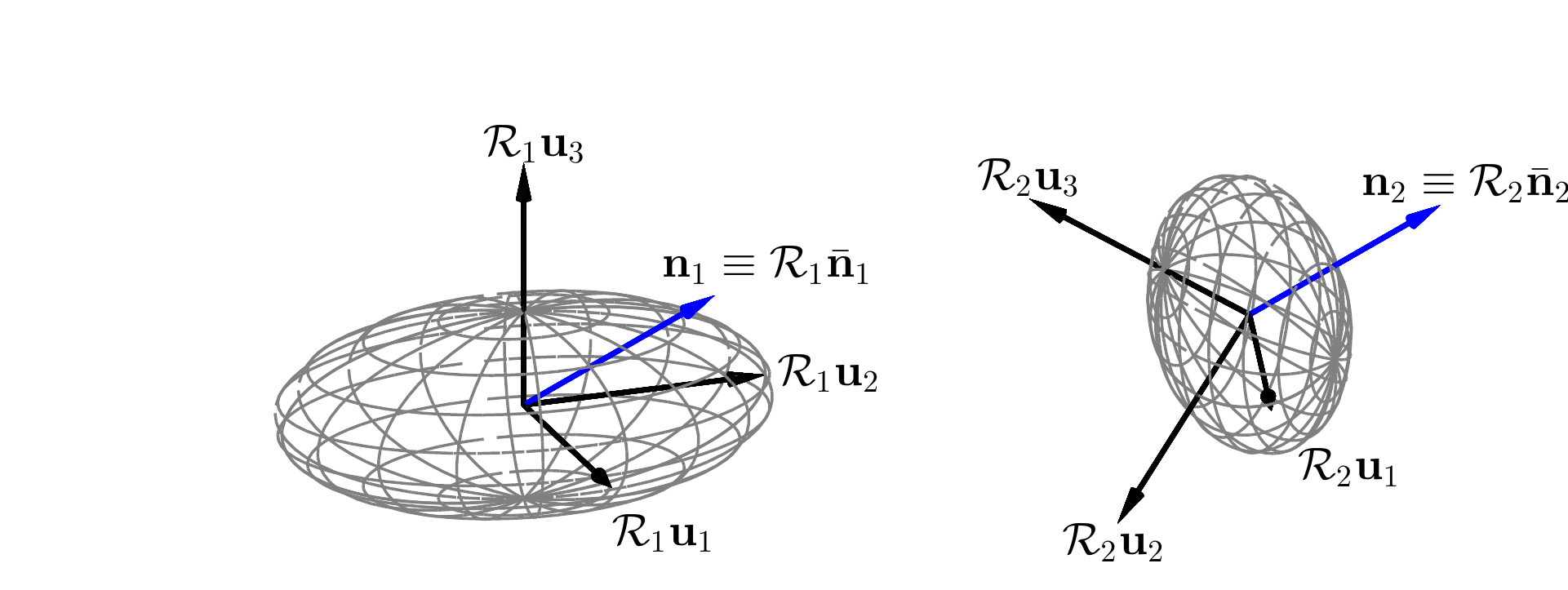}	
		\label{fig:Ellipsoid2}
	}	
	\caption{In incomplete synchronization, $n$ rigid bodies, indexed by $i = \{1,\cdots,n\}$, synchronize the unit vectors $\nmbi[i] = \Rmati{i}\bar{\nmb}_{\sss{i}}$, where $\bar{\nmb}_{\sss{i}}$ is fixed in rigid body $i$. In Figs.~\ref{fig:Ellipsoid1}-\ref{fig:Ellipsoid2}, $\bar{\nmb}_{\sss{1}} = - \bar{\nmb}_{\sss{2}} = \frac{1}{\sqrt{3}} [1 \, \, 1 \, \, 1]\tp$ ($\umb_{\sss{1}}$,$\umb_{\sss{2}}$ and $\umb_{\sss{3}}$ stand for the canonical basis vectors of $\Rn[3]$).}
	\label{fig:2RigidBodies}
\end{figure}
\begin{figure*}
	\centering
	\subcapcentertrue	
	\subfigure[Two non-synchronized rigid bodies.]{
		\includegraphics[clip=true,trim=0cm 0cm 0cm 3cm,width=0.3\textwidth]
		{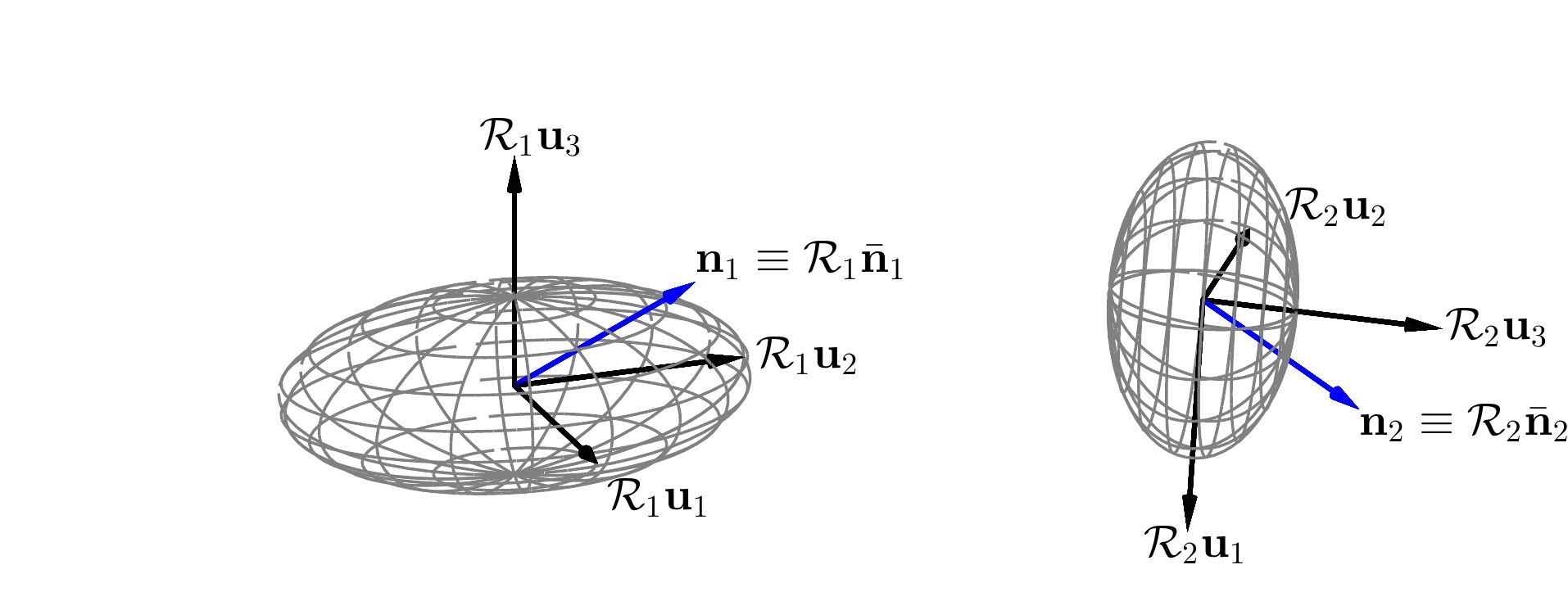}
		\label{fig:Ellipsoid1Extra}	
	}
	\subfigure[Two synchronized rigid bodies (where $\Rmati{1} = \Rmati{2}$).]{
		\includegraphics[clip=true,trim=0cm 0cm 0cm 3cm,width=0.3\textwidth]
		{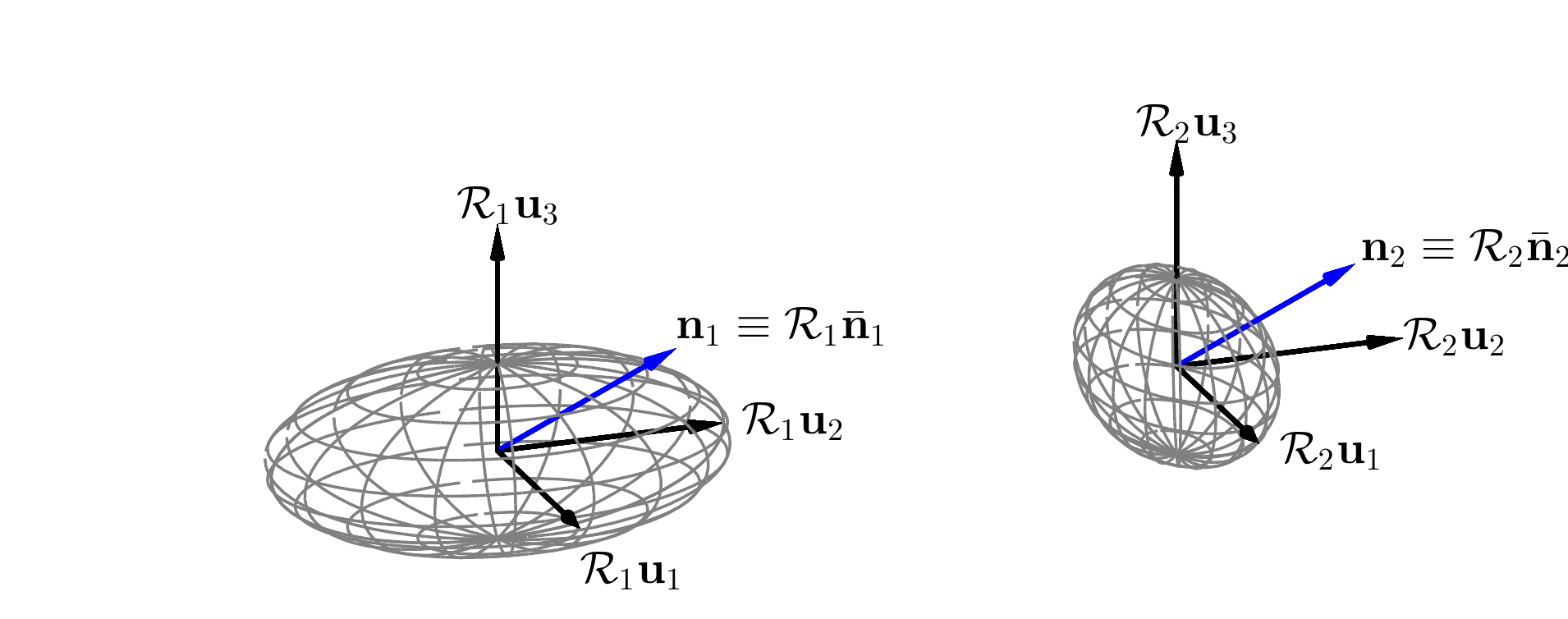}	
		\label{fig:Ellipsoid2Extra}
	}	
	\subfigure[Two synchronized rigid bodies (where $\Rmati{1} \ne \Rmati{2}$).]{
		\includegraphics[clip=true,trim=0cm 0cm 0cm 3cm,width=0.3\textwidth]
		{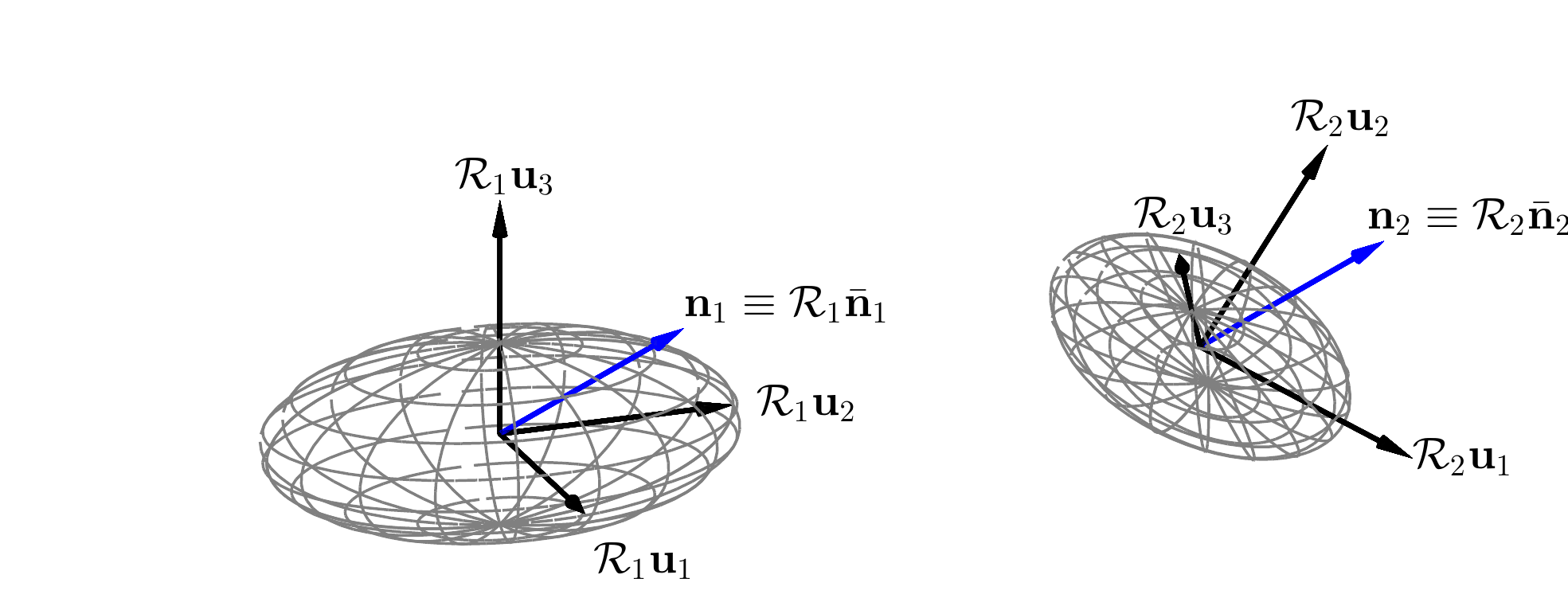}	
		\label{fig:Ellipsoid3Extra}
	}		
	\caption{In incomplete synchronization, $n$ rigid bodies, indexed by $i = \{1,\cdots,n\}$, synchronize the unit vectors $\nmbi[i] = \Rmati{i}\bar{\nmb}_{\sss{i}}$, where $\bar{\nmb}_{\sss{i}}$ is fixed in rigid body $i$. In Figs.~\ref{fig:Ellipsoid1Extra}--\ref{fig:Ellipsoid3Extra}, $\bar{\nmb}_{\sss{1}} = \bar{\nmb}_{\sss{2}} = \frac{1}{\sqrt{3}} [1 \, \, 1 \, \, 1]\tp$ ($\umb_{\sss{1}}$,$\umb_{\sss{2}}$ and $\umb_{\sss{3}}$ stand for the canonical basis vectors of $\Rn[3]$).}
	\label{fig:2RigidBodiesExtra}
\end{figure*}

Consider then any agent~$i \in \mathcal{N}$, with rotation matrix $\Ri : \Rn[]_{\sss{\ge 0}} \mapsto \SO[3]$, unit vector $\nmbi: \Rn[]_{\sss{\ge 0}} \mapsto \Stwo$ where $\nmbi(\cdot) = \Ri(\cdot) \nmbibody$, body-framed angular velocity $\Oi : \Rn[]_{\sss{\ge 0}} \mapsto \Rn[3]$, moment of inertia $J_{\sss{i}} \in \Rn[3 \times 3]$ ($J_{\sss{i}}>0$), and body frame torque $\Tmb_{\sss{i}} : \Rn[]_{\sss{\ge 0}} \mapsto \Rn[3]$.
The rotation matrix $\Ri: \Rn[]_{\sss{\ge 0}} \mapsto \SO[3]$ evolves according to
\begin{align}
	\RiDot(t)  = \fmb_{\sss{\Rmat}}(\Ri(t),\Oi(t)),
	\label{eq:RotationMatrixKinematicsaux}
\end{align}
where $\fmb_{\sss{\Rmat}} : \SO[3] \times \Rn[3] \mapsto \Rn[3 \times 3]$ is defined as
\begin{align}
	\fmb_{\sss{\Rmat}}(\Rmat,\bm{\omega})
	=
	\Rmat \sk{\bm{\omega}},
	\label{eq:RotationMatrixKinematics}
\end{align}
while each unit vector $\nmbi: \Rn[]_{\sss{\ge 0}} \mapsto \Stwo$ evolves according to $\nmbiDot(t)  = \fmb_{\sss{\nmbi}}(t,\nmbi(t),\Oi(t))$, where $\fmb_{\sss{\nmbi}} : \Rn[{}]_{\sss{\ge 0}} \times \mathcal{S}^{\sss{2}} \times \Rn[3] \mapsto \Rn[3]$ is defined as
\begin{align}
	\fmb_{\sss{\nmbi}}(t,\nmb,\bm{\omega})
	=
	\sk{\Ri(t) \bm{\omega}} \nmb.
	\label{eq:UnitVectorField}
\end{align}
The previous results follows from the fact that $\nmbi(\cdot) = \Ri(\cdot) \nmbibody$ for some constant $\nmbibody \in \Stwo$, and therefore $\RiDot(t) \nmbibody  = \sk{\Ri(t) \Oi(t)} \Ri(t) \nmbibody \Rightarrow \nmbiDot(t)  = \sk{\Ri(t) \Oi(t)} \nmbi(t)$.
Finally, the body-framed angular velocity $\Oi: \Rn[]_{\sss{\ge 0}} \mapsto \Rn[3]$ evolves according to the dynamics
\begin{align}
	& \frac{d}{dt}
	\left(
		\Ri(t) J_{\sss{i}} \Oi(t) 
	\right)
	\label{eq:AngularVelocityDynamics}	
	=
	\Ri(t) \Tmb_{\sss{i}}(t)
	\Leftrightarrow
	\\
	\Leftrightarrow
	& \OiDot(t)
	=
	J_{\sss{i}}^{\sss{-1}}
	\left(
		-\sk{\Oi(t)} J_{\sss{i}} \Oi(t) + \Tmb_{\sss{i}}(t)
	\right),
	\label{eq:AngularVelocityDynamics2}	
\end{align}
and therefore $\OiDot(t) = \fmb_{\sss{\Oi}} ( \Oi(t), \Tmb_{\sss{i}}(t) )$, where $\fmb_{\sss{\Oi}} : \Rn[3] \times \Rn[3] \mapsto \Rn[3]$ is defined as
\begin{align}
	\fmb_{\sss{\Oi}}
	\left(
		\bm{\omega}, \Tmb
	\right)	
	=
	J_{\sss{i}}^{\sss{-1}}
	\left(
		-\sk{\bm{\omega}} J_{\sss{i}} \bm{\omega} + \Tmb
	\right).
	\label{eq:OmegaVectorField}
\end{align}
\begin{defn}
	\normalfont
	Two unit vectors $(\nmbi[1],\nmbi[2]) \in (\Stwo)^{\sss{2}}$ are diametrically opposed if $\nmbi[1]\tp\nmbi[2] = -1$, and synchronized if $\nmbi[1]\tp\nmbi[2] = 1$. 
	A group of unit vectors $(\nmbi[1],\cdots,\nmbi[N]) \in (\Stwo)^{\sss{N}}$ is synchronized if $\nmbi[i]\tp\nmbi[j] = 1$ for all $i,j \in \{1,\cdots,N\}$. 
\end{defn}

\begin{prob}
	\label{prob:ProblemUnitVectorDynamics}
	\normalfont
	Given a group of rotation matrices $(\Rmati{1},\cdots,\Rmati{N}) : \Rn[]_{\sss{\ge 0}} \mapsto \SO[3]^{\sss{N}}$, with angular velocities $(\Omi{1},\cdots,\Omi{N}) : \Rn[]_{\sss{\ge 0}} \mapsto\Rn[3]$ and moments of inertia $J_{\sss{1}},\cdots,J_{\sss{N}}$ satisfying~\eqref{eq:RotationMatrixKinematicsaux} and~\eqref{eq:AngularVelocityDynamics2}, design distributed control laws for the torques $\{\Tmb_{\sss{i}}: \Rn[]_{\sss{\ge 0}} \mapsto \Rn[3] \}_{\sss{i\in\mathcal{N}}}  $, in the absence of a common inertial orientation frame, that guarantee that the group of unit vectors $(\nmbi[1],\cdots,\nmbi[N]) : \Rn[]_{\sss{\ge 0}} \mapsto (\Stwo)^{\sss{N}}$ is asymptotically synchronized.
\end{prob}

For the purposes of analysis, we consider the state $\xmb := (\nmb,\bm{\omega}):=((\nmbi[1],\cdots,\nmbi[N]),(\Omi{1},\cdots,\Omi{N})) : \Rn[]_{\sss{\ge 0}} \mapsto (\mathcal{S}^{\sss{2}})^{\sss{N}}  \times (\Rn[3])^{\sss{N}}$, and the control input $\Tmb := (\Tmbi[1] , \cdots , \Tmbi[N]) : \Rn[]_{\sss{\ge 0}} \mapsto (\Rn[3 ])^{\sss{N}}$;
where $\xmb(\cdot)$ evolves according to $\dot{\xmb}(t)  =  \fmb_{\sss{\xmb}}(t,\xmb(t),\Tmb(t))$ where
\begin{align}
	\hspace{-0.5cm}
	\fmb_{\sss{\xmb}}(t,\xmb,\Tmb)
	=
	(\fmb_{\sss{\nmb}}(t,\nmb,\bm{\omega}),\fmb_{\sss{\bm{\omega}}}(\bm{\omega},\Tmb)) \in \Rn[3N] \times \Rn[3N],
	\label{eq:StateVectorField}
\end{align}
with $\fmb_{\sss{\nmb}}(t,\nmb,\bm{\omega}) = (\fmb_{\sss{\nmbi[1]}}(t,\nmbi[1],\Oi[1]), \cdots, \fmb_{\sss{\nmbi[N]}}(t,\nmbi[n],\Oi[N])) \in (\Rn[3])^{\sss{N}}$ and $\fmb_{\sss{\bm{\omega}}}(\bm{\omega},\Tmb) = (\fmb_{\sss{\Oi[1]}}(\Oi[1],\Tmbi[1]), \cdots, \fmb_{\sss{\Oi[N]}}(\Oi[N],\Tmbi[N])) \in (\Rn[3])^{\sss{N}}$.
\begin{rem}
	Given our choice of state, the corresponding vector field in~\eqref{eq:StateVectorField} depends explicitly on time.
	The explicit time dependency encapsulates the effect of the space orthogonal to $\nmbi[i](\cdot)$ on the kinematics of  $\nmbiDot[i](\cdot)$.
	Therefore, in the analysis we would have to consider the effect of initializing the system at different time instants $t_{\sss{0}} \in \Rn[]$, corresponding to different initializations of the space orthogonal to $\nmbi[i](t_{\sss{0}})$.
	In general, that initialization does have an effect on a trajectory $\xmb(\cdot)$ of $\dot{\xmb}(t) = \fmb_{\sss{\xmb}}(t,\xmb(t),\Tmb(t))$, but, as shall be verified later, it does not affect the attainment of the goal defined in Problem~\ref{prob:ProblemUnitVectorDynamics}.
	Therefore, we assume that the initial time instant is always $0$, i.e., $t_{\sss{0}} = 0$.
\end{rem}

		\section{Proposed Solution}
		\label{sec:ProposedSolution}
		
		\subsection{Preliminaries}
		\label{subsec:Preliminaries}
		We first present some definitions and results from graph theory that are used in later sections~\cite{godsil2001algebraic}.
A graph $\mathcal{G} = \{\mathcal{N}, \mathcal{E}\}$ is said to be connected if there exists a path between any two vertices in $\mathcal{N}$. $\mathcal{G}$ is a tree if it is connected and it contains no cycles. An orientation on the graph $\mathcal{G}$ is the assignment of a direction to each edge $(i,j) \in \mathcal{E}$, where each edge vertex is either the tail or the head of the edge. 
For brevity, we denote $N = |\mathcal{N}|$, $M = |\mathcal{E}|$ and $\mathcal{M} = \{1, \cdots, M\}$.
Additionally, and for notational convenience in the analysis that follows, consider the sets $\mathcal{E} = \{(i,j) \in \mathcal{N} \times \mathcal{N}: j \in \mathcal{N}_i  \}$, i.e., the set of edges of the graph $\mathcal{G}$; and $\bar{\mathcal{E}} = \{(i,j) \in \mathcal{E}: j > i  \}$. For undirected network graphs, we can construct an injective function $\bar{\kappa} : \bar{\mathcal{E}} \mapsto \mathcal{M}$ from which it is possible to construct a second, now surjective, function  $\kappa : \mathcal{E} \mapsto \mathcal{M}$, which satisfies $\kappa(i,j) = \bar{\kappa}(i,j)$ when $j > i$ and $\kappa(i,j) = \bar{\kappa}(j,i)$ when $j < i$. As such, by construction, for every $(i,j) \in \mathcal{E}$, $\kappa(i,j) = \kappa(j,i)$, since we consider undirected graphs. The function $\kappa(\cdot,\cdot)$ thus assigns an edge index to every unordered pair of neighbors.
The incidence matrix $B \in \Rn[N \times M]$ of $\mathcal{G}$ is such that, for every $k\in \mathcal{M}$ and for $(i,j) = \bar{\kappa}^{\sss{-1}}(k)$, $B_{\sss{ik}} = 1$, $B_{\sss{jk}} = - 1$ and $B_{\sss{lk}} = 0$ for all $l \in \mathcal{N}\backslash\{i,j\}$. 
Finally, for each edge $k\in \mathcal{M}$  and $(i,j) = \bar{\kappa}^{\sss{-1}}(k)$, we denote $\nmbt{k} := \nmbi[i] $ and $\nmbh{k} := \nmbi[j] $, i.e., we identify an agent by its node index but also by its edges' indexes  ($\nmbt{k}$ if $\nmbi[i]$ is the tail of edge $k$, and $\nmbh{k}$ if $\nmbi[i]$ is the head of edge $k$).

\begin{prop}\label{prop:Bdefinitepositive}
	\normalfont
	If $\mathcal{G}$ is a tree, then $B\tp B $ and $(B \otimes \Idmat)\tp (B \otimes \Idmat)$ are positive definite~\cite{dimarogonas2009further}.
\end{prop}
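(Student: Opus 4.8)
The plan is to prove that $B^{\sss{T}}B$ is positive definite when $\mathcal{G}$ is a tree, and then deduce the second claim about $(B \otimes \Idmat)^{\sss{T}}(B \otimes \Idmat)$ as an easy corollary using properties of the Kronecker product. Recall that $B \in \Rn[N\times M]$, so $B^{\sss{T}}B \in \Rn[M\times M]$ is symmetric positive semidefinite automatically, since $\xmb^{\sss{T}}B^{\sss{T}}B\xmb = \norm{B\xmb}^{\sss{2}} \ge 0$ for all $\xmb \in \Rn[M]$. Hence positive definiteness is equivalent to $B$ having trivial kernel, i.e. $B\xmb = \zvec \Rightarrow \xmb = \zvec$, which in turn is equivalent to $\operatorname{rank}(B) = M$, the number of columns (edges).

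The first step is therefore to show $\operatorname{rank}(B) = M$ for a tree. I would argue that for any graph, the columns of $B$ indexed by the edges of a spanning forest are linearly independent; since a tree on $N$ vertices has exactly $M = N-1$ edges and is its own spanning tree, this gives $\operatorname{rank}(B) = N-1 = M$. One clean way to see the independence: suppose $B\xmb = \zvec$ with $\xmb = (x_{\sss{1}},\cdots,x_{\sss{M}})$, and suppose $\xmb \ne \zvec$. Because $\mathcal{G}$ is a tree, it has a leaf vertex $v$ (a vertex of degree $1$); let $k$ be the unique edge incident to $v$. The $v$-th row of $B\xmb = \zvec$ reads $\pm x_{\sss{k}} = 0$, so $x_{\sss{k}} = 0$. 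Removing $v$ and edge $k$ leaves a smaller tree, and by induction on $N$ (base case $N=1$, where $M=0$ and the statement is vacuous) we conclude $\xmb = \zvec$, a contradiction. Alternatively one can just cite~\cite{godsil2001algebraic} for the standard fact that the incidence matrix of a connected graph on $N$ vertices has rank $N-1$ and reduce via $M = N-1$ for a tree; the paper already flags a reference~\cite{dimarogonas2009further}.

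For the second claim, note that $(B\otimes\Idmat)^{\sss{T}}(B\otimes\Idmat) = (B^{\sss{T}}\otimes\Idmat)(B\otimes\Idmat) = (B^{\sss{T}}B)\otimes\Idmat$, using $(A\otimes C)^{\sss{T}} = A^{\sss{T}}\otimes C^{\sss{T}}$ and the mixed-product property $(A\otimes C)(D\otimes E) = (AD)\otimes(CE)$. The eigenvalues of a Kronecker product are the pairwise products of the eigenvalues of the factors, so the eigenvalues of $(B^{\sss{T}}B)\otimes\Idmat$ are exactly the eigenvalues of $B^{\sss{T}}B$, each repeated three times; since all of these are positive by the first part, $(B\otimes\Idmat)^{\sss{T}}(B\otimes\Idmat)$ is positive definite. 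I do not expect any serious obstacle here — the only thing to be careful about is the reduction of positive definiteness to full column rank (as opposed to full row rank) and the correct bookkeeping $M = N - 1$ for a tree; the leaf-removal induction handles both transparently.
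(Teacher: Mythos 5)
Your proof is correct. The paper does not actually supply an argument for this proposition --- it only cites~\cite{dimarogonas2009further} --- so there is no in-paper proof to compare against; your route (reduce positive definiteness of $B\tp B$ to $B$ having trivial kernel, establish full column rank $M=N-1$ for a tree by leaf-removal induction, then use $(B\otimes \Idmat)\tp(B\otimes \Idmat)=(B\tp B)\otimes \Idmat$ together with the Kronecker eigenvalue rule) is the standard argument and is consistent with how the paper later uses the fact, e.g.\ its identification of $\mathcal{N}(B)$ with the cycle space in Proposition~2.
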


\begin{prop}\label{prop:BNullSpace}
	\normalfont
	If $\mathcal{G}$ is connected but not a tree, then the null space of the incidence matrix, i.e., $\mathcal{N}(B)$, is non-empty, and it corresponds to the cycle space of $\mathcal{G}$ (Lemma~3.2 in~\cite{GraphEmbeddings}).
\end{prop}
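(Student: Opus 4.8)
The plan is to establish the identity $\mathcal{N}(B) = \mathcal{Z}(\mathcal{G})$, where $\mathcal{Z}(\mathcal{G}) \subseteq \Rn[M]$ denotes the cycle space — the span of the \emph{cycle vectors} attached to the cycles of $\mathcal{G}$ — by proving the two inclusions separately, and then to read off nontriviality of $\mathcal{N}(B)$ from the observation that a connected non-tree necessarily contains a cycle. (Nontriviality is also the dimensional complement of Proposition~\ref{prop:Bdefinitepositive}: a tree has $M = N-1$ and trivial null space, whereas here $M \ge N$.)

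For the inclusion $\mathcal{Z}(\mathcal{G}) \subseteq \mathcal{N}(B)$, fix a simple cycle $C$, choose a direction of traversal, and define $z \in \Rn[M]$ by $z_k = +1$ if edge $k$ lies on $C$ and is traversed along the orientation recorded by $B$, $z_k = -1$ if it lies on $C$ but is traversed against it, and $z_k = 0$ otherwise. For any vertex $l$, the component $(Bz)_l = \sum_k B_{lk} z_k$ only collects the edges of $C$ incident to $l$; this sum is empty when $l \notin C$, and when $l \in C$ the cycle enters $l$ along exactly one edge and leaves along exactly one edge, so a short inspection of the four sign combinations of $B_{lk} z_k$ shows the two terms are $+1$ and $-1$. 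Hence $Bz = \zvec$, and linearity extends this to all of $\mathcal{Z}(\mathcal{G})$.

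For the reverse inclusion $\mathcal{N}(B) \subseteq \mathcal{Z}(\mathcal{G})$, use connectivity to fix a spanning tree $\mathcal{T}$; it has $N-1$ edges, and the remaining $M-N+1$ edges are chords, each chord $c$ determining a unique fundamental cycle (the chord together with the tree path joining its endpoints) with cycle vector $z^{(c)}$. Given $z \in \mathcal{N}(B)$, the vector $z' := z - \sum_{c \text{ chord}} z_c\, z^{(c)}$ again lies in $\mathcal{N}(B)$ (by the first inclusion), and it is supported only on the edges of $\mathcal{T}$, because $z^{(c)}$ has $c$-component $1$ and no other chord component. But the submatrix of $B$ formed by the columns of $\mathcal{T}$ is the incidence matrix of a tree, which by Proposition~\ref{prop:Bdefinitepositive} has trivial null space; hence $z' = \zvec$ and $z = \sum_{c \text{ chord}} z_c\, z^{(c)} \in \mathcal{Z}(\mathcal{G})$. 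The two inclusions give $\mathcal{N}(B) = \mathcal{Z}(\mathcal{G})$, with $\dim = M - N + 1$ as a byproduct; and since a connected non-tree contains a cycle, $\mathcal{Z}(\mathcal{G}) \ne \{\zvec\}$, so $\mathcal{N}(B)$ is nontrivial — the sense in which the statement calls it ``non-empty''. The one genuinely fiddly step is the sign bookkeeping that makes the incidences at each cycle vertex cancel (and, relatedly, the verification that $z'$ is tree-supported); these are classical facts of algebraic graph theory, precisely Lemma~3.2 of~\cite{GraphEmbeddings}, so in the paper I would simply cite that result.
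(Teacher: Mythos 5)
Your argument is correct, and it is the standard fundamental-cycle proof of this classical fact: cycle vectors lie in $\mathcal{N}(B)$ by the sign cancellation at each cycle vertex, the reverse inclusion follows by subtracting fundamental cycles relative to a spanning tree and invoking the triviality of a tree's null space (Proposition~\ref{prop:Bdefinitepositive}), and nontriviality follows since a connected non-tree contains a cycle. The paper itself offers no proof of this proposition — it is cited directly as Lemma~3.2 of~\cite{GraphEmbeddings} — so there is nothing to compare against; your write-up is precisely the argument that citation encapsulates, and as you note, in the manuscript one would simply cite it.
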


Denote by $C \subseteq \{1,\cdots,M\}$ the set of indices corresponding the edges that form a cycle.
Consider a network with $n\in\mathbb{N}$ cycles, $\{C_{\sss{i}}\}_{\sss{i=\{1,\cdots,n\}}}$.
A cycle $C_{\sss{i}}$ is said to be independent if $C_{\sss{i}} \cap C_{\sss{j}} = \emptyset$ for all $j \in \{1,\cdots,n\} \backslash\{i\}$.
In Fig.~\ref{fig:IndependentCycles}, a graph with two independent cycles is presented.
Additionally, we say two cycles $C_{\sss{1}}$ and $C_{\sss{2}}$ share only one edge when $|C_{\sss{1}} \cap C_{\sss{2}}| =1$ and $C_{\sss{1}} \cup C_{\sss{2}}$ contains edges from only the following three cycles (in  $\{C_{\sss{i}}\}_{\sss{i=\{1,\cdots,n\}}}$): $C_{\sss{1}}$, $C_{\sss{2}}$ and $C_{\sss{3}} = C_{\sss{1}} \cup C_{\sss{2}} \backslash \{ C_{\sss{1}} \cap C_{\sss{2}} \}$, with $|C_{\sss{3}}| = |C_{\sss{1}}| + |C_{\sss{2}}| - 2$. 
Figures~\ref{fig:SharedCycles} and~\ref{subfig:Agents6Network} present graphs with two cycles that share only one edge.

\begin{prop}\label{prop:BIndependentCycles}
	\normalfont
	Consider a graph $\mathcal{G}$ with $m$ independent cycles, $\{C_{\sss{i}}\}_{\sss{i=\{1,\cdots,m\}}}$. Then the null space of $B$ is given by $\mathcal{N}(B) = \{ \emb = (e_{\sss{1}},\cdots,e_{\sss{M}}) \in \Rn[M]: e_{\sss{k}} = \pm e_{\sss{l}}, \forall k,l \in C_i, i = \{ 1, \cdots, m\}   \}$; and the null space of $B \otimes \Idmat_{\sss{n}}$ is given by $\mathcal{N}(B \otimes \Idmat_{\sss{n}}) = \{ \emb = (\emb_{\sss{1}},\cdots,\emb_{\sss{M}}) \in (\Rn[n])^{\sss{M}}: \emb_k = \pm \emb_l, \forall k,l \in C_i, i = \{ 1, \cdots, m\}   \}$.
\end{prop}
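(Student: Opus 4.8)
The plan is to build on Proposition~\ref{prop:BNullSpace}, which identifies $\mathcal{N}(B)$ with the cycle space of $\mathcal{G}$ when $\mathcal{G}$ is connected but not a tree (if there are no cycles, $\mathcal{G}$ is a tree and $\mathcal{N}(B)=\{\zvec\}$ by Proposition~\ref{prop:Bdefinitepositive}, which agrees with the claimed set once one reads it with the tacit convention that $e_k=0$ for every edge $k$ lying in no cycle). For each independent cycle $C_i$ I would fix a traversal direction and introduce the associated signed cycle (circulation) vector $z^{(i)}\in\Rn[M]$: set $z^{(i)}_k=+1$ if edge $k\in C_i$ is oriented in $B$ along the traversal, $z^{(i)}_k=-1$ if against it, and $z^{(i)}_k=0$ otherwise. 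Telescoping the node contributions around $C_i$ gives $Bz^{(i)}=\zvec$, so each $z^{(i)}\in\mathcal{N}(B)$; and since the cycles are pairwise edge-disjoint (``independent''), the supports of the $z^{(i)}$ are pairwise disjoint, hence $z^{(1)},\dots,z^{(m)}$ are linearly independent.

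Next I would argue that these $m$ vectors actually span $\mathcal{N}(B)$. Since $\mathcal{G}$ is connected, its cycle space has dimension equal to the circuit rank $M-N+1$, and the hypothesis that the cyclic structure of $\mathcal{G}$ consists of exactly the $m$ edge-disjoint cycles $C_1,\dots,C_m$ (equivalently, no edge belongs to two cycles — a cactus-type structure, as in the referenced figures) forces $M-N+1=m$. Therefore $\{z^{(1)},\dots,z^{(m)}\}$ is a basis of $\mathcal{N}(B)$, so $\emb\in\mathcal{N}(B)$ iff $\emb=\sum_{i=1}^{m}\alpha_i z^{(i)}$ for some scalars $\alpha_i$. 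Reading this componentwise, and using the disjoint supports together with $z^{(i)}_k\in\{+1,-1\}$ on $C_i$, this is equivalent to: $e_k=0$ for every $k$ in no cycle, and $e_k=\bigl(z^{(i)}_k z^{(i)}_l\bigr)\,e_l=\pm e_l$ for all $k,l\in C_i$ — which is precisely the claimed description, the ``$\pm$'' being the fixed orientation-induced sign $z^{(i)}_k/z^{(i)}_l$ rather than an independent choice for each pair.

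For the Kronecker version I would use the block identity $\bigl((B\otimes\Idmat_n)\emb\bigr)_a=\sum_{k}B_{ak}\,\emb_k\in\Rn[n]$, where $\emb=(\emb_1,\dots,\emb_M)$ with $\emb_k\in\Rn[n]$ and $a$ ranges over the nodes. Hence $\emb\in\mathcal{N}(B\otimes\Idmat_n)$ iff each of the $n$ scalar ``coordinate slices'' $(e_{1,p},\dots,e_{M,p})$, $p=1,\dots,n$, lies in $\mathcal{N}(B)$; $e_{k,p}$ here denotes the $p$-th entry of $\emb_k$. Applying the scalar result slice by slice and recollecting coordinates yields $\emb_k=z^{(i)}_k\,\vmb_i$ for $k\in C_i$ with $\vmb_i\in\Rn[n]$, and $\emb_k=\zvec$ otherwise, i.e. $\emb_k=\pm\emb_l$ for all $k,l\in C_i$, which is the stated characterization — analogously to the reshaping used for $(B\otimes\Idmat)\tp(B\otimes\Idmat)$ in Proposition~\ref{prop:Bdefinitepositive}.

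The step I expect to be the crux is the spanning claim: making precise that ``a graph with $m$ independent cycles'' pins the circuit rank to $m$, so that the $z^{(i)}$ form a basis and not merely a linearly independent set. An alternative route that sidesteps the dimension count is to show directly that any $\emb\in\mathcal{N}(B)$ vanishes on every bridge (sum the node equations over one side of the bridge, so all internal contributions cancel and only $\pm e_k$ survives) and then telescope the node equations around each $C_i$; this is slightly longer and needs extra care at vertices shared by two cycles. A secondary point worth making explicit in the write-up is the implicit convention that the described set forces $e_k=0$ on non-cycle edges, without which the stated identity would already fail for a tree.
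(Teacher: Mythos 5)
Your proposal is correct, and at its core it matches the paper's argument: both identify the nullity of $B$ as the number $m$ of cycles and exhibit the signed cycle (circulation) vectors, supported on one cycle each and zero elsewhere, as a basis. The difference is in packaging. You obtain the dimension count abstractly, by invoking the cycle-space identification of Proposition~\ref{prop:BNullSpace} together with the circuit-rank formula $M-N+1=m$ (which, as you note, is forced once all cycles are pairwise edge-disjoint, since edge-disjoint cycles have disjoint supports, are therefore linearly independent, and the simple cycles span the cycle space). The paper instead works concretely with the matrix: it deletes one column $B_{\sss{c}}^{\sss{2}}$ per cycle so that the remaining columns form a tree, invokes Proposition~\ref{prop:Bdefinitepositive} to bound the nullity by $m$, writes out $B_{\sss{c}}$ explicitly to verify $B_{\sss{c}}^{\sss{2}}=\sum_{i}(B_{\sss{c}}^{\sss{1}})_{\sss{i}}$, and handles arbitrary labelings via permutation/sign equivalence $B=\Gamma_{\sss{1}}[B_{\sss{\bar c}}\,B_{\sss{c}}^{\sss{1}}\,B_{\sss{c}}^{\sss{2}}]\bar{\Idmat}\,\Gamma_{\sss{2}}$. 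Your route is shorter and avoids the explicit matrix bookkeeping at the cost of leaning on the circuit-rank fact; the paper's route is self-contained given its Proposition~\ref{prop:Bdefinitepositive}. Your Kronecker argument by coordinate slices is equivalent to the paper's observation that $\mathcal{N}(B\otimes\Idmat_{\sss{n}})$ is spanned by the block columns $\emb\otimes\Idmat_{\sss{n}}$. Two of your side remarks are genuinely valuable and worth keeping: the statement's set description only matches $\mathcal{N}(B)$ under the tacit convention that $e_{\sss{k}}=0$ on edges belonging to no cycle (the paper's explicit spanning vector $[\zvec\tp\;\onesvec\tp\;-1]\tp$ confirms this is the intent), and the ``$\pm$'' sign is the fixed orientation-induced sign $z^{(i)}_{\sss{k}}z^{(i)}_{\sss{l}}$ rather than an independent choice per pair.
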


Notice that, for an incidence matrix $B \in \Rn[N \times M]$, there are $M$ edges and these \emph{belong} to $\Rn[]$. On the other hand, for the incidence matrix $B \otimes \Idmat_{\sss{n}}$ (with $B \in \Rn[N \times M]$), there are $M$ edges, but, since the agents operate in an $n$-dimensional space, those edges \emph{belong} to $\Rn[n]$.
With that in mind, under the conditions of Proposition~\ref{prop:BIndependentCycles}, $\mathcal{N}(B) = \{ \emb = (e_{\sss{1}},\cdots,e_{\sss{M}}) \in \Rn[M]: e_{\sss{k}} = \pm e_{\sss{l}}, \forall k,l \in C_i, i = \{ 1, \cdots, m\}   \}$ means that $\mathcal{N}(B)$ is the space where all edges of an independent cycle have the same absolute value.
On the other hand, $\mathcal{N}(B \otimes \Idmat_{\sss{n}}) = \{ \emb = (\emb_{\sss{1}},\cdots,\emb_{\sss{M}}) \in (\Rn[n])^{\sss{M}}: \emb_k = \pm \emb_l, \forall k,l \in C_i, i = \{ 1, \cdots, m\}   \}$ means that $\mathcal{N}(B \otimes \Idmat_{\sss{n}})$ is the space where all edges of an independent cycle have the same direction and norm (or are all zero).
\begin{prop}\label{prop:BAlmostIndependentCycles}
	\normalfont
	Consider a graph $\mathcal{G}$ with $n_{\sss{1}}$ independent cycles, $\{C_{\sss{i}}\}_{\sss{i=\{1,\cdots,n_{\sss{1}}\}}}$, and $n_{\sss{2}}$ pairs of cycles that share only one edge, $
	\{
		(C_{\sss{i}}^{\sss{1}},
		C_{\sss{i}}^{\sss{2}})
	\}_{\sss{i=\{1,\cdots,n_{\sss{2}}\}}}
	$. 
	Then the null space of $B$ is given by 
	$
		\mathcal{N}(B) = 
		\{ 
			\emb = (e_{\sss{1}},\cdots,e_{\sss{M}}) \in \Rn[M]: 
			e_{\sss{k}} = \pm e_{\sss{l}}, \forall k,l \in C_{\sss{i}},
			i = \{1,\cdots,n_{\sss{1}} \} 
		\} 
		\cup 
		\{ 
			\emb = (e_{\sss{1}},\cdots,e_{\sss{M}}) \in \Rn[M]:
			e_{\sss{k}} =  \pm e_{\sss{l}}, 
			\forall k,l \in 
			C_{\sss{i}}^{\sss{1}}
			\backslash 
			\{C_{\sss{i}}^{\sss{1}} \cap C_{\sss{i}}^{\sss{2}}\},
			e_{\sss{p}} = \pm e_{\sss{q}}, 
			\forall p,q \in 
			C_{\sss{i}}^{\sss{2}}
			\backslash 
			\{C_{\sss{i}}^{\sss{1}} \cap C_{\sss{i}}^{\sss{2}}\},
			i = \{1,\cdots,n_{\sss{2}} \} 
		\}$;
		and the null space of $B \otimes \Idmat_{\sss{n}}$ is given by 
		$
		\mathcal{N}(B \otimes \Idmat_{\sss{n}}) = 
		\{ 
			\emb = (\emb_{\sss{1}},\cdots,\emb_{\sss{M}}) \in (\Rn[n])^{\sss{M}}: 
			\emb_{\sss{k}} = \pm \emb_{\sss{l}}, \forall k,l \in C_{\sss{i}},
			i = \{1,\cdots,n_{\sss{1}} \} 
		\} 
		\cup 
		\{ 
			\emb = (\emb_{\sss{1}},\cdots,\emb_{\sss{M}}) \in (\Rn[n])^{\sss{M}}:
			\emb_{\sss{k}} = \pm \emb_{\sss{l}}, 
			\forall k,l \in 
			C_{\sss{i}}^{\sss{1}}
			\backslash 
			\{C_{\sss{i}}^{\sss{1}} \cap C_{\sss{i}}^{\sss{2}}\},
			\emb_{\sss{p}} = \pm \emb_{\sss{q}}, 
			\forall p,q \in 
			C_{\sss{i}}^{\sss{2}}
			\backslash 
			\{C_{\sss{i}}^{\sss{1}} \cap C_{\sss{i}}^{\sss{2}}\},
			i = \{1,\cdots,n_{\sss{2}} \} 
		\}
		$.
\end{prop}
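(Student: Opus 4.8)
The plan is to reduce the statement to a description of the \emph{cycle space} of $\mathcal{G}$. By Proposition~\ref{prop:BNullSpace}, $\mathcal{N}(B)$ is precisely that cycle space, which is spanned by the signed indicator vectors $\zmb_{\sss{C}}\in\Rn[M]$ of the cycles $C$ of $\mathcal{G}$: here $(\zmb_{\sss{C}})_{\sss{k}}=\pm 1$ when $k\in C$, with the sign fixed by a chosen traversal orientation of $C$, and $(\zmb_{\sss{C}})_{\sss{k}}=0$ otherwise. The Kronecker case reduces to the scalar one, since $B\otimes\Idmat_{\sss{n}}$ acts componentwise: a vector $(\emb_{\sss{1}},\cdots,\emb_{\sss{M}})\in(\Rn[n])^{\sss{M}}$ lies in $\mathcal{N}(B\otimes\Idmat_{\sss{n}})$ if and only if, for each coordinate $r\in\{1,\cdots,n\}$, the scalar vector of $r$-th entries lies in $\mathcal{N}(B)$. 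Hence each scalar identity $e_{\sss{k}}=\pm e_{\sss{l}}$ lifts verbatim to $\emb_{\sss{k}}=\pm\emb_{\sss{l}}$ (the sign is an orientation sign, common to all coordinates), and it suffices to prove the claim for $B$.

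Next I would split the cycle space according to the given structure. The hypothesis supplies $n_{\sss{1}}$ independent cycles and $n_{\sss{2}}$ pairs each sharing a single edge; from the definitions these $n_{\sss{1}}+n_{\sss{2}}$ objects involve pairwise disjoint edge sets and together furnish a basis of the cycle space, each pair contributing a two-dimensional block because $C_{\sss{3}}=C_{\sss{1}}\cup C_{\sss{2}}\backslash\{C_{\sss{1}}\cap C_{\sss{2}}\}$ is just $\pm$ the sum of the pair's two cycle vectors. For an independent cycle $C_{\sss{i}}$, every node on it has degree two within it, so the node equations $B\emb=\zvec$ force all entries of $\emb$ on $C_{\sss{i}}$ to share a common absolute value — this is exactly Proposition~\ref{prop:BIndependentCycles}. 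For a pair $(C_{\sss{i}}^{\sss{1}},C_{\sss{i}}^{\sss{2}})$ with shared edge $e^{\sss{*}}$, the relevant edges split into the two ``branches'' $C_{\sss{i}}^{\sss{1}}\backslash\{e^{\sss{*}}\}$, $C_{\sss{i}}^{\sss{2}}\backslash\{e^{\sss{*}}\}$ and the single edge $e^{\sss{*}}$; the corresponding subspace of $\mathcal{N}(B)$ is spanned by $\zmb_{\sss{C_i^1}}$ and $\zmb_{\sss{C_i^2}}$, and since every edge of the first branch lies on $C_{\sss{i}}^{\sss{1}}$ only (likewise the second on $C_{\sss{i}}^{\sss{2}}$ only), evaluating $\alpha\zmb_{\sss{C_i^1}}+\beta\zmb_{\sss{C_i^2}}$ shows the first-branch entries all equal $\pm\alpha$ and the second-branch entries all equal $\pm\beta$ — exactly the two stated equalities — while $e^{\sss{*}}$ picks up the value $\pm\alpha\pm\beta$ determined by the two branches. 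Assembling the blocks over all independent cycles and all pairs, and noting that edges outside every cycle must vanish, yields the claimed description of $\mathcal{N}(B)$, and the Kronecker version then follows from the first paragraph.

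The step I expect to be the real obstacle is the combinatorial verification in the middle: showing, from the precise definitions of ``independent cycle'' and ``two cycles sharing only one edge'', that the family $\{\zmb_{\sss{C_i}}\}$ together with the pairs' vectors $\{\zmb_{\sss{C_i^1}},\zmb_{\sss{C_i^2}}\}$ is genuinely a basis of the cycle space, and that no edge belongs to two different items of this list. This disjointness is precisely what makes $\mathcal{N}(B)$ a plain union of blockwise branch conditions, rather than something with cross-coupling between cycles. The remaining work — propagating $e_{\sss{k}}=\pm e_{\sss{l}}$ around each cycle and across the shared edge while keeping the orientation signs consistent — is routine and is handled just as in the proof of Proposition~\ref{prop:BIndependentCycles}.
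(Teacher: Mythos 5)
Your proposal is correct and follows essentially the same route as the paper's appendix proof: both identify $\mathcal{N}(B)$ with the span of the signed cycle indicator vectors, observe that each shared-edge pair contributes a two-dimensional block spanned by the pair's two cycle vectors (so that a general element reads $\pm\alpha$ on one branch, $\pm\beta$ on the other, and $\pm(\alpha+\beta)$ on the shared edge, exactly the paper's explicit spanning vectors), and then lift to $B\otimes\Idmat_{\sss{n}}$ coordinatewise. The combinatorial step you flag as the real obstacle is treated in the paper only by a without-loss-of-generality reduction to a single pair, an explicit canonical form of the incidence matrix, and conjugation by permutation matrices, so your argument supplies the same content at the same level of rigor.
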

The results of Proposition~\ref{prop:BAlmostIndependentCycles} can be interpreted as follows: $\mathcal{N}(B)$ is the space where, for each cycle, all its edges ($\in \Rn[]$), except the one that is shared, have the same absolute value; while $\mathcal{N}(B \otimes  \Idmat_{n})$ is the space where, for each cycle, all its edges ($\in \Rn[n]$), except the one that is shared, have the same direction and norm (or are all zero).
Some examples that illustrate Propositions~\ref{prop:BIndependentCycles} and~\ref{prop:BAlmostIndependentCycles}, and details on equivalent incidence matrices are found in~\ref{app:PreliminariesAuxiliaryExamples}.
Also, proofs of Propositions~\ref{prop:BIndependentCycles} and~\ref{prop:BAlmostIndependentCycles} are found in Appendix~\ref{app:Proofs}.
These Propositions are useful in a later section, where we prove that for network graphs that satisfy the conditions of either Proposition, the agents converge to a configuration where all unit vectors belong to a common plane.

\begin{figure}
	\centering
	\subcapcentertrue	
	\subfigure[Graphs two independent cycles.]{
			\centering
			\begin{tikzpicture}[-,>=stealth',shorten >=1pt,auto,node distance=3cm,
			                    thick,main node/.style={circle,draw,font=\sffamily\Large\bfseries},scale=0.4,every node/.style={scale=0.4}]
			
			  \node[main node] (1) {$\nmbi[1]$};
			  \node[main node] (2) [above left of=1] {$\nmbi[2]$};
			  \node[main node] (3) [below left of=2] {$\nmbi[3]$};
			  \node[main node] (4) [above right of=1] {$\nmbi[4]$}; 
			  \node[main node] (5) [below right of=4] {$\nmbi[5]$};

			  \path[every node/.style={font=\sffamily\small}]
			    (1) edge node[xshift=0.7cm, yshift=0.25cm] {{\fontsize{0.05cm}{1em}$\bar{\kappa}(1,2) = 1 $}} (2)
			        edge node[xshift=1.0cm, yshift=-0.2cm] {{\fontsize{0.05cm}{1em}$\bar{\kappa}(1,4) = 4 $}} (4)
			    (2) edge node[xshift=-1.3cm, yshift=0.25cm] {{\fontsize{0.05cm}{1em}$\bar{\kappa}(2,3) = 2 $}} (3)
			    (3) edge node[xshift= 0.0cm, yshift=-0.4cm] {{\fontsize{0.05cm}{1em}$\bar{\kappa}(1,3) = 3 $}} (1)
			    (4) edge node[xshift=-0.0cm, yshift=-0.2cm] {{\fontsize{0.05cm}{1em}$\bar{\kappa}(4,5) = 6 $}} (5)
			    (5) edge node[xshift= 0.0cm, yshift= 0.0cm] {{\fontsize{0.05cm}{1em}$\bar{\kappa}(1,5) = 5 $}} (1)
			    ;
			\end{tikzpicture}	
			\label{fig:IndependentCycles}
	}
	\subfigure[Graph with one independent cycle and two cycles that share only one edge.]{	
			\centering
			\begin{tikzpicture}[-,>=stealth',shorten >=1pt,auto,node distance=3cm,
			                    thick,main node/.style={circle,draw,font=\sffamily\Large\bfseries},scale=0.4,every node/.style={scale=0.4}]
			
			  \node[main node] (1) {\color{gray}$\nmbi[1]$};
			  \node[main node] (2) [above left of=1] {\color{gray}$\nmbi[2]$};
			  \node[main node] (3) [below left of=2] {\color{gray}$\nmbi[3]$};
			  \node[main node] (4) [above right of=1] {\color{gray}$\nmbi[4]$}; 
			  \node[main node] (5) [below right of=4] {\color{gray}$\nmbi[5]$};
			  \node[main node] (6) [below left of=1] {$\nmbi[6]$};
			
			  \path[every node/.style={font=\sffamily\small}]
			    (1) edge node[xshift=0.7cm, yshift=0.25cm] {\color{gray}{\fontsize{0.05cm}{1em}$\bar{\kappa}(1,2) = 1 $}} (2)
			        edge node[xshift=1.0cm, yshift=-0.2cm] {\color{gray}{\fontsize{0.05cm}{1em}$\bar{\kappa}(1,4) = 4 $}} (4)
			        edge node[] {{\fontsize{0.05cm}{1em}$\bar{\kappa}(1,6) = 7 $}} (6)			        
			    (2) edge node[xshift=-1.3cm, yshift=0.25cm] {\color{gray}{\fontsize{0.05cm}{1em}$\bar{\kappa}(2,3) = 2 $}} (3)
			    (3) edge node[xshift= 0.0cm, yshift=-0.4cm] {\color{gray}{\fontsize{0.05cm}{1em}$\bar{\kappa}(1,3) = 3 $}} (1)
			    (4) edge node[xshift=-0.0cm, yshift=-0.2cm] {\color{gray}{\fontsize{0.05cm}{1em}$\bar{\kappa}(4,5) = 6 $}} (5)
			    (5) edge node[xshift= 0.0cm, yshift= 0.0cm] {\color{gray}{\fontsize{0.05cm}{1em}$\bar{\kappa}(1,5) = 5 $}} (1)	
			    (6) edge node {{\fontsize{0.05cm}{1em}$\bar{\kappa}(3,6) = 8 $}} (3)
			    ;
			\end{tikzpicture}		
		\label{fig:SharedCycles}
	}	
	\caption{Two graphs, one with two independent cycles and another with one independent cycle and two cycles that share only one edge.}
	\label{fig:Graphs}
\end{figure}

We now present a definition and some results that will prove useful in a later section.
\begin{defn}
	\label{def:AlphaCone}
	\normalfont
	We say that a group of unit vectors $\nmb = (\nmbi[1],\cdots,\nmbi[N]) \in (\Stwo)^{\sss{N}}$ belongs to an open (closed) $\alpha \in [0,\pi]$ cone, denoted by $\nmb \in \mathcal{C}(\alpha)$ ($\nmb \in \bar{\mathcal{C}}(\alpha)$), if there exists a unit vector $\nmb^{\sss{\star}} \in \mathcal{S}^{\sss{2}}$ such that  $\nmb^{\sss{\star T}}\nmbi[i] > \cos(\alpha)$ ($\nmb^{\sss{\star T}}\nmbi[i] \ge \cos(\alpha)$) for all $i \in \mathcal{N}$.
\end{defn}
The concept of open $\alpha$-cone is exemplified In Fig.~\ref{fig:Cone50Deg}, with three unit vectors $\nmbi[1]$, $\nmbi[2]$ and $\nmbi[3]$ contained in a $30^{\circ}$-cone formed by a unit vector $\nmb^{\sss{\star}}$. 
In fact, any group of unit vectors contained in the sphere surface region marked in bold is contained in a $30^{\circ}$-cone associated to the unit vector $\nmb^{\sss{\star}}$.

The following propositions follow from Definition~\ref{def:AlphaCone} and Proposition~\ref{prop:TriangularInequality} in Appendix~\ref{app:TriangularInequality}.
\begin{prop}
	\label{prop:BelongToCone1}
	\normalfont
	If $\nmb = (\nmbi[1],\cdots,\nmbi[N]) \in \mathcal{C}(\alpha)$, for some $\alpha \in [0,\frac{\pi}{2}]$, then $\max_{\sss{(i,j) \in \mathcal{N}^{\sss{2}}}} (1 - \nmbi[i]\tp\nmbi[j]) < 1 - \cos(2 \alpha)$. 
\end{prop}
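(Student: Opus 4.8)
The plan is to recast the cone condition in terms of the angles between unit vectors and then apply the triangle inequality for those angles. First I would unpack Definition~\ref{def:AlphaCone}: membership $\nmb \in \mathcal{C}(\alpha)$ furnishes a unit vector $\nmb^{\sss{\star}} \in \mathcal{S}^{\sss{2}}$ with $\nmb^{\sss{\star T}}\nmbi[i] > \cos(\alpha)$ for every $i \in \mathcal{N}$. Writing $\theta_{\sss{i}} := \arccos(\nmb^{\sss{\star T}}\nmbi[i]) \in [0,\pi]$ for the angle between $\nmb^{\sss{\star}}$ and $\nmbi[i]$, and using that $\cos$ is strictly decreasing on $[0,\pi]$ together with $\alpha \le \frac{\pi}{2} \le \pi$, the strict inequality $\cos(\theta_{\sss{i}}) > \cos(\alpha)$ translates into $\theta_{\sss{i}} < \alpha$; in particular each $\theta_{\sss{i}} \in [0,\frac{\pi}{2})$.

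Next I would fix an arbitrary pair $(i,j) \in \mathcal{N}^{\sss{2}}$ and set $\theta_{\sss{ij}} := \arccos(\nmbi[i]\tp\nmbi[j]) \in [0,\pi]$. Invoking Proposition~\ref{prop:TriangularInequality} of Appendix~\ref{app:TriangularInequality} — the triangle inequality for the angles between unit vectors — applied to the triple $(\nmbi[i],\nmb^{\sss{\star}},\nmbi[j])$ gives $\theta_{\sss{ij}} \le \theta_{\sss{i}} + \theta_{\sss{j}} < 2\alpha$. Since $\alpha \le \frac{\pi}{2}$, both $\theta_{\sss{ij}}$ and $2\alpha$ belong to $[0,\pi]$, on which $\cos$ is strictly decreasing; hence $\nmbi[i]\tp\nmbi[j] = \cos(\theta_{\sss{ij}}) > \cos(2\alpha)$, i.e. $1 - \nmbi[i]\tp\nmbi[j] < 1 - \cos(2\alpha)$. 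Because this strict bound holds for each of the finitely many pairs in $\mathcal{N}^{\sss{2}}$, the maximum over $(i,j) \in \mathcal{N}^{\sss{2}}$ is attained and obeys the same strict inequality, which is exactly the claim.

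The argument has no real obstacle: its entire substance is the triangle inequality for angles on the sphere, which is already supplied by Proposition~\ref{prop:TriangularInequality}. The only points to watch are the bookkeeping with the range of the angles — it is precisely the hypothesis $\alpha \le \frac{\pi}{2}$ that forces $2\alpha$ into $[0,\pi]$, where the monotonicity of $\cos$ reverses the inequalities in the direction we need — and the (trivial) observation that a strict inequality valid for every element of a finite set also holds for its maximum.
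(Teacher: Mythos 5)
Your proof is correct and follows exactly the route the paper intends: the paper's entire justification is the remark that the claim ``follows immediately after Proposition~\ref{prop:TriangularInequality}'', and your argument simply fills in the details of that step (angles $\theta_{\sss{i}} < \alpha$ from the cone definition, the triangle inequality giving $\theta_{\sss{ij}} < 2\alpha \le \pi$, and the monotonicity of $\cos$ on $[0,\pi]$). Nothing is missing.
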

This proposition follows immediately after Proposition~\ref{prop:TriangularInequality}.
\begin{prop}
	\label{prop:BelongToCone2}
	\normalfont
	If, given $\nmb = (\nmbi[1],\cdots,\nmbi[N]) \in (\Stwo)^{\sss{N}}$, $\max_{\sss{(i,j) \in \mathcal{N}^{\sss{2}}}} (1 - \nmbi[i]\tp\nmbi[j]) \le 1 - \cos(\frac{2}{3}\alpha)$ holds for some $\alpha \in [0,\pi]$, then $\nmb \in \bar{\mathcal{C}}(\alpha)$.
\end{prop}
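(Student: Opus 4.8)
The plan is to recast the hypothesis as a uniform bound on the pairwise geodesic distances on $\Stwo$ and then exhibit an explicit axis for the closed cone. For $\amb,\bmb\in\Stwo$ write $d(\amb,\bmb):=\arccos(\amb\tp\bmb)\in[0,\pi]$ for the geodesic distance. Since the cosine is decreasing on $[0,\pi]$ and $\frac{2}{3}\alpha\in[0,\frac{2}{3}\pi]\subset[0,\pi]$, the assumption $\max_{\sss{(i,j)\in\mathcal{N}^{\sss{2}}}}(1-\nmbi[i]\tp\nmbi[j])\le 1-\cos(\frac{2}{3}\alpha)$ is equivalent to
\[
  d(\nmbi[i],\nmbi[j])\le\tfrac{2}{3}\alpha\qquad\text{for all }i,j\in\mathcal{N}.
\]

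If $N=1$, or more generally if all the $\nmbi[i]$ coincide, the conclusion is immediate with $\nmb^{\sss{\star}}=\nmbi[1]$; so suppose $N\ge 2$. Since $\frac{2}{3}\alpha\le\frac{2}{3}\pi<\pi$, the vectors $\nmbi[1]$ and $\nmbi[2]$ are not diametrically opposed, hence $\nmbi[1]+\nmbi[2]\ne\zvec$, and we may set
\[
  \nmb^{\sss{\star}}:=\frac{\nmbi[1]+\nmbi[2]}{\norm{\nmbi[1]+\nmbi[2]}}\in\Stwo .
\]
This unit vector lies in the plane through the origin spanned by $\nmbi[1]$ and $\nmbi[2]$ and bisects the angle between them, so $d(\nmb^{\sss{\star}},\nmbi[1])=d(\nmb^{\sss{\star}},\nmbi[2])=\tfrac{1}{2}d(\nmbi[1],\nmbi[2])\le\tfrac{1}{3}\alpha$.

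The crucial step is a single application of the triangle inequality for the geodesic distance (Proposition~\ref{prop:TriangularInequality} in Appendix~\ref{app:TriangularInequality}): for every $k\in\mathcal{N}$,
\[
  d(\nmb^{\sss{\star}},\nmbi[k])\le d(\nmb^{\sss{\star}},\nmbi[1])+d(\nmbi[1],\nmbi[k])\le\tfrac{1}{3}\alpha+\tfrac{2}{3}\alpha=\alpha\le\pi .
\]
Using monotonicity of the cosine on $[0,\pi]$ once more yields $\nmb^{\sss{\star T}}\nmbi[k]=\cos d(\nmb^{\sss{\star}},\nmbi[k])\ge\cos\alpha$ for all $k\in\mathcal{N}$, which is exactly the defining condition in Definition~\ref{def:AlphaCone} for $\nmb\in\bar{\mathcal{C}}(\alpha)$.

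I do not expect a genuine obstacle: the statement reduces to the spherical triangle inequality, and the factor $\frac{2}{3}$ is precisely what keeps the estimate tight, since reaching the midpoint costs at most $\frac{1}{3}\alpha$ and the remaining leg through $\nmbi[1]$ costs at most $\frac{2}{3}\alpha$. The only delicate points are bookkeeping ones — checking that every angle that appears stays in $[0,\pi]$ so that the two passages through the cosine are legitimate, verifying $\nmbi[1]+\nmbi[2]\ne\zvec$ so that $\nmb^{\sss{\star}}$ is well defined, and disposing of the case $N=1$. One could, incidentally, already take $\nmb^{\sss{\star}}=\nmbi[1]$, since $d(\nmbi[1],\nmbi[k])\le\frac{2}{3}\alpha\le\alpha$; the midpoint is the natural choice to carry into the later stability estimates and is what explains the appearance of the constant $\frac{2}{3}$.
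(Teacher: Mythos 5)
Your proof is correct and follows essentially the same route as the paper's: pick the spherical midpoint of a pair as the cone axis and apply the geodesic triangle inequality (Proposition~\ref{prop:TriangularInequality}) once, the factor $\frac{2}{3}$ arising from $\frac{1}{3}\alpha + \frac{2}{3}\alpha = \alpha$. The only (immaterial) difference is that the paper bisects the \emph{maximally} separated pair $(k,l) = \arg\max \theta(\nmbi[i],\nmbi[j])$ rather than $(\nmbi[1],\nmbi[2])$; since the hypothesis bounds all pairwise angles uniformly, both choices yield the same estimate.
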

A proof of Proposition~\ref{prop:BelongToCone2} is found in Appendix~\ref{app:TriangularInequality}.

\begin{figure}
	\centering
	\if\papercolor1
		\includegraphics[clip=true,trim=0cm 0cm 0cm 0cm,width=0.25\textwidth]
		{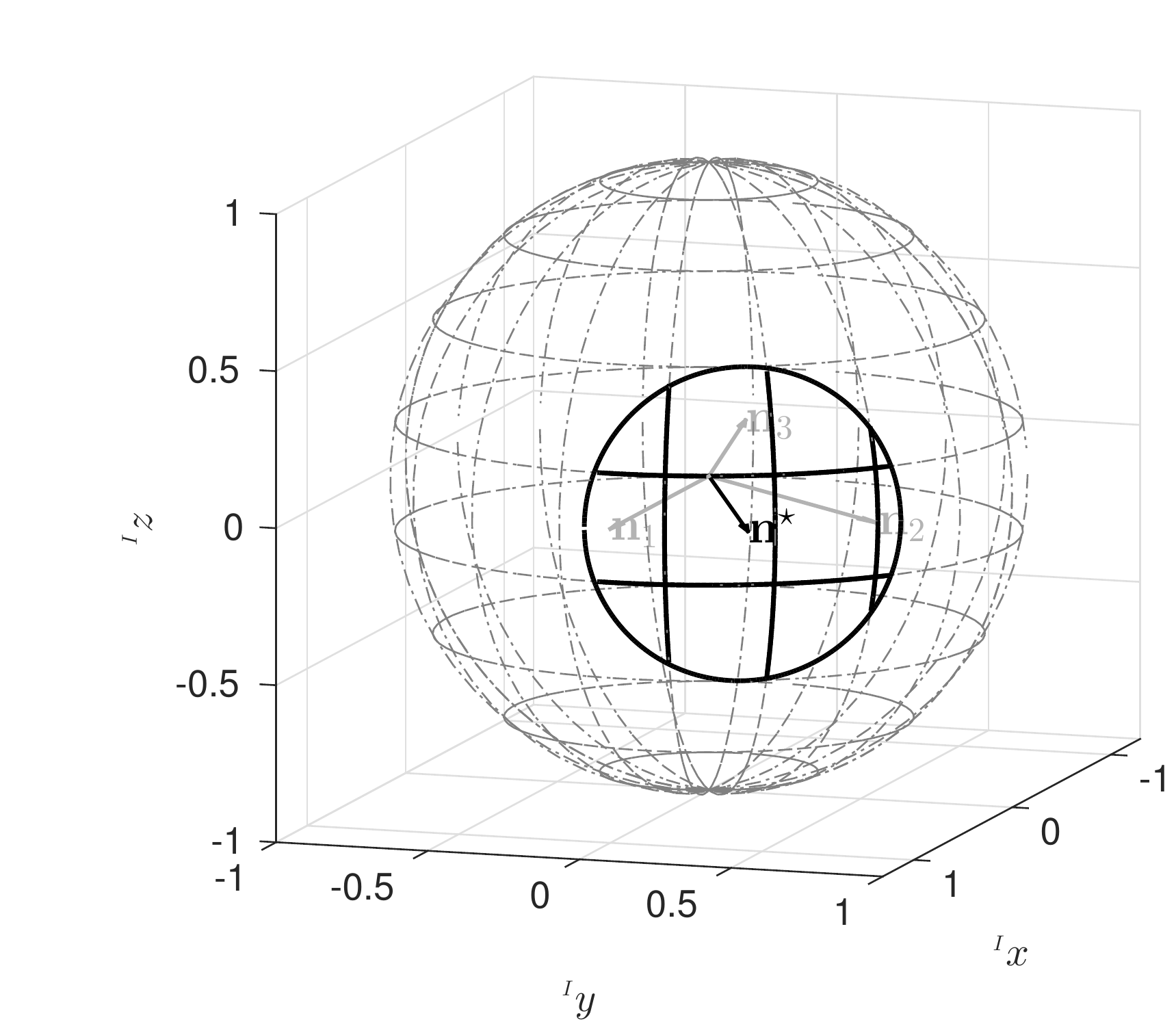}
	\else
		\includegraphics[clip=true,trim=0cm 0cm 0cm 0cm,width=0.25\textwidth]
		{./Figures/Cone30_bw}	
	\fi
	\caption{Three unit vectors $\nmbi[1]$, $\nmbi[2]$ and $\nmbi[3]$ in a $30^{\circ}$-cone associated to the unit vector $\nmb^{\sss{\star}}$.}
	\label{fig:Cone50Deg}
\end{figure}

		\subsection{Distance in $\mathcal{S}^2$}
		Consider an arbitrary distance function between unit vectors $d: \mathcal{S}^{2} \times \mathcal{S}^{2} \rightarrow \mathbb{R}_{0}^{+}$, satisfying $d(\nmbi[1],\nmbi[2]) = 0 \Leftrightarrow \nmbi[1] = \nmbi[2]$.
%
	We say that $d \in \mathcal{D}$, if, for all $(\nmbi[1],\nmbi[2]) \in \mathcal{S}^{\sss{2}} \times \mathcal{S}^{\sss{2}}$, it satisfies the partial differential equation
	\begin{align}
		\sk{\nmb_1}
		\frac{\partial d(\nmbi[1],\nmbi[2])}{\partial \nmbi[1]} 
		+
		\sk{\nmb_2}
		\frac{\partial d(\nmbi[1],\nmbi[2])}{\partial \nmbi[2]}
		=
		\zvec
		.
		\label{eq:DistanceFunctionProperty}
	\end{align}	
%
Let us motivate the introduction of~\eqref{eq:DistanceFunctionProperty}.
Given $d \in \mathcal{D}$, it follows that, along along a trajectory $\xmb(\cdot)$ of~\eqref{eq:StateVectorField}, (for brevity, we omit the time dependencies below)
\begin{align}	
	&
	\dot{d}(\nmbi[1],\nmbi[2])
	=
	\Scale[0.85]{
		\frac{\partial d(\nmbi[1],\nmbi[2])}{\partial \nmbi[1]}\tp \fmb_{\sss{\nmbi[1]}}(t,\nmbi[1],\Oi[1])
		+
		\frac{\partial d(\nmbi[1],\nmbi[2])}{\partial \nmbi[1]}\tp \fmb_{\sss{\nmbi[2]}}(t,\nmbi[2],\Oi[2])
	}
	\\
	&
	\Scale[0.9]{
		\overset{\sss{\eqref{eq:UnitVectorField}}}{=} 	
		\Omi{1}\tp \Rmati{1}\tp \sk{\nmbi[1]}
		\frac{\partial d(\nmbi[1],\nmbi[2])}{\partial \nmbi[1]}
		+
		\Omi{2}\tp \Rmati{2}\tp \sk{\nmbi[2]}
		\frac{\partial d(\nmbi[1],\nmbi[2])}{\partial \nmbi[2]}
	}
	\\
	&
	\Scale[0.9]{
		\overset{\sss{\eqref{eq:DistanceFunctionProperty}}}{=} 
		\twocol{\Omi{1}}{\Omi{2}}\tp
		\begin{bmatrix}
			 \Rmati{1}\tp & \zvec \\
			 \zvec & \Rmati{2}\tp 
		\end{bmatrix}
		\left(
			\twocol{\hphantom{-}1}{-1} \otimes \Idmat
		\right)
		\sk{\nmbi[1]}
		\frac{\partial d(\nmbi[1],\nmbi[2])}{\partial \nmbi[1]}
		.
	}
	\label{eq:ExplotingProperty}
\end{align}
where we identify an incidence matrix $[1 \, -1]\tp$ corresponding to an edge between unit vectors $\nmbi[1]$ and $\nmbi[2]$.
\ReviewAdded{Motivated by~\eqref{eq:ExplotingProperty}, let us define two functions.
Firstly, consider $M$ distance functions $d_{\sss{k}} \in \mathcal{D}$, one for each edge $k \in \mathcal{M}$.
Define then $\embi{k}: \mathcal{S}^{\sss{2}} \times \mathcal{S}^{\sss{2}} \mapsto \Rn[3]$ as
\begin{align}
	\embi{k}(\nmbi[1],\nmbi[2]) = \sk{\nmbi[1]} \frac{\partial d_{\sss{k}}(\nmbi[1],\nmbi[2])}{\partial \nmbi[1]},
	\label{eq:EdgeError}
\end{align}
to be the error of edge $k$, and for each $k \in \mathcal{M}$,.
Define also $\emb: (\mathcal{S}^{\sss{2}})^{\sss{N}} \mapsto \Rn[3 N]$ as
\begin{align}
	\emb(\nmb)
	=
	\begin{bmatrix}
		\embi{1}\tp(\nmbt{1},\nmbh{1})
		&
		\cdots
		&
		\embi{M}\tp(\nmbt{M},\nmbh{M})
	\end{bmatrix}
	\tp.
	\label{eq:TotalEdgeError}
\end{align}
Define, also, $D : (\mathcal{S}^{\sss{2}})^{\sss{N}} \mapsto \Rn[]_{\sss{\ge 0}}$ as
\begin{align}
	D(\nmb)
	=
	\sum\nolimits_{\sss{k = 1}}^{\sss{k = M}}
	d_{\sss{k}}(\nmbt{k},\nmbh{k}),
	\label{eq:TotalDistanceFunction}
\end{align}
named, hereafter, total distance function in the network of unit vectors. 
Note that $D(\nmb) = 0 \Leftrightarrow \exists \nmb^{\sss{\star}} \in \Stwo: \nmb = (\onesvec_{\sss{N}} \otimes \nmb^{\sss{\star}})$, which means Problem~\ref{prob:ProblemUnitVectorDynamics} is solved, if along a trajectory $\xmb(\cdot)$ of~\eqref{eq:StateVectorField}, $\lim_{\sss{t \rightarrow \infty }} D(\nmb(t)) = 0$.
It follows from~\eqref{eq:ExplotingProperty} that, if $d_{\sss{k}} \in \mathcal{D}$ for all $k \in \mathcal{M}$, then, along a trajectory $\xmb(\cdot)$ of~\eqref{eq:StateVectorField},
\begin{align}
	\dot{D}(\nmb(t))
	=
	\bm{\omega}(t)
	\bm{\Rmat}\tp(t)
	(B \otimes \Idmat) 
	\emb(\nmb(t))
	.
	\label{eq:TotalDistanceFunctionDerivative}
\end{align}
Notice that~\eqref{eq:TotalDistanceFunctionDerivative} follows from~\eqref{eq:DistanceFunctionProperty}, which is the reason for imposing such condition.}
Now, the point in question is what \emph{types} of distance functions satisfy~\eqref{eq:DistanceFunctionProperty}.
In Proposition~\ref{prop:UniquenessAuxiliar}, in Appendix, we prove that a distance function $d(\cdot,\cdot)$ satisfies~\eqref{eq:DistanceFunctionProperty}, i.e. $d \in \mathcal{D}$, if and only if there exits $f \in \mathcal{C}^{\sss{1}}((0,2),\Rn[{}]_{\sss{> 0}})$, with $\lim_{\sss{s \rightarrow 0}} f(s) = 0$, such that $d(\nmbi[1],\nmbi[2]) = f(1 - \nmbi[1]\tp \nmbi[2])$.
Physically, only distance functions that are functions of the angle between two unit vectors satisfy~\eqref{eq:DistanceFunctionProperty} (where the angle between $\nmbi[1]$ and $\nmbi[2]$ can be defined as $\theta(\nmbi[1],\nmbi[2]) = \arccos(\nmbi[1]\tp \nmbi[2])$, with $\theta : \mathcal{S}^{\sss{2}} \times \mathcal{S}^{\sss{2}} \mapsto [0,\pi]$). 
Moreover, distance functions that satisfy~\eqref{eq:DistanceFunctionProperty} are invariant to rotations of their arguments, i.e., $d(\Rmat \nmb_1, \Rmat\nmb_2) = d(\nmb_1,\nmb_2)$ for any $\Rmat \in \SO[3]$.
This property guarantees that the proposed controllers can be implemented without the need of a common inertial orientation frame, a property that is verified later.

To summarize, there are $M$ distance functions $d_{\sss{k}} \in \mathcal{D}$, one for each edge $k \in \mathcal{M}$, and, therefore, there are $M$ $f_{\sss{k}} \in \mathcal{C}^{\sss{1}}((0,2),\Rn[{}]_{\sss{> 0}})$, one for each edge $k \in \mathcal{M}$, such that $d_{\sss{k}}(\nmbi[1],\nmbi[2]) = f_{\sss{k}}(1 - \nmbi[1]\tp \nmbi[2])$.
For reasons that will become apparent later, we restrict the previous functions $f_{\sss{k}}(\cdot)$ to satisfy some more properties.
\begin{defn}
	\label{eq:gClasses}
	\normalfont
	Consider a function $f \in \mathcal{C}^{\sss{2}}((0,2),\mathbb{R}_{\sss{> 0}})$, satisfying \emph{i)} $ f'(s) > 0 \forall s \in (0,2)$, \emph{ii)} $\lim_{\sss{s \rightarrow 0^{+}}} f(s) = 0$, and \emph{iii)} $\limsup_{\sss{s \rightarrow 0^{+}}} f'(s),f''(s) < \infty$.
	Denote $f_{\sss{2}} := \lim_{\sss{s \rightarrow 2^{-}}} f(s) $ and $f_{\sss{0}}^{\sss{\prime}} := \lim_{\sss{s \rightarrow 0^{+}}} f'(s) $.
	We say 
	\begin{itemize}
		\item $f \in \mathcal{P}_{\sss{0}}$ if $f_{\sss{0}}^{\sss{\prime}} = 0 $ and $f \in \mathcal{P}_{\sss{\bar{0}}}$ if $f_{\sss{0}}^{\sss{\prime}} \ne 0 $,
		\item $f \in \mathcal{P}^{\sss{\infty}}$ if $f_{\sss{2}} = \infty$, and $f \in \mathcal{P}^{\sss{\bar{\infty}}}$ if $f_{\sss{2}} < \infty$,
		\item $f \in \mathcal{P}^{\sss{0}}$ if $f \in \mathcal{P}^{\sss{\bar{\infty}}}  \wedge \lim_{\sss{s \rightarrow 2^{-}}} f'(s) \sqrt{2 - s} = 0$,
		\item $f \in \mathcal{P}^{\sss{\bar{0}}}$ if $ f \in \mathcal{P}^{\sss{\bar{\infty}}}  \wedge \lim_{\sss{s \rightarrow 2^{-}}} f'(s) \sqrt{2 - s}  \ne 0 $,
		\item $f \in \bar{\mathcal{P}}$ if $f(\cdot)$ is of any of the previous classes.
	\end{itemize}
\end{defn}
\begin{figure}
	\centering
	\includegraphics[width=1\linewidth]{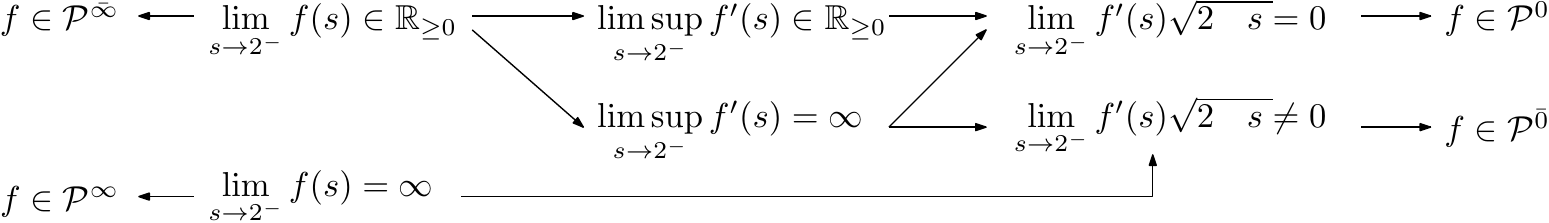}
	\caption{Relation between properties of $f(\cdot)$ and the classes it belongs to.}
	\label{fig:ClassesOf_f_Functions}
\end{figure}
Figure~\ref{fig:gClasses} illustrates the different classes introduced in Definition~\ref{eq:gClasses} while Fig.~\ref{fig:ClassesOf_f_Functions} illustrates how the properties that $f(\cdot)$ satisfies affects the classes it belongs to (see Remark~\ref{rem:DistanceFunction} in Appendix). 
In~\cite{tron2012intrinsic}, the notion of \emph{reshaping function} is introduced, whose definition is within the same spirit as that of Definition~\ref{eq:gClasses}.
\review{\label{com:c4}
For the rest of this manuscript, we assume that, for each edge $k \in \mathcal{M}$, there exists a function $d_{\sss{k}}(\nmbi[1],\nmbi[2]) = f_{\sss{k}}(1 - \nmbi[1]\tp\nmbi[2])$ where $f_{\sss{k}} \in \bar{\mathcal{P}}$; in particular,   $f_{\sss{k}}^{\sss{\prime}}(\cdot)$ plays the role of an edge weight. 
In e.g.~\cite{olfati2006swarms,moshtagh2007distributed}, $f_{\sss{k}}(s) = a_{\sss{k}} s$ and $f_{\sss{k}}^{\sss{\prime}}(s) = a_{\sss{k}}$, for all $k \in \mathcal{M}$ ($a_{\sss{k}}$ is the weight of edge $k$ and it is denoted by $a_{\sss{ij}}$ in~\cite{olfati2006swarms}, where $(i,j)= \bar{\kappa}^{\sss{-1}}(k)$).
}
As such, \eqref{eq:TotalDistanceFunction} and \eqref{eq:EdgeError} can be rewritten as $D : \Omega_{\sss{\nmb}}^{\sss{D}} \mapsto \Rn[]_{\sss{\ge 0}}$ (see~\eqref{eq:TotalDistanceFunctionDomain})
\begin{align}
	D(\nmb)
	=
	\sum\nolimits_{\sss{k = 1}}^{\sss{k = M}}
	f_{\sss{k}}(1 - \nmbt{k}\tp\nmbh{k})
	\label{eq:TotalDistanceFunction2}
\end{align}
and $\emb_{\sss{k}} : \{ (\nmbi[1],\nmbi[2]) \in \Stwo \times \Stwo : \nmbi[1]\tp\nmbi[2] \ne -1 \text{ if } f_{\sss{k}} \in \mathcal{P}^{\sss{\bar{0}}} \} \mapsto \Rn[]_{\sss{\ge 0}}$ as
\begin{align}
	&
	\hspace{-0.5cm}
	\Scale[0.85]{
		\embi{k}(\nmbi[1],\nmbi[2]) 
		=
		\sk{\nmbi[1]} \frac{\partial f_{\sss{k}}(1 - \nmbi[1]\tp \nmbi[2])}{\partial \nmbi[1]}
		=
		f_{\sss{k}}^{\sss{\prime}}(1 - \nmbi[1]\tp \nmbi[2])
		\sk{\nmbi[1]} \nmbi[2],
	}
	\label{eq:EdgeError2}
	\\
	&
	\Rightarrow
	\Scale[0.85]{
		\|\embi{k}(\nmbi[1],\nmbi[2]) \|
		= 
		f_{\sss{k}}^{\sss{\prime}}(s) \sqrt{s( 2 - s)}|_{\sss{s = 1 - \nmbi[1]\tp\nmbi[2]}}
	},
	\label{eq:EdgeError2Norm}	
\end{align}
respectively. 
Note that $\embi{k}(\nmbi[1],\nmbi[2])$ is well defined for all $(\nmbi[1],\nmbi[2]) \in \mathcal{S}^{2}\times \mathcal{S}^{2}$ if  $f_{\sss{k}} \in \mathcal{P}^{\sss{0}}$, and if $f_{\sss{k}} \in \mathcal{P}^{\sss{\bar{0}}}$, note that $\lim\limits_{\sss{\nmbi[2] \rightarrow \nmbi[1]}} \embi{k}(\nmbi[1],\nmbi[2]) = \lim\limits_{\sss{s \rightarrow 2^{\sss{-}}}} f_{\sss{k}}^{\sss{\prime}}(s) \sqrt{s( 2 - s)} \lim\limits_{\sss{\nmbi[2] \rightarrow \nmbi[1]}} \frac{\sk{\nmbi[1]} \nmbi[2]}{\|\sk{\nmbi[1]} \nmbi[2]\|}$ does not exist.
Note that the total distance function~\eqref{eq:TotalDistanceFunction2} depends on $f_{\sss{k}}(\cdot)$, for all $k \in \mathcal{M}$, while~\eqref{eq:TotalEdgeError} depends on $f^{\sss{\prime}}_{\sss{k}}(\cdot)$, for all $k \in \mathcal{M}$. 
As such, a distance function may or may not be defined when two unit vectors are diametrically opposed, depending on whether $f \in \mathcal{P}^{\sss{\infty}}$ or $f \in \mathcal{P}^{\sss{\bar{\infty}}}$;
similarly, an edge error may or may not be defined when two unit vectors are diametrically opposed, depending on whether $f \in \mathcal{P}^{\sss{0}}$ or $f \in \mathcal{P}^{\sss{\bar{0}}}$.
Thus, the domains of~\eqref{eq:TotalDistanceFunction} and of~\eqref{eq:TotalEdgeError} depend on the classes $f_{\sss{k}}(\cdot)$ belongs to, for all $k \in \mathcal{M}$.
In particular, $D : \Omega_{\sss{\nmb}}^{\sss{D}} \mapsto \Rn[]_{\sss{\ge 0}}$ and $\emb: \Omega_{\sss{\nmb}}^{\sss{\emb}} \mapsto \Rn[3 N]$, where
\begin{align}
	\Omega_{\sss{\nmb}}^{\sss{D}} 
	=
	& 
	\{  \nmb \in (\mathcal{S}^{\sss{2}})^{\sss{N}}:  \nmbt{k}\tp\nmbh{k} \ne -1 \, \forall f_{\sss{k}} \in \mathcal{P}^{\sss{\infty}} \},
	\label{eq:TotalDistanceFunctionDomain}
	\\
	\Omega_{\sss{\nmb}}^{\sss{\emb}} 
	=
	& 
	\{  \nmb \in (\mathcal{S}^{\sss{2}})^{\sss{N}}:  \nmbt{k}\tp\nmbh{k} \ne -1 \, \forall f_{\sss{k}} \in \mathcal{P}^{\sss{\bar{0}}} \},	
\end{align}
and where we emphasize that $\Omega_{\sss{\nmb}}^{\sss{\emb}}  \subseteq \Omega_{\sss{\nmb}}^{\sss{D}}$, since $f_{\sss{k}} \in \mathcal{P}^{\sss{\infty}} \Rightarrow f_{\sss{k}} \in \mathcal{P}^{\sss{\bar{0}}}$ (see Fig.~\ref{fig:ClassesOf_f_Functions}).
These domains play a role later on, since $D(\cdot)$ is used in constructing a Lyapunov function, while $\emb(\cdot)$ is used in constructing the control law.
As such, the Lyapunov function can be well defined, while the control law is not, while if the control law is well defined, so is the Lyapunov function.
Consequently, it is important to guarantee that along trajectories of the closed-loop system, the control law is well defined.
Additionally, notice that~\eqref{eq:EdgeError2} provides some insight on why we denote $\embi{k}(\cdot,\cdot)$ as edge error of edge $k$.
Indeed, if $f_{\sss{k}} \in \bar{\mathcal{P}} , \forall k \in \mathcal{M}$, it follows that $\embi{k}(\nmbt{k},\nmbh{k}) = \zvec$ implies that $\nmbt{k} = \pm \nmbh{k}$, i.e., it implies that the neighbors that form edge $k$ are either synchronized or diametrically opposed. 
Moreover, if $f_{\sss{k}} \in \bar{\mathcal{P}} \, \forall k \in \mathcal{M}$, the distance between unit vectors is supremum when two unit vectors are diametrically opposed, i.e., for each $k \in \mathcal{M}$, (denote $\Omega = \{ (\nmbi[1],\nmbi[2])\in \mathcal{S}^{\sss{2}} \times \mathcal{S}^{\sss{2}} : \nmbi[1]\tp\nmbi[2] = -1  \}$)
\begin{align}
	\hspace{-0.3cm}
	\Scale[0.82]{
		\sup\limits_{\sss{ (\nmbi[1],\nmbi[2]) \in \Omega  }}
		\, 
		d_{\sss{k}}(\nmbi[1],\nmbi[2])
		=
		\sup\limits_{\sss{ (\nmbi[1],\nmbi[2]) \in \Omega  }}
		\, 
		f_{\sss{k}}(1 - \nmbi[1]\tp\nmbi[2])
		=
		\lim\limits_{\sss{s \rightarrow 2}} \, f_{\sss{k}}(s)	
		=:
		d_{\sss{k}}^{\sss{\max}}.
	}
	\label{eq:MaximumDistance}
\end{align}
For convenience, denote
\begin{align}
	\label{eq:Dmin}
	d^{\sss{\min}} 
	:=
	\min_{\sss{k \in \mathcal{M}}}
	d_{\sss{k}}^{\sss{\max}},
\end{align}
which plays an important role in this and the following sections.
\ReviewAdded{\begin{prop}
	\label{prop:RelationDistanceAndEdgeError}
	\normalfont
	Consider the total edge error in~\eqref{eq:TotalEdgeError} and the total distance function in~\eqref{eq:TotalDistanceFunction2}.
	Consider $\Omega_{\sss{\nmb}}^{\sss{\prime}}: \Rn[]_{\sss{\ge 0}} \rightrightarrows \Omega_{\sss{\nmb}}^{\sss{D}}$ as $\Omega_{\sss{\nmb}}^{\sss{\prime}}(\bar{D}) = \{ \nmb \in \Omega_{\sss{\nmb}}^{\sss{D}} : D(\nmb)  \le \bar{D}\}$, where $\Omega_{\sss{\nmb}}^{\sss{\prime}}(\bar{D})$ is compact for all positive $\bar{D}$ (where $\rightrightarrows$ refers to a set-valued function).
	Then, it follows that
	\begin{align}
		\forall \bar{D} < d^{\sss{\min}} 
		\,
		,
		\max_{\sss{\nmb \in \Omega_{\sss{\nmb}}^{\sss{\prime}}(\bar{D}) }}
		\| \emb(\nmb)\|
		< \infty,
	\end{align}
	and that there are no diametrically opposed neighbors, i.e., $| \{ q \in \mathcal{M} : \forall \nmb \in \Omega_{\sss{\nmb}}^{\sss{\prime}}(\bar{D}) , \nmbt{q}\tp\nmbh{q} = -1 \} | = 0$.
	If $f_{\sss{k}} \in \mathcal{P}^{\sss{0}}$ for all $k \in \mathcal{M}$, it follows that $\max_{\sss{\nmb \in \Omega_{\sss{\nmb}}^{\sss{\emb}}}} \| \emb(\nmb) \| = \max_{\sss{\nmb \in (\mathcal{S}^{\sss{2}})^{\sss{N}} }} \| \emb(\nmb) \| < \infty$;
	moreover, given $\bar{D} < p d^{\min} $ for some $p \in \mathcal{M}$, it follows that there are at most $p - 1$ diametrically opposed neighbors, i.e., $| \{ q \in \mathcal{M} : \forall \nmb \in \Omega_{\sss{\nmb}}^{\sss{\prime}}(\bar{D}) , \nmbt{q}\tp\nmbh{q} = -1 \} | \le p -1$.	
\end{prop}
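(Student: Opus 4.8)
The plan is to exploit that $\Omega_{\sss{\nmb}}^{\sss{\prime}}(\bar{D})$ is a sublevel set of the nonnegative, additively separable function $D(\cdot)$ of~\eqref{eq:TotalDistanceFunction2}, together with the continuity and boundedness properties of $d_{\sss{k}}(\cdot,\cdot)$ and $\embi{k}(\cdot,\cdot)$ recorded just after Definition~\ref{eq:gClasses} and the class relations of Fig.~\ref{fig:ClassesOf_f_Functions}. \emph{Step 1 (no diametrically opposed neighbors).} Fix $\bar{D} < d^{\sss{\min}}$ and $\nmb \in \Omega_{\sss{\nmb}}^{\sss{\prime}}(\bar{D})$. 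Since every summand of $D$ is nonnegative, $d_{\sss{k}}(\nmbt{k},\nmbh{k}) \le D(\nmb) \le \bar{D} < d^{\sss{\min}} \le d_{\sss{k}}^{\sss{\max}}$ for each $k \in \mathcal{M}$. If some edge $q$ had $\nmbt{q}\tp\nmbh{q} = -1$, then either $f_{\sss{q}} \in \mathcal{P}^{\sss{\infty}}$, contradicting $\nmb \in \Omega_{\sss{\nmb}}^{\sss{D}}$ via~\eqref{eq:TotalDistanceFunctionDomain}, or $f_{\sss{q}} \in \mathcal{P}^{\sss{\bar{\infty}}}$, in which case continuity of $f_{\sss{q}}$ and~\eqref{eq:MaximumDistance} give $d_{\sss{q}}(\nmbt{q},\nmbh{q}) = \lim_{\sss{s\to 2^{-}}} f_{\sss{q}}(s) = d_{\sss{q}}^{\sss{\max}}$, contradicting the strict inequality above. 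Hence no edge is diametrically opposed anywhere on $\Omega_{\sss{\nmb}}^{\sss{\prime}}(\bar{D})$; this yields the stated emptiness, which is non-vacuous since $\onesvec_{\sss{N}} \otimes \nmb^{\sss{\star}} \in \Omega_{\sss{\nmb}}^{\sss{\prime}}(\bar{D})$ with $\nmbt{k}\tp\nmbh{k} = 1$ for all $k$.

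\emph{Step 2 (finite bound on $\|\emb\|$).} First I record that $\Omega_{\sss{\nmb}}^{\sss{\prime}}(\bar{D})$ is compact: it is closed in the compact set $(\Stwo)^{\sss{N}}$, because along a convergent sequence in it one cannot have $\nmbt{k}\tp\nmbh{k} \to -1$ for any $k$ with $f_{\sss{k}} \in \mathcal{P}^{\sss{\infty}}$ (that would force $D \to \infty$), so the limit lies in $\Omega_{\sss{\nmb}}^{\sss{D}}$ and then, by continuity of $D$ there, in the sublevel set. Next, from~\eqref{eq:EdgeError2} and properties \emph{i)}, \emph{ii)}, \emph{iii)} of Definition~\ref{eq:gClasses}, each $\embi{k}$ is continuous on $\{(\nmbi[1],\nmbi[2]) \in \Stwo \times \Stwo : \nmbi[1]\tp\nmbi[2] \ne -1\}$ (the value at $\nmbi[1] = \nmbi[2]$ being $\zvec$, by~\eqref{eq:EdgeError2Norm} together with $\limsup f_{\sss{k}}' < \infty$); hence $\emb$ is continuous on the set of configurations with no diametrically opposed edge, in particular on $\Omega_{\sss{\nmb}}^{\sss{\prime}}(\bar{D})$ by Step~1. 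A continuous function on a compact set attains a finite maximum, which gives $\max_{\sss{\nmb \in \Omega_{\sss{\nmb}}^{\sss{\prime}}(\bar{D})}} \|\emb(\nmb)\| < \infty$.

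\emph{Step 3 (the $\mathcal{P}^{\sss{0}}$ case and the counting bound).} If $f_{\sss{k}} \in \mathcal{P}^{\sss{0}}$ for every $k$, then $f_{\sss{k}} \notin \mathcal{P}^{\sss{\bar{0}}}$ (disjoint classes) and $f_{\sss{k}} \notin \mathcal{P}^{\sss{\infty}}$ (since $\mathcal{P}^{\sss{0}} \subseteq \mathcal{P}^{\sss{\bar{\infty}}}$), so $\Omega_{\sss{\nmb}}^{\sss{\emb}} = \Omega_{\sss{\nmb}}^{\sss{D}} = (\Stwo)^{\sss{N}}$; moreover each $\embi{k}$ is then well defined and continuous on all of $\Stwo \times \Stwo$, so $\emb$ is continuous on the compact set $(\Stwo)^{\sss{N}}$, yielding $\max_{\sss{\nmb \in \Omega_{\sss{\nmb}}^{\sss{\emb}}}} \|\emb(\nmb)\| = \max_{\sss{\nmb \in (\Stwo)^{\sss{N}}}} \|\emb(\nmb)\| < \infty$. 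For the counting bound, fix $p \in \mathcal{M}$, $\bar{D} < p\, d^{\sss{\min}}$ and $\nmb \in \Omega_{\sss{\nmb}}^{\sss{\prime}}(\bar{D})$; by~\eqref{eq:MaximumDistance} every edge $q$ with $\nmbt{q}\tp\nmbh{q} = -1$ contributes $d_{\sss{q}}(\nmbt{q},\nmbh{q}) = d_{\sss{q}}^{\sss{\max}} \ge d^{\sss{\min}}$ to $D(\nmb)$, so if $r$ denotes the number of such edges, $r\, d^{\sss{\min}} \le D(\nmb) \le \bar{D} < p\, d^{\sss{\min}}$, whence $r \le p-1$ (and $r = 0$ when $p = 1$, recovering the first part).

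I do not anticipate a real obstruction: the argument is a sublevel-set / compactness / continuity chain. The only point requiring care is tracking the implications among the classes $\mathcal{P}^{\sss{\infty}}$, $\mathcal{P}^{\sss{\bar{\infty}}}$, $\mathcal{P}^{\sss{0}}$, $\mathcal{P}^{\sss{\bar{0}}}$, so that the domains $\Omega_{\sss{\nmb}}^{\sss{D}}$ and $\Omega_{\sss{\nmb}}^{\sss{\emb}}$ and the continuous extendability of $\embi{k}$ across the antipodal configuration are invoked correctly — all of which is already settled in the material preceding the proposition.
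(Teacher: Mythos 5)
Your proof is correct and follows essentially the same route as the paper's: both hinge on the observation that each edge satisfies $d_{\sss{k}}(\nmbt{k},\nmbh{k}) \le D(\nmb) \le \bar{D} < d^{\sss{\min}} \le d_{\sss{k}}^{\sss{\max}}$, which rules out (near-)antipodal neighbors and then gives boundedness of $\emb$, and the counting bound via $q\, d^{\sss{\min}} \le D(\nmb) \le \bar{D} < p\, d^{\sss{\min}}$ is identical. The only cosmetic difference is that the paper extracts the uniform separation $\nmbt{k}\tp\nmbh{k} \ge 1 - f_{\sss{k}}^{\sss{-1}}(\bar{D}) > -1$ directly, whereas you reach the same conclusion through compactness of $\Omega_{\sss{\nmb}}^{\sss{\prime}}(\bar{D})$ together with continuity of $\emb$ away from antipodal configurations.
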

\begin{proof}
	\normalfont
	Recall Definition~\ref{eq:gClasses}, \eqref{eq:TotalDistanceFunction2} and~\eqref{eq:EdgeError2}.
	It follows that $\emb(\cdot)$ and $D(\cdot)$ grow unbounded \emph{only if} two neighbors are diametrically opposed (however, this is not sufficient).
	%
	Recall also that $\Omega_{\sss{\nmb}}^{\sss{\emb}} \subseteq \Omega_{\sss{\nmb}}^{\sss{D}}$, meaning that $\emb(\cdot)$ can grow unbounded while $D(\cdot)$ remains bounded.
	Finally, notice that, if $\nmb \in (\mathcal{S}^{\sss{2}})^{\sss{N}}$ is such that there are $q$ diametrically opposed neighbors, it follows that $D(\nmb) \ge q d^{\sss{\min}}$. 
	Consider then a $\bar{D} < d^{\sss{\min}} \le \infty$.
	Then, for all $\nmb \in \Omega_{\sss{\nmb}}^{\sss{\prime}}(\bar{D})$, it follows that $D(\nmb) \le  \bar{D} \Rightarrow \forall k \in \mathcal{M} ,  \nmbt{k}\tp \nmbh{k} \ge 1 - f_{\sss{k}}^{\sss{-1}}(\bar{D}) > -1$, which means that two neighbors are not arbitrarily close to a configuration where they are diametrically opposed and, therefore, $\max_{\sss{ \nmb \in \Omega_{\sss{\nmb}}^{\sss{\prime}}(\bar{D})}} \|\emb(\nmb)\|$ is bounded.
	For the second part of the Proposition, consider that $f_{\sss{k}} \in \mathcal{P}^{\sss{0}}$ for all $k \in \mathcal{M}$.
	Then $\Omega_{\sss{\nmb}}^{\sss{\emb}} =  \Omega_{\sss{\nmb}}^{\sss{D}} = (\mathcal{S}^{\sss{2}})^{\sss{N}}$, which means $\max_{\sss{ \nmb \in \Omega_{\sss{\nmb}}^{\sss{\prime}}(\bar{D})}} \|\emb(\nmb)\|$ and $\max_{\sss{ \nmb \in \Omega_{\sss{\nmb}}^{\sss{\prime}}(\bar{D})}} D(\nmb)$ are bounded (since the domain is compact, and the functions are continuous).
	Consider then a $p \in \mathcal{M} $ and a $\bar{D} < p d^{\sss{\min}} \le \infty$.
	Since $D(\nmb) \ge q d^{\sss{\min}}$, then, if there are $q$ diametrically opposed neighbors in $\nmb \in (\mathcal{S}^{\sss{2}})^{\sss{N}}$, it follows that $| \{ q \in \mathcal{M} : \forall \nmb \in \Omega_{\sss{\nmb}}^{\sss{\prime}}(\bar{D}) , \nmbt{q}\tp\nmbh{q} = -1 \} | \le p -1$.
\end{proof}}

\begin{figure}
	\centering
		\if\papercolor1
			\includegraphics[clip=true,trim=3cm 7.5cm 4cm 7.5cm,width=0.25\textwidth]{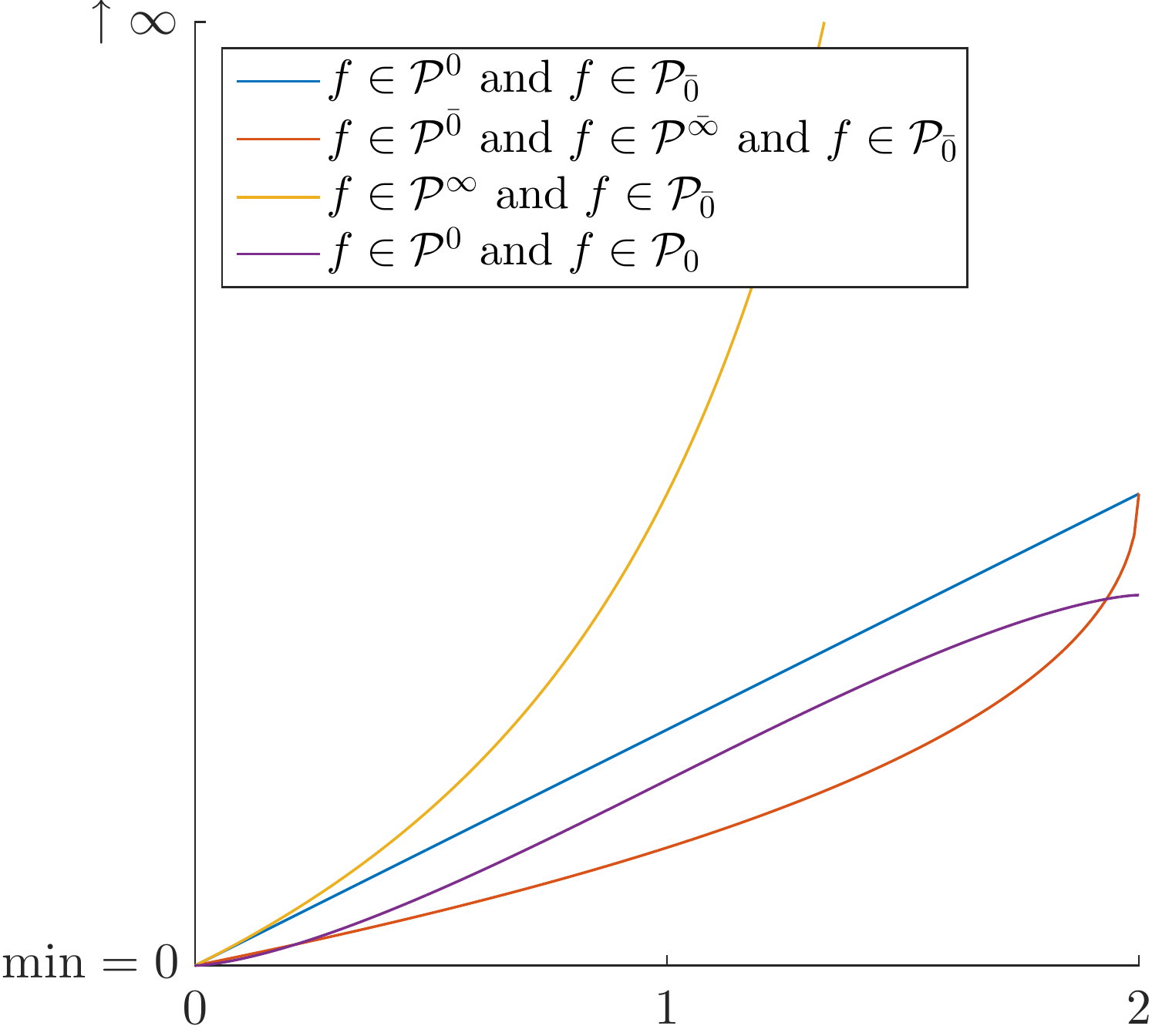}	
		\else
			\includegraphics[clip=true,trim=3cm 7.5cm 4cm 7.5cm,width=0.25\textwidth]{./Figures/fClassesPlot}
		\fi
	\caption{Four functions belonging to different classes as introduced in Definition~\ref{eq:gClasses}: (from top to bottom in legend) $f(s) = s$, $f(s) = \pi^{\sss{-2}}\arccos^{\sss{2}}(1 - s)$, $f(s) = \tan^{\sss{2}}\left(0.5 \arccos(1 - s)\right)$ and $f(s) = 0.25(\sqrt{s (2 -s)}(s -1) + \arccos(1 -s))$.} 
	\label{fig:gClasses}
\end{figure}

		\subsection{Solution to Problem~\ref{prob:ProblemUnitVectorDynamics}}
		In this section, we present the controllers for the torques of each agent~$i$.
\review{\label{com:c5}For each agent~$i$, we design a controller that is a function of $|\mathcal{N}_{\sss{i}}|+1$ measurements: $|\mathcal{N}_{\sss{i}}|$ measurements corresponding to the \emph{distance measurements} between agent~$i$ and its $|\mathcal{N}_{\sss{i}}|$ neighbors, and $1$ measurement corresponding to the body frame angular velocity.
More specifically, we assume each agent~$i$ measures $\Rmati{i}\tp(t) \nmbi[j](t) = \Rmati{i}\tp(t) \Rmati{j}(t) \nmbibody[j]$, at each time instant $t \ge 0$, and for each $j \in \mathcal{N}_{\sss{i}}$; physically, this means that agent~$i$ knows $\nmbibody[j]$ (the \emph{constant} unit vector that it is required to synchronize with), and that it can measure the projection of this unit vector on its orientation frame;}
each agent~$i$ must also measure $\Oi(t)$, which does not require an inertial reference frame.
For convenience denote $\mathcal{N}_{\sss{i}} = \{i_{\sss{1}},\cdots, i_{\sss{|\mathcal{N}_{\sss{i}}|}}\}$, and, given $\nmbibody[i] \in \Stwo$, denote %
$
		\Omega_{\nmb_{i}} 
		= 
		\{ 
			(\nmbi[i_{\sss{1}}],\cdots,\nmbi[i_{|\mathcal{N}_{\sss{i}}|}]) \in \mathcal{S}^{\sss{2 |\mathcal{N}_{\sss{i}}|}} :   
			\nmbibody[i]\tp \nmbi[\sss{i_l}] \ne  -1 , 
			\forall 
			l \in \{1,\cdots, |\mathcal{N}_{\sss{i}}| \} 
			\wedge  
			f_{\sss{\kappa(i,i_{\sss{l}})}} \in \mathcal{P}^{\sss{\bar{0}}} 
		\}
$
%
%
which provides the domain where the control law for agent~$i$ is well defined (recall that if $f_{\sss{k}} \in \mathcal{P}^{\sss{\bar{0}}}$, for some $k \in \mathcal{M}$, then~\eqref{eq:EdgeError2} is not defined when two unit vectors are diametrically opposed).	 
We then propose, for each agent $i \in \mathcal{N}$, the following decentralized control law $\Tmbi^{\sss{cl}}: (\bm{\nu}_{\sss{i}},\Omi{i}) = ((\Rmati{i}\tp\nmbi[i_{\sss{1}}],\cdots,\Rmati{i}\tp\nmbi[i_{|\mathcal{N}_{\sss{i}}|}]),\Omi{i}) \in  \Omega_{\nmb_{i}} \times \Rn[3] \mapsto \Rn[3]$
\begin{align}
	\hspace{-0.7cm}
	\Tmbi[i]^{\sss{cl}}(\bm{\nu}_{\sss{i}},\Oi)	
	=	
	- 
	\bm{\sigma}( \Omi{i})
	-
	\sum\nolimits_{\sss{l = 1}}^{\sss{l = |\mathcal{N}_{\sss{i}}|}} 
	\emb_{\sss{\kappa(i,i_{\sss{l}})}}
	\left(
		\nmbibody,
		\Rmati{i}\tp\nmbi[i_{\sss{l}}]
	\right)
	,	
	\label{eq:DistributedControlLawDynamics} 		
\end{align}
with $\bm{\sigma} \in \Sigma$ (see Definition~\ref{def:SigmaFunction} in Appendix).
The timed control laws for each agent~$i \in \mathcal{N}$ are then $\Tmbi[i]: \Rn[]_{\sss{\ge 0}} \mapsto \Rn[3]$ as
\begin{align}
	\hspace{-0.5cm}
	\Scale[0.9]{
		\Tmbi[i](t) = \Tmbi[i]^{\sss{cl}}((\Rmati{i}\tp(t)\nmbi[j_{\sss{1}}](t),\cdots,\Rmati{i}\tp(t)\nmbi[j_{|\mathcal{N}_{\sss{i}}|}](t)),\Oi(t))
	}.
	\label{eq:DistributedControlLawDynamicsTimed}
\end{align}
\review{\label{com:c6}
The proposed torque control law exhibits the following properties.
The controller function in~\eqref{eq:DistributedControlLawDynamicsTimed} is decentralized in the sense that it does not depend on the measurement of the global state $\xmb(\cdot)$.
Also, \eqref{eq:DistributedControlLawDynamicsTimed} can be implemented without the knowledge of an inertial reference, since measuring $\ymb_{\sss{l}}^{\sss{i}}(t) := \Rmati{i}\tp(t)\Rmati{i_{l}}(t)\nmbibody[i_{l}]$ at every time instant $t \ge 0$ and for all $l \in \{1,\cdots, |\mathcal{N}_{\sss{i}}| \} $ only requires the measurement of the projection of $\nmbibody[i_{l}]$ in agent's~$i$ body orientation frame; 
while $\ymb_{\sss{|\mathcal{N}_{\sss{i}}|+1}}^{\sss{i}}(t) := \Oi[i](t)$ is also measured in agent's~$i$ body orientation frame.
Finally, notice that $\norm{\Tmbi[i](\cdot)} \le \sigma^{\sss{\max}} + |\mathcal{N}_i| \max_{\sss{j \in \mathcal{N}_i}} \sup_{\sss{0 < s < 2} } f'_{\sss{\kappa(i,j)}}(s)$. 
%
As such, the proposed control law, for each agent $i$, can be implemented with bounded actuation provided that $\sigma^{\sss{\max}}<\infty$ and that $f_{\sss{\kappa(i,j)}} \in \mathcal{P}^{\sss{0}}$ for all $j \in \mathcal{N}_{\sss{i}}$.
}

Notice that
$
	\sum_{\sss{ k \in \mathcal{M}' \subseteq \mathcal{M}}}
	\|\emb_{\sss{k}}(\cdot,\cdot)\|
	\le 
	\sum_{\sss{ k \in \mathcal{M}}}
	\|\emb_{\sss{k}}(\cdot,\cdot)\|	
	\le 
	\sqrt{M} \| \emb(\cdot) \|,
$
and, therefore, for any $\Rmati{i} \in \SO[3]$, and for all $\xmb \in \Omega_{\sss{\nmb}}^{\sss{\emb}} \times \Rn[3N]$,
\begin{align}
	\hspace{-0.7cm}
	\Scale[0.87]{
		\| \Tmbi[i]^{\sss{cl}}((\Rmati{i}\tp\nmbi[i_{\sss{1}}],\cdots,\Rmati{i}\tp\nmbi[i_{|\mathcal{N}_{\sss{i}}|}]),\Oi)\|	
		\le 
		\sigma^{\sss{\prime}} \| \Omi{i} \|
		+ 
		\sqrt{M}
		\| \emb(\nmb) \|
	}
	,
	\label{eq:DistributedControlLawDynamicsBound} 		
\end{align}
a relation to be used later.
We can now present the complete control law $\Tmb^{\sss{cl}}: \Rn[]_{\sss{\ge 0}} \times \Omega_{\sss{\nmb}}^{\sss{\emb}} \times \Rn[3 N] \mapsto \Rn[3 N]$,
\begin{align}
	&
	\hspace{-0.65cm}
	\Tmb^{\sss{cl}}(t,\xmb)
	=
	\begin{bmatrix}
		\Tmbi[1]((\Rmati{1}\tp(t)\nmbi[1_{\sss{1}}],\cdots,\Rmati{1}\tp\nmbi[1_{|\mathcal{N}_{\sss{1}}|}]),\Oi[1])\\
		\vdots \\
		\Tmbi[N]((\Rmati{N}\tp(t)\nmbi[N_{\sss{1}}],\cdots,\Rmati{N}\tp\nmbi[N_{|\mathcal{N}_{\sss{N}}|}]),\Oi[N])
	\end{bmatrix}	
	\\
	&
	\hspace{-0.65cm}	
	=
	-
	[
		\bm{\sigma}\tp( \Omi{1})\,
		\cdots \,
		\bm{\sigma}\tp(\Omi{N})
	]\tp
	-
	\Rmat\tp(t)
	(B \otimes \Idmat)
	\emb(\nmb),
	\label{eq:DistributedControlLawVectorial}
\end{align}
where $\Rmat(\cdot) = \Rmati{1}(\cdot) \oplus \cdots \oplus \Rmati{N}(\cdot)$ (see Notation).
For the remainder of this paper, we dedicate efforts in studying the equilibria configurations induced by this control law (for different types of graphs), their stability, and what is the effect of the chosen distance functions.
Notice that~\eqref{eq:DistributedControlLawVectorial} is defined on $\Rn[]_{\sss{\ge 0}} \times \Omega_{\sss{\nmb}}^{\sss{\emb}} \times \Rn[3 N]$.
As such, when $f_{\sss{k}} \in \mathcal{P}^{\sss{0}} \, \forall k \in \mathcal{M}$, $\Omega_{\sss{\nmb}}^{\sss{\emb}} = (\mathcal{S}^{\sss{2}})^{\sss{N}}$, and the analysis is simplified;
when, however, $\exists k \in \mathcal{M} : f_{\sss{k}} \in \mathcal{P}^{\sss{\bar{0}}} $, $\Omega_{\sss{\nmb}}^{\sss{\emb}} \subset  (\mathcal{S}^{\sss{2}})^{\sss{N}}$ (where $\Omega_{\sss{\nmb}}^{\sss{\emb}}$ is open), and it is necessary to guarantee that a trajectory $\xmb(\cdot)$ never approaches the boundary of $\Omega_{\sss{\nmb}}^{\sss{\emb}}$.

		\subsection{Lyapunov Function}
		\label{subsubsec:LyapunovFunction}
In addition to the total distance function of the network~\eqref{eq:TotalDistanceFunction}, let us also define the total rotational kinetic energy of the network, as $H: \Rn[3N] \mapsto \Rn[{}]_{\sss{\ge 0}}$, where
\begin{align}
	H(\bm{\omega}) 
	= 
	\frac{1}{2}
	\sum\nolimits_{\sss{i =1}}^{\sss{i = N}} 
	\Omi{i}\tp J_{\sss{i}}\Omi{i}
	\label{eq:TotalKineticEnergy}
\end{align}
and which satisfies $\frac{\partial H(\bm{\omega})}{\partial \xmb}\tp \fmb_{\sss{\xmb}}(\cdot,\xmb,\Tmb) =  \frac{\partial H(\bm{\omega})}{\partial \bm{\omega}}\tp \fmb_{\sss{\bm{\omega}}}(\bm{\omega},\Tmb)  =	\sum\nolimits_{\sss{i =1}}^{\sss{i = N}} 
	\Omi{i}\tp J_{\sss{i}} 	\fmb_{\sss{\Oi}}(\Oi[i], \Tmbi[i])	 =\bm{\omega}\tp \Tmb$, for all $(\bm{\omega},\Tmb) \in \Rn[6 N] $.
Combining~\eqref{eq:TotalDistanceFunction2} and~\eqref{eq:TotalKineticEnergy}, consider then the Lyapunov function $V : \Omega_{\sss{\nmb}}^{\sss{D}} \times \Rn[3N] \mapsto \Rn[{}]_{\sss{\ge 0}}$,
\begin{align}
		V(\xmb)
		& 
		=
		D(\nmb)
		+
		H(\bm{\omega})
		,
		\label{eq:Lyapunov}
\end{align}
and the function $W : \Rn[3N] \mapsto \Rn[{}]_{\sss{\ge 0}}$ defined as
\begin{align}
		W(\bm{\omega})
		=
		&
		-
		\frac{\partial V(\xmb)}{\partial \xmb}\tp
		\fmb_{\sss{\xmb}}(\cdot,\xmb,\Tmb^{\sss{cl}}(\cdot,\xmb))
		\\
		\overset{\sss{\eqref{eq:TotalDistanceFunctionDerivative}}}{=}
		&
		-
		\bm{\omega}\tp
		\Rmat\tp(\cdot)
		(B \otimes \Idmat)
		\emb(\nmb)
		-
		\bm{\omega}\tp
		\Tmb^{\sss{cl}}(\cdot,\xmb)
		\\
		\overset{\sss{\eqref{eq:DistributedControlLawVectorial}}}{=}
		& 
		\sum\nolimits_{\sss{i =1}}^{\sss{i = N}} 
		\Omi{i}\tp \bm{\sigma}(\Omi{i})
		,
		\label{eq:LyapunovDerivative}
\end{align}
and it follows that, along a trajectory $\xmb(\cdot)$ of $\dot{\xmb}(t) = \fmb_{\sss{\xmb}}(t,\xmb(t),\Tmb^{\sss{cl}}(t,\xmb))$, $\dot{V}(\xmb(t)) = - W(\bm{\omega}(t)) \le 0, \forall t \ge 0$.
Note that, for every $i \in \mathcal{N}$,
\begin{align}
	&
	\hspace{-0.6cm}
	\Scale[0.76]{
		\|\fmb_{\sss{\bm{\omega}_{\sss{i}}}}(\Oi[i],\Tmbi[i]^{\sss{cl}}(\cdot) )\|
		\overset{\sss{\eqref{eq:OmegaVectorField}}}{\le}
		\frac{1}{\lambda_{\sss{\min}}(J_{\sss{i}})}
		\left(
			\Tmbi[i]^{\sss{cl}}(\cdot)
			+
			\lambda_{\sss{\max}}(J_{\sss{i}}) \norm{\Oi[i]}^2
		\right)
	}
	\\
	&
	\hspace{-0.6cm}
	\Scale[0.76]{
		\overset{\sss{\eqref{eq:DistributedControlLawDynamicsBound}}}{\le}
		\frac{1}{\lambda_{\sss{\min}}(J_{\sss{i}})}
		\left(
			\sigma^{\sss{\prime}} \| \Oi[i] \|
			+ 
			\sqrt{M} \| \emb(\nmb) \|
			+
			\lambda_{\sss{\max}}(J_{\sss{i}}) \norm{\Oi[i]}^2
		\right)	
		=:
		f_{\sss{\bm{\omega}_{\sss{i}}}}^{\sss{\infty}}(\xmb).
	}
	\label{eq:omegaiDotUniformContinuity}
\end{align}
Moreover, along a trajectory $\xmb(\cdot)$, (for brevity, below, we omit the time dependency of the state)
\begin{align}
		& 
		\hspace{-0.6cm}
		\dot{W}(\bm{\omega})
		=
		\frac{\partial W(\xmb)}{\partial \xmb}\tp
		\fmb_{\sss{\xmb}}(t,\xmb,\Tmb^{\sss{cl}}(t,\xmb))
		\\
		&	
		\hspace{-0.6cm}
		=
		\sum\nolimits_{\sss{1 = 1}}^{\sss{i = N}} 
		\left(
		\bm{\sigma}( \Omi{i}) + \Omi{i}\tp \frac{\partial \bm{\sigma}( \Omi{i})}{\partial \Omi{i}} 
		\right)\tp
		\fmb_{\sss{\bm{\omega}_{\sss{i}}}}(\Oi[i],\Tmbi[i]^{\sss{cl}}(\cdot) ),
		\label{eq:Wdot}
\end{align}
for which it follows that $\forall t \ge 0$
\begin{align}
		& 
		\hspace{-0.5cm}
		|\dot{W}(\bm{\omega}(t))|
		\le
		\sum\nolimits_{\sss{1 = 1}}^{\sss{i = N}} 
		\left(
			\sigma_{\sss{s}} + \sigma^{\sss{\prime}}
		\right)\tp
		\| \Oi[i](t) \| 
		f_{\sss{\bm{\omega}_{\sss{i}}}}^{\sss{\infty}}(\xmb(t)).
		\label{eq:WDot}
\end{align}
Along the same trajectory, it also follows that, for every $i \in \mathcal{N}$ (we again omit the time dependency)
\begin{align} 
		&
		\hspace{-0.6cm}
		\Scale[0.80]{
			\ddot{\bm{\omega}}_{\sss{i}}
			=
			-
			J_{\sss{i}}^{\sss{-1}}
			\left(
				\sk{\dot{\bm{\omega}}_{\sss{i}}} J_{\sss{i}} \bm{\omega}_{\sss{i}}
				+
				\sk{\bm{\omega}_{\sss{i}}} J_{\sss{i}} \dot{\bm{\omega}}_{\sss{i}}
				+ 
				D\bm{\sigma}( \Omi{i}) \OmiDot{i} 
			\right)
			+
		}
		\\
		&
		\hspace{-0.6cm}
		\Scale[0.80]{
			J_{\sss{i}}^{\sss{-1}}
			\Ri[i]\tp 
			\sk{\nmbi[i]}
			\sum\limits_{\sss{j \in \mathcal{N}_{\sss{i}}}}
			\left(
				f'(\cdot) \Idmat
				-
				f''(\cdot) \nmbi[j] \nmbi[i]\tp
			\right)		
			\sk{\nmbi[j]}
			(\Ri[i]\Oi[i] - \Ri[j] \Oi[j])
		}.
		\label{eq:Omegai2Dot}
\end{align}
It follows from~\eqref{eq:omegaiDotUniformContinuity}, \eqref{eq:WDot} and~\eqref{eq:Omegai2Dot}  that if $\sup_{\sss{ t \ge 0}} \| \emb(\nmb(t)) \| < \infty$ and $\sup_{\sss{ t \ge 0}}  \norm{\Oi[i](t)} < \infty$, then $\sup_{\sss{ t \ge 0}} |\dot{W}(\bm{\omega}(t))| < \infty$ and $\sup_{\sss{ t \ge 0}} \|\ddot{\bm{\omega}}_{\sss{i}}(t)\| < \infty$;
this in turn implies that $W(\bm{\omega}(\cdot))$ and $\OiDot[i](\cdot)$ are uniformly continuous, which plays a role in proving that $\lim_{\sss{t \rightarrow \infty}}W(\bm{\omega}(t)) = 0$ and that $\lim_{\sss{t \rightarrow \infty}}\OiDot[i](t) = 0$.

\begin{prop}	
	\label{prop:ConvergenceToNullSpace}
	\normalfont
	Consider the vector field~\eqref{eq:StateVectorField}, the control law~\eqref{eq:DistributedControlLawVectorial}, and a trajectory $\xmb(\cdot)$ of $\dot{\xmb}(t) = \fmb_{\sss{\xmb}}(t,\xmb(t),\Tmb^{\sss{cl}}(t,\xmb(t)))$.
	If $\xmb(0) \in \Omega_{\sss{\xmb}}^{\sss{0}} = \{ \xmb \in \Omega_{\sss{\nmb}}^{\sss{D}} \times \Rn[3N] : V(\xmb) < d^{\sss{\min}} \} $, then, along the trajectory $\xmb(\cdot)$, $\lim_{\sss{t \rightarrow \infty}} (B \otimes \Idmat) \emb(\nmb(t)) = \zvec$ and $\lim_{\sss{t \rightarrow \infty}} \bm{\omega}(t) = \zvec$.
\end{prop}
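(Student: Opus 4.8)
The plan is a Lyapunov argument closed by Barbalat's lemma, built on the fact established in~\eqref{eq:LyapunovDerivative} that $\dot V(\xmb(t)) = -W(\bm\omega(t)) = -\sum_{\sss{i\in\mathcal{N}}}\Omi{i}\tp(t)\bm\sigma(\Omi{i}(t)) \le 0$ along the closed loop. First I would show that $\Omega_{\sss{\xmb}}^{\sss{0}}$ is forward invariant and that the trajectory stays in a compact set on which the vector field is well defined: since $\dot V \le 0$, with $\bar D := V(\xmb(0)) < d^{\sss{\min}}$ one gets $D(\nmb(t)) \le V(\xmb(t)) \le \bar D$ and $H(\bm\omega(t)) \le \bar D$ for all $t \ge 0$, hence $\nmb(t) \in \Omega_{\sss{\nmb}}^{\sss{\prime}}(\bar D)$ and, by Proposition~\ref{prop:RelationDistanceAndEdgeError} applied with $\bar D < d^{\sss{\min}}$, there are no diametrically opposed neighbours along $\nmb(\cdot)$ and $\sup_{\sss{t\ge 0}}\|\emb(\nmb(t))\| \le \max_{\sss{\nmb\in\Omega_{\sss{\nmb}}^{\sss{\prime}}(\bar D)}}\|\emb(\nmb)\| < \infty$; similarly $H(\bm\omega)\ge \tfrac{1}{2}\big(\min_{\sss{i}}\lambda_{\sss{\min}}(J_{\sss{i}})\big)\|\bm\omega\|^{\sss{2}}$ bounds $\sup_{\sss{t\ge 0}}\|\bm\omega(t)\|$. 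As the closed-loop vector field is continuous (locally Lipschitz) on this compact, forward-invariant set, $\xmb(\cdot)$ exists for all $t \ge 0$.

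Next I would prove $\bm\omega(t) \rightarrow \zvec$. Since $V(\xmb(\cdot))$ is bounded below and non-increasing it converges, so $\int_{\sss{0}}^{\sss{\infty}} W(\bm\omega(t))\,\mathrm{d}t = V(\xmb(0)) - \lim_{\sss{t\rightarrow\infty}}V(\xmb(t)) < \infty$. The bounds just obtained on $\|\emb(\nmb(\cdot))\|$ and $\|\bm\omega(\cdot)\|$ make the right-hand sides of~\eqref{eq:omegaiDotUniformContinuity} and~\eqref{eq:WDot} bounded, hence $\sup_{\sss{t\ge 0}}|\dot W(\bm\omega(t))| < \infty$ and $W(\bm\omega(\cdot))$ is uniformly continuous; Barbalat's lemma then gives $W(\bm\omega(t)) \rightarrow 0$. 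Since $\bm\sigma \in \Sigma$, $W$ is continuous and positive definite, and $\bm\omega(\cdot)$ is bounded, so any subsequential limit of $\bm\omega(t)$ at which $W$ vanishes must be $\zvec$; therefore $\bm\omega(t) \rightarrow \zvec$.

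Finally, for $(B\otimes\Idmat)\emb(\nmb(t)) \rightarrow \zvec$ I would first upgrade to $\OmiDot{i}(t) \rightarrow \zvec$: by~\eqref{eq:Omegai2Dot} and the Step~1 bounds, $\sup_{\sss{t\ge0}}\|\OmiDDot{i}(t)\| < \infty$, so each $\OmiDot{i}(\cdot)$ is uniformly continuous, and since $\Omi{i}(t)$ converges (to $\zvec$) the standard Barbalat corollary --- a $\mathcal{C}^{\sss{1}}$ signal with a limit and uniformly continuous derivative has derivative tending to zero --- yields $\OmiDot{i}(t) \rightarrow \zvec$. Then~\eqref{eq:AngularVelocityDynamics2} gives $\Tmbi[i]^{\sss{cl}}(t) = J_{\sss{i}}\OmiDot{i}(t) + \sk{\Omi{i}(t)}J_{\sss{i}}\Omi{i}(t) \rightarrow \zvec$, and since $\bm\sigma(\Omi{i}(t)) \rightarrow \zvec$, \eqref{eq:DistributedControlLawDynamics} forces $\sum_{\sss{j\in\mathcal{N}_{\sss{i}}}}\emb_{\sss{\kappa(i,j)}}(\nmbibody[i], \Rmati{i}\tp(t)\nmbi[j](t)) \rightarrow \zvec$ for every $i$. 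Using the rotation invariance $\emb_{\sss{k}}(\Rmat\amb,\Rmat\bmb) = \Rmat\,\emb_{\sss{k}}(\amb,\bmb)$ for $\Rmat \in \SO[3]$ (which follows from $d_{\sss{k}}(\Rmat\amb,\Rmat\bmb)=d_{\sss{k}}(\amb,\bmb)$ and $(\Rmat\amb)\times(\Rmat\bmb)=\Rmat(\amb\times\bmb)$) together with $\Rmati{i}\tp\nmbi[i] = \nmbibody[i]$, this sum equals $\Rmati{i}\tp(t)$ times the $i$-th $\Rn[3]$-block of $(B\otimes\Idmat)\emb(\nmb(t))$; as $\Rmati{i}(t) \in \SO[3]$, every block of $(B\otimes\Idmat)\emb(\nmb(t))$ tends to $\zvec$, which is the claim.

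The main obstacle is Step~1: because $\Omega_{\sss{\nmb}}^{\sss{\emb}}$ is merely open (not closed) when some $f_{\sss{k}} \in \mathcal{P}^{\sss{\bar{0}}}$, one must rule out the trajectory drifting toward a diametrically opposed configuration where the control law blows up --- which is precisely why the threshold $d^{\sss{\min}}$ enters the definition of $\Omega_{\sss{\xmb}}^{\sss{0}}$ and why Proposition~\ref{prop:RelationDistanceAndEdgeError} is invoked. Once boundedness of $\emb(\nmb(\cdot))$, $\bm\omega(\cdot)$, $\dot W(\bm\omega(\cdot))$ and $\OmiDDot{i}(\cdot)$ is secured, the remaining steps are routine Barbalat-type arguments assembled with the structural identities already derived in the paper.
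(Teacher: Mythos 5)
Your proposal is correct and follows essentially the same route as the paper, which proves this statement as the special case $\bar{\mathcal{L}}=\emptyset$ of Proposition~\ref{prop:ConvergenceToNullSpaceExtended}: boundedness of $\emb(\nmb(\cdot))$ and $\bm{\omega}(\cdot)$ via $V$ and Proposition~\ref{prop:RelationDistanceAndEdgeError}, Barbalat on $W$ to get $\bm{\omega}\rightarrow\zvec$, then Barbalat on $\dot{\bm{\omega}}$ to get $\Tmbi[i]^{\sss{cl}}\rightarrow\zvec$ and hence $(B\otimes\Idmat)\emb(\nmb(\cdot))\rightarrow\zvec$. Your explicit use of the rotation equivariance $\emb_{\sss{k}}(\Rmat\amb,\Rmat\bmb)=\Rmat\,\emb_{\sss{k}}(\amb,\bmb)$ in the last step is a detail the paper leaves implicit, but it is the same argument.
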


We refer to the proof of Proposition~\ref{prop:ConvergenceToNullSpaceExtended}, which provides a result for a more general control law than that in~\eqref{eq:DistributedControlLawVectorial}, and which is described in the next section.

		\subsection{Constrained Torque}
		\label{sec:ConstrainedTorque}
		A natural constraint in a physical system is to require the torque provided by agent~$i$ to be orthogonal to $\bar{\nmb}_{\sss{i}}$. 
In satellites, thrusters that provide torque along $\bar{\nmb}_{\sss{i}}$ might be unavailable; also, controlling the space orthogonal to $\bar{\nmb}_{\sss{i}}$ can be left as an additional degree of freedom, in order to accomplish some other control objectives.
However, the control laws proposed in~\eqref{eq:DistributedControlLawDynamics} require full torque actuation, in particular, \eqref{eq:DistributedControlLawDynamics} requires each agent to provide torque on the plane orthogonal to $\bar{\nmb}_{\sss{i}}$.
Indeed, since $ \nmbi[1]\tp \emb_{\sss{k}}(\nmbi[1],\cdot) = 0$, $\forall \nmbi[1] \in \Stwo , \forall k\in \mathcal{M}$ (see~\eqref{eq:EdgeError}),  it follows that, for all $i \in \mathcal{N}$, 
\begin{align}
	\nmbibody[i]\tp \Tmb_{\sss{i}}^{\sss{cl}}(\cdot,\Oi) 
	= 
	\nmbibody[i]\tp \bm{\sigma}(\Oi[i]),
\end{align}
which is not necessarily $0$. 
In short, previously, we provided control laws $\Tmbi^{\sss{cl}}: \Omega_{\nmb_{i}} \times \Rn[3] \mapsto \Rn[3]$ which require full torque by each agent $i \in \mathcal{N}$, and in this section we provide constrained control laws $\bar{\Tmb}_{\sss{i}}^{\sss{cl}}: \Omega_{\nmb_{i}} \times \Rn[3] \mapsto \{ \zmb \in \Rn[3] : \zmb\tp\nmbibody[i] = 0 \}$, i.e., control laws which do not require torque along $\nmbibody[i]$.
Let us anticipate a future result by announcing that the constrained control law can only be used by agent $i \in \mathcal{N}$ when the unit vector to be synchronized by agent $i \in \mathcal{N}$, namely $\nmbibody[i]$, is a principal axis of that agent (i.e., when $\nmbibody[i]$ is an eigenvector of $J_{\sss{i}}$).
Consider then $\bar{\Tmb}_{\sss{i}}^{\sss{cl}}: (\bm{\nu}_{\sss{i}},\Omi{i}) = ((\Rmati{i}\tp\nmbi[i_{\sss{1}}],\cdots,\Rmati{i}\tp\nmbi[i_{|\mathcal{N}_{\sss{i}}|}]),\Omi{i}) \in  \Omega_{\nmb_{i}} \times \Rn[3] \mapsto \{ \zmb \in \Rn[3] : \zmb\tp\nmbibody[i] = 0 \}$ defined as
\begin{align}
	&
	\hspace{-0.5cm}
	\bar{\Tmb}_{\sss{i}}^{\sss{cl}}(\bm{\nu}_{\sss{i}},\Oi)
	=
	\OP{\nmbibody[i]}
	\Tmbi[i]^{\sss{cl}}(\bm{\nu}_{\sss{i}},\Oi)
	\\
	&
	\hspace{-0.5cm}
	\overset{\sss{\eqref{eq:DistributedControlLawDynamics}}}{=}
	- 
	\bm{\sigma}( \OP{\nmbibody[i]}\Omi{i})
	-
	\sum\nolimits_{\sss{l = 1}}^{\sss{l = |\mathcal{N}_{\sss{i}}|}} 
	\emb_{\sss{\kappa(i,i_{\sss{l}})}}
	\left(
		\nmbibody,
		\Rmati{i}\tp\nmbi[i_{\sss{l}}]
	\right)
	.
	\label{eq:DistributedControlLawModified2}
\end{align}
Additionally, consider a partition of $\mathcal{N}$, i.e., $\bar{\mathcal{L}} \cup \mathcal{L} = \mathcal{N}$ with $\bar{\mathcal{L}} \cap \mathcal{L} = \emptyset$; where $\bar{\mathcal{L}}$ is a subset (possibly empty) of the agents whose unit vector to synchronize is an eigenvector of their moment of inertia, i.e., $\bar{\mathcal{L}} \subseteq \{i \in \mathcal{N}: \exists \lambda_i \, \text{s.t.}\, J_i \bar{\nmb}_{\sss{i}} = \lambda_i \bar{\nmb}_{\sss{i}} \}$.
Then we propose the complete control law $\bar{\Tmb}^{\sss{cl}}: (t,\xmb) \in \Rn[]_{\sss{\ge 0}} \times (\Omega_{\sss{\nmb}}^{\sss{\emb}} \times \Rn[3N] \mapsto \Rn[3 N]$ defined as
\begin{align}
	\hspace{-0.5cm}
	\begin{cases}
		\Scale[0.75]{
			(\emb_{\sss{i}} \otimes \onesvec_{\sss{3}})\tp \bar{\Tmb}^{\sss{cl}}(t,\xmb)
			=
			\bar{\Tmb}_{\sss{i}}^{\sss{cl}}((\Rmati{i}\tp(t)\nmbi[i_{\sss{1}}],\cdots,\Rmati{i}\tp\nmbi[i_{|\mathcal{N}_{\sss{i}}|}]),\Oi[i])
			\,\,
			\forall  i \in \bar{\mathcal{L}}
		},
		\\
		\Scale[0.75]{
			(\emb_{\sss{i}} \otimes \onesvec_{\sss{3}})\tp \bar{\Tmb}^{\sss{cl}}(t,\xmb)
			=
			\Tmbi[i]^{\sss{cl}}((\Rmati{i}\tp(t)\nmbi[i_{\sss{1}}],\cdots,\Rmati{i}\tp\nmbi[i_{|\mathcal{N}_{\sss{i}}|}]),\Oi[i])
			\,\,
			\forall i \in \mathcal{L}
		},		
	\end{cases}
	\label{eq:DistributedControlLawVectorialModified}
\end{align}
i.e., for agents whose unit vector to synchronize is a principal axis, either control law~\eqref{eq:DistributedControlLawDynamics} or~\eqref{eq:DistributedControlLawModified2} is chosen, and, for all other agents, control law~\eqref{eq:DistributedControlLawDynamics} is chosen.
As such, agents whose unit vector to synchronize is a principal axis have an option between using full torque control or constrained torque control. 
The disadvantage with the control law in~\eqref{eq:DistributedControlLawModified2} is that, along a trajectory of the closed-loop system, and for all $i \in \bar{\mathcal{L}}$, $\lim_{\sss{t \rightarrow \infty }}\nmbibody[i]\tp \Oi[i](t) $ is not guaranteed to exist and be $0$;
i.e., an agent that opts for~\eqref{eq:DistributedControlLawModified2} can asymptotically spin, with non-zero angular velocity, around $\nmbibody[i]$ (nonetheless, we can guarantee that $\sup_{\sss{t \ge 0 }} \|\Oi[i](t)\| < \infty \Rightarrow \sup_{\sss{t \ge 0 }} |\nmbibody[i]\tp \Oi[i](t)| < \infty$, i.e., if an agent applies~\eqref{eq:DistributedControlLawModified2} it never spins infinitely fast around its principal axis $\nmbibody[i]$).

Since $\|\OP{\nmbibody[i]} \Oi\| \le \| \Oi\|, \forall \nmbibody[i] \in \Stwo, \Oi \in \Rn[3]$, it follows that $\|\fmb_{\sss{\bm{\omega}_{\sss{i}}}}(\Oi[i],\bar{\Tmb}_{\sss{i}}^{\sss{cl}}(\cdot) )\| \le f_{\sss{\bm{\omega}_{\sss{i}}}}^{\sss{\infty}}(\xmb)$, with $f_{\sss{\bm{\omega}_{\sss{i}}}}^{\sss{\infty}}(\cdot)$ as defined in~\eqref{eq:omegaiDotUniformContinuity}.
Similarly to~\eqref{eq:LyapunovDerivative}, denote $\bar{W} : \Rn[3N] \mapsto \Rn[{}]_{\sss{\ge 0}}$ as
\begin{align}
		\bar{W}(\bm{\omega})
		=
		&
		-
		\frac{\partial V(\xmb)}{\partial \xmb}\tp
		\fmb_{\sss{\xmb}}(\cdot,\xmb,\bar{\Tmb}^{\sss{cl}}(t,\xmb))
		\\
		=
		& 
		\sum_{\sss{i \in \mathcal{L}}} 
		\Omi{i}\tp \bm{\sigma}(\Omi{i})
		+
		\sum_{\sss{j \in \bar{\mathcal{L}}}} 
		\Omi{j}\tp \OP{\nmbibody[j]}\bm{\sigma}(\Omi{j})
		,
		\label{eq:LyapunovDerivativeModified}
\end{align}
and it follows that, along a trajectory $\xmb(\cdot)$ of $\dot{\xmb}(t) = \fmb_{\sss{\xmb}}(t,\xmb(t),\bar{\Tmb}^{\sss{cl}}(t,\xmb))$, $\dot{V}(\xmb(t)) = - \bar{W}(\bm{\omega}(t)) \le 0 \, \forall t\ge 0$.
Moreover, along a trajectory $\xmb(\cdot)$, (for brevity, below, we omit the time dependency of the state)
\begin{align} 
		\hspace{-0.7cm}
		\dot{\bar{W}}(\bm{\omega})
		&
		=
		\frac{\partial \bar{W}(\xmb)}{\partial \xmb}\tp
		\fmb_{\sss{\xmb}}(t,\xmb,\bar{\Tmb}^{\sss{cl}}(t,\xmb))
		\Rightarrow
		\\
		\hspace{-0.7cm}
		\Rightarrow
		|\dot{\bar{W}}(\bm{\omega}(t))|
		&
		\le
		\sum\limits_{\sss{1 = 1}}^{\sss{i = N}} 
		\left(
			\sigma_{\sss{s}} + \sigma^{\sss{\prime}}
		\right)\tp
		\| \Oi[i](t) \| 
		f_{\sss{\bm{\omega}_{\sss{i}}}}^{\sss{\infty}}(\xmb(t)).
		\label{eq:WdotModified}
\end{align}

\begin{prop}
	\label{prop:ConvergenceToNullSpaceExtended}
	\normalfont
	Consider the vector field~\eqref{eq:StateVectorField}, the control law~\eqref{eq:DistributedControlLawVectorialModified}, and a trajectory $\xmb(\cdot)$ of $\dot{\xmb}(t) = \fmb_{\sss{\xmb}}(t,\xmb(t),\bar{\Tmb}^{\sss{cl}}(t,\xmb(t)))$.
	If $\xmb(0) \in \Omega_{\sss{\xmb}}^{\sss{0}} = \{ \xmb \in \Omega_{\sss{\nmb}}^{\sss{D}} \times \Rn[3N] : V(\xmb) < d^{\sss{\min}} \} $, then $\lim_{\sss{t \rightarrow \infty}} (B \otimes \Idmat) \emb(\nmb(t)) = \zvec$, $\lim_{\sss{t \rightarrow \infty}} \Oi[i](t) = \zvec$ for $i \in \mathcal{L}$ and $\lim_{\sss{t \rightarrow \infty}} \OP{\nmbibody[j]}\Oi[j](t) = \zvec$ for $j \in \bar{\mathcal{L}}$. 
	Moreover, $\sup_{\sss{t \ge 0 }} |\nmbibody[j]\tp\Oi[j](t)| < \infty$ for $j \in \bar{\mathcal{L}}$.
\end{prop}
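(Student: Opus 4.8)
The plan is to invoke LaSalle's invariance principle (in its non-autonomous, Barbalat-type form) exactly as the excerpt has been setting it up, now applied to the closed-loop system driven by $\bar{\Tmb}^{\sss{cl}}$. First I would establish forward invariance of the sublevel set $\Omega_{\sss{\xmb}}^{\sss{0}}$: since $\dot V(\xmb(t)) = -\bar W(\bm{\omega}(t)) \le 0$ along trajectories (from~\eqref{eq:LyapunovDerivativeModified}), we have $V(\xmb(t)) \le V(\xmb(0)) < d^{\sss{\min}}$ for all $t \ge 0$, so $\xmb(t)$ stays in $\Omega_{\sss{\xmb}}^{\sss{0}}$, and in particular $D(\nmb(t)) < d^{\sss{\min}}$. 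By Proposition~\ref{prop:RelationDistanceAndEdgeError} this keeps $\|\emb(\nmb(t))\|$ uniformly bounded and keeps the trajectory away from the boundary of $\Omega_{\sss{\nmb}}^{\sss{\emb}}$ (no diametrically opposed neighbors), so the control law~\eqref{eq:DistributedControlLawVectorialModified} remains well defined along the whole trajectory; also $H(\bm{\omega}(t)) \le V(\xmb(t)) < d^{\sss{\min}}$ bounds $\|\Oi[i](t)\|$ uniformly for every $i$.

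Next I would run the Barbalat argument. Since $V$ is bounded below and nonincreasing, $V(\xmb(t))$ converges, so $\int_0^\infty \bar W(\bm{\omega}(s))\,ds < \infty$. Using the uniform bounds just obtained together with~\eqref{eq:omegaiDotUniformContinuity} (which applies verbatim to $\bar{\Tmb}^{\sss{cl}}$ because $\|\OP{\nmbibody[i]}\Oi\| \le \|\Oi\|$, as noted in the excerpt right before~\eqref{eq:LyapunovDerivativeModified}) and~\eqref{eq:WdotModified}, the derivative $\dot{\bar W}(\bm{\omega}(t))$ is uniformly bounded, hence $\bar W(\bm{\omega}(\cdot))$ is uniformly continuous; Barbalat then gives $\lim_{t\to\infty}\bar W(\bm{\omega}(t)) = 0$. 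By the structure of $\Sigma$ (positive definiteness of $\bm{\omega}^{\sss T}\bm{\sigma}(\bm{\omega})$, Definition~\ref{def:SigmaFunction}), each summand in~\eqref{eq:LyapunovDerivativeModified} is nonnegative and vanishing, which forces $\lim_{t\to\infty}\Oi[i](t) = \zvec$ for $i \in \mathcal{L}$ and $\lim_{t\to\infty}\OP{\nmbibody[j]}\Oi[j](t) = \zvec$ for $j \in \bar{\mathcal{L}}$.

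To get $\lim_{t\to\infty}(B\otimes\Idmat)\emb(\nmb(t)) = \zvec$ I would show $\OiDot[i](t) \to \zvec$ for all $i$ and then read off the claim from the angular-velocity dynamics. For $i \in \mathcal{L}$ this is the same as Proposition~\ref{prop:ConvergenceToNullSpace}: $\OiDot[i]$ is uniformly continuous (its derivative $\OiDDot[i]$ is bounded by~\eqref{eq:Omegai2Dot} using the uniform bounds on $\|\emb(\nmb)\|$ and $\|\Oi[i]\|$), and $\Oi[i] \to \zvec$, so Barbalat gives $\OiDot[i] \to \zvec$; then~\eqref{eq:AngularVelocityDynamics2} with $\Oi[i]\to\zvec$ forces $\Tmbi[i]^{\sss{cl}}(\cdot) \to \zvec$, and since $\bm{\sigma}(\Oi[i])\to\zvec$ this yields $\sum_{l}\emb_{\sss\kappa(i,i_l)}(\nmbibody[i],\Rmati{i}\tp\nmbi[i_l]) \to \zvec$, i.e. the $i$-th block of $\Rmat\tp(B\otimes\Idmat)\emb(\nmb)$ tends to zero, and multiplying by the (bounded) rotation $\Rmati{i}$ gives the corresponding block of $(B\otimes\Idmat)\emb(\nmb)$. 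For $j \in \bar{\mathcal{L}}$ the same scheme applies after projecting: writing $\bm{\mu}_j := \OP{\nmbibody[j]}\Oi[j]$, one has $\bm{\mu}_j \to \zvec$, $\dot{\bm{\mu}}_j$ uniformly continuous (again via bounded second derivatives), hence $\dot{\bm{\mu}}_j \to \zvec$; projecting the dynamics~\eqref{eq:AngularVelocityDynamics2} onto $\nmbibody[j]^\perp$ and using that $\nmbibody[j]$ is an eigenvector of $J_j$ so that $J_j$ leaves $\nmbibody[j]^\perp$ invariant, the drift term $\OP{\nmbibody[j]}\sk{\Oi[j]}J_j\Oi[j]$ — which is quadratic in $\Oi[j]$ and, using $\Oi[j] = \bm{\mu}_j + (\nmbibody[j]^{\sss T}\Oi[j])\nmbibody[j]$ together with $\sk{\nmbibody[j]}J_j\nmbibody[j] = \lambda_j\sk{\nmbibody[j]}\nmbibody[j] = \zvec$ — is seen to vanish in the limit; one is left with $\OP{\nmbibody[j]}\bar{\Tmb}_j^{\sss{cl}}(\cdot) \to \zvec$, and since $\OP{\nmbibody[j]}\bm{\sigma}(\bm{\mu}_j) \to \zvec$ this again delivers convergence of the edge-error sum to zero. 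Finally the last claim, $\sup_{t\ge0}|\nmbibody[j]^{\sss T}\Oi[j](t)| < \infty$ for $j\in\bar{\mathcal{L}}$, is immediate: $|\nmbibody[j]^{\sss T}\Oi[j](t)| \le \|\Oi[j](t)\|$, which we already bounded via $H(\bm{\omega}(t)) < d^{\sss{\min}}$.

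\textbf{Main obstacle.} The delicate point is the $\bar{\mathcal{L}}$ case: the projected component $\nmbibody[j]^{\sss T}\Oi[j]$ is \emph{not} controlled and need not converge, so one must carefully verify that it nonetheless does not contaminate the $\nmbibody[j]^\perp$-dynamics. This is exactly where the eigenvector hypothesis $J_j\nmbibody[j] = \lambda_j\nmbibody[j]$ is indispensable — it makes $J_j$ block-diagonal with respect to $\reals\nmbibody[j] \oplus \nmbibody[j]^\perp$, so the gyroscopic cross-term involving the unbounded-in-principle (but in fact bounded) spin component either lies along $\nmbibody[j]$ (and is killed by $\OP{\nmbibody[j]}$) or is linear in the vanishing $\bm{\mu}_j$. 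Getting the bookkeeping of these cross terms right, and confirming that $\dot{\bm{\mu}}_j$ really is uniformly continuous despite the free spin, is the crux; everything else is a routine replay of the unconstrained proof of Proposition~\ref{prop:ConvergenceToNullSpace}.
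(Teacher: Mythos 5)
Your proposal is correct and follows essentially the same route as the paper's proof: forward invariance of the sublevel set, boundedness of $\emb(\nmb(\cdot))$ and $\bm{\omega}(\cdot)$, Barbalat applied first to $\dot V=-\bar W$ and then to $\dot{\bm\omega}_{\sss{i}}$ (resp.\ $\OP{\nmbibody[j]}\dot{\bm\omega}_{\sss{j}}$), and the eigenvector hypothesis $J_{\sss{j}}\nmbibody[j]=\lambda_{\sss{j}}\nmbibody[j]$ to eliminate the gyroscopic cross-terms for $j\in\bar{\mathcal{L}}$. The only (cosmetic) difference is that you close the $\bar{\mathcal{L}}$ case by projecting the dynamics onto $\nmbibody[j]^{\perp}$, whereas the paper first takes the inner product with $\nmbibody[j]$ to show $\nmbibody[j]\tp\dot{\bm\omega}_{\sss{j}}\to 0$ and then reads $\bar{\Tmb}_{\sss{j}}^{\sss{cl}}\to\zvec$ off the full equation; both hinge on the same eigenvector structure.
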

\begin{proof}
	\normalfont
	For brevity, we say $\fmb: \Rn[{}]_{\sss{\ge 0}} \mapsto \Rn[n]$ is bounded, if $\sup_{\sss{t \ge 0}} \| \fmb(t)\| < \infty$; we say $\fmb: \Rn[{}]_{\sss{\ge 0}} \mapsto \Rn[n]$ converges to a constant, if $\exists \fmb^{\sss{\infty}} \in \Rn[n] : \lim_{\sss{t \rightarrow \infty}} \fmb(t) = \fmb^{\sss{\infty}}$.
	Let us provide a brief summary for the proof.
	First, we prove that, along a trajectory $\xmb(\cdot)$, $\norm{\bm{\omega}(\cdot)}$ and $\norm{\emb(\nmb(\cdot))}$ are bounded. 
	This, in turn, guarantees uniform continuity of $\dot{V}(\xmb(\cdot))$ and of $\dot{\bm{\omega}}(\cdot)$.
	And finally, since both $V(\xmb(\cdot))$ and $\bm{\omega}(\cdot)$ converge to a constant, we invoke Barbalat's lemma (see~\cite{slotine1991applied}, Lemma~4.2) to conclude that $\emb(\nmb(\cdot))$ converges to the null space of $B \otimes \Idmat$.
	Recall then the functions in~\eqref{eq:Lyapunov} and~\eqref{eq:LyapunovDerivative}.
	Since $\dot{V}(\xmb(\cdot)) \le - \bar{W}(\bm{\omega}(\cdot)) \le 0$, it follows that $ V(\xmb(\cdot)) \le V(\xmb(0)) <  d^{\sss{\min}} $.
	Therefore $D(\nmb(\cdot)) < d^{\sss{\min}}$ and $ H(\bm{\omega}(\cdot)) < d^{\sss{\min}}$.
	From $D(\nmb(\cdot)) < d^{\sss{\min}}$, it follows, with the help of Proposition~\ref{prop:RelationDistanceAndEdgeError}, that $ \emb(\nmb(\cdot))$ is bounded;
	while from $ H(\bm{\omega}(\cdot)) < d^{\sss{\min}}$, it follows that $\bm{\omega}(\cdot)$ is also bounded.
	From boundedness of $ \emb(\nmb(\cdot))$ and $\bm{\omega}(\cdot)$, it follows that $\bar{\Tmb}^{\sss{cl}}(\cdot,\xmb(\cdot))$ is bounded (see~\eqref{eq:DistributedControlLawVectorialModified} and~\eqref{eq:DistributedControlLawVectorial});
	that $\|\dot{\bm{\omega}}_{\sss{i}}(\cdot) \| \le  f_{\sss{\bm{\omega}_{\sss{i}}}}^{\sss{\infty}}(\xmb(\cdot))$ is bounded (see~\eqref{eq:omegaiDotUniformContinuity});
	that $|\ddot{V}(\xmb(\cdot))| = |\dot{W}(\bm{\omega}(\cdot))|$ is bounded (see~\eqref{eq:WdotModified}); 
	and, finally, that $\ddot{\bm{\omega}}_{\sss{i}}(\cdot) $ is bounded (see~\eqref{eq:Omegai2Dot}).
	The previous conclusions imply that $\dot{V}(\xmb(\cdot))$ and that $\dot{\bm{\omega}}(\cdot)$ are both uniformly continuous.
	Since $V(\cdot) \ge 0$ and $\dot{V}(\xmb(\cdot)) \le - W(\bm{\omega}(\cdot)) \le 0$, it follows that $V(\xmb(\cdot))$ converges to a constant; by Barbalat's lemma, uniform continuity of $\dot{V}(\xmb(\cdot))$ then implies that  $\dot{V}(\xmb(\cdot)) = - W(\bm{\omega}(\cdot))$ converges to $0$.
	As such, it follows from~\eqref{eq:LyapunovDerivative}, that $\bm{\omega}_{\sss{i}}(\cdot)$ converges to $\zvec$, for all $i \in \mathcal{L}$, while $\OP{\nmbibody[j]}\bm{\omega}_{\sss{j}}(\cdot)$ converges to $\zvec$, for all $j \in \bar{\mathcal{L}}$; also, notice that
	\begin{align}
		\hspace{-0.5cm}
		\Scale[0.9]{
			\lim\limits_{\sss{t \rightarrow \infty}} 
			\OP{\bar{\nmb}_{\sss{j}}} \Omi{j}(t) 
			= 
			\zvec
			\Rightarrow 
			\lim\limits_{\sss{t \rightarrow \infty}} 
			(\Omi{j}(t) 
			-
			\bar{\nmb}_{\sss{j}} (\bar{\nmb}_{\sss{j}}\tp \Omi{j}(t))
			)
			=\zvec .
		}
		\label{eq:LimitOmega}
	\end{align}	
	Let us now study agents in $\mathcal{L}$ and $\bar{\mathcal{L}}$ separately.
	Also, for convenience, and with some abuse of notation, denote $\Tmb_{\sss{i}}^{\sss{cl}}(t) = (\emb_{\sss{i}} \otimes \onesvec_{\sss{3}})\tp \bar{\Tmb}^{\sss{cl}}(t,\xmb(t))$, for $i \in \mathcal{L}$, and $\bar{\Tmb}_{\sss{j}}^{\sss{cl}}(t) = (\emb_{\sss{j}} \otimes \onesvec_{\sss{3}})\tp \bar{\Tmb}^{\sss{cl}}(t,\xmb(t))$,  for $j \in \bar{\mathcal{L}}$.
	For $i \in \mathcal{L}$ (for which~\eqref{eq:DistributedControlLawDynamics} is the chosen control law), and again by Barbalat's lemma, convergence of $\bm{\omega}_{\sss{i}}(\cdot)$ to $\zvec$ and uniform continuity of $\dot{\bm{\omega}}_{\sss{i}}(\cdot)$ imply that  $\dot{\bm{\omega}}_{\sss{i}}(\cdot) = \fmb_{\sss{\bm{\omega}_{\sss{i}}}}(\bm{\omega}_{\sss{i}}(\cdot), \Tmb_{\sss{i}}^{\sss{cl}}(\cdot))$ converges to $\zvec$;
	since $\bm{\omega}_{\sss{i}}(\cdot)$ converges to $\zvec$, so does $\Tmb_{\sss{i}}^{\sss{cl}}(\cdot,\xmb(\cdot))$ (see~\ref{eq:OmegaVectorField}).
	Now, for $j \in \bar{\mathcal{L}}$ (for which~\eqref{eq:DistributedControlLawModified2} is the chosen control law), and once again by Barbalat's lemma, convergence of $\OP{\nmbibody[j]}\bm{\omega}_{\sss{j}}(\cdot)$ to $\zvec$ and uniform continuity of $\frac{d}{dt}(\OP{\nmbibody[j]} \bm{\omega}(t)) = \OP{\nmbibody[j]} \dot{\bm{\omega}}(t)$ implies that  $\OP{\nmbibody[j]} \dot{\bm{\omega}}(\cdot)$ converges to $\zvec$, and therefore
	\begin{align}
		\hspace{-0.5cm}
		\Scale[0.9]{
			\lim\limits_{\sss{t \rightarrow \infty}} 
			\OP{\bar{\nmb}_{\sss{j}}} \OmiDot{j}(t)
			= 
			\zvec
			\Rightarrow 
			\lim\limits_{\sss{t \rightarrow \infty}} 
			(
			\OmiDot{j}(t) 
			-
			\bar{\nmb}_{\sss{j}} (\bar{\nmb}_{\sss{j}}\tp \OmiDot{j}(t))
			)
			= \zvec,
		}
		\label{eq:LimitOmegaDot}
	\end{align}	
	Now, recall~\eqref{eq:AngularVelocityDynamics} where $ J_{\sss{j}} \OmiDot{j}(t) = - \sk{\Omi{j}(t)} J_{\sss{j}} \Omi{j}(t) + \bar{\Tmb}_{\sss{j}}^{\sss{cl}}(t)$, and, from~\eqref{eq:LimitOmega} and~\eqref{eq:LimitOmegaDot}, it follows that 
	%
	\begin{align}
		&
		\Scale[0.82]{
			\lim\limits_{\sss{t \rightarrow \infty}}
			\left(
				J_{\sss{j}} \bar{\nmb}_{\sss{j}} 
				(\bar{\nmb}_{\sss{j}}\tp \OmiDot{j}(t)) 		
				+
				\sk{\bar{\nmb}_{\sss{j}}}
				J_{\sss{j}} \bar{\nmb}_{\sss{j}}
				(\bar{\nmb}_{\sss{i}}\tp \Omi{j}(t))^2
				-
				\bar{\Tmb}_{\sss{j}}^{\sss{cl}}(t)					
			\right)
			=
			\zvec
		}
		\\
		&
		\Scale[0.9]{
			\overset{\sss{J_{\sss{j}} \bar{\nmb}_{\sss{j}}  = \lambda_{\sss{j}} \bar{\nmb}_{\sss{j}} }}{\Rightarrow}
			\lim\limits_{\sss{t \rightarrow \infty}}
			\left(
				\bar{\nmb}_{\sss{j}}
				(\bar{\nmb}_{\sss{j}}\tp \OmiDot{j}(t)) 
				-
				\bar{\Tmb}_{\sss{j}}^{\sss{cl}}(t)		
			\right)
			=
			\zvec
		}		
		\label{eq:DynamicsInvariantSet}
	\end{align}
	If we take the inner product of~\eqref{eq:DynamicsInvariantSet} with $\bar{\nmb}_{\sss{j}}$, and since $\bar{\Tmb}_{\sss{j}}^{\sss{cl}}(\cdot) \perp \nmbibody[j]$, it follows that $\lim_{\sss{t \rightarrow \infty}} (\bar{\nmb}_{\sss{j}}\tp \OmiDot{i}(t))  = 			\lim_{\sss{t \rightarrow \infty}} \bar{\nmb}_{\sss{j}}\tp \bar{\Tmb}_{\sss{j}}^{\sss{cl}}(\cdot)  = 0$. 
	As such, it follows from~\eqref{eq:DynamicsInvariantSet} that $\bar{\Tmb}_{\sss{j}}^{\sss{cl}}(\cdot)$ converges to $\zvec$ for all $j \in \bar{\mathcal{L}}$.
	Now to summarize, recall that, for all $i \in \mathcal{L}$, both $\bm{\omega}_{\sss{i}}(\cdot)$ and $\Tmb_{\sss{i}}^{\sss{cl}}(\cdot,\xmb(\cdot))$ converge to $\zvec$, which implies, from~\eqref{eq:DistributedControlLawDynamics}, that $	
	\sum\nolimits_{\sss{l = 1}}^{\sss{l = |\mathcal{N}_{\sss{i}}|}} 
		\emb_{\sss{\kappa(i,i_{\sss{l}})}}
		\left(
			\nmbibody,
			\Rmati{i}(\cdot)\tp\nmbi[i_{\sss{l}}](\cdot)
		\right)$ converges to $\zvec$.
	On the other hand, for all $j \in \bar{\mathcal{L}}$, both $\OP{\nmbibody[j]} \bm{\omega}_{\sss{j}}(\cdot)$ and $\bar{\Tmb}_{\sss{j}}^{\sss{cl}}(\cdot,\xmb(\cdot))$ converge to zero, which implies, from~\eqref{eq:DistributedControlLawModified2}, that $	
		\sum\nolimits_{\sss{l = 1}}^{\sss{l = |\mathcal{N}_{\sss{j}}|}} 
			\emb_{\sss{\kappa(j,j_{\sss{l}})}}
			\left(
				\nmbibody,
				\Rmati{j}(\cdot)\tp\nmbi[j_{\sss{l}}](\cdot)
			\right)$ converges to $\zvec$.
	All together, it implies that $(B \otimes \Idmat) \emb(\nmb(\cdot))$ converges to $\zvec$.
	Finally, $\sup_{\sss{t \ge 0 }} |\nmbibody[j]\tp\Oi[j](t)| < \infty$ since $\bm{\omega}(\cdot)$ is bounded ($H(\bm{\omega}(\cdot)) < d^{\sss{\min}}$).
	\qed
\end{proof}
Notice that $d^{\sss{\min}}$, defined in~\eqref{eq:Dmin}, is a design parameter, and therefore, the domain of attraction in Proposition~\ref{prop:ConvergenceToNullSpaceExtended} can be made larger by increasing this parameter.
More specifically, $d^{\sss{\min}}$ increases the domain of attraction in the state space related to $\bm{\omega}$, which is clearer in the next corollaries.

\ReviewAdded{
\begin{cor}
	\label{cor:cor1}
	\normalfont
	Proposition~\ref{prop:ConvergenceToNullSpace} holds if $r := \frac{H(\bm{\omega}(0))}{d^{\sss{\min}}} < 1 $ and $\frac{D(\nmb(0))}{d^{\sss{\min}}} <  1 - r$.
\end{cor}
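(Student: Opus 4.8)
The plan is to verify that the two displayed inequalities are nothing more than a convenient restatement of the hypothesis $\xmb(0) \in \Omega_{\sss{\xmb}}^{\sss{0}}$ of Proposition~\ref{prop:ConvergenceToNullSpace}, so that the corollary follows by a direct application of that proposition. Recall from~\eqref{eq:Lyapunov} that $V(\xmb) = D(\nmb) + H(\bm{\omega})$, and that $\Omega_{\sss{\xmb}}^{\sss{0}} = \{ \xmb \in \Omega_{\sss{\nmb}}^{\sss{D}} \times \Rn[3N] : V(\xmb) < d^{\sss{\min}} \}$.

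First I would note that the inequality $\frac{D(\nmb(0))}{d^{\sss{\min}}} < 1 - r$ implicitly presupposes $D(\nmb(0)) < \infty$; by the definition of $\Omega_{\sss{\nmb}}^{\sss{D}}$ in~\eqref{eq:TotalDistanceFunctionDomain}, a finite value of the total distance function forbids any pair of diametrically opposed neighbors on an edge with $f_{\sss{k}} \in \mathcal{P}^{\sss{\infty}}$, so that $\nmb(0) \in \Omega_{\sss{\nmb}}^{\sss{D}}$, and of course $\bm{\omega}(0) \in \Rn[3N]$. Multiplying both hypotheses by $d^{\sss{\min}} > 0$ gives $H(\bm{\omega}(0)) = r\, d^{\sss{\min}}$ and $D(\nmb(0)) < (1 - r)\, d^{\sss{\min}}$, whence
\begin{align}
V(\xmb(0)) = D(\nmb(0)) + H(\bm{\omega}(0)) < (1 - r)\, d^{\sss{\min}} + r\, d^{\sss{\min}} = d^{\sss{\min}},
\end{align}
which shows $\xmb(0) \in \Omega_{\sss{\xmb}}^{\sss{0}}$. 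Proposition~\ref{prop:ConvergenceToNullSpace} then applies to this initial condition and yields the stated conclusions.

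No real obstacle arises; this corollary is purely a reparametrization. The only points worth stating with care are the implicit finiteness of $D(\nmb(0))$, which secures membership in $\Omega_{\sss{\nmb}}^{\sss{D}}$, and the fact that normalizing by the design parameter $d^{\sss{\min}}$ makes transparent how increasing $d^{\sss{\min}}$ enlarges the admissible initial kinetic energy -- and hence the $\bm{\omega}$-component of the domain of attraction -- as announced just before the corollary. (Observe also that $r < 1$ is in fact already forced by $D(\nmb(0)) \ge 0$ together with the second inequality, but it is harmless to list it explicitly.)
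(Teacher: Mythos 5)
Your proof is correct and follows essentially the same route as the paper's: both simply observe that the two normalized inequalities combine to give $V(\xmb(0)) = D(\nmb(0)) + H(\bm{\omega}(0)) < d^{\sss{\min}}$, so that $\xmb(0) \in \Omega_{\sss{\xmb}}^{\sss{0}}$ and Proposition~\ref{prop:ConvergenceToNullSpace} applies directly. Your added remarks on the implicit finiteness of $D(\nmb(0))$ and the redundancy of $r<1$ are accurate but not needed.
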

\begin{proof}
	If the Corollary conditions hold, then $V(\xmb(0)) = D(\nmb(0)) + H(\bm{\omega}(0)) < d^{\sss{\min}} - H(\bm{\omega}(0)) + H(\bm{\omega}(0)) = d^{\sss{\min}}$.
\end{proof}
%
Corollary~\ref{cor:cor1} states that if the total kinetic energy is \emph{small} w.r.t. $d^{\sss{\min}}$, and if the total distance in the network is \emph{small} w.r.t. $1 - r$, asymptotic synchronization is guaranteed.
\begin{cor}
	\label{cor:cor2}
	\normalfont
	Proposition~\ref{prop:ConvergenceToNullSpace} holds if $r := \frac{H(\bm{\omega}(0))}{d^{\sss{\min}}} < 1$ and if
	\begin{align}
		\frac{d_{\sss{k}}(\nmbt{k}(0),\nmbh{k}(0))}{d^{\sss{\min}}}
		< 
		\frac{1 - r}{M},
		\forall k \in \mathcal{M}.
		\label{eq:DistanceBetweenNeighbors}
	\end{align}
\end{cor}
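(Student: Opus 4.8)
The plan is to deduce Corollary~\ref{cor:cor2} directly from Corollary~\ref{cor:cor1} (equivalently, to check the hypothesis $\xmb(0) \in \Omega_{\sss{\xmb}}^{\sss{0}}$ of Proposition~\ref{prop:ConvergenceToNullSpace}). First I would sum the per-edge bound~\eqref{eq:DistanceBetweenNeighbors} over all $M$ edges: recalling the definition~\eqref{eq:TotalDistanceFunction2} of the total distance function, $D(\nmb(0)) = \sum_{\sss{k=1}}^{\sss{M}} d_{\sss{k}}(\nmbt{k}(0),\nmbh{k}(0)) < \sum_{\sss{k=1}}^{\sss{M}} \frac{(1-r)\, d^{\sss{\min}}}{M} = (1-r)\, d^{\sss{\min}}$, so that $D(\nmb(0))/d^{\sss{\min}} < 1-r$. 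This is precisely the second hypothesis of Corollary~\ref{cor:cor1}, and since $r<1$ is assumed as well, Corollary~\ref{cor:cor1} immediately yields the conclusion of Proposition~\ref{prop:ConvergenceToNullSpace}. Unwinding Corollary~\ref{cor:cor1}, the chain $V(\xmb(0)) = D(\nmb(0)) + H(\bm{\omega}(0)) < (1-r)\, d^{\sss{\min}} + r\, d^{\sss{\min}} = d^{\sss{\min}}$ places $\xmb(0)$ in $\Omega_{\sss{\xmb}}^{\sss{0}}$.

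The only point that deserves a moment's care is domain membership: Proposition~\ref{prop:ConvergenceToNullSpace} also requires $\xmb(0) \in \Omega_{\sss{\nmb}}^{\sss{D}} \times \Rn[3N]$, i.e., that no pair of initial neighbors joined by an edge $k$ with $f_{\sss{k}} \in \mathcal{P}^{\sss{\infty}}$ be diametrically opposed. But writing~\eqref{eq:DistanceBetweenNeighbors} already presupposes that each $d_{\sss{k}}(\nmbt{k}(0),\nmbh{k}(0))$ is a finite real number, and by~\eqref{eq:MaximumDistance} $d_{\sss{k}}$ is finite precisely away from diametrically opposed configurations when $f_{\sss{k}} \in \mathcal{P}^{\sss{\infty}}$; hence the stated hypothesis forces $\nmb(0) \in \Omega_{\sss{\nmb}}^{\sss{D}}$ automatically. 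I do not anticipate any genuine obstacle here — the estimate is a one-line summation — so the ``hard part'', such as it is, is merely being explicit that the per-edge hypothesis is strong enough to also guarantee well-posedness of $D(\cdot)$ at the initial condition, after which Proposition~\ref{prop:ConvergenceToNullSpace} applies verbatim.
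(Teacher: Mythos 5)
Your proof is correct and follows essentially the same route as the paper: sum the per-edge bound over the $M$ edges to get $D(\nmb(0)) < (1-r)\,d^{\sss{\min}}$ and then invoke Corollary~\ref{cor:cor1}. The extra remark on why the hypothesis already forces $\nmb(0) \in \Omega_{\sss{\nmb}}^{\sss{D}}$ is a sensible clarification the paper leaves implicit.
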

\begin{proof}
	If~\eqref{eq:DistanceBetweenNeighbors} holds, then $D(\nmb(0)) = \sum_{\sss{k \in \mathcal{L}}} d_{\sss{k}}(\nmbt{k}(0),\nmbh{k}(0)) \le M \max_{\sss{k \in \mathcal{M}}} d_{\sss{k}}(\nmbt{k}(0),\nmbh{k}(0)) < d^{\sss{\min}}(1 -r)$, in which case, Corollary~\ref{cor:cor1} holds.
\end{proof}
%
Corollary~\ref{cor:cor2} states that if the total kinetic energy is \emph{small}, and if all neighbors are \emph{close}, then asymptotic synchronization is guaranteed.
\begin{cor}
	\label{cor:cor3}
	\normalfont
	Proposition~\ref{prop:ConvergenceToNullSpace} holds if $r := \frac{H(\bm{\omega}(0))}{d^{\sss{\min}}} < 1$ and if
	\begin{align}
		\Scale[0.85]{
			\nmb(0) 
			\in 
			\mathcal{C}
			\left(
				\frac{1}{2}
				\arccos
				\left(
					1 - 
					\min\limits_{\sss{k \in \mathcal{M}}} 
					f_{\sss{k}}^{\sss{-1}}
					\left(
						d^{\sss{\min}}
						\frac{1 - r}{M}
					\right)
				\right)
			\right),
		}
		\label{eq:InitialCones}
	\end{align}
	where $d^{\sss{\min}} = \min_{\sss{k \in \mathcal{M}}} \lim_{\sss{s \rightarrow 2}} f_{\sss{k}}(s) $.
	If $f_{\sss{k}} \in \mathcal{P}^{\sss{\infty}}$ for all $k \in \mathcal{M}$, then $d^{\sss{\min}} = \infty$ and~\eqref{eq:InitialCones}  reduces to $\nmb(0) \in \mathcal{C}(\frac{\pi}{2})$.
\end{cor}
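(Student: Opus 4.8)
The plan is to derive from the cone hypothesis the per-edge bound~\eqref{eq:DistanceBetweenNeighbors} on the initial distances, so that the statement follows at once from Corollary~\ref{cor:cor2} (and hence from Corollary~\ref{cor:cor1} and Proposition~\ref{prop:ConvergenceToNullSpace}).

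First I would introduce $c := \min_{\sss{k \in \mathcal{M}}} f_{\sss{k}}^{\sss{-1}}\!\left(d^{\sss{\min}}\tfrac{1-r}{M}\right)$ and $\alpha := \tfrac{1}{2}\arccos(1-c)$, so that the cone in~\eqref{eq:InitialCones} is precisely $\mathcal{C}(\alpha)$. Since $f_{\sss{k}}'>0$, $\lim_{\sss{s\to 0^{+}}}f_{\sss{k}}(s)=0$ and $\lim_{\sss{s\to 2^{-}}}f_{\sss{k}}(s)=d_{\sss{k}}^{\sss{\max}}$, each $f_{\sss{k}}:(0,2)\to(0,d_{\sss{k}}^{\sss{\max}})$ is a strictly increasing bijection; and since $r\in[0,1)$ gives $\tfrac{1-r}{M}\le 1$, the argument obeys $d^{\sss{\min}}\tfrac{1-r}{M}\le d^{\sss{\min}}\le d_{\sss{k}}^{\sss{\max}}$, so $f_{\sss{k}}^{\sss{-1}}$ is evaluated inside its domain (in the degenerate equality case one reads $f_{\sss{k}}^{\sss{-1}}(d_{\sss{k}}^{\sss{\max}})$ as its limiting value $2$). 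Hence $c\in(0,2]$ and $\alpha\in(0,\tfrac{\pi}{2}]$, which is exactly the range to which Proposition~\ref{prop:BelongToCone1} applies.

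Next I would invoke Proposition~\ref{prop:BelongToCone1}: from $\nmb(0)\in\mathcal{C}(\alpha)$ it follows that $\max_{\sss{(i,j)\in\mathcal{N}^{\sss{2}}}}(1-\nmbi[i](0)\tp\nmbi[j](0)) < 1-\cos(2\alpha)$, and since $2\alpha=\arccos(1-c)$ the right-hand side equals $c$. Therefore, for every edge $k\in\mathcal{M}$ with $(i,j)=\bar{\kappa}^{\sss{-1}}(k)$, one has $1-\nmbt{k}(0)\tp\nmbh{k}(0)\le\max_{\sss{(i,j)\in\mathcal{N}^{\sss{2}}}}(1-\nmbi[i](0)\tp\nmbi[j](0)) < c \le f_{\sss{k}}^{\sss{-1}}\!\left(d^{\sss{\min}}\tfrac{1-r}{M}\right)$, and applying the strictly increasing $f_{\sss{k}}$ gives $d_{\sss{k}}(\nmbt{k}(0),\nmbh{k}(0))=f_{\sss{k}}(1-\nmbt{k}(0)\tp\nmbh{k}(0)) < d^{\sss{\min}}\tfrac{1-r}{M}$; combined with the assumed $r<1$, this is exactly~\eqref{eq:DistanceBetweenNeighbors}, so Corollary~\ref{cor:cor2} delivers the conclusion. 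For the stated special case, if $f_{\sss{k}}\in\mathcal{P}^{\sss{\infty}}$ for all $k$ then $d_{\sss{k}}^{\sss{\max}}=\infty$, hence $d^{\sss{\min}}=\infty$, $f_{\sss{k}}^{\sss{-1}}(+\infty)=2$, $c=2$ and $\alpha=\tfrac{1}{2}\arccos(-1)=\tfrac{\pi}{2}$, so~\eqref{eq:InitialCones} collapses to $\nmb(0)\in\mathcal{C}(\tfrac{\pi}{2})$.

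The chain of implications itself is immediate; the main obstacle I anticipate lies entirely in the bookkeeping of the first step — ensuring the inverse $f_{\sss{k}}^{\sss{-1}}$ and the half-angle $\alpha$ are well defined and lie in $[0,\tfrac{\pi}{2}]$ in the boundary situations ($M=1$ with $r=0$, or $f_{\sss{k}}\in\mathcal{P}^{\sss{\bar{\infty}}}$ with $d_{\sss{k}}^{\sss{\max}}$ a supremum not attained on $(0,2)$), so that Proposition~\ref{prop:BelongToCone1} is legitimately invoked and the strict inequality it returns is preserved after composition with the strictly increasing $f_{\sss{k}}$.
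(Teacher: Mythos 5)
Your proof is correct and follows essentially the same route as the paper: invoke Proposition~\ref{prop:BelongToCone1} to turn the cone hypothesis into the per-edge bound $d_{\sss{k}}(\nmbt{k}(0),\nmbh{k}(0)) < d^{\sss{\min}}\tfrac{1-r}{M}$, i.e.~\eqref{eq:DistanceBetweenNeighbors}, and then conclude via Corollary~\ref{cor:cor2}. Your added bookkeeping on the well-definedness of $f_{\sss{k}}^{\sss{-1}}$ and the strictness of the inequality is a slight refinement of the paper's one-line argument, but not a different approach.
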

\begin{proof}
	By invoking Proposition~\ref{prop:BelongToCone1}, it follows  that~\eqref{eq:InitialCones} implies that $d_{\sss{k}}(\nmbt{k}(0),\nmbh{k}(0))\le f_{\sss{k}}(\min\limits_{\sss{l \in \mathcal{M}}} f_{\sss{l}}^{\sss{-1}}(d^{\sss{\min}} \frac{1 - r}{M}) ) \le d^{\sss{\min}} \frac{1 - r}{M} $ for all $ k \in \mathcal{M}$, and therefore \eqref{eq:DistanceBetweenNeighbors} holds.
\end{proof}
%
%
Corollary~\ref{cor:cor3} states that if the total kinetic energy is \emph{small}, and if all neighbors are initially contained in a \emph{small} cone, then synchronization is guaranteed.
Moreover, if $d^{\sss{\min}} = \infty$ and if all neighbors are initially contained in an open $\frac{\pi}{2}$-cone, then synchronization is also guaranteed.}

\begin{prop}	
	\label{prop:ConvergenceToNullSpace2}
	\normalfont
	Consider the vector field~\eqref{eq:StateVectorField}, the control law~\eqref{eq:DistributedControlLawVectorialModified}, and a trajectory $\xmb(\cdot)$ of $\dot{\xmb}(t) = \fmb_{\sss{\xmb}}(t,\xmb(t),\bar{\Tmb}^{\sss{cl}}(t,\xmb(t)))$.
	If $f_{\sss{k}} \in \mathcal{P}^{\sss{0}}$ for all $k \in \mathcal{M}$, then for all $\xmb(0) \in (\mathcal{S}^{\sss{2}})^{\sss{N}} \times \Rn[3N]$, $\lim_{\sss{t \rightarrow \infty}} (B \otimes \Idmat) \emb(\nmb(t)) = \zvec$, $\lim_{\sss{t \rightarrow \infty}} \Oi[i](t) = \zvec$ for $i \in \mathcal{L}$ and $\lim_{\sss{t \rightarrow \infty}} \OP{\nmbibody[j]}\Oi[j](t) = \zvec$ for $j \in \bar{\mathcal{L}}$;
	additionally, if $\xmb(0) \in \Omega_{\sss{\xmb}}^{\sss{0}} = \{ \xmb \in (\mathcal{S}^{\sss{2}})^{\sss{N}} \times \Rn[3N] : \exists p \in \mathcal{M}  , V(\xmb) < p d^{\sss{\min}} \} $, then no more than $p - 1$ neighbors are ever diametrically opposed, i.e., $\sup_{\sss{t \ge 0}} \, | \{ q \in \mathcal{M} : \nmbt{q}\tp(t)\nmbh{q}(t) = -1 \} | \le p -1$.
\end{prop}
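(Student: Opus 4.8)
The plan is to show that $f_{\sss{k}} \in \mathcal{P}^{\sss{0}}$ for every $k \in \mathcal{M}$ removes the two roles that the restriction $V(\xmb(0)) < d^{\sss{\min}}$ played in Proposition~\ref{prop:ConvergenceToNullSpaceExtended}, so that the argument there runs from an arbitrary initial condition. By Proposition~\ref{prop:RelationDistanceAndEdgeError}, $f_{\sss{k}} \in \mathcal{P}^{\sss{0}}$ for all $k$ gives $\Omega_{\sss{\nmb}}^{\sss{\emb}} = \Omega_{\sss{\nmb}}^{\sss{D}} = (\mathcal{S}^{\sss{2}})^{\sss{N}}$ and $\max_{\sss{\nmb \in (\mathcal{S}^{\sss{2}})^{\sss{N}}}} \|\emb(\nmb)\| < \infty$; moreover, by~\eqref{eq:EdgeError2Norm} and the defining condition $\lim_{\sss{s \to 2^{-}}} f_{\sss{k}}'(s)\sqrt{2 - s} = 0$ of $\mathcal{P}^{\sss{0}}$, each $\embi{k}$ extends continuously to all of $\Stwo \times \Stwo$ (it vanishes on the antipodal set), so $\emb(\cdot)$ is continuous --- hence bounded and \emph{uniformly} continuous --- on the compact set $(\mathcal{S}^{\sss{2}})^{\sss{N}}$, and the closed-loop vector field is defined on all of $\Rn[]_{\sss{\ge 0}} \times (\mathcal{S}^{\sss{2}})^{\sss{N}} \times \Rn[3N]$.

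First I would run the boundedness/Barbalat part of the proof of Proposition~\ref{prop:ConvergenceToNullSpaceExtended} with these facts in place. From $\dot{V}(\xmb(t)) = -\bar{W}(\bm{\omega}(t)) \le 0$ we get $V(\xmb(\cdot)) \le V(\xmb(0)) < \infty$, so $H(\bm{\omega}(\cdot)) \le V(\xmb(0))$ and $\bm{\omega}(\cdot)$ is bounded, and $\emb(\nmb(\cdot))$ is bounded by the uniform bound above; hence $\bar{\Tmb}^{\sss{cl}}(\cdot,\xmb(\cdot))$ and $\dot{\bm{\omega}}_{\sss{i}}(\cdot)$ are bounded by~\eqref{eq:DistributedControlLawVectorialModified} and~\eqref{eq:omegaiDotUniformContinuity}, and $\ddot{V}(\xmb(\cdot)) = \dot{\bar{W}}(\bm{\omega}(\cdot))$ is bounded by~\eqref{eq:WdotModified} --- none of these bounds involve $f_{\sss{k}}''$ --- so $\dot{V}(\xmb(\cdot))$ is uniformly continuous. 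Since $V \ge 0$ is non-increasing it converges, and Barbalat's lemma gives $\dot{V}(\xmb(\cdot)) = -\bar{W}(\bm{\omega}(\cdot)) \to 0$, whence $\Oi[i](\cdot) \to \zvec$ for $i \in \mathcal{L}$, $\OP{\nmbibody[j]}\Oi[j](\cdot) \to \zvec$ for $j \in \bar{\mathcal{L}}$, and $\sup_{\sss{t \ge 0}} |\nmbibody[j]\tp\Oi[j](t)| < \infty$ from boundedness of $\bm{\omega}(\cdot)$.

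The step I expect to need modifying is the one yielding $(B \otimes \Idmat)\emb(\nmb(\cdot)) \to \zvec$. In Proposition~\ref{prop:ConvergenceToNullSpaceExtended} that step used the bound~\eqref{eq:Omegai2Dot} on $\ddot{\bm{\omega}}_{\sss{i}}$ to obtain uniform continuity of $\dot{\bm{\omega}}_{\sss{i}}$; but~\eqref{eq:Omegai2Dot} contains $f_{\sss{k}}'$ and $f_{\sss{k}}''$, which for $f_{\sss{k}} \in \mathcal{P}^{\sss{0}}$ may be unbounded near an antipodal configuration --- and such configurations are no longer excluded, since the trajectory is not confined to $\{D(\nmb) < d^{\sss{\min}}\}$. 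Instead, I would note that $\dot{\bm{\omega}}_{\sss{i}}(t) = \fmb_{\sss{\bm{\omega}_{\sss{i}}}}(\Oi[i](t), \cdot)$ is, for each $i \in \mathcal{N}$, a fixed continuous function of $(\Oi[i](t), \nmb(t), \Rmati{i}(t))$ --- the edge-error sum entering the control law is continuous on the compact product of spheres and $\SO[3]$ --- hence uniformly continuous on the compact set in which those quantities evolve (the $\bm{\omega}$-ball $\{H(\bm{\omega}) \le V(\xmb(0))\}$ times $(\mathcal{S}^{\sss{2}})^{\sss{N}}$ times $\SO[3]$); and $t \mapsto (\Oi[i](t), \nmb(t), \Rmati{i}(t))$ has bounded derivative, hence is uniformly continuous; composing, $\dot{\bm{\omega}}_{\sss{i}}(\cdot)$ is uniformly continuous, and so is $\OP{\nmbibody[j]}\dot{\bm{\omega}}_{\sss{j}}(\cdot)$ since $\OP{\nmbibody[j]}$ is constant. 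With this substitute in place the rest of the proof of Proposition~\ref{prop:ConvergenceToNullSpaceExtended} carries over: for $i \in \mathcal{L}$, $\Oi[i](\cdot) \to \zvec$ plus uniform continuity of $\dot{\bm{\omega}}_{\sss{i}}(\cdot)$ gives $\dot{\bm{\omega}}_{\sss{i}}(\cdot) \to \zvec$ and hence, via~\eqref{eq:OmegaVectorField} and~\eqref{eq:DistributedControlLawDynamics}, $\sum_{\sss{l}}\embi{\kappa(i,i_{\sss{l}})}(\nmbibody[i], \Rmati{i}\tp\nmbi[i_{\sss{l}}]) \to \zvec$; for $j \in \bar{\mathcal{L}}$, uniform continuity of $\OP{\nmbibody[j]}\dot{\bm{\omega}}_{\sss{j}}(\cdot)$ gives $\OP{\nmbibody[j]}\dot{\bm{\omega}}_{\sss{j}}(\cdot) \to \zvec$, and the chain~\eqref{eq:LimitOmega}--\eqref{eq:DynamicsInvariantSet} (using $J_{\sss{j}}\nmbibody[j] = \lambda_{\sss{j}}\nmbibody[j]$ and $\bar{\Tmb}_{\sss{j}}^{\sss{cl}} \perp \nmbibody[j]$) gives $\bar{\Tmb}_{\sss{j}}^{\sss{cl}}(\cdot) \to \zvec$ and hence $\sum_{\sss{l}}\embi{\kappa(j,j_{\sss{l}})}(\nmbibody[j], \Rmati{j}\tp\nmbi[j_{\sss{l}}]) \to \zvec$; stacking over $i \in \mathcal{N}$ gives $(B \otimes \Idmat)\emb(\nmb(\cdot)) \to \zvec$.

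For the second assertion I would use only monotonicity of $V$ and the second part of Proposition~\ref{prop:RelationDistanceAndEdgeError}: if $V(\xmb(0)) < p\,d^{\sss{\min}}$ for some $p \in \mathcal{M}$, then $V(\xmb(t)) \le V(\xmb(0)) < p\,d^{\sss{\min}}$ and $D(\nmb(t)) \le V(\xmb(t)) < p\,d^{\sss{\min}}$ for all $t \ge 0$; since a configuration with $q$ diametrically opposed neighbour pairs has $D(\nmb) \ge q\,d^{\sss{\min}}$, we get $q \le p-1$ at every instant, i.e. $\sup_{\sss{t \ge 0}} |\{q \in \mathcal{M} : \nmbt{q}\tp(t)\nmbh{q}(t) = -1\}| \le p-1$. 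The main obstacle is precisely the antipodal-configuration point flagged in the previous paragraph --- that $f_{\sss{k}}'$, $f_{\sss{k}}''$ need not stay bounded along the trajectory --- which forces replacing the $\ddot{\bm{\omega}}_{\sss{i}}$-boundedness argument by the direct uniform-continuity argument above; once that is handled, everything else is a transcription of the proof of Proposition~\ref{prop:ConvergenceToNullSpaceExtended} with $\emb(\cdot)$ now globally bounded and the control law globally defined.
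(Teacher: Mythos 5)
Your proposal is correct and follows the same overall route as the paper: boundedness of $\emb(\cdot)$ from continuity on the now-compact domain $(\mathcal{S}^{\sss{2}})^{\sss{N}}$, the Barbalat argument transcribed from Proposition~\ref{prop:ConvergenceToNullSpaceExtended}, and the counting statement via monotonicity of $V$ together with Proposition~\ref{prop:RelationDistanceAndEdgeError} applied with $\bar{D} = V(\xmb(0))$ --- this last part is exactly the paper's argument. The paper's own proof of the first part simply says ``it suffices to follow the same steps as in the proof of Proposition~\ref{prop:ConvergenceToNullSpaceExtended}''; you go further and flag, correctly, that one of those steps does not transfer verbatim: the uniform continuity of $\dot{\bm{\omega}}_{\sss{i}}(\cdot)$ is obtained there from boundedness of $\ddot{\bm{\omega}}_{\sss{i}}$ via~\eqref{eq:Omegai2Dot}, whose right-hand side contains $f_{\sss{k}}^{\sss{\prime}}$ and $f_{\sss{k}}^{\sss{\prime\prime}}$ evaluated along the trajectory, and for $f_{\sss{k}} \in \mathcal{P}^{\sss{0}}$ these need not be bounded as $s \to 2^{-}$ --- a regime no longer excluded once the restriction $V(\xmb(0)) < d^{\sss{\min}}$ is dropped. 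Your substitute (writing $\dot{\bm{\omega}}_{\sss{i}}$ as a fixed continuous function of $(\Oi[i], \Rmati{1},\cdots,\Rmati{N})$ on a compact set, composed with a curve of bounded derivative) is a valid repair and yields the uniform continuity needed for the second application of Barbalat's lemma. This is a genuine, if small, improvement in rigor over the paper's one-line delegation; everything else matches.
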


\begin{proof}
	\normalfont
	Notice that if $f_{\sss{k}} \in \mathcal{P}^{\sss{0}}$ for all $k \in \mathcal{M}$ then $\Omega_{\sss{\nmb}}^{\sss{\emb}} = \Omega_{\sss{\nmb}}^{\sss{D}} = (\mathcal{S}^{\sss{2}})^{\sss{N}}$, which is a compact set.
	Since $\emb(\cdot)$ is continuous in $\Omega_{\sss{\nmb}}^{\sss{\emb}}$, it follows that $\max_{\sss{\nmb \in \Omega_{\sss{\nmb}}^{\sss{\emb}}}}\|\emb(\nmb)\|<\infty$, and, therefore, $\|\emb(\nmb(\cdot))\|$ is bounded regardless of the trajectory $\xmb(\cdot)$.
	To conclude that $\lim_{\sss{t \rightarrow \infty}} (B \otimes \Idmat) \emb(\nmb(t)) = \zvec$, $\lim_{\sss{t \rightarrow \infty}} \Oi[i](t) = \zvec$ for $i \in \mathcal{L}$ and $\lim_{\sss{t \rightarrow \infty}} \OP{\nmbibody[j]}\Oi[j](t) = \zvec$ for $j \in \bar{\mathcal{L}}$, it suffices to follow the same steps as in the proof of Proposition~\ref{prop:ConvergenceToNullSpaceExtended}.
	For the final statement in the Proposition, consider $\xmb(0) \in \Omega_{\sss{\xmb}}^{\sss{0}} = \{ \xmb \in \Omega_{\sss{\xmb}} : \exists p \in \mathcal{M}  , V(\xmb) < p d^{\sss{\min}} \} $.
	Since, along a trajectory $\xmb(\cdot)$, $D(\nmb(\cdot)) \le V(\xmb(\cdot)) \le V(\xmb(0)) < p d^{\sss{\min}}$, it suffices to invoke Proposition~\ref{prop:RelationDistanceAndEdgeError}, with $\bar{D} = V(\xmb(0))$, and the Proposition's conclusion follows. 
\end{proof}
Denote  $\fmb_{\sss{\xmb}}^{\sss{cl}}(t,\xmb) := \fmb_{\sss{\xmb}}(t,\xmb,\Tmb^{\sss{cl}}(t,\xmb))$ as the closed-loop vector field.
Note then that $\Omega_{\xmb}^{\sss{\text{eq}}} =\{\xmb \in (\mathcal{S}^{\sss{2}})^{\sss{N}} \times \Rn[3N] : \forall t \ge 0, \fmb_{\sss{\xmb}}^{\sss{cl}}(t,\xmb) = \zvec\}$ provides the set of all equilibrium points, and moreover $\{\xmb \in (\mathcal{S}^{\sss{2}})^{\sss{N}} \times \Rn[3N] : (B \otimes \Idmat) \emb(\nmb) = \zvec, \Oi[i] = \zvec \text{ for } i \in \mathcal{L},  \OP{\nmbibody[j]}\Oi[j] = \zvec \text{ for } j \in \bar{\mathcal{L}}\} \subseteq \Omega_{\xmb}^{\sss{\text{eq}}}$.
As such, Propositions~\ref{prop:ConvergenceToNullSpaceExtended} and~\ref{prop:ConvergenceToNullSpace2} imply that, under the respective Propositions' conditions, a trajectory $\xmb(\cdot)$ converges to the set of equilibrium points.
Note also that $[(\onesvec_{\sss{N}} \otimes \nmb^{\sss{\star}})\tp \, \cdot]\tp \in \Omega_{\xmb}^{\sss{\text{eq}}}$ for all $\nmb^{\sss{\star}} \in \Stwo$, i.e., all configurations where all agents are synchronized are equilibrium configurations (agents are synchronized and not moving, or agents are synchronized and spinning around their principal axis).
%
Finally, notice that since $\emb(\Smb \nmb) = \emb(\nmb)$ for all $\Smb \in \{ \Idmat_{\sss{N}} \otimes \Rmat \in \Rn[3N \times 3N] : \Rmat \in \SO[3]\}$ and for all $\nmb \in \Omega_{\sss{\nmb}}^{\sss{\emb}}$, it follows that $\Omega_{\xmb}^{\sss{\text{eq}}}$ has \emph{geometric isomerism}~\cite{li2014unified};
i.e. $[\nmb\tp \cdot] \in \Omega_{\xmb}^{\sss{\text{eq}}} \Rightarrow [\Smb \nmb\tp \cdot] \in \Omega_{\xmb}^{\sss{\text{eq}}}$, which means that for every equilibrium configuration, there exits infinite other equilibria configurations which are the same up to a rotation. 
In Section~\ref{eq:TreeGraphs}, for tree graphs, we show that $\Omega_{\xmb}^{\sss{\text{eq}}}$ is composed of configurations where agents are either synchronized or diametrically opposed; 
while in Section~\ref{eq:NonTreeGraphs}, for graphs discussed in Propositions~\ref{prop:BIndependentCycles} and~\ref{prop:BAlmostIndependentCycles},  we show that $\Omega_{\xmb}^{\sss{\text{eq}}}$ is composed of configurations where agents belong to a common plane.
In light of these comments, it follows that Corollaries~\ref{cor:cor1}-\ref{cor:cor3} provide conditions for when a trajectory is guaranteed to converge to a configuration where all agents are synchronized, and not any other configuration in $\Omega_{\xmb}^{\sss{\text{eq}}}$;
in particular,  if the initial kinetic energy is \emph{too large} with respect to $d^{\sss{\min}}$, the agents may escape to other equilibria configurations other than synchronized ones.

\begin{rem}
	\normalfont
	Consider the closed-loop vector field $\fmb_{\sss{\xmb}}^{\sss{cl}}(t,\xmb) = \fmb_{\sss{\xmb}}(t,\xmb,\Tmb^{\sss{cl}}(t,\xmb))$.
	Additionally, consider the alternative state $\tilde{\xmb}(t) = \diag(\Idmat_{\sss{3N}},\Rmat(t))\xmb(t) \Leftrightarrow \xmb(t) = \diag(\Idmat_{\sss{3N}},\Rmat\tp(t))\tilde{\xmb}(t) $, which evolves according to 
	\begin{align}
		\dot{\tilde{\xmb}}(t)  
		=
		& 
		\diag(\zvec,\dot{\Rmat}(t)) \xmb(t) 
		+
		\\
		&
		\Scale[0.95]{
			\diag(\Idmat_{\sss{3N}},\Rmat(t))\fmb_{\sss{\xmb}}^{\sss{cl}}(t,\xmb(t))
			|_{\xmb(t) = \diag(\Idmat_{\sss{3N}},\Rmat\tp(t))\tilde{\xmb}(t)} 
		}
		\\
		=
		&  \diag(\Idmat_{\sss{3N}},\Rmat(t))\fmb_{\sss{\xmb}}^{\sss{cl}}(t,\diag(\Idmat_{\sss{3N}},\Rmat\tp(t))\tilde{\xmb}(t)) 
		\\		
		=:
		&
		 \fmb_{\sss{\tilde{\xmb}}}^{\sss{cl}}(t,\tilde{\xmb}(t))
	\end{align}
	The difference between $\tilde{\xmb}(\cdot)$ and $\xmb(\cdot)$ is that all quantities in $\tilde{\xmb}(\cdot)$ are expressed in the inertial reference frame, while in $\xmb(\cdot)$ the angular velocities are expressed in the body reference frame of each agent.
	If $J_{\sss{i}} = j_{\sss{i}} \Idmat$ for all $i \in \mathcal{N}$, then for any $\Smb \in \{\Rmat \otimes \Idmat_{\sss{2N}} \in \Rn[6N \times 6N] : \Rmat \in \SO[3]\}$, it can be verified that $\fmb_{\sss{\tilde{\xmb}}}^{\sss{cl}}(\cdot,\Smb \tilde{\xmb}) = \Smb\fmb_{\sss{\tilde{\xmb}}}^{\sss{cl}}(\cdot, \tilde{\xmb})$, which implies that the closed-loop dynamics of the agents are invariant to rotations;
	i.e., given two initial conditions $\tilde{\xmb}^{\sss{1}}(0) $ and $\tilde{\xmb}^{\sss{2}}(0) $ satisfying $\tilde{\xmb}^{\sss{1}}(0) =  \Smb  \tilde{\xmb}^{\sss{2}}(0)$, it follows that $ \tilde{\xmb}^{\sss{1}}(t) =  \Smb  \tilde{\xmb}^{\sss{2}}(t) $ for all $t \ge 0$.
	This is the case, because when $J_{\sss{i}} = j_{\sss{i}} \Idmat$, then~\eqref{eq:OmegaVectorField} reduces to a second order integrator.
	A similar result has been reported in~\cite{li2014unified}, where the agents are mass points (i.e., agents without moment of inertia).
	However, in our framework, where in general $J_{\sss{i}} \ne j_{\sss{i}} \Idmat$ for some $i \in \mathcal{N}$, invariance of the closed-loop dynamics to rotations does not hold due to the term  $\sk{\Oi} J_{\sss{i}} \Oi$ in~\eqref{eq:OmegaVectorField}.
\end{rem}

		\section{Tree Graphs}
		\label{eq:TreeGraphs}
		
		Let us focus first on static tree graphs. 
For these graphs, Proposition~\ref{prop:Bdefinitepositive} states that $\mathcal{N}(B \otimes \Idmat) = \{\zvec\}$. 
In this section, we quantify the domain of attraction for synchronization to be asymptotically reached, i.e., we construct a domain $\Omega_{\sss{\xmb}}^{\sss{0}}$ such that if $\xmb(0) \in \Omega_{\sss{\xmb}}^{\sss{0}}$, then all trajectories of~\eqref{eq:StateVectorField} under control law~\eqref{eq:DistributedControlLawVectorialModified} asymptotically converge to a configuration where all unit vectors are synchronized.
Later, we construct another $\Omega_{\sss{\xmb}}^{\sss{0}}$, for graphs other than tree graphs, which is smaller in size, and we quantify how much smaller it is.
\begin{thm}
	\label{thm:NoFullDomain}
	\normalfont
	Consider a static tree graph, the vector field~\eqref{eq:StateVectorField}, the control law~\eqref{eq:DistributedControlLawVectorialModified}, and a trajectory $\xmb(\cdot)$ of $\dot{\xmb}(t) = \fmb_{\sss{\xmb}}(t,\xmb(t),\bar{\Tmb}^{\sss{cl}}(t,\xmb(t)))$.
	If $\xmb(0) \in \Omega_{\sss{\xmb}}^{\sss{0}} = \{ \xmb \in \Omega_{\sss{\nmb}}^{\sss{D}} \times 
	\Rn[3N] : V(\xmb) < d^{\sss{\min}} \} $ then synchronization is asymptotically reached, i.e., $\lim_{\sss{t \rightarrow \infty}} (\nmbi[i](t) - \nmbi[j](t))= \zvec$, for all $(i,j)\in\mathcal{N}^{\sss{2}}$.
	If $f_{\sss{k}} \in \mathcal{P}^{\sss{\infty}}$ for all $k \in \mathcal{M}$, then $d^{\sss{\min}} = \infty$ and synchronization is asymptotically reached for almost all initial conditions in $(\mathcal{S}^{\sss{2}})^{\sss{N}} \times \Rn[3N]$.
\end{thm}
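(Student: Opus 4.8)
The plan is to derive both assertions from Proposition~\ref{prop:ConvergenceToNullSpaceExtended} together with the rank property of the incidence matrix of a tree, and then, for the second assertion, a measure-zero argument on $(\mathcal{S}^{\sss{2}})^{\sss{N}}$.

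For the first assertion, I would begin from Proposition~\ref{prop:ConvergenceToNullSpaceExtended}: since $\xmb(0)\in\Omega_{\sss{\xmb}}^{\sss{0}}$, along the trajectory $\lim_{\sss{t\to\infty}}(B\otimes\Idmat)\emb(\nmb(t))=\zvec$. Because $\mathcal{G}$ is a tree, Proposition~\ref{prop:Bdefinitepositive} states that $(B\otimes\Idmat)\tp(B\otimes\Idmat)$ is positive definite, so $B\otimes\Idmat$ has trivial kernel and is therefore bounded below; hence $\lim_{\sss{t\to\infty}}\emb(\nmb(t))=\zvec$, i.e. $\lim_{\sss{t\to\infty}}\embi{k}(\nmbt{k}(t),\nmbh{k}(t))=\zvec$ for every $k\in\mathcal{M}$. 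Next, $V(\cdot)$ is non-increasing along the closed loop (see \eqref{eq:LyapunovDerivativeModified}), so $D(\nmb(t))\le V(\xmb(t))\le V(\xmb(0))<d^{\sss{\min}}$ for all $t\ge0$; arguing as in Proposition~\ref{prop:RelationDistanceAndEdgeError} this forces $\nmbt{k}(t)\tp\nmbh{k}(t)\ge1-f_{\sss{k}}^{\sss{-1}}(V(\xmb(0)))>-1$, so the edge variable $s_{\sss{k}}(t):=1-\nmbt{k}(t)\tp\nmbh{k}(t)$ stays in a compact subinterval $[0,2-\varepsilon]\subset[0,2)$ (and in particular the trajectory never leaves the compact set on which the control law is defined). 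By \eqref{eq:EdgeError2Norm}, $\|\embi{k}\|=f_{\sss{k}}^{\sss{\prime}}(s_{\sss{k}})\sqrt{s_{\sss{k}}(2-s_{\sss{k}})}$, and since $f_{\sss{k}}^{\sss{\prime}}$ is continuous and strictly positive on $(0,2)$, this product vanishes on $[0,2-\varepsilon]$ only at $s_{\sss{k}}=0$; thus $\|\embi{k}(t)\|\to0$ forces $s_{\sss{k}}(t)\to0$, i.e. $\|\nmbt{k}(t)-\nmbh{k}(t)\|^{\sss{2}}=2s_{\sss{k}}(t)\to0$. Finally, a tree is connected, so any pair $(i,j)$ is joined by a path, and summing the vanishing edge differences along it gives $\lim_{\sss{t\to\infty}}(\nmbi[i](t)-\nmbi[j](t))=\zvec$.

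For the second assertion, suppose $f_{\sss{k}}\in\mathcal{P}^{\sss{\infty}}$ for every $k\in\mathcal{M}$. Then $d_{\sss{k}}^{\sss{\max}}=\lim_{\sss{s\to2}}f_{\sss{k}}(s)=\infty$ for all $k$, so $d^{\sss{\min}}=\infty$ by \eqref{eq:Dmin} and $\Omega_{\sss{\xmb}}^{\sss{0}}=\{\xmb\in\Omega_{\sss{\nmb}}^{\sss{D}}\times\Rn[3N]:V(\xmb)<\infty\}$; but for $\nmb\in\Omega_{\sss{\nmb}}^{\sss{D}}$ every term $f_{\sss{k}}(1-\nmbt{k}\tp\nmbh{k})$ is finite (its argument is $<2$), so $D(\nmb)<\infty$, while $H(\bm{\omega})<\infty$ always; hence $\Omega_{\sss{\xmb}}^{\sss{0}}=\Omega_{\sss{\nmb}}^{\sss{D}}\times\Rn[3N]$. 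It then suffices to show that $(\mathcal{S}^{\sss{2}})^{\sss{N}}\setminus\Omega_{\sss{\nmb}}^{\sss{D}}$ has zero Lebesgue measure in $(\mathcal{S}^{\sss{2}})^{\sss{N}}$. This set equals the finite union $\bigcup_{\sss{k=1}}^{\sss{M}}\{\nmb\in(\mathcal{S}^{\sss{2}})^{\sss{N}}:\nmbt{k}\tp\nmbh{k}=-1\}$, and the $k$th member is $\{\nmb:\nmbh{k}=-\nmbt{k}\}$, a lower-dimensional embedded submanifold of $(\mathcal{S}^{\sss{2}})^{\sss{N}}$ (of codimension $2$, being the graph of $\nmbt{k}\mapsto-\nmbt{k}$ over the remaining $N-1$ spheres), hence Lebesgue-null; a finite union of null sets is null. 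Consequently almost every $\xmb(0)\in(\mathcal{S}^{\sss{2}})^{\sss{N}}\times\Rn[3N]$ lies in $\Omega_{\sss{\xmb}}^{\sss{0}}$, and the first assertion yields asymptotic synchronization.

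The step I expect to be the main obstacle is the passage from $\embi{k}(\nmbt{k}(t),\nmbh{k}(t))\to\zvec$ to $\nmbt{k}(t)\tp\nmbh{k}(t)\to1$: this is legitimate only once the edge angle has been prevented from approaching $\pi$, since for $f_{\sss{k}}\in\mathcal{P}^{\sss{\bar{0}}}$ the norm $\|\embi{k}\|$ need not vanish as $s_{\sss{k}}\to2$. The Lyapunov estimate $D(\nmb(t))\le V(\xmb(0))<d^{\sss{\min}}$, together with the sublevel-set analysis of Proposition~\ref{prop:RelationDistanceAndEdgeError}, is what supplies the uniform gap $\varepsilon$ and lets a compactness argument close the step; the remaining ingredients — triviality of the kernel of $B\otimes\Idmat$, the connectedness/path argument, and the measure-zero count of antipodal submanifolds — are routine.
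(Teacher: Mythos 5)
Your proof is correct and follows essentially the same route as the paper's: invoke Proposition~\ref{prop:ConvergenceToNullSpaceExtended} for $(B\otimes\Idmat)\emb\to\zvec$, use triviality of the kernel for a tree to get $\emb\to\zvec$, rule out diametrically opposed neighbors via the sublevel-set bound $D(\nmb(t))\le V(\xmb(0))<d^{\sss{\min}}$ (the content of Proposition~\ref{prop:RelationDistanceAndEdgeError}), and dispose of the almost-global case by a measure-zero argument on the antipodal sets. The only difference is that you spell out, via the compactness of $[0,2-\varepsilon]$ and the zero set of $f_{\sss{k}}^{\sss{\prime}}(s)\sqrt{s(2-s)}$, the step the paper states tersely as "all unit vectors converge to one another," which is a welcome clarification rather than a deviation.
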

\begin{proof}
	\normalfont
	Under the Theorem's conditions, we can invoke Proposition~\ref{prop:ConvergenceToNullSpaceExtended} to conclude that $\lim_{\sss{t \rightarrow \infty}} (B \otimes \Idmat) \emb(\nmb(t)) = \zvec $ and Proposition~\ref{prop:RelationDistanceAndEdgeError} to conclude that two neighbors are never arbitrarily close to a configuration where they are diametrically opposed.
	%
	%
	Since  $\mathcal{N}(B \otimes \Idmat) = \{\zvec\}$, it follows that $\lim_{\sss{t \rightarrow \infty}} (B \otimes \Idmat) \emb(\nmb(t))  \Rightarrow \lim_{\sss{t \rightarrow \infty}} \emb(\nmb(t)) = \zvec$.
	As such, and since two neighbors are never arbitrarily close to a configuration where they are diametrically opposed, it follows that all unit vectors converge to one another.
 	For the second part of the Theorem, notice that, if $d^{\sss{\min}} = \infty$, then $\Omega_{\sss{\xmb}}^{\sss{0}} =  \Omega_{\sss{\nmb}}^{\sss{D}} \times  \Rn[3N]$.
 	Since $\Omega_{\sss{\nmb}}^{\sss{D}} \times  \Rn[3N] \backslash \{ (\mathcal{S}^{\sss{2}})^{\sss{N}} \times \Rn[3N]\} = \{ \xmb \in (\mathcal{S}^{\sss{2}})^{\sss{N}} \times \Rn[3N] : \forall f_{\sss{k}} \in \mathcal{P}^{\sss{\infty}} , \nmbt{k}\tp\nmbh{k} = -1\}$ is a set of zero measure in the space of all initial conditions, i.e. $(\mathcal{S}^{\sss{2}})^{\sss{N}} \times \Rn[3N]$, synchronization for almost all initial conditions is guaranteed when $d^{\sss{\min}} = \infty$.
\end{proof}
Notice that in Theorem~\ref{thm:NoFullDomain}, increasing $d^{\sss{\min}}$ enlarges the region of stability, and it yields the almost global stability result for $d^{\sss{\min}} = \infty$.
However, a similar result for other graphs, than tree graphs, is not presented in this manuscript.
\begin{exa}
	\label{ex:NoFullDomain}
	\normalfont
	Consider the distance functions $d(\nmb_1,\nmb_2) = f(1 - \nmb_1\tp\nmb_2)$ where $f(s) = a \left(\frac{\arccos(1 - s)}{\pi}\right)^{\alpha}$, with $a > 0$ and  $\alpha \ge 2$. 
	For these, $d^{\sss{\max}} = a$, $f \in \mathcal{P}^{\sss{\bar{\infty}}}$ and $f \in \mathcal{P}^{\sss{\bar{0}}}$; also $f \in \mathcal{P}_{\sss{0}}$. 
	Suppose $f_{\sss{k}}(s) = f(s)$ for all $k \in \mathcal{M}$, and for some $a$ and $\alpha$.
	Invoking Corollary~\ref{cor:cor3}, it follows that if $r := \frac{H(\bm{\omega}(0))}{a} <1$ and
	\begin{align}
		\hspace{-0.5cm}
		\nmb(0) 
		\in
		\mathcal{C}
		\left(
			\frac{\pi}{2}
			\left(
				\frac{1 - r}{M}
			\right)^{\frac{1}{\alpha}}
		\right)
		\label{eq:ConvergenceCondition}
	\end{align}
	then Theorem's~\ref{thm:NoFullDomain} conclusions follow.
	Notice that by increasing $a$ convergence for arbitrary initial values of rotational kinetic energy can be guaranteed;
	on the other hand, by increasing $\alpha$ we can increase the size of the cone where the agents need to initially be contained (up to a cone $\mathcal{C}\left(\frac{\pi}{2}\right)$).
	We emphasize that the domain of attraction in Theorem~\ref{thm:NoFullDomain} is however larger, in the sense that there are more initial conditions which do not satisfy the previous conditions, but for which synchronization is still guaranteed.
\end{exa} 

\begin{exa}
	\normalfont
	Consider the distance function $ d(\nmb_1,\nmb_2)  = f(1 - \nmb_1\tp\nmb_2)$ where $ f(s) =	2 \, a  \tan^2\left( 0.5 \arccos(1 - s) \right)$, for which $d^{\sss{\max}} = \infty$ and $f \in \mathcal{P}^{\sss{\infty}}$ (and therefore $f \in \mathcal{P}^{\sss{\bar{0}}}$). If $f_{\sss{k}}(s) = f(s)$ for all $k \in \mathcal{M}$, then convergence to a synchronized network in a tree graph is guaranteed for almost all initial conditions (see Theorem~\ref{thm:NoFullDomain}).
\end{exa}


\begin{thm}
	\label{thm:FullDomain}	
	\normalfont
	Consider a static tree graph, the vector fields~\eqref{eq:StateVectorField}, the control law~\eqref{eq:DistributedControlLawVectorialModified}, and a trajectory $\xmb(\cdot)$ of $\dot{\xmb}(t) = \fmb_{\sss{\xmb}}(t,\xmb(t),\bar{\Tmb}^{\sss{cl}}(t,\xmb(t)))$.
	If $f_{\sss{k}} \in \mathcal{P}^{\sss{0}}$ for all $k \in \mathcal{M}$, and  $\xmb(0) \in \Omega_{\sss{\xmb}}^{\sss{0}} = \{ \xmb \in \Omega_{\sss{\nmb}}^{\sss{D}} \times \Rn[3N]: \exists p \in \mathcal{M}  , V(\xmb) < p d^{\sss{\min}} \} $ then the group of unit vectors converges to a configuration where no more than $p-1$ neighboring unit vectors are diametrically opposed. 
\end{thm}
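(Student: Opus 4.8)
The plan is to obtain the statement from Proposition~\ref{prop:ConvergenceToNullSpace2} together with the fact that the incidence matrix of a tree has trivial kernel.

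\emph{Reducing the domains.} Since $f_{\sss{k}} \in \mathcal{P}^{\sss{0}} \subseteq \mathcal{P}^{\sss{\bar{\infty}}}$ for every $k \in \mathcal{M}$, we have $\Omega_{\sss{\nmb}}^{\sss{\emb}} = \Omega_{\sss{\nmb}}^{\sss{D}} = (\mathcal{S}^{\sss{2}})^{\sss{N}}$, so both $\emb(\cdot)$ and $D(\cdot)$ are continuous on the compact set $(\mathcal{S}^{\sss{2}})^{\sss{N}}$; in particular $D(\cdot)$, which coincides with~\eqref{eq:TotalDistanceFunction2}, extends continuously up to configurations with diametrically opposed neighbors, and for such a pair along edge $k$ the corresponding summand equals $f_{\sss{k}}(2) = d_{\sss{k}}^{\sss{\max}} \ge d^{\sss{\min}}$ by~\eqref{eq:MaximumDistance} and~\eqref{eq:Dmin}. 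Also, because $\Omega_{\sss{\nmb}}^{\sss{D}} = (\mathcal{S}^{\sss{2}})^{\sss{N}}$, the set $\Omega_{\sss{\xmb}}^{\sss{0}}$ in the statement equals the one in Proposition~\ref{prop:ConvergenceToNullSpace2}; fix $p \in \mathcal{M}$ with $V(\xmb(0)) < p\, d^{\sss{\min}}$. Invoking Proposition~\ref{prop:ConvergenceToNullSpace2} then gives $\lim_{\sss{t\to\infty}}(B \otimes \Idmat)\emb(\nmb(t)) = \zvec$, and, since $\dot V(\xmb(\cdot)) \le 0$ and $H(\cdot)\ge 0$, also $D(\nmb(t)) \le V(\xmb(t)) \le V(\xmb(0)) < p\, d^{\sss{\min}}$ for all $t \ge 0$.

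\emph{Using the tree structure.} By Proposition~\ref{prop:Bdefinitepositive}, $(B \otimes \Idmat)\tp(B \otimes \Idmat)$ is positive definite, hence $B \otimes \Idmat$ is injective; therefore $\lim_{\sss{t\to\infty}}(B \otimes \Idmat)\emb(\nmb(t)) = \zvec$ forces $\lim_{\sss{t\to\infty}}\emb(\nmb(t)) = \zvec$. By~\eqref{eq:EdgeError2} and $f_{\sss{k}}'(\cdot) > 0$ (Definition~\ref{eq:gClasses}), $\emb(\nmb) = \zvec$ holds if and only if $\sk{\nmbt{k}}\nmbh{k} = \zvec$, i.e. $\nmbt{k} = \pm\nmbh{k}$, for every $k \in \mathcal{M}$; so $\nmb(\cdot)$ converges to $E := \{\nmb \in (\mathcal{S}^{\sss{2}})^{\sss{N}} : \nmbt{k} = \pm\nmbh{k}\ \forall k \in \mathcal{M}\}$, the set of configurations in which each pair of neighbors is synchronized or diametrically opposed. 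As $\nmb(\cdot)$ evolves in a compact set, its $\omega$-limit set $\Omega_{\sss{\omega}}$ is nonempty, and $\Omega_{\sss{\omega}} \subseteq E$ by continuity of $\emb(\cdot)$.

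\emph{Counting the opposed pairs.} Let $\nmb^{\sss{\star}} \in \Omega_{\sss{\omega}}$ and pick $t_{\sss{n}}\to\infty$ with $\nmb(t_{\sss{n}})\to\nmb^{\sss{\star}}$. Continuity of $D(\cdot)$ together with $D(\nmb(t_{\sss{n}})) < p\, d^{\sss{\min}}$ gives $D(\nmb^{\sss{\star}}) \le V(\xmb(0)) < p\, d^{\sss{\min}}$. If $q$ pairs of neighbors are diametrically opposed in $\nmb^{\sss{\star}}$, then discarding the nonnegative remaining summands of~\eqref{eq:TotalDistanceFunction2} yields $q\, d^{\sss{\min}} \le \sum\nolimits_{k\,:\,\nmbt{k}\tp\nmbh{k}=-1} f_{\sss{k}}(2) \le D(\nmb^{\sss{\star}}) < p\, d^{\sss{\min}}$, so $q \le p-1$. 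Hence every configuration in $\Omega_{\sss{\omega}}$ has at most $p-1$ diametrically opposed neighboring pairs, and since $\dist(\nmb(t),\Omega_{\sss{\omega}})\to 0$ the group of unit vectors converges to such a configuration, which is the claim. The main obstacle is this last step: passing from the uniform-in-$t$ bound $D(\nmb(t)) < p\, d^{\sss{\min}}$ and the convergence $\emb(\nmb(t))\to\zvec$ to a statement about the \emph{limiting} configuration, which forces one to argue on the $\omega$-limit set and to use continuity of $D(\cdot)$ and $\emb(\cdot)$ up to the diametrically-opposed stratum — continuity that is exactly what $f_{\sss{k}} \in \mathcal{P}^{\sss{0}}$ buys us via $\Omega_{\sss{\nmb}}^{\sss{D}} = \Omega_{\sss{\nmb}}^{\sss{\emb}} = (\mathcal{S}^{\sss{2}})^{\sss{N}}$; the rest is a direct appeal to Propositions~\ref{prop:ConvergenceToNullSpace2} and~\ref{prop:Bdefinitepositive}.
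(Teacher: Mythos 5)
Your proposal is correct and follows essentially the same route as the paper: invoke Proposition~\ref{prop:ConvergenceToNullSpace2}, use the trivial null space of $B\otimes\Idmat$ for a tree (Proposition~\ref{prop:Bdefinitepositive}) to upgrade $(B\otimes\Idmat)\emb(\nmb(t))\to\zvec$ to $\emb(\nmb(t))\to\zvec$, and bound the number of diametrically opposed pairs via $D(\nmb(\cdot))<p\,d^{\sss{\min}}$. Your explicit $\omega$-limit-set and counting argument merely spells out what the paper delegates to the last clause of Proposition~\ref{prop:ConvergenceToNullSpace2} (itself resting on Proposition~\ref{prop:RelationDistanceAndEdgeError}), so there is no substantive difference.
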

\begin{proof}
	\normalfont
	Under the Theorem's conditions, Proposition~\ref{prop:ConvergenceToNullSpace2} can be invoked.
	Additionally, since $\mathcal{N}(B \otimes \Idmat) = \{\zvec\}$ in a tree graph, it follows that $\lim_{\sss{t \rightarrow \infty}} (B \otimes \Idmat) \emb(\nmb(t))  \Rightarrow \lim_{\sss{t \rightarrow \infty}} \emb(\nmb(t)) = \zvec$, which implies that all neighbors are either synchronized or diametrically opposed.
	Since, by Proposition~\ref{prop:ConvergenceToNullSpace2}, there are at most $p-1$ diametrically opposed neighboring unit vectors, it follows that the group of unit vectors converges to a configuration where no more than $p-1$ neighboring unit vectors are diametrically opposed.
\end{proof}

\begin{exa}
	\label{ex:DistanceFunctionCos}
	\normalfont
	Consider the distance functions $d(\nmb_1,\nmb_2) = f(1 - \nmb_1\tp\nmb_2)$ where $f(s) = a\left(\frac{s}{2}\right)^{\alpha}$ for positive $a$ and $\alpha \ge 1$. For these functions, $d^{\sss{\max}}= a$, $f \in \mathcal{P}^{\sss{0}}$, and $f \in \mathcal{P}_{\sss{0}}$ if $\alpha > 1$ and $f \in \mathcal{P}_{\sss{\bar{0}}}$ if $\alpha = 1$.
	Suppose $f_{\sss{k}}(s) = f(s)$ for all $k \in \mathcal{M}$, and for some $a$ and $\alpha$.
	%
	%
	It follows that if $r := \frac{H(\bm{\omega}(0))}{a} < p $ (for some $p \in \mathcal{M}$) and
	\begin{align}
		\hspace{-0.5cm}
		\nmb(0) 
		\in
		\mathcal{C}
		\left(
			\frac{1}{2}
			\arccos
			\left(
				1 
				- 
				2
				\left(
					\frac{p - r}{M}
				\right)^{\frac{1}{\alpha}}
			\right)
		\right)
		\label{eq:ConvergenceCondition}
	\end{align}
	then Theorem's~\ref{thm:FullDomain} conditions are fulfilled and its conclusions follow.
	Similarly to example~\ref{ex:NoFullDomain}, by increasing $a$ convergence for arbitrary initial values of rotational kinetic energy can be guaranteed;
	on the other hand, by increasing $\alpha$ we can increase the size of the cone where the agents need to initially be contained (up to a cone $\mathcal{C}\left(\frac{\pi}{2}\right)$).
	The main difference between this function and that in Example~\ref{ex:NoFullDomain} is that, in this example, $f \in \mathcal{P}^{\sss{0}}$.
\end{exa}

\if\paperextended1

	\review{\label{com:c8}Under Theorem's~\ref{thm:FullDomain} conditions, the group of agents can converge to configurations where one or more pairs of neighbors are diametrically opposed. However, it does not provide any insight on whether these equilibrium configurations are stable or unstable. 
	For tree graphs, a configuration where $p$ pairs of neighbors are diametrically opposed can be shown to be unstable. 
	For proving this statement, it suffices to find initial conditions which are arbitrarily close to an equilibrium where $p$ pairs of neighbors are diametrically opposed, but for which convergence to a configuration where at most $p - 1$ pairs of neighbors are diametrically opposed is guaranteed;
	this reasoning implies that all configurations where neighbors are diametrically opposed are unstable, and therefore, only synchronized configurations are stable.
	A similar result on synchronization in~$\SO[3]$ is found in~\cite{tron2012intrinsic}, and of synchronization on the sphere is found in~\cite{olfati2006swarms}.}
	%
	%
	%
	%
	%
	Also, Theorems~\ref{thm:NoFullDomain} and~\ref{thm:FullDomain} do not provide any insight on whether the limits $\lim_{\sss{t \rightarrow \infty}} \nmbi[i](t)$ (for all $i\in\mathcal{N}$) exist;
	i.e, even if the unit vectors synchronize, it is not known whether those unit vectors converge to a  constant unit vector or a time-varying unit vector. 
	Some preliminaries results are found in~\ref{app:ConvergenceToConstant}.

\else

	Under Theorem's~\ref{thm:FullDomain} conditions, the group of agents can converge to configurations where one or more pairs of neighbors are diametrically opposed. However, it does not provide any insight on whether these equilibria configurations are stable or unstable. 

\fi

		\section{Non-Tree Graphs}
		\label{eq:NonTreeGraphs}
		
		In this section, we study the equilibria configuration induced by some more general, yet specific, network graphs.
Also, we study the local stability properties of the synchronized configuration for arbitrary graphs. 
We first give the following definition.

\begin{defn}
	\label{def:CommonPlane}
	\normalfont
	Given $n$ vectors $\xmb_{\sss{i}}\in\Rn[3]$, for $i\in\{1,\cdots,n\}$, we say that $\{\xmb_{\sss{i}}\}_{\sss{i\in\{1,\cdots,n\}}}$ \emph{belong to a common plane} if there exists a unit vector $\bm{\nu} \in \mathcal{S}^2$ such that $\OP{\bm{\nu}}\xmb_{\sss{i}} = \xmb_{\sss{i}} $ for all $i\in\{1,\cdots,n\}$.
	We say that $\{\xmb_{\sss{i}}\}_{\sss{i\in\{1,\cdots,n\}}}$ \emph{belong to a common unique plane} if there exists a single pair of unit vectors $(+\bm{\nu},-\bm{\nu})$, with $\bm{\nu} \in \mathcal{S}^2$ such that $\OP{\bm{\nu}}\xmb_{\sss{i}} = \xmb_{\sss{i}} $ for all $i\in\{1,\cdots,n\}$
\end{defn}

Let us first discuss a property that is exploited later in this section. 

\begin{prop}
	\label{prop:CommonUniquePlane}
	\normalfont
	Consider $\nmbi[1],\nmbi[2] \in \Stwo$. If $\sk{\nmbi[1]} \nmbi[2] \ne \zvec$, then $\nmbi[1]$ and $\nmbi[2]$ belong a common unique plane.
\end{prop}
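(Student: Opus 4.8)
The plan is to exhibit the normal vector to the candidate plane explicitly and then argue that it is unique up to sign. First I would set $\bm{\nu} := \sk{\nmbi[1]}\nmbi[2] / \norm{\sk{\nmbi[1]}\nmbi[2]}$, which is well defined precisely because $\sk{\nmbi[1]}\nmbi[2] = \nmbi[1] \times \nmbi[2] \ne \zvec$ by hypothesis, and which is a unit vector. Since $\nmbi[1]\tp(\nmbi[1]\times\nmbi[2]) = 0$ and $\nmbi[2]\tp(\nmbi[1]\times\nmbi[2]) = 0$, we get $\bm{\nu}\tp\nmbi[1] = \bm{\nu}\tp\nmbi[2] = 0$, hence $\OP{\bm{\nu}}\nmbi[1] = (\Idmat - \bm{\nu}\bm{\nu}\tp)\nmbi[1] = \nmbi[1]$ and likewise $\OP{\bm{\nu}}\nmbi[2] = \nmbi[2]$. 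This already shows that $\{\nmbi[1],\nmbi[2]\}$ belong to a common plane in the sense of Definition~\ref{def:CommonPlane}.

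For uniqueness, I would note that $\sk{\nmbi[1]}\nmbi[2] \ne \zvec$ forces $\nmbi[1]$ and $\nmbi[2]$ to be linearly independent (if $\nmbi[2] = \lambda\nmbi[1]$ for some scalar $\lambda$, then $\nmbi[1]\times\nmbi[2] = \zvec$), so $V := \mathrm{span}\{\nmbi[1],\nmbi[2]\}$ is a two-dimensional subspace of $\Rn[3]$. Any unit vector $\bm{\mu}$ with $\OP{\bm{\mu}}\nmbi[1] = \nmbi[1]$ and $\OP{\bm{\mu}}\nmbi[2] = \nmbi[2]$ satisfies $\bm{\mu}\tp\nmbi[1] = \bm{\mu}\tp\nmbi[2] = 0$, i.e. $\bm{\mu} \in V^{\perp}$; since $\dim V^{\perp} = 3 - 2 = 1$ and $\bm{\nu} \in V^{\perp}$ is a unit vector spanning it, $\bm{\mu}$ must be a scalar multiple of $\bm{\nu}$, and being of unit norm, $\bm{\mu} = \pm\bm{\nu}$. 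Hence $(+\bm{\nu},-\bm{\nu})$ is the only pair with the defining property, so $\{\nmbi[1],\nmbi[2]\}$ belong to a common unique plane.

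There is no genuine obstacle here; the statement is an elementary linear-algebra fact. The only point requiring a modicum of care is the translation between the projection condition $\OP{\bm{\nu}}\xmb = \xmb$ and the orthogonality condition $\bm{\nu}\tp\xmb = 0$ (immediate from $\OP{\bm{\nu}} = \Idmat - \bm{\nu}\bm{\nu}\tp$), together with the observation that the non-vanishing of $\sk{\nmbi[1]}\nmbi[2]$ is exactly the linear-independence hypothesis that makes the orthogonal complement one-dimensional and hence pins down $\bm{\nu}$ up to sign.
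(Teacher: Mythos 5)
Your proof is correct and follows essentially the same route as the paper: you take $\bm{\nu} = \sk{\nmbi[1]}\nmbi[2]/\norm{\sk{\nmbi[1]}\nmbi[2]}$ as the normal, verify the projection conditions, and deduce uniqueness from the fact that $\nmbi[1]$ and $\nmbi[2]$ span a two-dimensional subspace. You merely spell out the one-dimensionality of the orthogonal complement more explicitly than the paper does, which is fine.
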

\begin{proof}
	\normalfont
	Consider $\bm{\nu} = \frac{\sk{\nmbi[1]} \nmbi[2]}{\norm{\sk{\nmbi[1]} \nmbi[2]}} \in \mathcal{S}^2$, which is well defined since $\sk{\nmbi[1]} \nmbi[2] \ne \zvec$.
	It follows that  $\OP{\bm{\nu}} \nmbi[1] = \nmbi[1]$ and that $\OP{\bm{\nu}} \nmbi[2] = \nmbi[2]$, which implies that $\nmbi[1]$ and $\nmbi[2]$ belong a common plane.
	Moreover, $\nmbi[1]$ and $\nmbi[2]$ belong a common unique plane, since $\nmbi[1]$ and $\nmbi[2]$ span a two dimensional space. 
\end{proof}

\begin{prop}
	\label{prop:CommonPlane}
	\normalfont
	Consider $\nmbi[1]$, \ldots, $\nmbi[n] \in \Stwo$, with $|\nmbi[i]\tp\nmbi[i+1]| \ne 1$ for all $i = \{1,\cdots, n-1\}$. If
	\begin{align}
			\pm
			\frac{\sk{\nmbi[1]} \nmbi[2]}{\norm{\sk{\nmbi[1]} \nmbi[2]}}
			=
			\cdots
			= 
			\pm
			\frac{\sk{\nmbi[n-1]} \nmbi[n]}{\norm{\sk{\nmbi[n-1]} \nmbi[n]}},
	\end{align}
	then all unit vectors belong to a common unique plane.
\end{prop}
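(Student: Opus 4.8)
The plan is to exhibit a single normal vector $\bm{\nu}$ that works for all $\nmbi[i]$ and then to argue it is unique up to sign. First I would set $\bm{\nu} := \frac{\sk{\nmbi[1]}\nmbi[2]}{\norm{\sk{\nmbi[1]}\nmbi[2]}} \in \Stwo$, which is well defined: by the Lagrange identity $\norm{\sk{\nmbi[1]}\nmbi[2]}^2 = \norm{\nmbi[1]}^2\norm{\nmbi[2]}^2 - (\nmbi[1]\tp\nmbi[2])^2 = 1 - (\nmbi[1]\tp\nmbi[2])^2$, and since $|\nmbi[1]\tp\nmbi[2]| \ne 1$ the denominator is nonzero. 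The same computation shows $\sk{\nmbi[i]}\nmbi[i+1] \ne \zvec$ for every $i \in \{1,\dots,n-1\}$, so all the normalized cross products appearing in the hypothesis are well defined.

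Next I would record the elementary facts, both immediate from $\OP{\bm{\mu}} = \Idmat - \bm{\mu}\bm{\mu}\tp$, that $\OP{\bm{\mu}}\xmb = \xmb \Leftrightarrow \bm{\mu}\tp\xmb = 0$ and that $\OP{\bm{\mu}} = \OP{-\bm{\mu}}$. For each fixed $i \in \{1,\dots,n-1\}$, Proposition~\ref{prop:CommonUniquePlane} applied to the pair $(\nmbi[i],\nmbi[i+1])$ states that these two vectors belong to the common unique plane whose normal is $\pm\frac{\sk{\nmbi[i]}\nmbi[i+1]}{\norm{\sk{\nmbi[i]}\nmbi[i+1]}}$, and by hypothesis this equals $\pm\bm{\nu}$; hence $\OP{\bm{\nu}}\nmbi[i] = \nmbi[i]$ and $\OP{\bm{\nu}}\nmbi[i+1] = \nmbi[i+1]$. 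Letting $i$ range over all of $\{1,\dots,n-1\}$ covers every index in $\{1,\dots,n\}$, so $\OP{\bm{\nu}}\nmbi[i] = \nmbi[i]$ for all $i$, i.e. all the unit vectors belong to a common plane with normal $\bm{\nu}$.

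Finally, for uniqueness of the plane, I would observe that $|\nmbi[1]\tp\nmbi[2]| \ne 1$ forces $\nmbi[1]$ and $\nmbi[2]$ to be linearly independent, so they span a two-dimensional subspace of $\Rn[3]$ whose orthogonal complement is exactly $\mathrm{span}\{\bm{\nu}\}$. Any unit vector $\bm{\nu}'$ with $\OP{\bm{\nu}'}\nmbi[i] = \nmbi[i]$ for all $i$ must in particular satisfy $\bm{\nu}'\tp\nmbi[1] = \bm{\nu}'\tp\nmbi[2] = 0$, hence $\bm{\nu}' = \pm\bm{\nu}$; therefore $(+\bm{\nu},-\bm{\nu})$ is the unique such pair, which is precisely the definition of belonging to a common unique plane. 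There is no real obstacle here: the statement is a short chaining argument resting on Proposition~\ref{prop:CommonUniquePlane}, and the only points needing care are tracking the sign ambiguity $\pm\bm{\nu}$ (absorbed via $\OP{\bm{\nu}} = \OP{-\bm{\nu}}$) and invoking linear independence of $\nmbi[1],\nmbi[2]$ for the uniqueness claim.
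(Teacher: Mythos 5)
Your proof is correct and follows essentially the same route as the paper's: fix $\bm{\nu}$ as the normalized cross product of the first pair, use the hypothesis to identify every consecutive pair's normal with $\pm\bm{\nu}$ (so each $\nmbi[i]$ is orthogonal to $\bm{\nu}$), and get uniqueness from the linear independence of $\nmbi[1]$ and $\nmbi[2]$. The only cosmetic difference is that the paper phrases the extension to general $n$ as repeating the three-vector argument $n-2$ times, whereas you handle all indices in a single pass; the underlying mechanism (each vector is orthogonal to its own cross product, which the hypothesis pins to $\pm\bm{\nu}$) is identical.
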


\begin{proof}
	\normalfont
	Consider $n = 3$. 
	Since $|\nmbi[1]\tp\nmbi[2]| \ne 1$ and $|\nmbi[2]\tp\nmbi[3]| \ne 1$, it follows that $\norm{\sk{\nmbi[1]} \nmbi[2]} \ne 0$ and $\norm{\sk{\nmbi[2]} \nmbi[3]} \ne 0$.
	Additionally, by assumption,
	\begin{align}
		\pm
		\frac{\sk{\nmbi[1]} \nmbi[2]}{\norm{\sk{\nmbi[1]} \nmbi[2]}}
		= 
		\pm
		\frac{\sk{\nmbi[2]} \nmbi[3]}{\norm{\sk{\nmbi[2]} \nmbi[3]}},
		\label{eq:UnitVectorsinPlane}
	\end{align}
	is satisfied.
	Consider then $\bm{\nu} = \frac{\sk{\nmbi[1]} \nmbi[2]}{\norm{\sk{\nmbi[1]} \nmbi[2]}} \in \mathcal{S}^2$. 
	It follows immediately that $\OP{\bm{\nu}}\nmbi[1] = \nmbi[1] $ and that $\OP{\bm{\nu}}\nmbi[2] = \nmbi[2] $. 
	It also follows that $	
		\OP{\bm{\nu}} \nmbi[3]
		=
		\left(\Idmat - \bm{\nu} \bm{\nu}\tp \right)
		\nmbi[3]
		=
		\nmbi[3] 
		-
		\bm{\nu} (\bm{\nu}\tp\nmbi[3])
		=
		\nmbi[3]$,
	where $\bm{\nu}\tp\nmbi[3] = 0$  follows from taking the inner product of~\eqref{eq:UnitVectorsinPlane} with $\nmbi[3]$. 
	Altogether, it follows that $\nmbi[1]$, $\nmbi[2]$ and $\nmbi[3]$ belong to a common unique plane (see Proposition~\ref{prop:CommonUniquePlane}). 
	For $n > 3$, it suffices to apply the previous argument $n - 2$ times.
\end{proof}

\begin{prop}
	\label{prop:CommonPlane2}
	\normalfont
	Consider $\nmbi[1]$, \ldots, $\nmbi[n] \in \Stwo$. If $\pm \embi{1}(\nmbi[1],\nmbi[2]) = \cdots = \pm \embi{n-1}(\nmbi[n-1],\nmbi[n])$ then all unit vectors belong to a common plane, which is unique if $\pm \embi{1}(\nmbi[1],\nmbi[2]) = \cdots = \pm \embi{n-1}(\nmbi[n-1],\nmbi[n]) \ne \zvec$.
\end{prop}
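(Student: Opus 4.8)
The plan is to reduce the statement to Proposition~\ref{prop:CommonPlane} by exploiting the explicit form of the edge error in~\eqref{eq:EdgeError2}, namely $\embi{k}(\nmbi[k],\nmbi[k+1]) = f_{\sss{k}}^{\sss{\prime}}(1 - \nmbi[k]\tp\nmbi[k+1])\,\sk{\nmbi[k]}\nmbi[k+1]$, together with the fact that $f_{\sss{k}}^{\sss{\prime}}(\cdot) > 0$ (condition i) in Definition~\ref{eq:gClasses}). Hence $\embi{k}(\nmbi[k],\nmbi[k+1])$ is a strictly positive multiple of $\sk{\nmbi[k]}\nmbi[k+1]$; it vanishes if and only if $\sk{\nmbi[k]}\nmbi[k+1] = \zvec$, i.e. $\nmbi[k] = \pm\nmbi[k+1]$, and when nonzero it has the same direction as $\sk{\nmbi[k]}\nmbi[k+1]$. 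These two observations are all that is needed.

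First I would dispose of the degenerate case. If the common value $\pm\embi{1}(\nmbi[1],\nmbi[2]) = \cdots = \pm\embi{n-1}(\nmbi[n-1],\nmbi[n])$ is $\zvec$, then $\embi{k}(\nmbi[k],\nmbi[k+1]) = \zvec$ for every $k$, so $\nmbi[k] = \pm\nmbi[k+1]$ for all $k$; iterating the relation, every $\nmbi[i]$ equals $\pm\nmbi[1]$, so all the unit vectors lie on a single line through the origin. Any unit vector $\bm{\nu}$ orthogonal to $\nmbi[1]$ then satisfies $\OP{\bm{\nu}}\nmbi[i] = \nmbi[i]$ for all $i$, so $\{\nmbi[i]\}$ belong to a common plane; and since there are infinitely many such $\bm{\nu}$, the plane is not unique, consistently with the statement.

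Then I would treat the generic case where the common value is some $\vmb \ne \zvec$. For each $k$, $\embi{k}(\nmbi[k],\nmbi[k+1]) = \pm\vmb \ne \zvec$ forces $\sk{\nmbi[k]}\nmbi[k+1] \ne \zvec$, equivalently $|\nmbi[k]\tp\nmbi[k+1]| \ne 1$. Moreover, normalizing and using $f_{\sss{k}}^{\sss{\prime}}(\cdot) > 0$,
\begin{align}
	\pm\frac{\sk{\nmbi[1]}\nmbi[2]}{\norm{\sk{\nmbi[1]}\nmbi[2]}}
	= \cdots =
	\pm\frac{\sk{\nmbi[n-1]}\nmbi[n]}{\norm{\sk{\nmbi[n-1]}\nmbi[n]}}
	= \pm\frac{\vmb}{\norm{\vmb}}.
\end{align}
The hypotheses of Proposition~\ref{prop:CommonPlane} are thus satisfied, and invoking it yields that $\nmbi[1],\ldots,\nmbi[n]$ belong to a common unique plane.

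I do not expect a genuine obstacle here; the proof is essentially bookkeeping around the factorization $\embi{k} = (\text{positive scalar})\cdot\sk{\nmbi[k]}\nmbi[k+1]$. The only point meriting care is the clean separation of the $\vmb = \zvec$ case (collinear vectors, hence a common but non-unique plane) from the $\vmb \ne \zvec$ case (edge directions pin down the plane via Proposition~\ref{prop:CommonPlane}); and one should note in passing that, the vectors $\embi{k}$ being given, they are automatically well defined at the configuration under consideration, so no subtlety involving the classes $\mathcal{P}^{\sss{\bar{0}}}$ arises.
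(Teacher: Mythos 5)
Your proof is correct and follows essentially the same route as the paper's: split on whether the common value is zero (collinear vectors, hence a common but non-unique plane) or nonzero (reduce to Proposition~\ref{prop:CommonPlane} after normalizing the cross products). You simply spell out the bookkeeping — the positivity of $f_{\sss{k}}^{\sss{\prime}}$ and the equivalence $\embi{k}=\zvec\Leftrightarrow\nmbi[k]=\pm\nmbi[k+1]$ — that the paper leaves implicit.
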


\begin{proof}
	If $\pm \embi{1}(\nmbi[1],\nmbi[2]) = \cdots = \pm \embi{n-1}(\nmbi[n-1],\nmbi[n]) \ne \zvec$, it suffices to invoke Proposition~\ref{prop:CommonPlane}.
	If $\pm \embi{1}(\nmbi[1],\nmbi[2]) = \cdots = \pm \embi{n-1}(\nmbi[n-1],\nmbi[n]) = \zvec$, it follows that $\pm \nmbi[1] = \cdots =\pm \nmbi[n]$, and therefore, all unit vectors belong to a common plane.
\end{proof}

\begin{thm}
	\label{thm:IndependentCycleEquilibriumConfiguration}
	\normalfont	
	Consider the vector field~\eqref{eq:StateVectorField}, the control law~\eqref{eq:DistributedControlLawVectorialModified} with  $f_{\sss{k}} \in \mathcal{P}^{\sss{0}}$ for all $k \in \mathcal{M}$, and a trajectory $\xmb(\cdot)$ of $\dot{\xmb}(t) = \fmb_{\sss{\xmb}}(t,\xmb(t),\bar{\Tmb}^{\sss{cl}}(t,\xmb(t)))$.
	If the network graph contains only independent cycles, then for each cycle, all its unit vectors converge to a common plane.
\end{thm}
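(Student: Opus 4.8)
The plan is to chain together Proposition~\ref{prop:ConvergenceToNullSpace2}, the null-space description in Proposition~\ref{prop:BIndependentCycles}, and the geometric statement in Proposition~\ref{prop:CommonPlane2}, using a compactness argument to pass from an asymptotic statement about $(B \otimes \Idmat)\emb(\nmb(\cdot))$ to one about the unit vectors themselves. First I would observe that, because $f_{\sss{k}} \in \mathcal{P}^{\sss{0}}$ for every $k \in \mathcal{M}$, each $\embi{k}(\cdot,\cdot)$ and hence $\emb(\cdot)$ is continuous on the whole compact set $(\mathcal{S}^{\sss{2}})^{\sss{N}}$ (cf.\ the discussion following~\eqref{eq:EdgeError2Norm}), and that Proposition~\ref{prop:ConvergenceToNullSpace2} applies for any $\xmb(0)$, yielding $\lim_{\sss{t \to \infty}}(B \otimes \Idmat)\emb(\nmb(t)) = \zvec$ along the trajectory.

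Next I would analyze the accumulation points of $\nmb(\cdot)$. Fix any sequence $t_{\sss{n}} \to \infty$; by compactness of $(\mathcal{S}^{\sss{2}})^{\sss{N}}$ there is a subsequence with $\nmb(t_{\sss{n_k}}) \to \nmb^{\sss{\star}}$, and continuity of $\emb(\cdot)$ together with the previous limit forces $(B \otimes \Idmat)\emb(\nmb^{\sss{\star}}) = \zvec$, i.e.\ $\emb(\nmb^{\sss{\star}}) \in \mathcal{N}(B \otimes \Idmat)$. Since the graph contains only independent cycles, Proposition~\ref{prop:BIndependentCycles} gives that, within each cycle $C_{\sss{i}}$, all edge errors (evaluated at $\nmb^{\sss{\star}}$) agree up to sign, $\embi{k}(\nmbt{k},\nmbh{k}) = \pm\,\embi{l}(\nmbt{l},\nmbh{l})$ for all $k,l \in C_{\sss{i}}$. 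Listing the nodes of $C_{\sss{i}}$ as $\nmbi[i_{\sss{1}}],\ldots,\nmbi[i_{\sss{r}}]$ in cyclic order, so that consecutive nodes are joined by a cycle edge, the $r-1$ edge errors attached to the edges $(i_{\sss{1}},i_{\sss{2}}),\ldots,(i_{\sss{r-1}},i_{\sss{r}})$ form exactly a chain of the type required in Proposition~\ref{prop:CommonPlane2} (the orientation mismatch between the incidence matrix $B$ and the tail/head convention defining each $\embi{k}$ being absorbed by the $\pm$ signs present in both propositions); hence Proposition~\ref{prop:CommonPlane2}, applied at $\nmb^{\sss{\star}}$, yields that $\nmbi[i_{\sss{1}}],\ldots,\nmbi[i_{\sss{r}}]$ — that is, all unit vectors of $C_{\sss{i}}$ — belong to a common plane.

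Finally I would upgrade this pointwise statement to genuine convergence. For each cycle $C_{\sss{i}}$ let $\mathcal{K}_{\sss{i}} \subseteq (\mathcal{S}^{\sss{2}})^{\sss{N}}$ be the (closed) set of configurations whose $C_{\sss{i}}$-unit-vectors lie in a common plane; the argument above shows every accumulation point of $t \mapsto \nmb(t)$ lies in $\bigcap_{\sss{i}} \mathcal{K}_{\sss{i}}$, and a standard compactness contradiction then gives $\dist(\nmb(t),\bigcap_{\sss{i}}\mathcal{K}_{\sss{i}}) \to 0$, which is the claim. I expect the main obstacle to be precisely this passage from ``$\emb(\nmb(t))$ approaches $\mathcal{N}(B \otimes \Idmat)$'' to ``the unit vectors of each cycle approach a common plane'': nothing guarantees that $\emb(\nmb(t))$ itself converges, so one genuinely needs continuity of $\emb(\cdot)$ (available only because $f_{\sss{k}} \in \mathcal{P}^{\sss{0}}$), compactness of $(\mathcal{S}^{\sss{2}})^{\sss{N}}$, and the explicit edge-by-edge form of $\mathcal{N}(B \otimes \Idmat)$ from Proposition~\ref{prop:BIndependentCycles}, all at once; the remaining bookkeeping (cycle edge orientations, and the all-zero versus all-nonzero edge-error dichotomy inside a cycle) is routine and already handled inside Propositions~\ref{prop:CommonPlane} and~\ref{prop:CommonPlane2}.
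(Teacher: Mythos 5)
Your proposal is correct and follows essentially the same route as the paper's proof: Proposition~\ref{prop:ConvergenceToNullSpace2} to get $(B\otimes\Idmat)\emb(\nmb(t))\to\zvec$, Proposition~\ref{prop:BIndependentCycles} to identify the null space edge-by-edge within each cycle, and Proposition~\ref{prop:CommonPlane2} to conclude the common-plane property. Your accumulation-point/compactness argument for passing from the asymptotic null-space statement to convergence of the unit vectors to the closed set of coplanar configurations is a welcome tightening of a step the paper's proof leaves implicit.
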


\begin{proof}
	\normalfont
	By invoking Proposition~\ref{prop:ConvergenceToNullSpace2}, it follows that $\lim_{\sss{t \rightarrow \infty}}(B \otimes \Idmat )\emb(\nmb(t)) = \zvec$, i.e., $\emb(\nmb(\cdot))$ converges to the null space of $B \otimes \Idmat$.
	Now, consider a graph with only independent cycles and recall Proposition~\ref{prop:BIndependentCycles}.
	Without loss of generality, consider that there is only one independent cycle and that the first $n \ge 3$ edges form that cycle.
	From Proposition~\ref{prop:BIndependentCycles}, it follows that $\emb(\nmb) \in \mathcal{N} (B \otimes \Idmat) \Rightarrow \pm \embi{1}(\nmbt{1},\nmbh{1}) = \cdots = \pm \embi{n}(\nmbt{n},\nmbh{n})$.
	In turn, from Proposition~\ref{prop:CommonPlane2}, it follows that all unit vectors that form the cycle belong to a common plane when $(B \otimes \Idmat) \emb(\cdot) = \zvec$.
\end{proof}

\if\paperextended1
	Figure~\ref{subfig:Triangular} exemplifies the statement in Theorem~\ref{thm:IndependentCycleEquilibriumConfiguration}, where three agents form a complete graph, and thus there is one independent cycle, and where the distance function is the same for all edges. 
	In this scenario, and because the distance function is the same for all edges, the unit vectors form an equilateral triangle. 
	In Fig.~\ref{subfig:Tetrahedron}, four agents form a complete graph, which does not fit the conditions of  Theorem~\ref{thm:IndependentCycleEquilibriumConfiguration} (a complete graph with fours agents has three cycles, but they all share edges with each other, i.e., they are not independent); 
	for this scenario, we can find equilibria configurations where the unit vectors do not belong to a common plane, and, in Fig.~\ref{subfig:Tetrahedron}, an equilibrium configuration is shown where the four agents form a tetrahedron.

\if\paperextended1
	\FloatBarrier
	\begin{figure}
		\centering
		\subcapcentertrue
		\subfigure[Triangular configuration]{
			\includegraphics[width=0.22\textwidth]
			{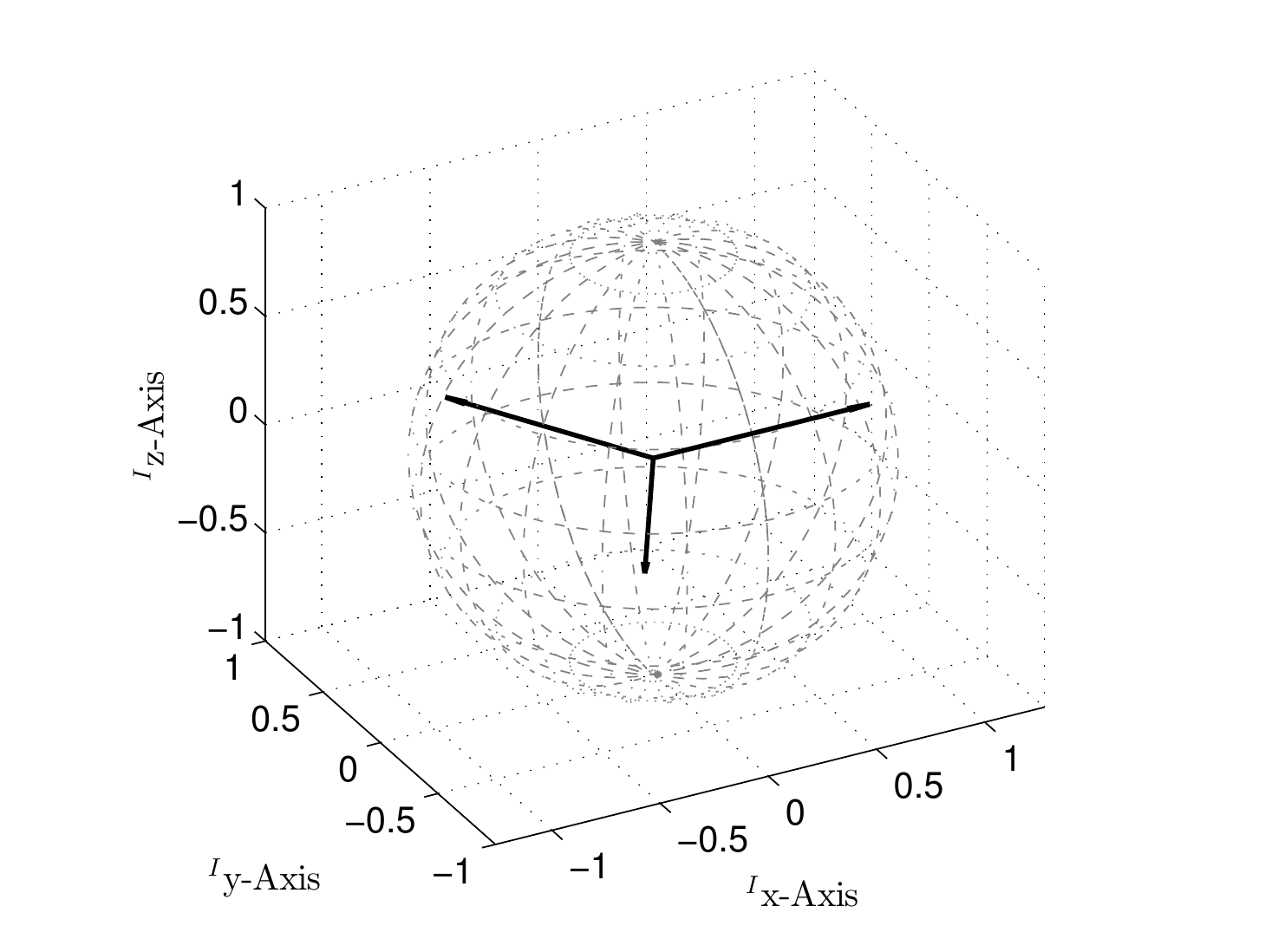}
			\label{subfig:Triangular}
		}		
		\subfigure[Tetrahedron configuration]{
			\includegraphics[width=0.22\textwidth]
			{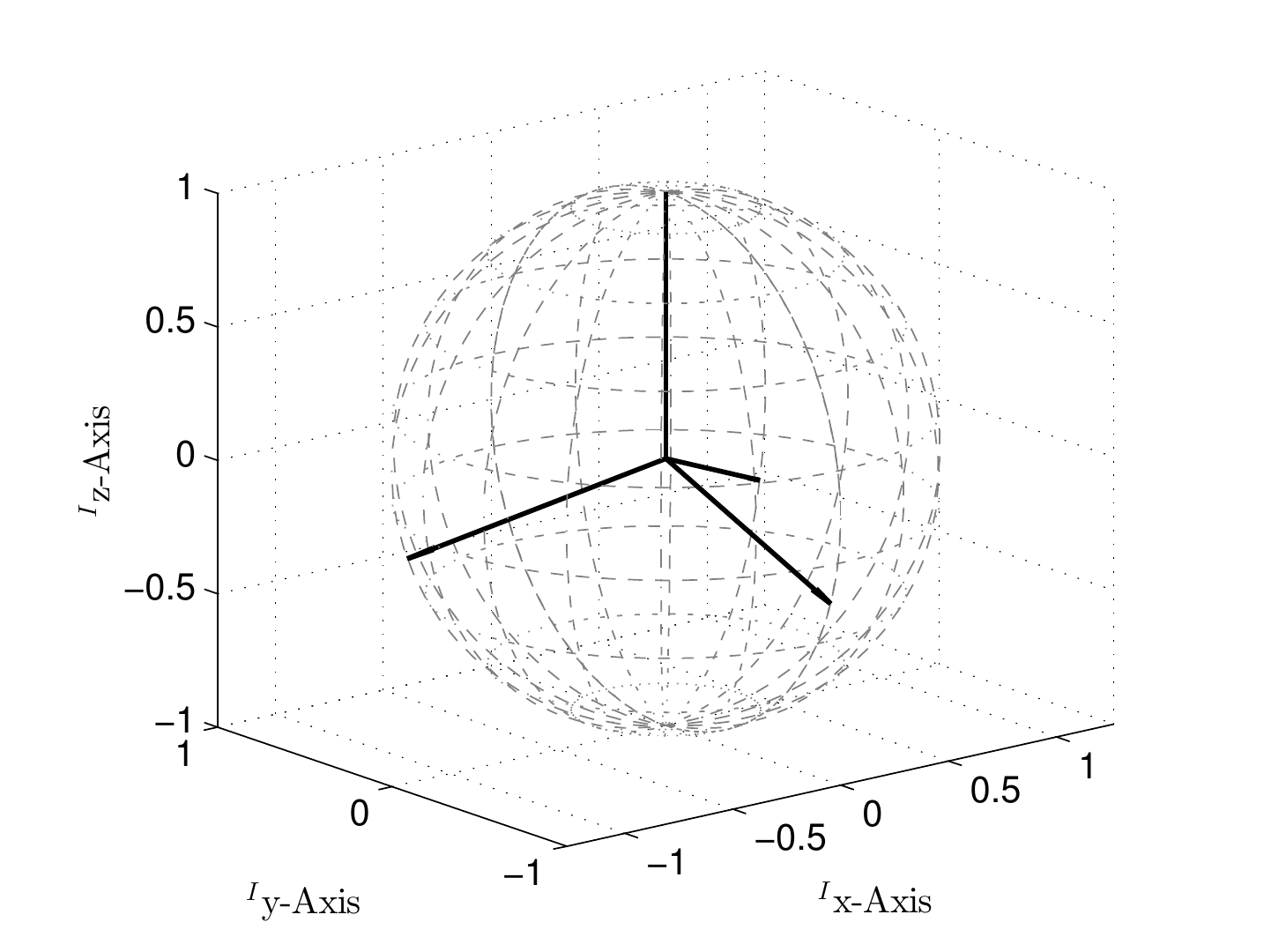}
			\label{subfig:Tetrahedron}
		}
		\caption{Equilibrium configurations in a complete graph ($\mathcal{K}_3$ in Fig.~\ref{subfig:Triangular} and $\mathcal{K}_4$ in Fig.~\ref{subfig:Tetrahedron}), with the same distance function for all edges (in Example~\ref{ex:DistanceFunctionCos} with $\alpha = 1$)}
		\label{fig:EquilibriumConfigurations}
	\end{figure}
\fi

\begin{thm}
	\label{thm:IndependentCycleEquilibriumConfiguration2}
	\normalfont
	Consider the vector field~\eqref{eq:StateVectorField}, the control law~\eqref{eq:DistributedControlLawVectorialModified} with  $f_{\sss{k}} \in \mathcal{P}^{\sss{0}}$ for all $k \in \mathcal{M}$, and a trajectory $\xmb(\cdot)$ of $\dot{\xmb}(t) = \fmb_{\sss{\xmb}}(t,\xmb(t),\bar{\Tmb}^{\sss{cl}}(t,\xmb(t)))$.
	If the network graph contains only independent cycles or cycles that share only one edge, then all unit vectors belonging to each independent cycle converge to a common plane, and all unit vectors belonging to each pair of cycles that share only one edge also converge to a common plane.
\end{thm}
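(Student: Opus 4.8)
The plan is to reduce Theorem~\ref{thm:IndependentCycleEquilibriumConfiguration2} to the already-established machinery: Proposition~\ref{prop:ConvergenceToNullSpace2} gives convergence of $\emb(\nmb(\cdot))$ into $\mathcal{N}(B\otimes\Idmat)$, Proposition~\ref{prop:BAlmostIndependentCycles} characterizes that null space explicitly for graphs with independent cycles and cycles sharing one edge, and Proposition~\ref{prop:CommonPlane2} translates an equality-up-to-sign of consecutive edge errors into a common-plane conclusion. So the body of the proof is essentially a careful bookkeeping argument that matches the structure of $\mathcal{N}(B\otimes\Idmat)$ against the hypothesis of Proposition~\ref{prop:CommonPlane2}, cycle by cycle.

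First I would invoke Proposition~\ref{prop:ConvergenceToNullSpace2} (using $f_{\sss{k}}\in\mathcal{P}^{\sss{0}}$ for all $k\in\mathcal{M}$, which makes $\Omega_{\sss{\nmb}}^{\sss{\emb}} = (\mathcal{S}^{\sss{2}})^{\sss{N}}$ and removes any domain subtleties) to conclude $\lim_{t\to\infty}(B\otimes\Idmat)\emb(\nmb(t)) = \zvec$, i.e.\ $\emb(\nmb(\cdot))$ converges to $\mathcal{N}(B\otimes\Idmat)$. Then, without loss of generality, I would treat the two cases separately. For an independent cycle $C_{\sss{i}}$, Proposition~\ref{prop:BAlmostIndependentCycles} says that on $\mathcal{N}(B\otimes\Idmat)$ one has $\embi{k} = \pm\embi{l}$ for all $k,l\in C_{\sss{i}}$; after reindexing so that the cycle's edges are consecutive and oriented head-to-tail (so that $\nmbh{k} = \nmbt{k+1}$ along the cycle), this is exactly the hypothesis $\pm\embi{1}(\nmbi[1],\nmbi[2]) = \cdots = \pm\embi{n-1}(\nmbi[n-1],\nmbi[n])$ of Proposition~\ref{prop:CommonPlane2}, whence all unit vectors of $C_{\sss{i}}$ lie in a common plane. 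For a pair $(C_{\sss{i}}^{\sss{1}},C_{\sss{i}}^{\sss{2}})$ sharing a single edge, Proposition~\ref{prop:BAlmostIndependentCycles} gives $\embi{k} = \pm\embi{l}$ over $C_{\sss{i}}^{\sss{1}}\setminus\{C_{\sss{i}}^{\sss{1}}\cap C_{\sss{i}}^{\sss{2}}\}$ and, separately, over $C_{\sss{i}}^{\sss{2}}\setminus\{C_{\sss{i}}^{\sss{1}}\cap C_{\sss{i}}^{\sss{2}}\}$; since removing the shared edge from the union $C_{\sss{i}}^{\sss{1}}\cup C_{\sss{i}}^{\sss{2}}$ still leaves a single cycle $C_{\sss{i}}^{\sss{3}}$ (with $|C_{\sss{i}}^{\sss{3}}| = |C_{\sss{i}}^{\sss{1}}| + |C_{\sss{i}}^{\sss{2}}| - 2$), the edges of $C_{\sss{i}}^{\sss{3}}$ form a connected chain through all the vertices of both cycles, and the common value of all these $\embi{k}$'s over $C_{\sss{i}}^{\sss{3}}$ again feeds into Proposition~\ref{prop:CommonPlane2} to give a common plane for all unit vectors appearing in $C_{\sss{i}}^{\sss{1}}\cup C_{\sss{i}}^{\sss{2}}$.

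The main obstacle I anticipate is the degenerate case that Proposition~\ref{prop:CommonPlane} (invoked through Proposition~\ref{prop:CommonPlane2}) tacitly excludes: it requires consecutive unit vectors with $|\nmbi[i]\tp\nmbi[i+1]| \ne 1$, i.e.\ neighbors that are neither synchronized nor diametrically opposed, so that $\sk{\nmbi[i]}\nmbi[i+1]\ne\zvec$ and the plane normal is well-defined. If some edge error vanishes, $\embi{k} = \zvec$ forces all the edge errors in that cycle (or that side of the shared-edge pair) to be $\zvec$, hence $\pm\nmbi[1] = \cdots = \pm\nmbi[n]$; this is precisely the second branch of the proof of Proposition~\ref{prop:CommonPlane2}, where collinear vectors trivially lie in a common plane (just not a \emph{unique} one). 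So I would explicitly note that Proposition~\ref{prop:CommonPlane2} already handles both the generic and the degenerate branch, and that this is why the theorem statement claims only ``a common plane'' and not ``a common unique plane.'' The only remaining care is the reindexing/orientation step — making rigorous the ``without loss of generality, the cycle's edges are the first $n$ edges, consecutively oriented'' reduction, which is routine graph bookkeeping via the function $\kappa$ and the freedom in choosing the edge orientation encoded in the incidence matrix $B$.

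Concretely, the proof I would write is short: state the invocation of Proposition~\ref{prop:ConvergenceToNullSpace2}; recall Proposition~\ref{prop:BAlmostIndependentCycles}; for a representative independent cycle and a representative shared-edge pair, read off $\pm\embi{k} = \pm\embi{l}$ along the relevant chains of consecutive edges; apply Proposition~\ref{prop:CommonPlane2}; and remark that the finitely many cycles are handled one at a time exactly as in the proof of Theorem~\ref{thm:IndependentCycleEquilibriumConfiguration}. I do not expect any new estimate or construction to be needed beyond what is already in the excerpt.
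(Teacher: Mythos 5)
Your first two steps match the paper exactly: invoke Proposition~\ref{prop:ConvergenceToNullSpace2} (with $f_{\sss{k}}\in\mathcal{P}^{\sss{0}}$ removing all domain issues), and for each independent cycle read off $\pm\embi{k}=\pm\embi{l}$ from Proposition~\ref{prop:BIndependentCycles}/\ref{prop:BAlmostIndependentCycles} and feed it into Proposition~\ref{prop:CommonPlane2}, with the degenerate branch ($\embi{k}=\zvec$) handled as you describe. That part is fine.

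The shared-edge case, however, has a genuine gap. You assert that the edge errors carry ``the common value of all these $\embi{k}$'s over $C^{\sss{3}}$,'' i.e.\ that all edges of $C^{\sss{1}}\cup C^{\sss{2}}$ minus the shared edge are equal up to sign, so that Proposition~\ref{prop:CommonPlane2} applies in one shot along the long cycle $C^{\sss{3}}$. But Proposition~\ref{prop:BAlmostIndependentCycles} does not give this: the null space is parametrized by \emph{two independent} vectors $\bm{\alpha},\bm{\beta}\in\Rn[3]$, with the non-shared edges of $C^{\sss{1}}$ equal to $\pm\bm{\alpha}$, those of $C^{\sss{2}}$ equal to $\pm\bm{\beta}$, and the shared edge equal to $\pm(\bm{\alpha}+\bm{\beta})$; nothing forces $\bm{\alpha}=\pm\bm{\beta}$ ($C^{\sss{3}}$ is not one of the cycles to which the ``all edges equal up to sign'' clause applies --- if it were, the null space would be one-dimensional rather than two-dimensional per component). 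So Proposition~\ref{prop:CommonPlane2} only yields a common plane for each arm \emph{separately}, and you still have to glue the two planes together. The paper does this through the shared edge's endpoints $\nmbt{q},\nmbh{q}$, which lie on both arms: if $\nmbt{q}\ne\pm\nmbh{q}$ they span a unique plane (Proposition~\ref{prop:CommonUniquePlane}), forcing both arms' planes to coincide; if $\nmbt{q}=\pm\nmbh{q}$ then $\bm{\alpha}+\bm{\beta}=\zvec$, so either $\bm{\alpha}=\bm{\beta}=\zvec$ (all unit vectors $\pm$ collinear) or $\bm{\alpha}=-\bm{\beta}\ne\zvec$, and only in this last sub-case does your ``single chain over $C^{\sss{3}}$'' argument actually apply. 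You need to add this case analysis and the gluing step; without it the conclusion for the pair of cycles does not follow.
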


\if\paperextended1
	\begin{proof}
	\normalfont
	For network graphs with only independent cycles, the proof follows the same steps as those in the proof of Theorem~\ref{thm:IndependentCycleEquilibriumConfiguration}.
	Thus, and w.l.o.g, consider graphs with only two cycles that share only one edge, and recall Proposition~\ref{prop:BAlmostIndependentCycles} (adding other cycles does not affect the following conclusions, as long as those cycles are formed by other edges other than those that form the two previously mentioned cycles). 
	W.l.o.g, assume that the first $n = q + s - 1 \ge 5 $ edges are part of the two cycles that share only one edge, where the edges $\{1,\cdots,q-1,q\}$ form the first cycle ($|\{1,\cdots,q-1,q\}|=q$), while the edges $\{q,q+1,\cdots,n\}$ form the second cycle ($|\{q,q+1,\cdots,n\}| = s$), and with the $q^{\sss{th}}$ edge as the shared edge between cycles (for brevity, denote $p = q - 1 $ and $r = q + 1$).
	Under the Theorem's conditions, and by invoking Proposition~\ref{prop:ConvergenceToNullSpace2}, it follows that $\lim_{\sss{t \rightarrow \infty}}(B \otimes \Idmat )\emb(\nmb(t)) = \zvec$, i.e., $\emb(\nmb(\cdot))$ converges to the null space of $B \otimes \Idmat$.
	From~\eqref{eq:2CyclesNullSpace}, $ \mathcal{N} (B \otimes \Idmat)$ is spanned by the vector
	\begin{align}
		\Scale[0.9]
		{
			\begin{bmatrix}
				(\bar{\onesvec}_{\sss{q-1}} \otimes \bm{\alpha})\tp
				&
				(\bm{\alpha} + \bm{\beta})\tp
				&
				(\bar{\onesvec}_{\sss{s-1}} \otimes \bm{\beta})\tp
				&
				\zvec\tp 
			\end{bmatrix}\tp
		},
	\end{align}	
	for any $\bm{\alpha},\bm{\beta} \in \Rn[3]$ (see Proposition~\ref{prop:BAlmostIndependentCycles}).

	Suppose $\nmbt{q} \ne \pm \nmbh{q}$, where $\nmbt{q}$ and $\nmbh{q}$ are the unit vectors that form edge $q$;
	then $\embi{q}(\nmbt{q},\nmbh{q}) \ne \zvec$ and therefore $\bm{\alpha} + \bm{\beta} \ne \zvec \Rightarrow \bm{\alpha} \ne \zvec \vee  \bm{\beta} \ne \zvec $; moreover, $\nmbt{q}$ and $\nmbh{q}$ form a common unique plane (see Proposition~\ref{prop:CommonUniquePlane}).
	
	If $\bm{\alpha} = \zvec$ (or $\bm{\beta} = \zvec$), it follows that $\pm \embi{1}(\nmbt{1},\nmbh{1}) = \cdots = \pm \embi{p}(\nmbt{p},\nmbh{p}) = \zvec$, and therefore the unit vectors in the first (or second) cycle belong to a common plane (Proposition~\ref{prop:CommonPlane2}). 
	Also, if $\bm{\alpha} = \zvec$ (or $\bm{\beta} = \zvec$), it follows that $\bm{\beta} \ne \zvec$ (or $\bm{\alpha} \ne \zvec$), and therefore $\pm \embi{r}(\nmbt{r},\nmbh{r}) = \cdots = \pm \embi{n}(\nmbt{n},\nmbh{n}) \ne \zvec$, and therefore all unit vectors in the second (or first) cycle belong to a common unique plane (Proposition~\ref{prop:CommonPlane2}). 
	However, since the unit vectors of the first (or second) cycle are either all synchronized or some are diametrically opposed to others, it follows that all unit vectors in both cycles belong to a common unique plane.
	
	Consider now the case where $\bm{\alpha} \ne \zvec$ and $\bm{\beta} \ne \zvec$, which implies that $\pm \embi{1}(\nmbt{1},\nmbh{1}) = \cdots = \pm \embi{p}(\nmbt{p},\nmbh{p}) \ne \zvec$ as well as $\pm \embi{r}(\nmbt{r},\nmbh{r}) = \cdots = \pm \embi{n}(\nmbt{n},\nmbh{n}) \ne \zvec$. 
	Thus, by Proposition~\ref{prop:CommonPlane2}, it follows that all unit vectors in each cycle belong to a common plane.
	Since $\nmbt{q}$ and $\nmbh{q}$ form a common unique plane, and since $\nmbt{q}$ and $\nmbh{q}$ belong \emph{simultaneously} to both cycles, it follows that all unit vectors in both cycles belong to a common unique plane. 
	
	Suppose now that $\nmbt{q} = \pm \nmbh{q}$.
	Then $\embi{q}(\nmbt{q},\nmbh{q}) = \zvec$, and it follows that $\bm{\alpha} + \bm{\beta} = \zvec \Rightarrow \bm{\alpha} = -\bm{\beta} \ne \zvec \vee \bm{\alpha} = \bm{\beta} = \zvec$. 
	If $\bm{\alpha} = \bm{\beta} = \zvec$, then $\emb(\nmb) = \zvec \Leftrightarrow \pm \nmbi[1] = \cdots = \pm \nmbi[N]$ and the Theorem's conclusions follow. 
	If $\bm{\alpha} = -\bm{\beta} \ne \zvec$, then $\pm \embi{1}(\nmbt{1},\nmbh{1}) = \cdots = \pm \embi{p }(\nmbt{p},\nmbh{p}) = \pm \embi{r }(\nmbt{r},\nmbh{r}) = \cdots =\pm \embi{n}(\nmbt{n},\nmbh{n}) \ne \zvec$, which implies that all unit vectors in both cycles belong to a common unique plane (Proposition~\ref{prop:CommonPlane2}).
	\end{proof}

\FloatBarrier
\begin{figure*}
	\centering
	\subcapcentertrue
	\subfigure[Equilibrium configuration with network graph shown in Fig.~\ref{subfig:Agents6Network}]{
		\includegraphics[width=0.25\textwidth]
		{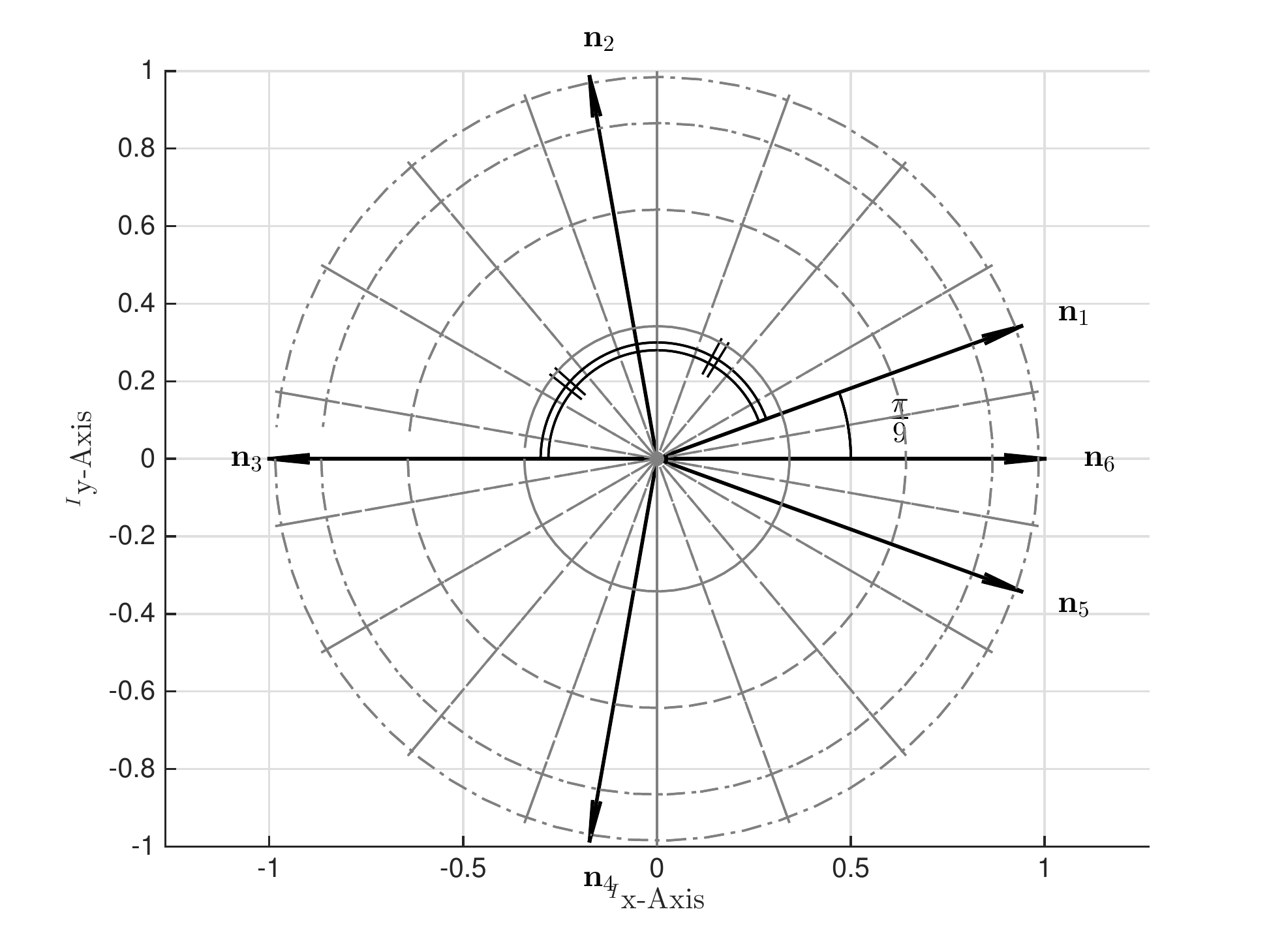}
		\label{subfig:Agents6OppositeAlternative1}
	}	
	\subfigure[Equilibrium configuration with network graph shown in Fig.~\ref{subfig:Agents6Network}]{
		\includegraphics[width=0.25\textwidth]
		{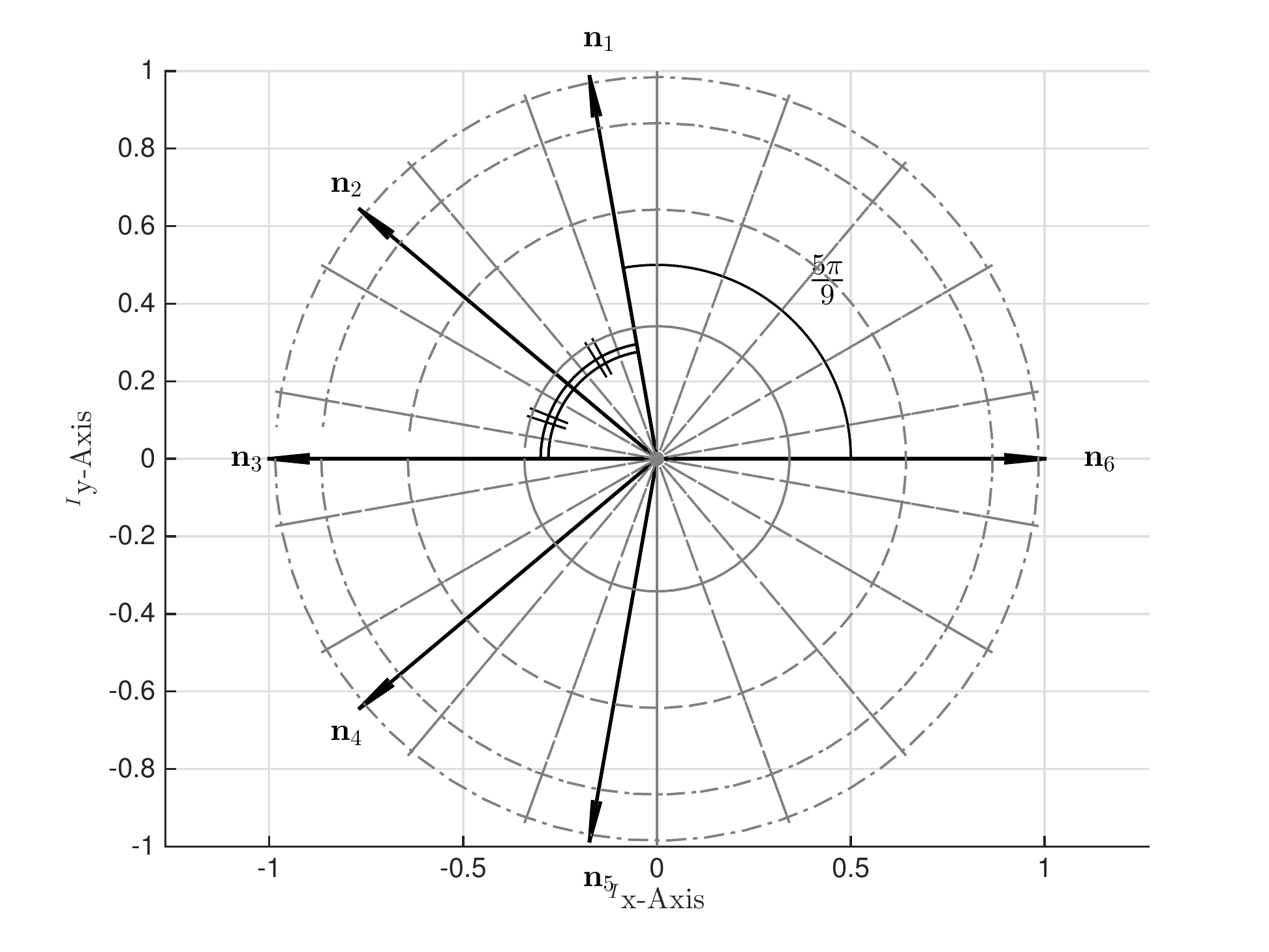}
		\label{subfig:Agents6OppositeAlternative2}
	}	
	\subfigure[Graph with two cycles that share only one edge]{
			\centering
			\begin{tikzpicture}[-,>=stealth',shorten >=1pt,auto,node distance=3cm,
			                    thick,main node/.style={circle,draw,font=\sffamily\Large\bfseries},scale=0.4,every node/.style={scale=0.35}]
			
			  \node[main node] (1) {$\nmbi[1]$};
			  \node[main node] (2) [above right of=1] {$\nmbi[2]$};
			  \node[main node] (3) [right of=2] {$\nmbi[3]$};
			  \node[main node] (4) [right of=3] {$\nmbi[4]$}; 
			  \node[main node] (5) [below right of=4] {$\nmbi[5]$};
			  \node[main node] (6) [below of=3] {$\nmbi[6]$};

			  \path[every node/.style={font=\sffamily\small}]
			    (1) edge node {} (2)
			        edge node {} (5)
			        edge node {} (6)
			    (2) edge node {} (3)
			    (3) edge node {} (4)
			    (4) edge node {} (5)
			    (6) edge node {} (5)
			    ;
			\end{tikzpicture}
			

		\label{subfig:Agents6Network}
	}
	\caption{Two equilibrium configurations for group with network graph shown in Fig.~\ref{subfig:Agents6Network}, with the same distance function for all edges (in Example~\ref{ex:DistanceFunctionCos} with $\alpha = 1$)}
	\label{fig:EquilibriumConfigurations2}
\end{figure*}
\fi

	Figure~\ref{fig:EquilibriumConfigurations2} exemplifies the statement in Theorem~\ref{thm:IndependentCycleEquilibriumConfiguration2}, with a network of six agents, with the network graph in Fig.~\ref{subfig:Agents6Network}, and where the distance function is the same for all edges (distance function in Example~\ref{ex:DistanceFunctionCos} with $\alpha = 1$).
	In this scenario, there are two cycles that share only one edge, one cycle composed of unit vectors $\{  \nmbi[1], \nmbi[2], \nmbi[3], \nmbi[4], \nmbi[5]  \}$, a second cycle composed of unit vectors $\{  \nmbi[1], \nmbi[5], \nmbi[6]  \}$, and where the shared edge is formed by $\{\nmbi[1], \nmbi[5]\}$. There are at least two equilibria configurations (apart from configurations where $\nmbi[i] = \pm \nmbi[j]$ for some $i$ and $j$), which are given in Fig.~\ref{subfig:Agents6OppositeAlternative1} and Fig.~\ref{subfig:Agents6OppositeAlternative2}, where in both cases all unit vectors belong to a common plane. 	

	Propositions~\ref{thm:IndependentCycleEquilibriumConfiguration} and~\ref{thm:IndependentCycleEquilibriumConfiguration2} focus on equilibria for some general, yet specific, network graphs.
	For arbitrary graphs, we can find equilibria configurations as exemplified in Fig.~\ref{subfig:Tetrahedron}, where four agents in a complete graph are at equilibrium when forming a tetrahedron.
	
\else
	
	The proof of Theorem~\ref{thm:IndependentCycleEquilibriumConfiguration2} relies on realizing, with the help of Proposition~\ref{prop:BAlmostIndependentCycles}, that the only way all edges belong to the null-space of $ B \otimes \Idmat$ is by belonging to a common plane, similarly to what was done in the proof of Theorem~\ref{thm:IndependentCycleEquilibriumConfiguration}. This is omitted here due to space constraints, but can be found in~\cite{TR}.

	
	
\fi

We now present a proposition, which will be useful in guaranteeing local asymptotic stability of incomplete attitude synchronization for arbitrary graphs.

\begin{prop}
	\label{prop:pi2Cone}
	\normalfont
	Consider $\nmb \in \bar{\mathcal{C}}(\alpha)$, for some $\alpha \in [0,\frac{\pi}{2})$, and assume that \emph{i)} the network graph is connected; \emph{ii)} $\emb(\nmb) \in \mathcal{N}(B \otimes \Idmat)$, with $\emb(\cdot)$ as defined in~\eqref{eq:TotalEdgeError}. This takes place if and only if $\exists \bm{\nu} \in \Stwo : \nmb  = (\onesvec_{\sss{N}} \otimes \bm{\nu})$.
\end{prop}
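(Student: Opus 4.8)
The plan is to establish the two directions separately; the backward one is immediate and the forward one is the substance. If $\nmb=\onesvec_{\sss{N}}\otimes\bm{\nu}$ for some $\bm{\nu}\in\Stwo$, then every edge satisfies $\nmbt{k}=\nmbh{k}=\bm{\nu}$, so by \eqref{eq:EdgeError2} $\embi{k}(\bm{\nu},\bm{\nu})=f_{\sss{k}}'(0)\sk{\bm{\nu}}\bm{\nu}=\zvec$, hence $\emb(\nmb)=\zvec\in\mathcal{N}(B\otimes\Idmat)$. For the converse, assume the graph is connected, $\nmb\in\bar{\mathcal{C}}(\alpha)$ with cone vector $\bm{\nu}^{\star}$ — so $c_i:=(\bm{\nu}^{\star})\tp\nmbi[i]\ge\cos\alpha>0$ for every $i$, where strict positivity uses $\alpha<\tfrac{\pi}{2}$ — and $(B\otimes\Idmat)\emb(\nmb)=\zvec$. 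First I would read off the $i$-th $3$-component block of this identity: using $\embi{k}(\nmbi[1],\nmbi[2])=-\embi{k}(\nmbi[2],\nmbi[1])$ and the sign pattern of $B$, the block reduces to $\sk{\nmbi[i]}\sum_{j\in\mathcal{N}_i}w_{ij}\nmbi[j]=\zvec$, with $w_{ij}:=f_{\sss{\kappa(i,j)}}'(1-\nmbi[i]\tp\nmbi[j])$. Since $\sk{\nmbi[i]}\vmb=\zvec$ iff $\vmb$ is parallel to $\nmbi[i]$, this says $\sum_{j\in\mathcal{N}_i}w_{ij}\nmbi[j]=\lambda_i\nmbi[i]$ for some scalar $\lambda_i$.

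Next I would turn this into a discrete averaging relation. The cone hypothesis with $\alpha<\tfrac{\pi}{2}$ forces the angle between any two $\nmbi[i],\nmbi[j]$ to be at most $2\alpha<\pi$ (cf. Proposition~\ref{prop:BelongToCone1}), so no pair is diametrically opposed, each $w_{ij}$ is finite and nonnegative, and $w_{ij}>0$ whenever $\nmbi[i]\ne\nmbi[j]$ because $f_{\sss{k}}'>0$ on $(0,2)$. Projecting $\sum_j w_{ij}\nmbi[j]=\lambda_i\nmbi[i]$ onto $\bm{\nu}^{\star}$ gives $\lambda_i c_i=\sum_j w_{ij}c_j$, which is strictly positive as soon as some neighbour of $i$ carries a different unit vector. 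To keep all weights strictly positive, I would contract every edge on which $\nmbi[i]=\nmbi[j]$ (the asserted conclusion is trivial across such an edge), obtaining a connected quotient graph with node vectors $\tilde{\nmb}_a$, aggregated weights $W_{ab}=\sum_{\{i,j\}\in\mathcal{E},\,i\in a,\,j\in b}w_{ij}>0$ on each edge, and $\sum_b W_{ab}\tilde{\nmb}_b=\Lambda_a\tilde{\nmb}_a$ with $\Lambda_a>0$ and $\tilde c_a:=(\bm{\nu}^{\star})\tp\tilde{\nmb}_a\ge\cos\alpha>0$. The numbers $P_{ab}:=W_{ab}\tilde c_b/(\Lambda_a\tilde c_a)$ then form an irreducible row-stochastic matrix, since $\sum_b P_{ab}=1$ and $P_{ab}>0$ exactly along the edges of the connected quotient graph.

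The last step is a maximum-principle argument. Decomposing $\tilde{\nmb}_a=\tilde c_a\bm{\nu}^{\star}+\tilde{\smb}_a$ with $\tilde{\smb}_a\perp\bm{\nu}^{\star}$, the component of $\sum_b W_{ab}\tilde{\nmb}_b=\Lambda_a\tilde{\nmb}_a$ orthogonal to $\bm{\nu}^{\star}$ is $\Lambda_a\tilde{\smb}_a=\sum_b W_{ab}\tilde{\smb}_b$; dividing by $\Lambda_a\tilde c_a$ and setting $\tmb_a:=\tilde{\smb}_a/\tilde c_a$ yields $\tmb_a=\sum_b P_{ab}\tmb_b$, so each (two-dimensional) coordinate of $\tmb$ equals a convex combination of its neighbours' values. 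A node attaining the maximum of such a coordinate forces every neighbour to attain it, and connectedness propagates this, so $\tmb$ is constant, $\tmb_a\equiv\tmb^{\star}$. Hence $\tilde{\nmb}_a=\tilde c_a(\bm{\nu}^{\star}+\tmb^{\star})$ for all $a$; since $\norm{\tilde{\nmb}_a}=1$ all $\tilde c_a$ coincide, so all $\tilde{\nmb}_a$ equal $\bm{\nu}:=(\bm{\nu}^{\star}+\tmb^{\star})/\norm{\bm{\nu}^{\star}+\tmb^{\star}}\in\Stwo$, and therefore $\nmb=\onesvec_{\sss{N}}\otimes\bm{\nu}$.

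The hard part will be the bookkeeping around already-synchronized edges: when $f_{\sss{k}}\in\mathcal{P}_{\sss{0}}$ one has $f_{\sss{k}}'(0)=0$, so such an edge carries zero weight and the weighted graph may disconnect, which is why the edge-contraction device is needed — the saving observation being that $\tmb$ is automatically constant across a zero-weight synchronized edge. The only other delicate point is guaranteeing $\lambda_i>0$ (equivalently $\Lambda_a>0$) so that $P$ is well defined, and this is precisely where $\alpha<\tfrac{\pi}{2}$ enters; without a hemisphere-type restriction the planar equilibria of Theorems~\ref{thm:IndependentCycleEquilibriumConfiguration} and~\ref{thm:IndependentCycleEquilibriumConfiguration2} survive and the statement is false.
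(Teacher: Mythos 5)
Your proof is correct, and the forward (necessity) direction follows a genuinely different route from the paper's. Both arguments start from the same per-node identity $\sk{\nmbi[i]}\sum_{j\in\mathcal{N}_i}w_{ij}\nmbi[j]=\zvec$, but from there the paper takes the inner product with $\sk{\nmbi[i]}\bm{\mu}$ ($\bm{\mu}$ the cone axis) and runs an extremal argument: picking a node $k$ minimizing $\bm{\mu}\tp\nmbi[k]$, it sandwiches $\sum_j w_{kj}(1-\nmbi[k]\tp\nmbi[j])$ between $0$ and $0$, forcing every neighbour of $k$ to be synchronized with $k$ and to be extremal as well, and then propagates through the connected graph. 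You instead rewrite the stationarity condition as a fixed point of an irreducible row-stochastic averaging operator acting on the gnomonic-type coordinates $\tmb_a=\tilde{\smb}_a/\tilde c_a$ and invoke the discrete maximum principle to get constancy in one shot. Your route makes the consensus structure explicit and forces you to confront the zero-weight synchronized edges (when $f_{\sss{k}}\in\mathcal{P}_{\sss{0}}$, so $f_{\sss{k}}'(0^{+})=0$) head-on via the contraction device, which is handled correctly; the paper's sandwich absorbs those edges silently because the vanishing term $f'(s)\,s$ with $s=0$ causes no harm. The paper's argument is shorter and avoids building the quotient graph; yours isolates more cleanly where $\alpha<\tfrac{\pi}{2}$ enters (positivity of the projections $c_i$, hence of $\Lambda_a$ and the stochasticity of $P$) and connects the result to the standard "no non-consensus equilibria in an open hemisphere" picture. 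One cosmetic point: you should state explicitly that if the quotient graph collapses to a single node the conclusion is immediate, since otherwise the claim $\Lambda_a>0$ presupposes that every quotient node has at least one neighbour; this is implicit in your write-up but worth a sentence.
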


\begin{proof}
	\normalfont
	For the sufficiency statement, it follows that, if $\exists \bm{\nu} \in \Stwo : \nmb  = (\onesvec_{\sss{N}} \otimes \bm{\nu})$, then all unit vectors are contained in a $\frac{\pi}{2}$-cone, i.e., $\nmb \in \mathcal{C}(\frac{\pi}{2})$; and, moreover, $\emb(\onesvec_{\sss{N}} \otimes \bm{\nu}) = \zvec \subseteq \mathcal{N}(B \otimes \Idmat)$. 
	
	For the necessity statement, the proof is as follows.
	For a tree graph, $(B \otimes \Idmat) \emb(\nmb) = \zvec \Leftrightarrow \emb(\nmb) = \zvec$ follows.
	This implies that $\nmbi[i] = \pm \nmbi[j]$ for all $(i,j) \in \mathcal{N} \times \mathcal{N}_{\sss{i}}$, but since $\nmb \in \mathcal{C}(\frac{\pi}{2})$, it follows that $\nmbi[i] = \nmbi[j]$ for all $(i,j) \in \mathcal{N} \times \mathcal{N}_{\sss{i}}$.
	In a connected graph, this implies that $\nmbi[i] = \nmbi[j]$ for all $(i,j) \in \mathcal{N}\times \mathcal{N}$, and therefore $\exists \bm{\nu} \in \Stwo : \nmb  = (\onesvec_{\sss{N}} \otimes \bm{\nu})$.
	
	For an arbitrary graph, the null space of $(B \otimes \Idmat)$ may be more than $\zvec$, i.e., $(B \otimes \Idmat) \emb(\cdot) = \zvec \not\Rightarrow \emb(\cdot) = \zvec$.
	We anticipate the final result by stating that if $\nmb \in \mathcal{C}(\frac{\pi}{2})$, then $(B \otimes \Idmat) \emb(\nmb) = \zvec \Rightarrow \emb(\nmb) = \zvec$, in which case we conclude again that $\exists \bm{\nu} \in \Stwo : \nmb  = (\onesvec_{\sss{N}} \otimes \bm{\nu})$.
	Consider then an $\nmb \in (\mathcal{S}^{\sss{2}})^{\sss{N}}$, such that $(B \otimes \Idmat) \emb(\nmb) = \zvec$.
	This means that, for every $i \in \mathcal{N}$ ($B_{\sss{i\bm{:}}}$ stands for the $i^{\sss{th}}$ row of $B$),
	%
	\begin{align}
		\hspace{-0.5cm}
		\Scale[0.95]
		{
			\zvec 
			=
			(B_{\sss{i\bm{:}}}\otimes \Idmat) \emb(\nmb)
			\overset{\sss{\eqref{eq:EdgeError2}}}{=}
			\sk{\nmbi[i]} \sum\limits_{\sss{j \in \mathcal{N}_{\sss{i}}}} 
			f_{\sss{\kappa(i,j)}}^{\sss{\prime}}(1 - \nmbi[i]\tp\nmbi[j])  \nmbi[j].
		}
		\label{eq:SemiSphere1}
	\end{align}
	Since $\nmb \in \bar{\mathcal{C}}(\alpha)$, it follows that there exists a unit vector $\bm{\mu}\in \Stwo$, such that $\bm{\mu} \tp \nmbi[i] \ge \cos(\alpha) > 0$ for all $i \in \mathcal{N}$.
	Taking the inner product of~\eqref{eq:SemiSphere1} with $\sk{\nmbi[i]} \bm{\mu}$, it follows that $ \bm{\mu}\tp	\OP{\nmbi[i]} \sum_{\sss{j \in \mathcal{N}_{\sss{i}}}} f_{\sss{\kappa(i,j)}}^{\sss{\prime}}(1 - \nmbi[i]\tp\nmbi[j])  \nmbi[j] = \zvec$, which can be expanded into
	\begin{align}
		\hspace{-0.6cm}		
		\sum\nolimits_{j \in \mathcal{N}_{\sss{i}}} 
		f_{\sss{\kappa(i,j)}}^{\sss{\prime}}(1 - \nmbi[i]\tp\nmbi[j])
		\left(
			\bm{\mu}\tp \nmbi[j] - (\bm{\mu}\tp\nmbi[i]) \nmbi[i]\tp \nmbi[j]
		\right)
 		= 0.
 		\label{eq:RelationEighthSphere2}	
	\end{align}
	Now, consider the set $\mathcal{T} = \{i \in \mathcal{N}: i = \arg \max_{i \in \mathcal{N}} (1 - \bm{\mu}\tp\nmbi[i]) \} $, and choose $k \in \mathcal{T}$ (in the end, we show that, in fact, $\mathcal{T} = \mathcal{N}$).
	Notice that $0 < \cos(\alpha) \le \bm{\mu}\tp \nmbi[k]  \le  \bm{\mu}\tp \nmbi[j]$ for all $k \in \mathcal{T}$ and all $j \in \mathcal{N}$.
	As such, it follows from~\eqref{eq:RelationEighthSphere2} with $i = k$ that
	\begin{align}	
			\hspace{-0.5cm}	
			& 
			\Scale[0.95]{
				0
				\le
				\cos(\alpha)
				\sum\limits_{j \in \mathcal{N}_{\sss{k}}} 
				f_{\sss{\kappa(i,j)}}^{\sss{\prime}}(1 - \nmbi[i]\tp\nmbi[j]) 
				\left(
					 1 -  \nmbi[k]\tp \nmbi[j]
				\right)			
				\le 
			}
			\\
			\hspace{-0.5cm}	
			\Scale[0.95]{
				\le
			}
			& 
			\Scale[0.95]{
				\sum\limits_{j \in \mathcal{N}_{\sss{k}}} 
				f_{\sss{\kappa(i,j)}}^{\sss{\prime}}(1 - \nmbi[i]\tp\nmbi[j])
				\left(
					\bm{\mu}\tp \nmbi[j] - (\bm{\mu}\tp\nmbi[k]) \nmbi[k]\tp \nmbi[j]
				\right)
		 		= 0.
	 		}
 		\label{eq:RelationEighthSphere}
	\end{align}
	Notice that the lower bound (on the left side of~\eqref{eq:RelationEighthSphere})  is non-negative and zero if and only if all neighbors of agent  $k$ are synchronized with itself (note that $\lim_{\sss{s \rightarrow 2^{\sss{-}}}}f_{\sss{\kappa(i,j)}}^{\sss{\prime}}(s)$ may be $0$, but since $\nmb \in \bar{\mathcal{C}}(\alpha)$, $f_{\sss{\kappa(i,j)}}^{\sss{\prime}}(s)|_{s = 1 - \nmbi[i]\tp\nmbi[j]}$ can only vanish if $s \rightarrow 0^{\sss{+}}$). 
	As such, it follows from~\eqref{eq:RelationEighthSphere} that all neighbors of agent  $k$ are contained in $\mathcal{T}$, i.e., $\mathcal{N}_{\sss{k}} \subset \mathcal{T}$. 
	As such, the same procedure as before can be followed for the neighbors of agent  $k$, to conclude that the neighbors of the neighbors of  agent  $k $ are all synchronized. 
	In a connected graph, by applying the previous reasoning at most $N-1$ times, it follows that all unit vectors are synchronized, or, equivalently, that $\exists \bm{\nu} \in \Stwo : \nmb  = (\onesvec_{\sss{N}} \otimes \bm{\nu})$. 
\end{proof}

\review{\label{com:c9}
Proposition~\ref{prop:pi2Cone} has the following interpretation. 
Recall that $\{\xmb \in (\mathcal{S}^{\sss{2}})^{\sss{N}} \times \Rn[3N] : (B \otimes \Idmat) \emb(\nmb) = \zvec, \Oi[i] = \zvec \text{ for } i \in \mathcal{L},  \OP{\nmbibody[j]}\Oi[j] = \zvec \text{ for } j \in \bar{\mathcal{L}}\} \subseteq \Omega_{\xmb}^{\sss{\text{eq}}}$, where $\Omega_{\xmb}^{\sss{\text{eq}}}$ is the set of equilibrium points.
For example, we have seen that, for specific graphs, all equilibrium configurations are such that all unit vectors belong to a common plane (see Theorems~\ref{thm:IndependentCycleEquilibriumConfiguration} and~\ref{thm:IndependentCycleEquilibriumConfiguration2}), as illustrated in Fig.~\ref{fig:EquilibriumConfigurations2}.
However, if we can guarantee that along a trajectory $\xmb(\cdot)$ of $\dot{\xmb}(t) = \fmb_{\sss{\xmb}}(t,\xmb(t),\bar{\Tmb}^{\sss{cl}}(t,\xmb(t)))$, $\exists \alpha \in [0,\frac{\pi}{2}) : \nmb(t) \in \bar{\mathcal{C}}(\alpha) , \forall t \ge 0$, i.e., if we can  guarantee that all unit vectors remain in an closed $\alpha$-cone smaller than an open $\frac{\pi}{2}$-cone, then we can invoke Proposition~\ref{prop:pi2Cone} to conclude that $\lim_{\sss{t \rightarrow \infty}} (B \otimes \Idmat) \emb(\nmb(t)) = \zvec \Rightarrow \lim_{\sss{t \rightarrow \infty}} (\nmbi[i](t) - \nmbi[j](t)) = \zvec$;
i.e., that convergence of $\emb(\nmb(\cdot))$ to the null space of $B \otimes \Idmat$ implies synchronization of the agents.
}

This motivates us to introduce a distance $d^{\sss{\star}} > 0$, which will be useful in guaranteeing that, along a trajectory $\xmb(\cdot)$, $\exists \alpha \in [0,\frac{\pi}{2}) : \nmb(t) \in \bar{\mathcal{C}}(\alpha) , \forall t \ge 0$.
Consider then
\begin{align}
	\Scale[0.9]{
		d^{\sss{\star}}
		=
		\min\limits_{\sss{k \in \mathcal{M}}}
		f_{\sss{k}}
		\left(
			1 - \cos
			\left(
				\frac{\pi}{3} \frac{1}{N-1}
			\right)
		\right)
		<
		d^{\sss{\min}},	
	}
	\label{eq:thetaStar}
\end{align}
which satisfies $ f_{\sss{k}}^{\sss{-1}}(d^{\sss{\star}}) \le 1 - \cos \left( \frac{\pi}{3} \frac{1}{N-1} \right) $ for all $k \in \mathcal{M}$.
Notice that $d^{\sss{\star}} < d^{\sss{\min}} $, since $d^{\sss{\min}} = \min_{\sss{k \in \mathcal{M}}} \lim_{\sss{s \rightarrow 2^{\sss{-}}}} f_{\sss{k}} (s)$, $1 - \cos(\frac{\pi}{3} \frac{1}{N-1}) < 2 $ for all $N \ge 2$, and since all $f_{\sss{k}} (\cdot)$ are increasing functions in $(0,2)$.
As shown next, if $D(\nmb(0)) < \frac{d^{\sss{\star}}}{M}$ (and $H(\bm{\omega}(0) = 0$), then the network of unit vectors is forever contained in a closed $\alpha$-cone, for some $\alpha \in [0, \frac{\pi}{2})$.

\begin{thm}
	\label{thm:LocalStability}
	\normalfont
	Consider an arbitrary network graph, the vector field~\eqref{eq:StateVectorField}, the control law~\eqref{eq:DistributedControlLawVectorialModified}, and a trajectory $\xmb(\cdot)$ of $\dot{\xmb}(t) = \fmb_{\sss{\xmb}}(t,\xmb(t),\bar{\Tmb}^{\sss{cl}}(t,\xmb(t)))$
	If $\xmb(0) \in \Omega_{\sss{\xmb}}^{\sss{0}} = \{ \xmb \in \Omega_{\sss{\nmb}}^{\sss{D}} \times \Rn[3N] :  V(\xmb) < d^{\sss{\star}} \} $ then synchronization is asymptotically reached, i.e., $\lim_{\sss{t \rightarrow \infty}} (\nmbi[i](t) - \nmbi[j](t))= \zvec$, for all $(i,j)\in\mathcal{N}^{\sss{2}}$.
	Moreover, all implications of Proposition~\ref{prop:ConvergenceToNullSpaceExtended} also follow.
\end{thm}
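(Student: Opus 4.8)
The plan is to layer two results already established: Proposition~\ref{prop:ConvergenceToNullSpaceExtended}, which gives $\lim_{t\to\infty}(B\otimes\Idmat)\emb(\nmb(t))=\zvec$ together with the angular-velocity limits, and Proposition~\ref{prop:pi2Cone}, which converts membership of $\emb(\nmb)$ in $\mathcal{N}(B\otimes\Idmat)$ into synchronization \emph{provided} the unit vectors sit inside a closed cone strictly narrower than an open $\tfrac{\pi}{2}$-cone. The only real work is to exhibit such an invariant cone along the trajectory.

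To begin, since $d^{\sss{\star}}<d^{\sss{\min}}$ (as noted after~\eqref{eq:thetaStar}), the hypothesis $V(\xmb(0))<d^{\sss{\star}}$ in particular gives $\xmb(0)\in\{\xmb\in\Omega_{\sss{\nmb}}^{\sss{D}}\times\Rn[3N]:V(\xmb)<d^{\sss{\min}}\}$, so Proposition~\ref{prop:ConvergenceToNullSpaceExtended} applies verbatim; this already delivers the ``moreover'' claim, namely $\lim_{t\to\infty}(B\otimes\Idmat)\emb(\nmb(t))=\zvec$, $\lim_{t\to\infty}\Oi[i](t)=\zvec$ for $i\in\mathcal{L}$, $\lim_{t\to\infty}\OP{\nmbibody[j]}\Oi[j](t)=\zvec$ for $j\in\bar{\mathcal{L}}$, and $\sup_{t\ge0}|\nmbibody[j]\tp\Oi[j](t)|<\infty$.

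Next I would bound all pairwise angles uniformly in time. Because $\dot V(\xmb(\cdot))=-\bar W(\bm{\omega}(\cdot))\le0$, one has $D(\nmb(t))\le V(\xmb(t))\le V(\xmb(0))<d^{\sss{\star}}$ for all $t\ge0$; by~\eqref{eq:TotalDistanceFunction2} each summand obeys $f_{\sss{k}}(1-\nmbt{k}(t)\tp\nmbh{k}(t))\le D(\nmb(t))\le V(\xmb(0))$, and since each $f_{\sss{k}}$ is a strictly increasing bijection of $(0,2)$ onto its range this yields a uniform bound on the geodesic angle between neighbours, $\theta(\nmbt{k}(t),\nmbh{k}(t))=\arccos(\nmbt{k}(t)\tp\nmbh{k}(t))\le\bar\theta:=\max_{k\in\mathcal{M}}\arccos\!\big(1-f_{\sss{k}}^{\sss{-1}}(V(\xmb(0)))\big)<\tfrac{\pi}{3(N-1)}$, the last strict inequality being exactly the defining property of $d^{\sss{\star}}$ in~\eqref{eq:thetaStar}. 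Connectedness of the graph means any two nodes are linked by a path of at most $N-1$ edges, so the triangle inequality for the geodesic metric on $\Stwo$ (Proposition~\ref{prop:TriangularInequality}) gives $\theta(\nmbi[i](t),\nmbi[j](t))\le(N-1)\bar\theta<\tfrac{\pi}{3}$ for every $(i,j)\in\mathcal{N}^{\sss{2}}$ and every $t\ge0$, i.e. $\max_{(i,j)\in\mathcal{N}^{\sss{2}}}(1-\nmbi[i](t)\tp\nmbi[j](t))\le1-\cos\!\big((N-1)\bar\theta\big)$. Taking $\alpha^{\sss{\star}}:=\tfrac32(N-1)\bar\theta\in[0,\tfrac{\pi}{2})$, so that $1-\cos(\tfrac23\alpha^{\sss{\star}})=1-\cos((N-1)\bar\theta)$, Proposition~\ref{prop:BelongToCone2} yields $\nmb(t)\in\bar{\mathcal{C}}(\alpha^{\sss{\star}})$ for all $t\ge0$, with $\alpha^{\sss{\star}}$ strictly below $\tfrac{\pi}{2}$.

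Finally I would pass to limit points. Let $t_{\sss{n}}\to\infty$; by compactness of $(\Stwo)^{\sss{N}}$ some subsequence satisfies $\nmb(t_{\sss{n_k}})\to\nmb^{\sss{\infty}}$, and $\nmb^{\sss{\infty}}\in\bar{\mathcal{C}}(\alpha^{\sss{\star}})$ because that closed cone is a closed set. On $\bar{\mathcal{C}}(\alpha^{\sss{\star}})$ no two neighbours are diametrically opposed, hence $\emb(\cdot)$ is continuous there and $(B\otimes\Idmat)\emb(\nmb^{\sss{\infty}})=\lim_k(B\otimes\Idmat)\emb(\nmb(t_{\sss{n_k}}))=\zvec$; Proposition~\ref{prop:pi2Cone}, applicable since the graph is connected and $\alpha^{\sss{\star}}<\tfrac{\pi}{2}$, forces $\nmb^{\sss{\infty}}=\onesvec_{\sss{N}}\otimes\bm{\nu}$ for some $\bm{\nu}\in\Stwo$. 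Thus every limit point of $\nmb(\cdot)$ is synchronized, so the continuous function $t\mapsto\max_{(i,j)\in\mathcal{N}^{\sss{2}}}(1-\nmbi[i](t)\tp\nmbi[j](t))$ tends to $0$, whence $\nmbi[i](t)\tp\nmbi[j](t)\to1$ and $\|\nmbi[i](t)-\nmbi[j](t)\|^{\sss{2}}=2-2\,\nmbi[i](t)\tp\nmbi[j](t)\to0$ for all $(i,j)\in\mathcal{N}^{\sss{2}}$. I expect the main obstacle to be precisely this last step: Proposition~\ref{prop:pi2Cone} is a statement about a fixed configuration, so one must argue through $\omega$-limit points of the trajectory and carefully justify continuity of $\emb(\cdot)$ at them --- which is exactly why the invariant cone must be \emph{strictly} inside the open $\tfrac{\pi}{2}$-cone, keeping neighbours uniformly away from antipodal --- and then translate ``every limit point is synchronized'' back into the pointwise conclusion $\nmbi[i](t)-\nmbi[j](t)\to\zvec$.
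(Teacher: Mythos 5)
Your proof follows the paper's argument essentially step for step: invoke Proposition~\ref{prop:ConvergenceToNullSpaceExtended} via $d^{\sss{\star}}<d^{\sss{\min}}$, use monotonicity of $V$ to bound neighbour angles below $\tfrac{\pi}{3}\tfrac{1}{N-1}$, propagate via the triangle inequality and Proposition~\ref{prop:BelongToCone2} to an invariant closed cone strictly inside $\tfrac{\pi}{2}$, and close with Proposition~\ref{prop:pi2Cone}. Your final passage through $\omega$-limit points merely makes explicit a continuity/compactness detail the paper leaves implicit, so the approach is the same and correct.
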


\begin{proof}
	\normalfont
	Since  $d^{\sss{\star}} < d^{\sss{\min}} $, we can invoke Proposition~\ref{prop:ConvergenceToNullSpaceExtended}, and infer that $\lim_{\sss{t \rightarrow \infty}} (B \otimes \Idmat)\emb(t) = \zvec$ (as well as all other implications stated in the Proposition). 
	Since $\dot{V}(\xmb(\cdot)) \le 0$, it follows that $f_{\sss{k}}(1 - \nmbt{k}\tp(\cdot)\nmbh{k}(\cdot)) \le D(\nmb(\cdot)) \le V(\xmb(\cdot)) \le V(0) < d^{\sss{\star}}$, for all $k \in \mathcal{M}$.
	In turn, this implies that $\theta(\nmbt{k}(\cdot),\nmbh{k}(\cdot)) \le \arccos(1 - f_{\sss{k}}^{\sss{-1}}(d^{\sss{\star}})) < \frac{\pi}{3} \frac{1}{N-1}$, for all $k \in \mathcal{M}$. 
	Since the angular displacement between any two unit vectors $\nmbi[i]$ and $\nmbi[j]$ in a connected graph satisfies  $\theta(\nmbi[i](\cdot),\nmbi[j](\cdot)) \le (N-1) \max_{\sss{k\in\mathcal{M}}} \theta(\nmbt{k}(\cdot),\nmbh{k}(\cdot))$ (see Proposition~\ref{prop:TriangularInequality}), it follows that $\sup_{\sss{t \ge 0}}\theta(\nmbi[i](t),\nmbi[j](t)) < \frac{\pi}{3}$ between any two unit vectors $\nmbi[i]$ and $\nmbi[j]$. 
	As such, it follows from Proposition~\ref{prop:BelongToCone2} that $\nmb(\cdot) \in \bar{\mathcal{C}}(\frac{3}{2}\sup_{\sss{t \ge 0}}\theta(\nmbi[i](t),\nmbi[j](t)))$, where  $\frac{3}{2}\sup_{\sss{t \ge 0}}\theta(\nmbi[i](t),\nmbi[j](t)) < \frac{\pi}{2}$.
	Finally, we invoke Proposition~\ref{prop:pi2Cone}, which implies that $\lim_{\sss{t \rightarrow \infty}} (B \otimes \Idmat)\emb(\nmb(t)) = \zvec \Rightarrow \lim_{\sss{t \rightarrow \infty}} (\nmbi[i](t) - \nmbi[j](t))= \zvec$. 
\end{proof}

Let us provide a corollary to Theorem~\ref{thm:LocalStability}, with an easier to visualize region of attraction.
\ReviewAdded{
\begin{cor}
	\label{cor:cor4}
	\normalfont
	Theorem~\ref{thm:LocalStability} holds if $r := \frac{H(\bm{\omega}(0))}{d^{\sss{\min}}} < 1$ and if
	\begin{align}
		\Scale[0.85]{
			\nmb(0) 
			\in 
			\mathcal{C}
			\left(
				\frac{1}{2}
				\arccos
				\left(
					1 - 
					\min\limits_{\sss{k \in \mathcal{M}}} 
					f_{\sss{k}}^{\sss{-1}}
					\left(
						\frac{d^{\sss{\star}}}{M}
						\left(
							1
							-
							r
						\right)
					\right)
				\right)
			\right)
		}
		\label{eq:InitialCones3}
	\end{align}
	with $d^{\sss{\star}}$ as in~\eqref{eq:thetaStar}.
\end{cor}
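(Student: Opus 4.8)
The plan is to derive Corollary~\ref{cor:cor4} directly from Theorem~\ref{thm:LocalStability} by verifying that the two stated hypotheses force $\xmb(0)$ into the sublevel set $\Omega_{\sss{\xmb}}^{\sss{0}} = \{\xmb \in \Omega_{\sss{\nmb}}^{\sss{D}} \times \Rn[3N] : V(\xmb) < d^{\sss{\star}}\}$ appearing in that theorem; once this membership is established, the conclusion (asymptotic synchronization together with all implications of Proposition~\ref{prop:ConvergenceToNullSpaceExtended}) is exactly the statement of Theorem~\ref{thm:LocalStability}. The argument would run in parallel to the proof of Corollary~\ref{cor:cor3}, with $d^{\sss{\star}}$ replacing the role played there by $d^{\sss{\min}}$.

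First I would rewrite the cone hypothesis~\eqref{eq:InitialCones3} as a bound on pairwise inner products. Setting $\alpha = \tfrac12 \arccos\bigl(1 - \min_{\sss{k\in\mathcal{M}}} f_{\sss{k}}^{\sss{-1}}\bigl(\tfrac{d^{\sss{\star}}}{M}(1-r)\bigr)\bigr)$, one checks that the argument of $\arccos$ lies in $(-1,1)$ — since $\tfrac{d^{\sss{\star}}}{M}(1-r) < d^{\sss{\star}} \le d_{\sss{k}}^{\sss{\max}}$ forces $f_{\sss{k}}^{\sss{-1}}(\tfrac{d^{\sss{\star}}}{M}(1-r)) \in (0,2)$ — so $\alpha \in [0,\tfrac{\pi}{2}]$ and $\cos(2\alpha) = 1 - \min_{\sss{k\in\mathcal{M}}} f_{\sss{k}}^{\sss{-1}}\bigl(\tfrac{d^{\sss{\star}}}{M}(1-r)\bigr)$. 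Proposition~\ref{prop:BelongToCone1} then yields $\max_{\sss{(i,j)\in\mathcal{N}^{\sss{2}}}} (1 - \nmbi[i]\tp(0)\nmbi[j](0)) < 1 - \cos(2\alpha) = \min_{\sss{k\in\mathcal{M}}} f_{\sss{k}}^{\sss{-1}}\bigl(\tfrac{d^{\sss{\star}}}{M}(1-r)\bigr)$.

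Next, using that every $f_{\sss{k}}$ is strictly increasing on $(0,2)$, the previous inequality gives, for each edge $k\in\mathcal{M}$, $d_{\sss{k}}(\nmbt{k}(0),\nmbh{k}(0)) = f_{\sss{k}}(1 - \nmbt{k}\tp(0)\nmbh{k}(0)) < \tfrac{d^{\sss{\star}}}{M}(1-r)$; summing the $M$ terms in~\eqref{eq:TotalDistanceFunction2} produces $D(\nmb(0)) < d^{\sss{\star}}(1-r)$. Combining this with the kinetic-energy term and carrying out the same bookkeeping as in the proof of Corollary~\ref{cor:cor1} gives $V(\xmb(0)) = D(\nmb(0)) + H(\bm{\omega}(0)) < d^{\sss{\star}}$, i.e. $\xmb(0) \in \Omega_{\sss{\xmb}}^{\sss{0}}$, whence Theorem~\ref{thm:LocalStability} applies and its conclusions follow. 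There is no genuine obstacle here; the only points needing care are the identification of the cone half-angle so that $1-\cos(2\alpha)$ coincides with the intended per-edge distance threshold, and the observation that, because $d^{\sss{\star}} < d^{\sss{\min}}$, the admissible cone in~\eqref{eq:InitialCones3} is strictly smaller than the one in Corollary~\ref{cor:cor3} — which is precisely the price paid for obtaining true synchronization (rather than merely convergence of $\emb(\nmb(\cdot))$ to $\mathcal{N}(B\otimes\Idmat)$) on an arbitrary connected graph, via Proposition~\ref{prop:pi2Cone}.
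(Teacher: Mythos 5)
Your overall route is the one the paper intends: the paper's own proof is a single sentence (``it suffices to check that if its conditions are satisfied, then $V(\xmb(0)) < d^{\sss{\star}}$''), and your reduction via Proposition~\ref{prop:BelongToCone1}, the monotonicity of each $f_{\sss{k}}$, and the summation over edges giving $D(\nmb(0)) < d^{\sss{\star}}(1-r)$ is exactly the bookkeeping that check requires, carried out correctly up to that point.

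The gap is in the very last step, which you assert rather than compute: with $r$ defined as in the statement, namely $r = H(\bm{\omega}(0))/d^{\sss{\min}}$, you have $H(\bm{\omega}(0)) = r\, d^{\sss{\min}}$, so the bound you obtain is $V(\xmb(0)) = D(\nmb(0)) + H(\bm{\omega}(0)) < d^{\sss{\star}}(1-r) + r\, d^{\sss{\min}} = d^{\sss{\star}} + r\,(d^{\sss{\min}} - d^{\sss{\star}})$. Since the paper establishes $d^{\sss{\star}} < d^{\sss{\min}}$ strictly (see the discussion after~\eqref{eq:thetaStar}), this right-hand side exceeds $d^{\sss{\star}}$ whenever $r>0$, so the claimed conclusion $V(\xmb(0)) < d^{\sss{\star}}$ does \emph{not} follow; the bookkeeping of Corollary~\ref{cor:cor1} closes there only because both the kinetic-energy normalization and the distance budget use the same constant $d^{\sss{\min}}$. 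Here the two constants differ, and the mismatch is fatal: one can choose $\bm{\omega}(0)$ with $H(\bm{\omega}(0))$ close to $d^{\sss{\min}}$ and $\nmb(0)$ satisfying~\eqref{eq:InitialCones3}, yet have $V(\xmb(0)) > d^{\sss{\star}}$, so Theorem~\ref{thm:LocalStability} cannot be invoked. Had you carried the arithmetic explicitly you would have seen that the argument only works if either $r$ is renormalized to $r := H(\bm{\omega}(0))/d^{\sss{\star}} < 1$ (so that $H(\bm{\omega}(0)) = r\,d^{\sss{\star}}$ and the two terms add to $d^{\sss{\star}}$), or the per-edge budget in the cone radius is tightened from $\tfrac{d^{\sss{\star}}}{M}(1-r)$ to $\tfrac{1}{M}(d^{\sss{\star}} - r\,d^{\sss{\min}})$ together with the extra requirement $H(\bm{\omega}(0)) < d^{\sss{\star}}$. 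Your proof inherits this defect silently; a complete argument must either perform one of these repairs or explicitly flag that the statement as written is not reachable by this route.
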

For proving Corollary~\ref{cor:cor4} it suffices to check that if its conditions are satisfied, then $V(\xmb(0)) < d^{\sss{\star}}$.
\begin{rem}
	\normalfont
	Comparing Theorems~\ref{thm:NoFullDomain} and~\ref{thm:LocalStability}, it follows that the region of attraction in Theorem~\ref{thm:NoFullDomain} is larger than that in Theorem~\ref{thm:LocalStability}. 
	Loosely speaking, the region of attraction in Theorem~\ref{thm:NoFullDomain} is $\frac{d^{\sss{\min}}}{d^{\sss{\star}}} > 1$ times larger than the region of attraction in Theorem~\ref{thm:LocalStability}.
	This difference comes from the network graph topology, and in fact, a tree network graph provides stronger results.
\end{rem}

}

Theorems~\ref{thm:NoFullDomain} and~\ref{thm:LocalStability} provide asymptotic results, namely $\lim_{\sss{t \rightarrow \infty }} \emb(\nmb(t)) = \zvec$. 
Remark~\ref{rem:ExponentialConvergence} in Appendix provides some insight on exponential convergence to $0$.
			
		\section{Simulations}
		\label{sec:Simulations}
		We now present simulations that illustrate some of the results presented previously.
For the simulations, we have a group of ten agents, whose network graph is that presented in Fig.~\ref{fig:Agents10Network}. The moments of inertia were generated by adding a random symmetric matrix (between $-\Idmat$ and $\Idmat$, entry-wise) to the identity matrix. For the initial conditions, we have chosen $\bm{\omega}(0)= \zvec$ and we have randomly generated one set of 10 rotation matrices. 
In Fig.~\ref{fig:ComparisonSimulations}, $\bar{\nmb}_{\sss{i}} = [1 \, 0 \, 0]\tp$ for $i = \{1,2,3,4,5\}$ and $\bar{\nmb}_{\sss{i}} = [0 \, 1 \, 0]\tp$ for $i = \{6,7,8,9,10\}$, and since these are not necessarily principal axes, we apply the control law~\eqref{eq:DistributedControlLawVectorialModified}, with $\bar{\mathcal{L}} = \emptyset $ and $\mathcal{L} = \mathcal{N}$.
In Fig.~\ref{fig:ComparisonSimulations_Constrained}, $\bar{\nmb}_{\sss{i}}$ is the principal axis of $J_{\sss{i}}$, with largest eigenvalue, for $i = \{1,2,3,4,5\}$, and $\bar{\nmb}_{\sss{i}} = [1 \, 0 \, 0]\tp$  for $i = \{6,7,8,9,10\}$. 
Therefore, we apply the control law~\eqref{eq:DistributedControlLawVectorialModified}, with $\bar{\mathcal{L}} = \{1,2,3,4,5\} $ and $\mathcal{L} = \{6,7,8,9,10\} $.
For the edge~1, we have chosen $f_{\sss{1}}(s) = 10 \tan^2\left(0.5\arccos(1 - s)\right)$. 
For the other edges, we have chosen $f_{\sss{k}}(s) = 5 s$, for $k = \mathcal{M}\backslash \{ 1\}$.
Notice that we have chosen a distance function (for edge 1) that grows unbounded when two unit vectors are diametrically opposed. 
As such, it follows that agents 1 and 6 will never be diametrically opposed, under the condition that they are not initially diametrically opposed. 
We have also chosen $\bm{\sigma}(\xmb) = k  \frac{\sigma_x \xmb}{\sqrt{\sigma_x^2 + \xmb\tp\xmb}}$ with $k = 10$ and $\sigma_x = 1$. For this choice, we find that $\sigma^{\max} = k \sigma_x = 10$. As such, for all agents, except 1 and 6, an upper bound on the norm of their torque is given by $\sigma^{\max} + 2 \cdot 5 = 20$ (the factor $2$ relates to the fact that all agents, except 1 and 6, have two neighbors, and the factor $5$ comes from $f_{\sss{k}}(s) = 5 s \Rightarrow f_{\sss{k}}^{\sss{\prime}}(s) = 5$). 
For agents 1 and 6, no upper bound can be found (more precisely, a bound can be found, but it depends on the initial conditions). 
Given these choices, it follows from Corollary~\ref{cor:cor4} that if $\nmb(0) \in \mathcal{C}(\approx 1^{\circ})$ then synchronization is guaranteed.
We emphasize, nonetheless, that Corollary~\ref{cor:cor4} provides conservative conditions for synchronization to be achieved, and the domain of attraction is in fact larger.
We also emphasize that, for tree graphs, the domain of attraction is considerably larger: for example, if we removed the edges between agents~1 and~2, and between agents~6 and~7, we would obtain a tree graph, and Corollary~\ref{cor:cor3} would read as $\nmb(0) \in \mathcal{C}(\approx 18^{\circ})$.
Finally, we emphasize that we can increase the size of the cones in Corollaries~\ref{cor:cor3} and~ \ref{cor:cor4}, by choosing different distance functions, as exemplified in Example~\ref{ex:NoFullDomain}. 

For each Fig.~\ref{fig:ComparisonSimulations} and Fig.~\ref{fig:ComparisonSimulations_Constrained}, we provide two simulations: one where the control law is that in~\eqref{eq:DistributedControlLawVectorialModified} and another where the control law is that in~\eqref{eq:DistributedControlLawVectorialModified} but corrupted by noise; specifically, for each agent $i \in \mathcal{N}$, $\Tmb_{\sss{i}}(t) = \bar{\Tmb}_{\sss{i}}^{\sss{cl}}(t,\xmb(t)) + 0.1 \lambda_{\sss{i}} [0 \, 0 \, 1]\tp$, where $\lambda_{\sss{i}}$ corresponds to the largest eigenvalue of $J_{\sss{i}}$.

The trajectories of the unit vectors for described conditions are presented in Figs.~\ref{subfig:UnitVectorTrajectories1}--\ref{subfig:UnitVectorTrajectories2} and~\ref{subfig:UnitVectorTrajectories1_Constrained}--\ref{subfig:UnitVectorTrajectories2_Constrained}.
Notice that despite not satisfying conditions of Theorem~\ref{thm:LocalStability} (the unit vectors are not always in a $\frac{\pi}{2}$ cone), incomplete  attitude synchronization is still achieved. 
This can be verified in Figs.~\ref{subfig:ErrorAngleUnitVector1}-\ref{subfig:ErrorAngleUnitVector2} and~\ref{subfig:ErrorAngleUnitVector1_Constrained}-\ref{subfig:ErrorAngleUnitVector2_Constrained}, which present the angular error between neighboring agents. 
In Figs.~\ref{subfig:UnitVectorTrajectories1_Constrained} and~\ref{subfig:UnitVectorTrajectories2_Constrained}, the control laws are different between agents~$1$--$5$ and~$6$--$10$. 
The former perform synchronization of principal axes, by applying the constrained control law~\eqref{eq:DistributedControlLawModified2};
while the later perform synchronization of their first axes, i.e., $\bar{\nmb}_{\sss{i}} = [1 \, 0 \, 0]\tp$, by applying the control law~\eqref{eq:DistributedControlLawDynamics}.
In Figs.~\ref{subfig:ErrorAngleUnitVector2} and~\ref{subfig:ErrorAngleUnitVector2_Constrained}, for which the control laws were corrupted by noise, perfect synchronization is not asymptotically achieved. 
Instead, the unit vectors converge to a configuration where they remain close to each other (error of $\approx 5^{\circ}$ between neighbors).

\begin{figure}
	\centering
	\subcapcentertrue
	\subfigure[Trajectories w/o noise]{
		\includegraphics[width=0.2\textwidth]
		{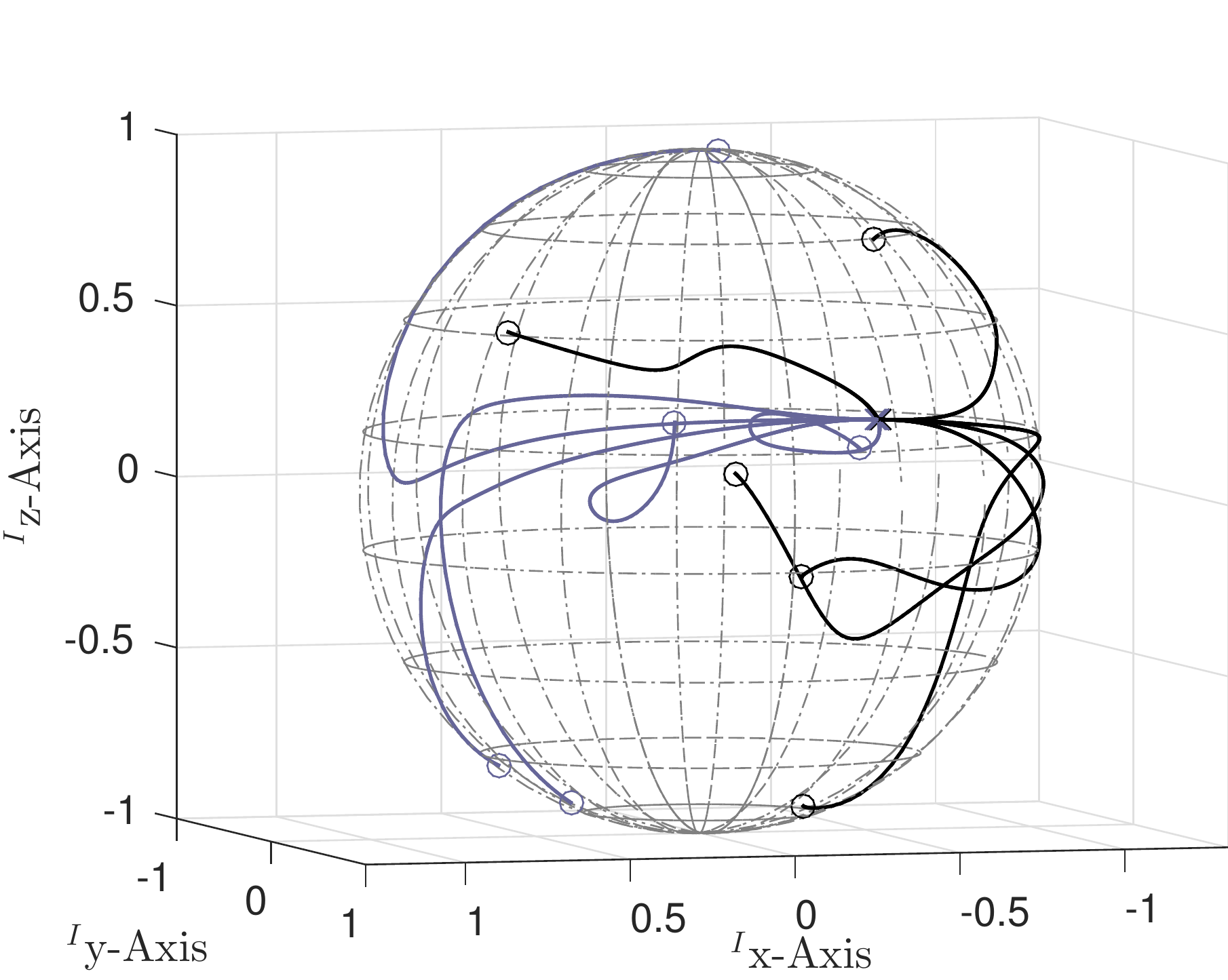}
		\label{subfig:UnitVectorTrajectories1}
	}
	\subfigure[Trajectories  with noise]{
		\includegraphics[width=0.2\textwidth]
		{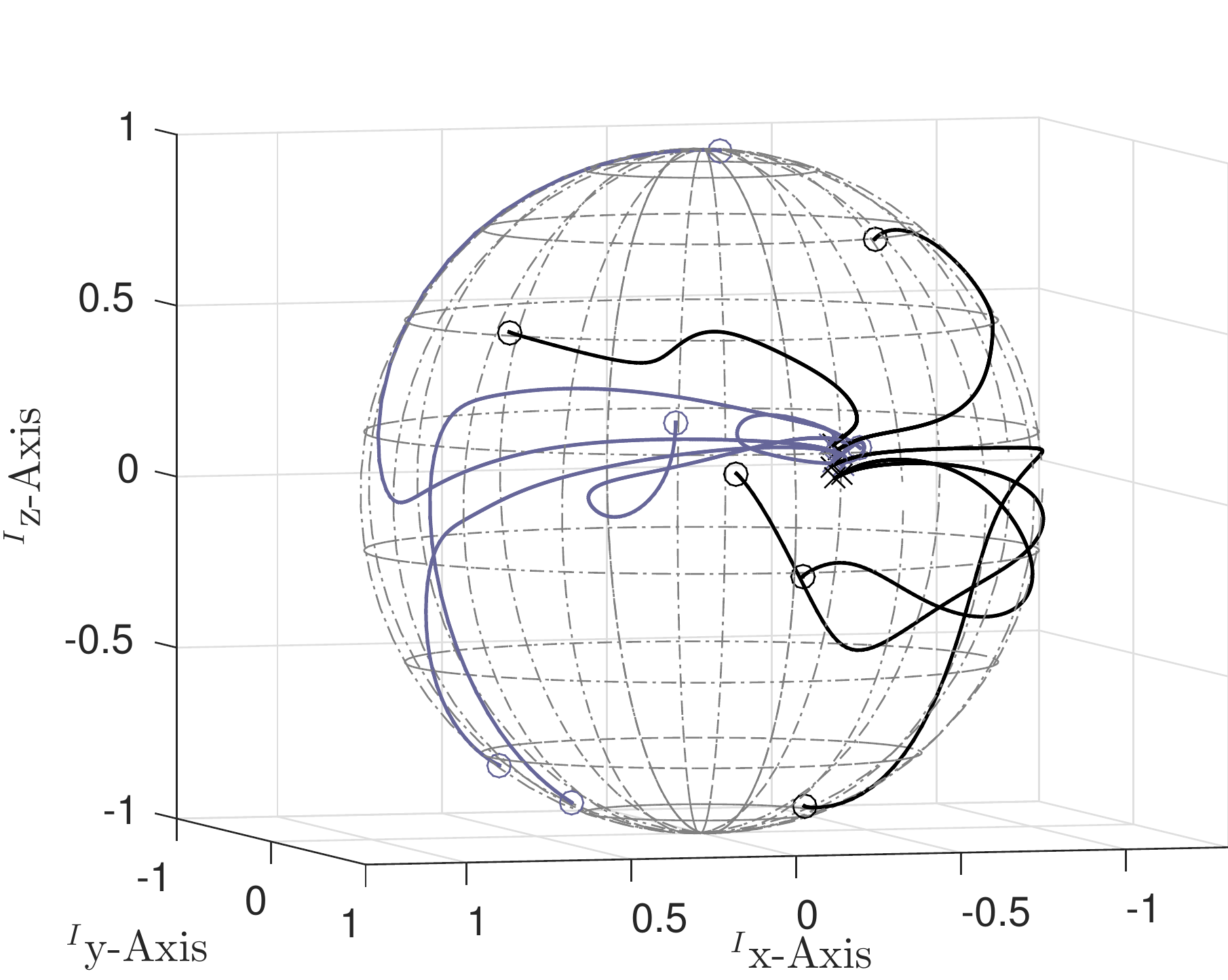}
		\label{subfig:UnitVectorTrajectories2}
	}	
	\subfigure[Error angle between neighbors without noise]{
		\includegraphics[width=0.2\textwidth]
		{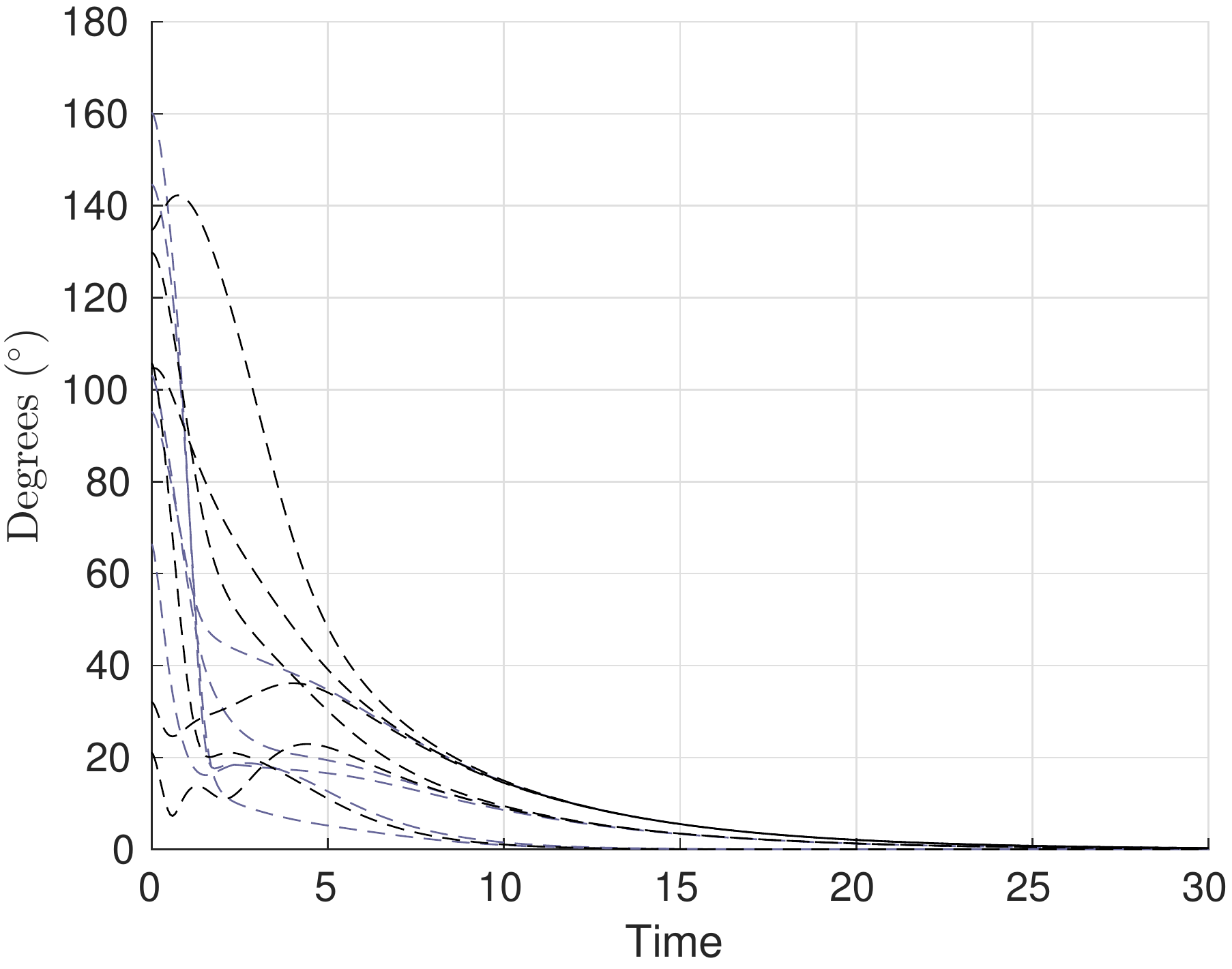}
		\label{subfig:ErrorAngleUnitVector1}
	}
	\subfigure[Error angle between neighbors with noise]{
		\includegraphics[width=0.2\textwidth]
		{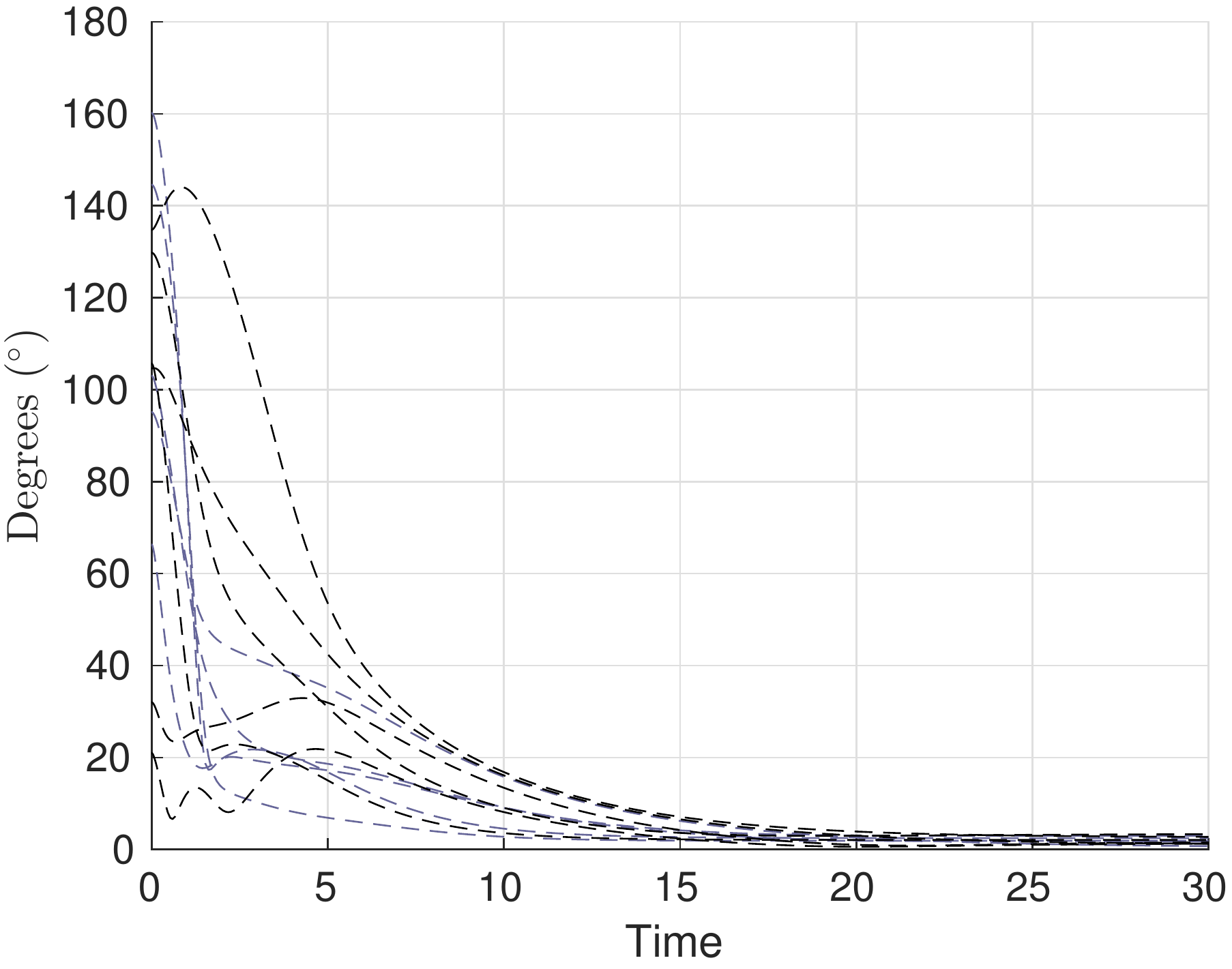}
		\label{subfig:ErrorAngleUnitVector2}
	}	
	\caption{Synchronization in network of 10 unit vectors with and without noise, where blue agents perform synchronization of their first axes ($\bar{\nmb}_{\sss{i}} = [1 \, 0 \, 0]\tp$) and black agents synchronization of their second axes ($\bar{\nmb}_{\sss{i}} = [0 \, 1 \, 0]\tp$).}
	\label{fig:ComparisonSimulations}
\end{figure}
\begin{figure}
	\centering
	\subcapcentertrue
	\subfigure[Trajectories w/o noise]{
		\includegraphics[width=0.2\textwidth]
		{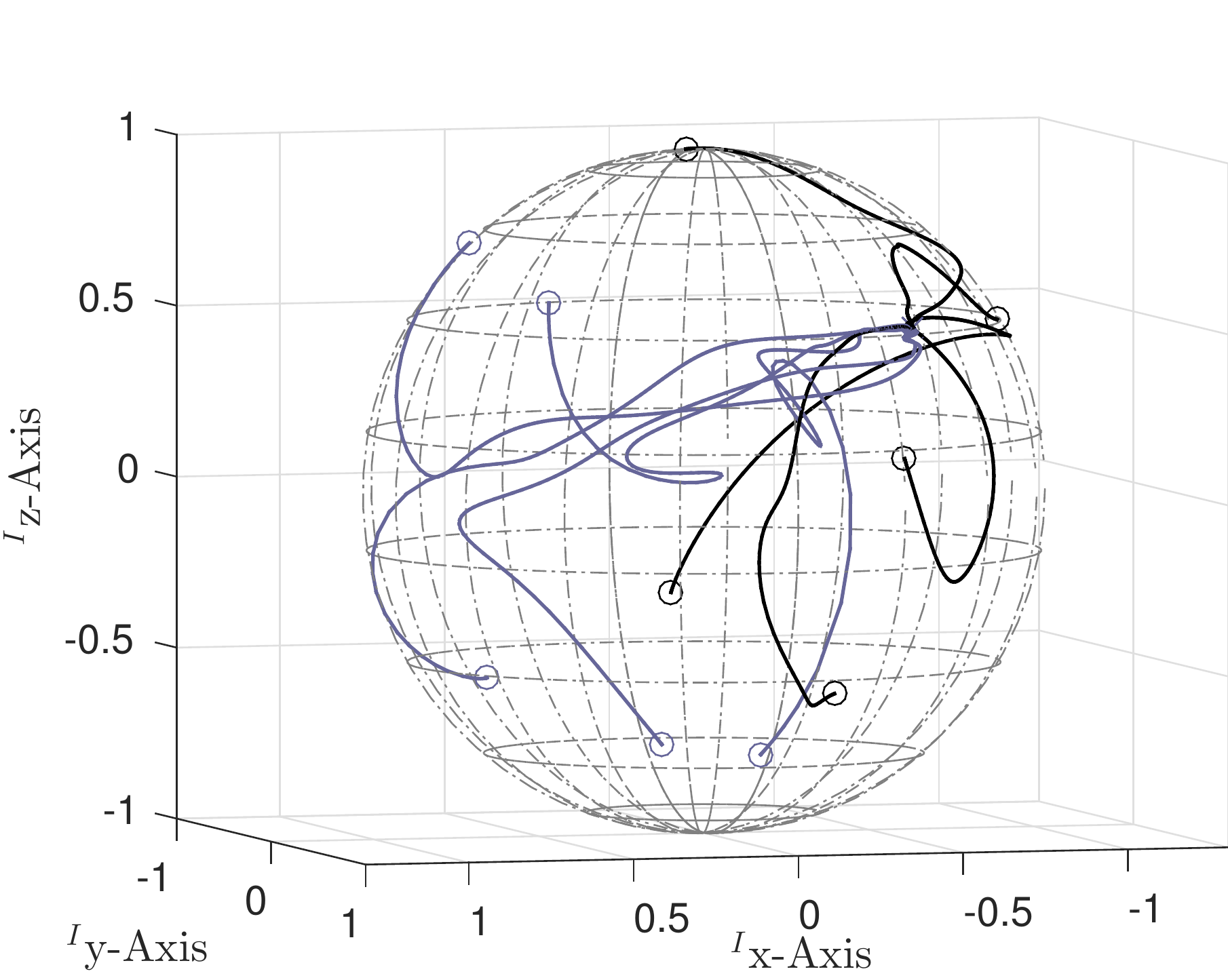}
		\label{subfig:UnitVectorTrajectories1_Constrained}
	}
	\subfigure[Trajectories with noise]{
		\includegraphics[width=0.2\textwidth]
		{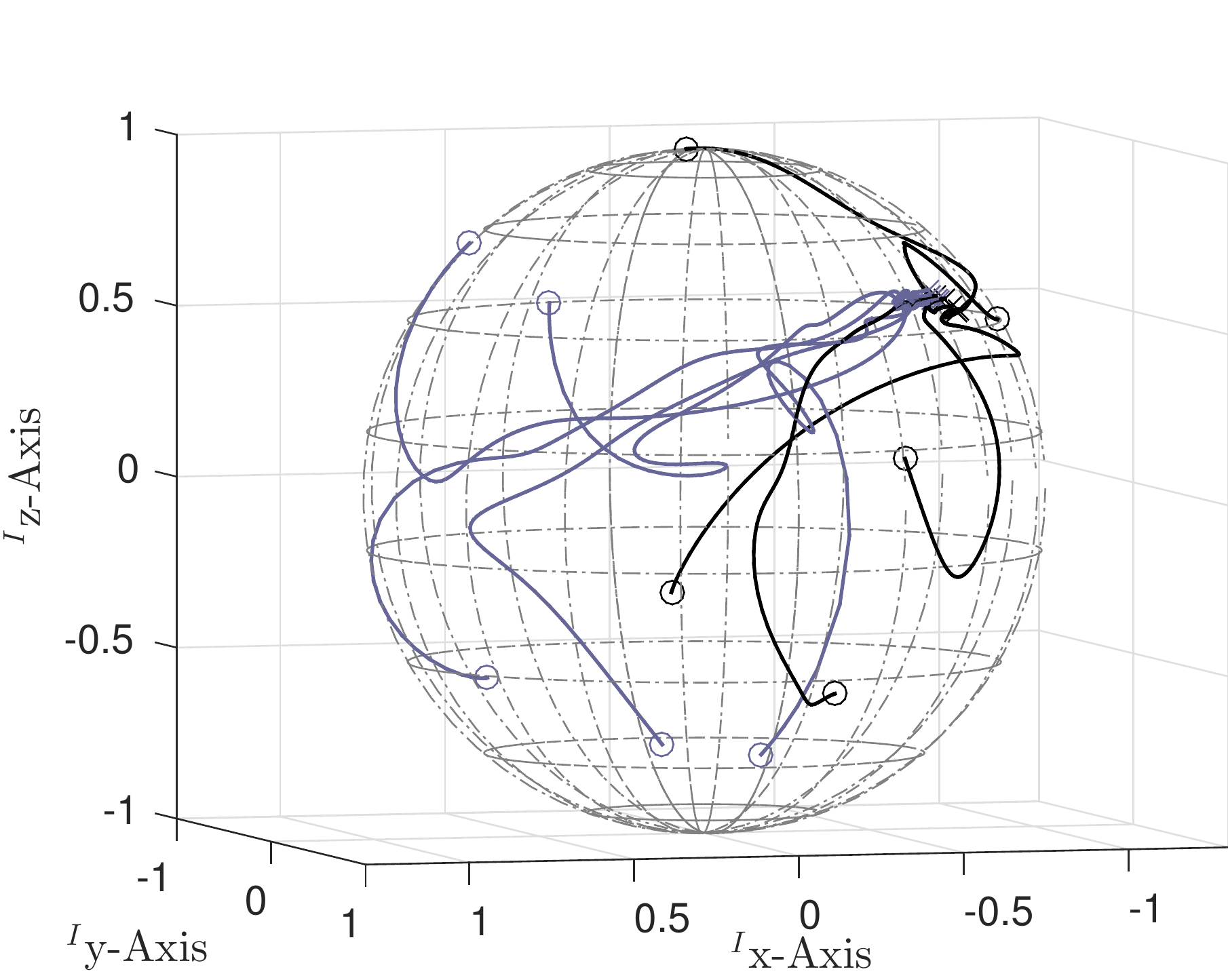}
		\label{subfig:UnitVectorTrajectories2_Constrained}
	}	
	\subfigure[Error angle between neighbors without noise]{
		\includegraphics[width=0.2\textwidth]
		{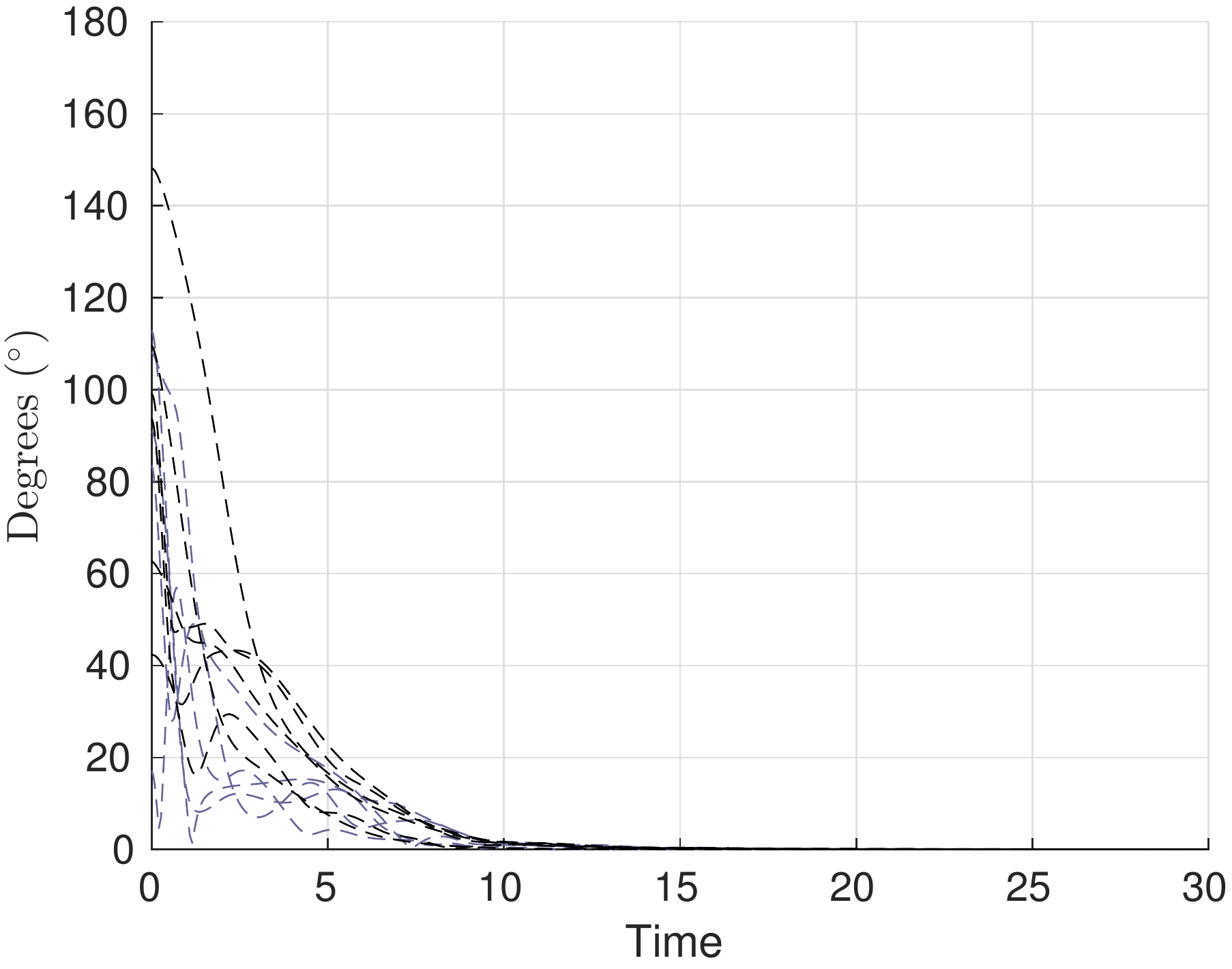}
		\label{subfig:ErrorAngleUnitVector1_Constrained}
	}
	\subfigure[Error angle between neighbors with noise]{
		\includegraphics[width=0.2\textwidth]
		{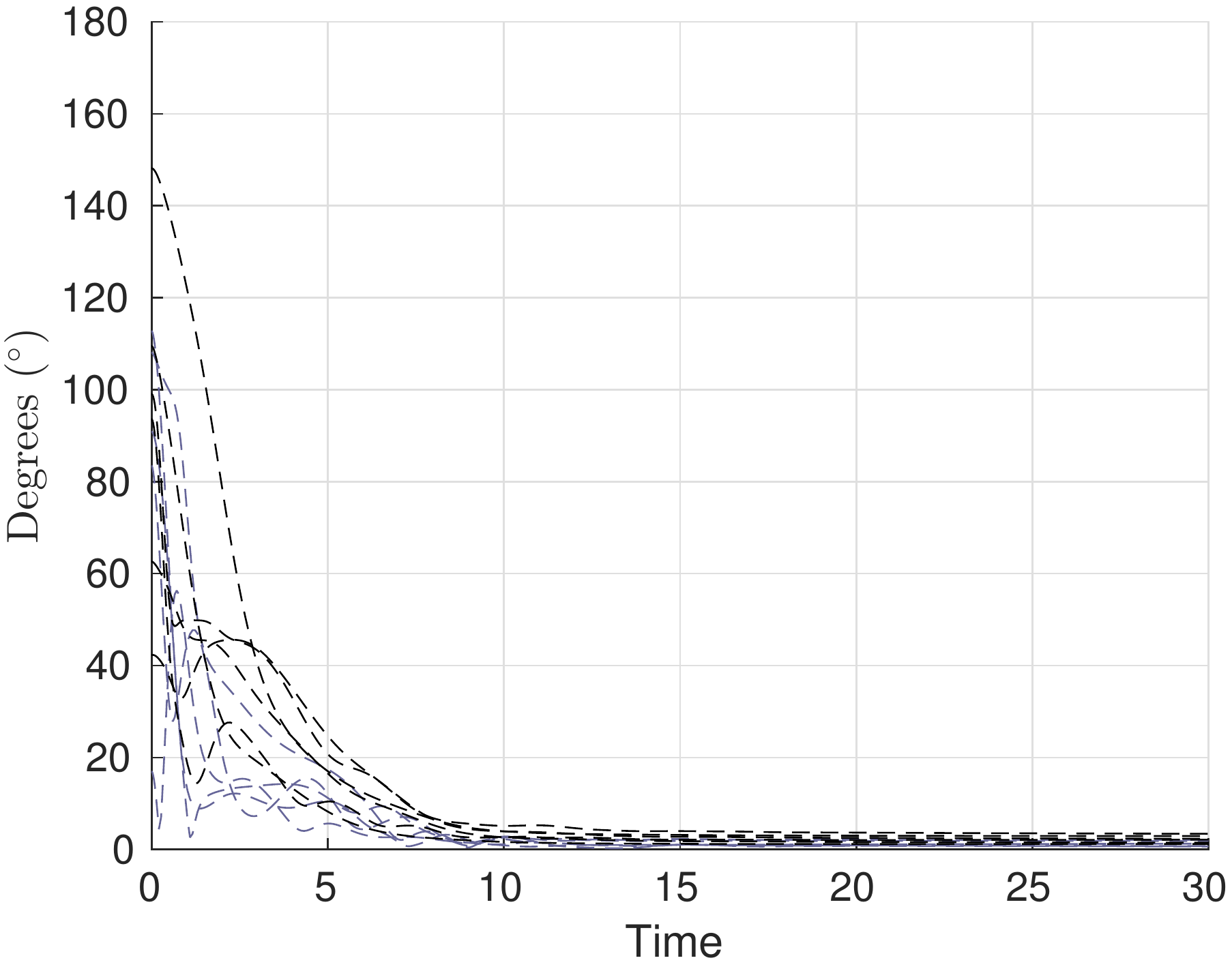}
		\label{subfig:ErrorAngleUnitVector2_Constrained}
	}	
	\caption{Synchronization in network of 10 unit vectors with and without noise, where blue agents perform synchronization of principal axes and black agents synchronization of their first axes (i.e., $\bar{\nmb}_{\sss{i}} = [1 \, 0 \, 0]\tp$).}
	\label{fig:ComparisonSimulations_Constrained}
\end{figure}



\begin{figure}
			\centering
			\begin{tikzpicture}[-,>=stealth',shorten >=1pt,auto,node distance=3cm,
			                    thick,main node/.style={circle,draw,font=\sffamily\Large\bfseries},scale=0.35,every node/.style={scale=0.35}]
			
			  \node[main node] (1) {$\nmbi[1]$};
			  \node[main node] (2) [above right of=1] {$\nmbi[2]$};
			  \node[main node] (3) [right of=2] {$\nmbi[3]$};
			  \node[main node] (4) [right of=3] {$\nmbi[4]$}; 
			  \node[main node] (5) [right of=4] {$\nmbi[5]$};
			  \node[main node] (6) [below right  of=5] {$\nmbi[6]$};
			  \node[main node] (10) [below right of=1] {$\nmbi[10]$};
			  \node[main node] (9) [right of =10] {$\nmbi[9]$};
			  \node[main node] (8) [right of=9] {$\nmbi[8]$};
			  \node[main node] (7) [right of=8] {$\nmbi[7]$};

			  \path[every node/.style={font=\sffamily\small}]
			    (1) edge node {} (2)
			        edge node {$\sss{ \bar{\kappa}(1,6) = 1}$} (6)
			        edge node {} (10)
			    (2) edge node {} (3)
			    (3) edge node {} (4)
			    (4) edge node {} (5)
			    (5) edge node {} (6)
			    (10) edge node {} (9)
			    (9) edge node {} (8)
			    (8) edge node {} (7)		
			    (7) edge node {} (6)   	    
			    ;
			\end{tikzpicture}
			
	\caption{Graph with 10 agents, where edge 1 is formed by agents 1 and 6}
	\label{fig:Agents10Network}
\end{figure}		
	
		\section{Conclusions}
		In this paper, we proposed a distributed control strategy that guarantees attitude synchronization of unit vectors, representing a specific body direction of a rigid body.
The proposed control torque laws depend on distance functions in $\mathcal{S}^2$, and we provide conditions on these distance functions that guarantee that \emph{i)} a synchronized network is locally asymptotically stable in an arbitrary connected undirected graph network; \emph{ii)} a synchronized network can be achieved for almost all initial conditions in a tree graph network. 
We imposed conditions on the distance functions that guarantee that these are invariant to rotation of their arguments, which means that the proposed control laws can be implemented by each individual rigid body in the absence of a global common orientation frame, i.e., by using only local information. 
Additionally, if the direction to be synchronized is a principal axis of the rigid body, we proposed a control law that does not require full torque actuation, and, more specifically, it only requires torque in the plane orthogonal to the principal axis.
We also studied the equilibria configurations that come with certain types of graph networks.
Directions for future work include studying the stability of all equilibria configurations, apart from the synchronized configuration;
to determine whether a synchronized network converges to a constant unit vector in a fixed, though unknown, orientation frame; 
and to extend the results to complete attitude synchronization.
For complete attitude synchronization, though, it is not possible to construct a constrained control law, since it requires full torque actuation.	
	
		\bibliographystyle{IEEEtran}
		\bibliography{bibliography}
	
		
		\section{Triangular Inequality}
		\label{app:TriangularInequality}
		\begin{prop}
	\label{prop:TriangularInequality}
	\normalfont
	Consider three unit vectors $\nmb_{\sss{1}}$, $\nmb_{\sss{2}}$ and $\nmb$. The following triangular inequality is satisfied,
	\begin{align}
		\theta(\nmb_{\sss{1}},\nmb_{\sss{2}}) \le \theta(\nmb_{\sss{1}},\nmb) +\theta(\nmb,\nmb_{\sss{2}}),
	\end{align}
	where $\theta: \mathcal{S}^{\sss{2}} \times \mathcal{S}^{\sss{2}} \mapsto [0,\pi]$, defined as $\theta(\nmb_{\sss{1}},\nmb_{\sss{2}}) = \arccos(\nmb_{\sss{1}}\tp \nmb_{\sss{2}})$.
\end{prop}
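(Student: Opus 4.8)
The statement is the spherical triangle inequality for the geodesic distance $\theta(\nmb_1,\nmb_2) = \arccos(\nmb_1\tp\nmb_2)$ on $\Stwo$. The plan is to reduce it to an elementary inequality for the arccosine of sums. Writing $a = \theta(\nmb_1,\nmb)$, $b = \theta(\nmb,\nmb_2)$, and $c = \theta(\nmb_1,\nmb_2)$, all lying in $[0,\pi]$, I want to show $c \le a + b$. If $a + b \ge \pi$ there is nothing to prove since $c \le \pi$, so assume $a + b < \pi$. Because $\cos$ is strictly decreasing on $[0,\pi]$ and $c, a+b$ both lie in $[0,\pi)$ in that case, the inequality $c \le a+b$ is equivalent to $\cos c \ge \cos(a+b)$.

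Next I would express $\cos c = \nmb_1\tp\nmb_2$ in terms of $a$, $b$ and the geometry at $\nmb$. Decompose $\nmb_1$ and $\nmb_2$ into their components along $\nmb$ and orthogonal to $\nmb$: $\nmb_1 = \cos(a)\,\nmb + \sin(a)\,\umb_1$ and $\nmb_2 = \cos(b)\,\nmb + \sin(b)\,\umb_2$, where $\umb_1,\umb_2$ are unit vectors in the plane $\OP{\nmb}\Rn[3]$ (this uses $\OP{\nmb}$ from the Notation section; the degenerate cases $a\in\{0,\pi\}$ or $b\in\{0,\pi\}$ are handled trivially and separately). Then $\cos c = \nmb_1\tp\nmb_2 = \cos a\cos b + \sin a\sin b\,(\umb_1\tp\umb_2)$. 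Since $\umb_1\tp\umb_2 \ge -1$ and $\sin a\sin b \ge 0$ for $a,b\in[0,\pi]$, we get $\cos c \ge \cos a\cos b - \sin a\sin b = \cos(a+b)$, which is exactly what was needed.

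The only subtlety — and the main thing to be careful about rather than a genuine obstacle — is the monotonicity bookkeeping: $\cos$ is injective and order-reversing only on $[0,\pi]$, so I must separate the case $a+b \ge \pi$ (where $c \le \pi \le a+b$ directly) from $a + b < \pi$ (where $a+b$ lies in the interval on which $\arccos$ inverts $\cos$, allowing the step from $\cos c \ge \cos(a+b)$ to $c \le a+b$). I would also note the trivial boundary cases where $\sin a = 0$ or $\sin b = 0$: then $\nmb_1 = \pm\nmb$ or $\nmb_2 = \pm\nmb$, and the inequality follows immediately from $c \le \pi$ and $c = \theta(\nmb_1,\nmb_2)$ being bounded by $a+b$ in those configurations by inspection. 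Assembling these pieces gives the result.
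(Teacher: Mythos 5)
Your proposal is correct and uses essentially the same argument as the paper: decompose $\nmb_{\sss{1}}$ and $\nmb_{\sss{2}}$ into components along $\nmb$ and orthogonal to it, then bound $\nmb_{\sss{1}}\tp\nmb_{\sss{2}} \ge \cos(\theta(\nmb_{\sss{1}},\nmb)+\theta(\nmb,\nmb_{\sss{2}}))$ via $\bm{\nu}_{\sss{1}}\tp\bm{\nu}_{\sss{2}} \ge -1$. Your explicit handling of the case $a+b\ge\pi$ and of the degenerate cases $\sin a = 0$ or $\sin b = 0$ is a small added care that the paper's proof leaves implicit.
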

\begin{proof}
	\normalfont
	Since $\nmb_{\sss{1}}$ and $\nmb_{\sss{2}}$ can be written as
	\begin{align}
		& \nmb_{\sss{1}} = \cos(\theta(\nmb_{\sss{1}},\nmb)) \nmb + \sin(\theta(\nmb_{\sss{1}},\nmb))   \bm{\nu}_{\sss{1}} ,
		\\
		& \nmb_{\sss{2}} = \cos(\theta(\nmb_{\sss{2}},\nmb)) \nmb + \sin(\theta(\nmb_{\sss{2}},\nmb))  \bm{\nu}_{\sss{2}} ,
	\end{align}
	where $\bm{\nu}_{\sss{1}}$, $\bm{\nu}_{\sss{2}} \in \mathcal{S}^{2}$ are unit vectors orthogonal to $\nmb$, then
	\begin{align}
		& 
		\Scale[0.75]{ 
			\nmb_{\sss{1}}\tp\nmb_{\sss{2}} 
			=
		}
		\\
		&		
		\Scale[0.75]{ 
			=
			\cos(\theta(\nmb_{\sss{1}},\nmb)) \cos(\theta(\nmb_{\sss{2}},\nmb))  
			+ 
			\sin(\theta(\nmb_{\sss{1}},\nmb)) \sin(\theta(\nmb_{\sss{2}},\nmb)) 
			\bm{\nu}_{\sss{1}}\tp \bm{\nu}_{\sss{2}} 
		}
		\\
		&
		\Scale[0.75]{
		 	=
			\cos(\theta(\nmb_{\sss{1}},\nmb) + \theta(\nmb_{\sss{2}},\nmb))  
			+
			\underbrace{
				\sin(\theta(\nmb_{\sss{1}},\nmb)) \sin(\theta(\nmb_{\sss{2}},\nmb)) 
				\left(
					1 + \bm{\nu}_{\sss{1}}\tp \bm{\nu}_{\sss{2}} 
				\right)
			}_{\ge 0}
		}
		\\
		&
		\Scale[0.75]{
			\ge
			\cos(\theta(\nmb_{\sss{1}},\nmb) + \theta(\nmb_{\sss{2}},\nmb)).
		}
	\end{align}	
	As such, $	\theta(\nmb_{\sss{1}},\nmb_{\sss{2}}) \le \theta(\nmb_{\sss{1}},\nmb) + \theta(\nmb_{\sss{2}},\nmb)$.
\end{proof}

\begin{proof}[of Proposition~\ref{prop:BelongToCone2}]
	\normalfont
	Let us construct an $\bm{\nu} \in \mathcal{S}^{\sss{2}}$ such that $\theta(\nmbi[k],\bm{\nu}) \le \alpha$ for all $k\in \mathcal{N}$, with $\theta(\cdot,\cdot)$ defined in Proposition~\ref{prop:TriangularInequality}.
	If $\max_{\sss{(i,j) \in \mathcal{N}^{\sss{2}}}}\theta(\nmbi[i],\nmbi[j]) = 0$, it suffices to pick $\bm{\nu} = \nmbi[i]$, for any $i\in \mathcal{N}$, and the Proposition's conclusion follows.
	If $0 < \max_{\sss{(i,j) \in \mathcal{N}^{\sss{2}}}}\theta(\nmbi[i],\nmbi[j]) $, take $(k,l) = \arg \max_{\sss{(i,j) \in \mathcal{N}^{\sss{2}}}}\theta(\nmbi[i],\nmbi[j])$, and consider then $\bm{\nu} = \frac{ \nmbi[k] + \nmbi[l] }{\norm{\nmbi[k] + \nmbi[l]}}$, which is well defined since $0 < \theta(\nmbi[k],\nmbi[l]) < \frac{2}{3} \pi < \pi$. 
	By construction, it follows that $\theta(\nmbi[k],\bm{\nu}) = \theta(\nmbi[l],\bm{\nu}) = \frac{\theta(\nmbi[k],\nmbi[l])}{2} $. 
	From Proposition~\ref{prop:TriangularInequality}, it follows that, for any $j \in \mathcal{N}$, $\theta(\nmbi[j],\bm{\nu}) \le \theta(\nmbi[j],\nmbi[k]) + \theta(\nmbi[k],\bm{\nu} ) \le \theta(\nmbi[l],\nmbi[k]) + \theta(\nmbi[k],\bm{\nu} ) \le  \frac{3}{2}\theta(\nmbi[k],\nmbi[l])$.
	Thus, if $\theta(\nmbi[k],\nmbi[l]) \le  \frac{2}{3}\alpha$, for some $\alpha \in [0,\pi]$, then $\theta(\nmbi[j],\bm{\nu}) \le \alpha$ for all $k\in \mathcal{N}$, and the Proposition's conclusion follows.
\end{proof}	
		
		\section{Solution to PDE~(IV.1)}
		\label{app:Uniqueness}
		The partial differential equation~\eqref{eq:DistanceFunctionProperty} is a restriction imposed on the distance functions. 
The question that follows is what \emph{kinds} of distance functions $d: \mathcal{S}^{2} \times \mathcal{S}^{2} \rightarrow \mathbb{R}_{\sss{0}}^{\sss{+}}$ satisfy~\eqref{eq:DistanceFunctionProperty}.
Distance functions of the type $d(\nmbi[1],\nmbi[2]) = f(1 - \nmbi[1]\tp \nmbi[2])$ (with $f \in \mathcal{C}^{1}((0,2), \mathbb{R}_{\sss{\ge 0}})$) satisfy~\eqref{eq:DistanceFunctionProperty}.
In fact, those are the only type of distance functions that satisfy~\eqref{eq:DistanceFunctionProperty}, as we show next.

\begin{prop}
	\label{prop:UniquenessAuxiliar}
	\normalfont
	Consider two orthogonal unit vectors, i.e., $
	(\vmb_{\sss{1}},\vmb_{\sss{2}}) 
	\in 
	\Omega 
	=
	\{ 
		(\nmb_{\sss{1}},\nmb_{\sss{2}}) \in \mathcal{S}^{\sss{2}} \times \mathcal{S}^{\sss{2}} :  
		\nmb_{\sss{1}}\tp \nmb_{\sss{2}} = 0
	\}$.
	Additionally, consider a function $\eta \in \mathcal{C}^{\sss{1}}(\Rn[3] \times \Rn[3], \Rn[]_{\sss{\ge 0}})$ that satisfies
	\begin{align}
		\sk{\vmb_{\sss{1}}}
		\frac{\partial \eta(\vmb_{\sss{1}},\vmb_{\sss{2}})}{\partial \vmb_{\sss{1}}} 
		+
		\sk{\vmb_{\sss{2}}}
		\frac{\partial \eta(\vmb_{\sss{1}},\vmb_{\sss{2}})}{\partial \vmb_{\sss{2}}}
		=
		\zvec,	
		\label{eq:RestrictionFUnique}
	\end{align}
	for all $(\vmb_{\sss{1}},\vmb_{\sss{2}}) \in  \Omega$.
	Then, $\eta(\vmb_{\sss{1}},\vmb_{\sss{2}})$ is constant for all $(\vmb_{\sss{1}},\vmb_{\sss{2}}) \in  \Omega$.
\end{prop}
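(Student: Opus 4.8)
The plan is to reinterpret the partial differential equation~\eqref{eq:RestrictionFUnique} as the statement that $\eta$ is invariant under \emph{simultaneous} rotations of its two arguments, and then to use the facts that $\SO[3]$ acts transitively on the set $\Omega$ of orthonormal pairs and that $\SO[3]$ is path-connected.

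First I would fix an arbitrary reference pair $(\vmb_{\sss{1}},\vmb_{\sss{2}}) \in \Omega$ and, for any smooth curve $t \mapsto R(t) \in \SO[3]$ with $R(0) = \Idmat$, set $g(t) = \eta(R(t)\vmb_{\sss{1}}, R(t)\vmb_{\sss{2}})$. Since $\dot R(t) R\tp(t)$ is skew-symmetric there is $\bm{\omega}(t) \in \Rn[3]$ with $\dot R(t) = \sk{\bm{\omega}(t)} R(t)$, hence $\frac{d}{dt}(R(t)\vmb_{\sss{i}}) = \bm{\omega}(t) \times (R(t)\vmb_{\sss{i}}) = -\sk{R(t)\vmb_{\sss{i}}}\bm{\omega}(t)$. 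Using $\sk{\wmb}\tp = -\sk{\wmb}$ and $(\mathbf{a}\tp M\tp \mathbf{b}) = (M\mathbf{a})\tp \mathbf{b}$, the chain rule gives
\begin{align}
	\dot g(t)
	=
	\left(
		\sk{R(t)\vmb_{\sss{1}}} \frac{\partial \eta(\vmb_{\sss{1}},\vmb_{\sss{2}})}{\partial \vmb_{\sss{1}}}
		+
		\sk{R(t)\vmb_{\sss{2}}} \frac{\partial \eta(\vmb_{\sss{1}},\vmb_{\sss{2}})}{\partial \vmb_{\sss{2}}}
	\right)\tp
	\bm{\omega}(t),
\end{align}
where the partial derivatives are evaluated at $(R(t)\vmb_{\sss{1}}, R(t)\vmb_{\sss{2}})$. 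Because rotations preserve orthonormality, $(R(t)\vmb_{\sss{1}}, R(t)\vmb_{\sss{2}}) \in \Omega$ for every $t$, so by~\eqref{eq:RestrictionFUnique} the parenthesized vector vanishes identically; thus $\dot g \equiv 0$ and $g$ is constant along every such curve.

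Then I would invoke two elementary facts about $\SO[3]$: it is path-connected, and it acts transitively on orthonormal $2$-frames in $\Rn[3]$. For the latter, given $(\wmb_{\sss{1}},\wmb_{\sss{2}}) \in \Omega$, complete $(\vmb_{\sss{1}},\vmb_{\sss{2}})$ and $(\wmb_{\sss{1}},\wmb_{\sss{2}})$ to the right-handed orthonormal bases $(\vmb_{\sss{1}},\vmb_{\sss{2}},\sk{\vmb_{\sss{1}}}\vmb_{\sss{2}})$ and $(\wmb_{\sss{1}},\wmb_{\sss{2}},\sk{\wmb_{\sss{1}}}\wmb_{\sss{2}})$; the matrix carrying the first basis to the second lies in $\SO[3]$ and maps $\vmb_{\sss{i}}$ to $\wmb_{\sss{i}}$. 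Picking such an $R$ together with a smooth curve $R(\cdot)$ from $\Idmat$ to $R$ in $\SO[3]$, the previous paragraph yields $\eta(\wmb_{\sss{1}},\wmb_{\sss{2}}) = g(1) = g(0) = \eta(\vmb_{\sss{1}},\vmb_{\sss{2}})$. Since $(\wmb_{\sss{1}},\wmb_{\sss{2}}) \in \Omega$ was arbitrary, $\eta$ is constant on $\Omega$.

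I expect no genuine obstacle here: the transitivity and connectedness of $\SO[3]$ are standard, and the only place requiring care is the chain-rule computation above, where one must keep track of the sign conventions for $\sk{\cdot}$ and use that $\eta \in \mathcal{C}^{\sss{1}}(\Rn[3]\times\Rn[3],\Rn[]_{\sss{\ge 0}})$ so that the ambient gradients $\partial\eta/\partial\vmb_{\sss{i}}$ exist along the rotation curve and the chain rule applies. The one conceptual subtlety worth stating explicitly is that~\eqref{eq:RestrictionFUnique} is assumed only on $\Omega$, which nonetheless suffices because the rotation curves $t\mapsto(R(t)\vmb_{\sss{1}}, R(t)\vmb_{\sss{2}})$ never leave $\Omega$.
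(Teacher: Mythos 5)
Your proof is correct, but it takes a genuinely different route from the paper's. The paper works entirely with the differential: it takes inner products of~\eqref{eq:RestrictionFUnique} with $\vmb_{\sss{1}}$ and $\vmb_{\sss{2}}$ to show the gradients have no component along $\sk{\vmb_{\sss{1}}}\vmb_{\sss{2}}$, expands $\partial\eta/\partial\vmb_{\sss{1}}$ and $\partial\eta/\partial\vmb_{\sss{2}}$ in the adapted basis $(\vmb_{\sss{1}},\vmb_{\sss{2}},\sk{\vmb_{\sss{1}}}\vmb_{\sss{2}})$, deduces from the PDE that the two cross-coefficients coincide, and then verifies that $d\eta = h\,(d\vmb_{\sss{1}}\tp\vmb_{\sss{2}} + d\vmb_{\sss{2}}\tp\vmb_{\sss{1}}) = 0$ along the constraints defining $\Omega$. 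You instead recognize~\eqref{eq:RestrictionFUnique} as the infinitesimal statement that $\eta$ is invariant under the diagonal action $(\vmb_{\sss{1}},\vmb_{\sss{2}})\mapsto(R\vmb_{\sss{1}},R\vmb_{\sss{2}})$ of $\SO[3]$, and conclude by transitivity of this action on orthonormal $2$-frames together with path-connectedness of $\SO[3]$. Your chain-rule computation is right (the sign bookkeeping with $\sk{\cdot}\tp=-\sk{\cdot}$ checks out), the curve indeed never leaves $\Omega$, and both group-theoretic facts you invoke are standard. What your approach buys is brevity and a transparent interpretation of the PDE as rotation invariance — which is precisely the property the authors care about — and it would extend verbatim to the level sets $\{\nmb_{\sss{1}}\tp\nmb_{\sss{2}}=c\}$ on which $\SO[3]$ also acts transitively. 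What the paper's approach buys is self-containment: it is pure coordinate linear algebra, needing neither the transitivity of the frame action nor the existence of smooth paths in $\SO[3]$, and its explicit basis decomposition is reused directly in the follow-up proposition that derives the form $\eta = h(1-\nmb_{\sss{1}}\tp\nmb_{\sss{2}})$.
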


\begin{proof}
	\normalfont
	The function $\eta(\cdot,\cdot)$ is constant for all $(\vmb_{\sss{1}},\vmb_{\sss{2}}) \in  \Omega$, if the differential $d\eta(\cdot,\cdot)$ is zero along $(\vmb_{\sss{1}},\vmb_{\sss{2}}) \in  \Omega$. 
	This condition is proved next.
	Taking the inner product of~\eqref{eq:RestrictionFUnique} with $\vmb_{\sss{i}}$ ($i \in \{1,2\}$ and $j = \{1,2\}\backslash \{i\}$), it follows that $\vmb_{\sss{i}}\tp \sk{\vmb_{\sss{j}}}
	\frac{\partial \eta(\vmb_{\sss{1}},\vmb_{\sss{2}})}{\partial \vmb_{\sss{j}}} = 0$, which means $\frac{\partial \eta(\vmb_{\sss{1}},\vmb_{\sss{2}})}{\partial \vmb_{\sss{j}}} $ is orthogonal to both $\sk{\vmb_{\sss{1}}} \vmb_{\sss{2}}$ and $\sk{\vmb_{\sss{2}}} \vmb_{\sss{1}}$.
	Since $\vmb_{\sss{1}}$, $\vmb_{\sss{2}}$ and $\sk{\vmb_{\sss{1}}} \vmb_{\sss{2}}$ form a basis of $\Rn[3]$, it then follows that
	\begin{align}
		\Scale[0.95]{
			\frac{\partial \eta(\vmb_{\sss{1}},\vmb_{\sss{2}})}{\partial \vmb_{\sss{1}}} 
		}
		&
		\Scale[0.95]{
		=
		\left(
			\vmb_{\sss{1}}\tp
			\frac{\partial \eta(\vmb_{\sss{1}},\vmb_{\sss{2}})}{\partial \vmb_{\sss{1}}} 
		\right)
		\vmb_{\sss{1}}
		+
		\left(
			\vmb_{\sss{2}}\tp
			\frac{\partial \eta(\vmb_{\sss{1}},\vmb_{\sss{2}})}{\partial \vmb_{\sss{1}}} 
		\right)
		\vmb_{\sss{2}}	
		}
		\\
		&
		\Scale[0.95]{
		\triangleq
		h_{\sss{1}}(\vmb_{\sss{1}},\vmb_{\sss{2}})
		\vmb_{\sss{1}}
		+
		h_{\sss{3}}(\vmb_{\sss{1}},\vmb_{\sss{2}})
		\vmb_{\sss{2}},
		}
		\label{eq:Part1}	
		\\
		\Scale[0.95]{
			\frac{\partial \eta(\vmb_{\sss{1}},\vmb_{\sss{2}})}{\partial \vmb_{\sss{2}}} 
		}
		& 
		\Scale[0.95]{
			=
			\left(
				\vmb_{\sss{1}}\tp
				\frac{\partial \eta(\vmb_{\sss{1}},\vmb_{\sss{2}})}{\partial \vmb_{\sss{2}}} 
			\right)
			\vmb_{\sss{1}}
			+
			\left(
				\vmb_{\sss{2}}\tp
				\frac{\partial \eta(\vmb_{\sss{1}},\vmb_{\sss{2}})}{\partial \vmb_{\sss{2}}} 
			\right)
			\vmb_{\sss{2}}	
		}
		\\
		& 
		\Scale[0.95]{
			\triangleq
			h_{\sss{4}}(\vmb_{\sss{1}},\vmb_{\sss{2}})
			\vmb_{\sss{1}}
			+
			h_{\sss{2}}(\vmb_{\sss{1}},\vmb_{\sss{2}})
			\vmb_{\sss{2}}	
		}
		\label{eq:Part2}.				
	\end{align}
	By replacing~\eqref{eq:Part1} and~\eqref{eq:Part2} in~\eqref{eq:RestrictionFUnique}, it follows that $h_{\sss{3}}(\vmb_{\sss{1}},\vmb_{\sss{2}}) = h_{\sss{4}}(\vmb_{\sss{1}},\vmb_{\sss{2}}) \triangleq h (\vmb_{\sss{1}},\vmb_{\sss{2}})$ for all $(\vmb_{\sss{1}},\vmb_{\sss{2}}) \in  \Omega$ (notice that $\sk{\vmb_{\sss{1}}} \vmb_{\sss{2}} \ne \zvec$).
	Now, recall that $(\vmb_{\sss{1}},\vmb_{\sss{2}}) \in  \Omega$ means that $\vmb_{\sss{1}}\tp \vmb_{\sss{1}} = 1$, $\vmb_{\sss{2}}\tp \vmb_{\sss{2}} = 1$ and $\vmb_{\sss{1}}\tp \vmb_{\sss{2}} = 0$.
	In turn, this implies that $d\vmb_{\sss{1}}\tp \vmb_{\sss{1}} = 0$, $d\vmb_{\sss{2}}\tp \vmb_{\sss{2}} = 0$ and $d\vmb_{\sss{1}}\tp \vmb_{\sss{2}} + d\vmb_{\sss{2}}\tp \vmb_{\sss{1}}= 0$.
	From~\eqref{eq:Part1}, it follows that $
	d\vmb_{\sss{1}}\tp \frac{\partial 
	\eta(\vmb_{\sss{1}},\vmb_{\sss{2}})}{\partial \vmb_{\sss{1}}} 
	= 
	h_{\sss{1}}(\cdot,\cdot)
	d\vmb_{\sss{1}}\tp\vmb_{\sss{1}}
	+
	h(\cdot,\cdot)
	d\vmb_{\sss{1}}\tp\vmb_{\sss{2}}
	=
	h(\cdot,\cdot)
	d\vmb_{\sss{1}}\tp\vmb_{\sss{2}}
	$, and, from~\eqref{eq:Part2}, it follows that $
	d\vmb_{\sss{2}}\tp \frac{\partial 
	\eta(\vmb_{\sss{1}},\vmb_{\sss{2}})}{\partial \vmb_{\sss{2}}} 
	= 
	h(\cdot,\cdot)
	d\vmb_{\sss{2}}\tp\vmb_{\sss{1}}
	+
	h_{\sss{2}}(\cdot,\cdot)
	d\vmb_{\sss{2}}\tp\vmb_{\sss{2}}
	=
	h(\cdot,\cdot)
	d\vmb_{\sss{2}}\tp\vmb_{\sss{1}}
	$.
	Since $
	d\eta(\vmb_{\sss{1}},\vmb_{\sss{2}}) = 
	d\vmb_{\sss{1}}\tp \frac{\partial 	
	\eta(\vmb_{\sss{1}},\vmb_{\sss{2}})}{\partial \vmb_{\sss{1}}} 			
	+
	d\vmb_{\sss{2}}\tp \frac{\partial 	
	\eta(\vmb_{\sss{1}},\vmb_{\sss{2}})}{\partial \vmb_{\sss{2}}} 	
	=	
	$
	$
	h(\vmb_{\sss{1}},\vmb_{\sss{2}})
	\left(
		d\vmb_{\sss{1}}\tp\vmb_{\sss{2}}
		+
		d\vmb_{\sss{2}}\tp\vmb_{\sss{1}}
	\right)
	= 
	0
	$, the Proposition's conclusion follows.
\end{proof}

\begin{prop}
	\normalfont
	Consider a function  $\eta \in \mathcal{C}^{\sss{1}}(\Rn[3] \times \Rn[3], \Rn[]_{\sss{\ge 0}})$ that satisfies
	\begin{align}
		\sk{\nmb_{\sss{1}}}
		\frac{\partial \eta(\nmb_{\sss{1}},\nmb_{\sss{2}})}{\partial \nmb_{\sss{1}}} 
		+
		\sk{\nmb_{\sss{2}}}
		\frac{\partial \eta(\nmb_{\sss{1}},\nmb_{\sss{2}})}{\partial \nmb_{\sss{2}}}
		=
		\zvec,	
		\label{eq:RestrictionFUnique2}
	\end{align}
	for all $(\nmb_{\sss{1}},\nmb_{\sss{2}}) \in  \mathcal{S}^{\sss{2}} \times \mathcal{S}^{\sss{2}}$.
	Then, $\eta(\nmb_{\sss{1}},\nmb_{\sss{2}}) = h(1 - \nmb_{\sss{1}}\tp\nmb_{\sss{2}})$ for $(\nmb_{\sss{1}},\nmb_{\sss{2}}) \in  \mathcal{S}^{\sss{2}} \times \mathcal{S}^{\sss{2}}$ and for any $h \in \mathcal{C}^{1}((0,2), \mathbb{R}_{\sss{\ge 0}})$.
\end{prop}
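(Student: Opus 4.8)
The plan is to read~\eqref{eq:RestrictionFUnique2} not as a scalar PDE to be integrated by hand, but as the infinitesimal expression of a symmetry: it says exactly that $\eta$ is constant along the diagonal action of $\SO[3]$ on $\Stwo \times \Stwo$, i.e.\ $\eta(R\nmb_{\sss{1}},R\nmb_{\sss{2}}) = \eta(\nmb_{\sss{1}},\nmb_{\sss{2}})$ for every $R \in \SO[3]$. Once this is established, the statement reduces to a purely geometric fact: the orbits of that action on $\Stwo \times \Stwo$ are precisely the level sets of $(\nmb_{\sss{1}},\nmb_{\sss{2}}) \mapsto \nmb_{\sss{1}}\tp\nmb_{\sss{2}}$, so $\eta$ must factor through $(\nmb_{\sss{1}},\nmb_{\sss{2}}) \mapsto 1 - \nmb_{\sss{1}}\tp\nmb_{\sss{2}}$; a final elementary argument then upgrades the resulting function $h$ to class $\mathcal{C}^{\sss{1}}$ on $(0,2)$.

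First I would prove the invariance. Fix $\bm{\omega} \in \Rn[3]$ and differentiate $\eta$ along the curve $t \mapsto (e^{\sk{\bm{\omega}}t}\nmb_{\sss{1}}, e^{\sk{\bm{\omega}}t}\nmb_{\sss{2}})$, which stays inside $\Stwo \times \Stwo$ since $e^{\sk{\bm{\omega}}t}\in\SO[3]$. By the chain rule and the scalar triple product identity $\amb\tp\sk{\bmb}\cmb = \bmb\tp\sk{\cmb}\amb$, the derivative of $\eta$ along this curve equals
\[
\frac{\partial \eta(\nmb_{\sss{1}},\nmb_{\sss{2}})}{\partial \nmb_{\sss{1}}}\tp \sk{\bm{\omega}}\nmb_{\sss{1}} + \frac{\partial \eta(\nmb_{\sss{1}},\nmb_{\sss{2}})}{\partial \nmb_{\sss{2}}}\tp \sk{\bm{\omega}}\nmb_{\sss{2}} = \bm{\omega}\tp\left( \sk{\nmb_{\sss{1}}}\frac{\partial \eta(\nmb_{\sss{1}},\nmb_{\sss{2}})}{\partial \nmb_{\sss{1}}} + \sk{\nmb_{\sss{2}}}\frac{\partial \eta(\nmb_{\sss{1}},\nmb_{\sss{2}})}{\partial \nmb_{\sss{2}}}\right),
\]
and this vanishes identically by applying~\eqref{eq:RestrictionFUnique2} at each point of the curve (every such point lies in $\Stwo\times\Stwo$). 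Hence $\eta$ is constant along the curve, and since $\bm{\omega}$ is arbitrary and $\SO[3] = \{e^{\sk{\bm{\omega}}} : \bm{\omega}\in\Rn[3]\}$, we obtain $\eta(R\nmb_{\sss{1}},R\nmb_{\sss{2}}) = \eta(\nmb_{\sss{1}},\nmb_{\sss{2}})$ for all $R\in\SO[3]$. (Proposition~\ref{prop:UniquenessAuxiliar} is the special case $\nmb_{\sss{1}}\tp\nmb_{\sss{2}}=0$, for which $\Omega$ is a single orbit.)

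Next I would identify the orbits. Two pairs $(\nmb_{\sss{1}},\nmb_{\sss{2}})$, $(\nmb_{\sss{1}}',\nmb_{\sss{2}}')$ in $\Stwo \times \Stwo$ are $\SO[3]$-related if and only if $\nmb_{\sss{1}}\tp\nmb_{\sss{2}} = \nmb_{\sss{1}}'\tp\nmb_{\sss{2}}'$: necessity is immediate, and for sufficiency, when $s := 1 - \nmb_{\sss{1}}\tp\nmb_{\sss{2}} \in (0,2)$ one completes $\{\nmb_{\sss{1}},\nmb_{\sss{2}}\}$ to the positively oriented orthonormal frame $\left(\nmb_{\sss{1}},\, \OP{\nmb_{\sss{1}}}\nmb_{\sss{2}}/\norm{\OP{\nmb_{\sss{1}}}\nmb_{\sss{2}}},\, \sk{\nmb_{\sss{1}}}\nmb_{\sss{2}}/\norm{\sk{\nmb_{\sss{1}}}\nmb_{\sss{2}}}\right)$, does the same for the primed pair, and lets $R$ be the map sending one frame to the other (which has determinant one because both frames are positively oriented, so $R\in\SO[3]$, and which carries $\nmb_{\sss{2}}$ to $\nmb_{\sss{2}}'$ since $\nmb_{\sss{2}}$ is the same linear combination of $\nmb_{\sss{1}}$ and the second frame vector as $\nmb_{\sss{2}}'$ is of the primed ones); the cases $s \in \{0,2\}$, where $\nmb_{\sss{2}} = \pm\nmb_{\sss{1}}$, are handled by any rotation with $R\nmb_{\sss{1}} = \nmb_{\sss{1}}'$. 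Thus the continuous surjection $(\nmb_{\sss{1}},\nmb_{\sss{2}}) \mapsto 1 - \nmb_{\sss{1}}\tp\nmb_{\sss{2}} \in [0,2]$ has exactly the $\SO[3]$-orbits as its fibers, and combined with the invariance there is a well-defined $h : [0,2] \mapsto \Rn[]_{\sss{\ge 0}}$ with $\eta(\nmb_{\sss{1}},\nmb_{\sss{2}}) = h(1 - \nmb_{\sss{1}}\tp\nmb_{\sss{2}})$.

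Finally, for regularity on $(0,2)$ I would exhibit a smooth section. With $\vmb_{\sss{1}} = [1\,0\,0]\tp$ fixed and $\vmb_{\sss{2}}(s) = \cos(\arccos(1-s))\,[1\,0\,0]\tp + \sin(\arccos(1-s))\,[0\,1\,0]\tp$, one checks $\vmb_{\sss{2}}(s) \in \Stwo$ and $1 - \vmb_{\sss{1}}\tp\vmb_{\sss{2}}(s) = s$, so $h(s) = \eta(\vmb_{\sss{1}},\vmb_{\sss{2}}(s))$; since $s\mapsto\arccos(1-s)$ is $\mathcal{C}^{\sss{\infty}}$ on $(0,2)$ (only the endpoints $s\in\{0,2\}$ are singular) and $\eta$ is $\mathcal{C}^{\sss{1}}$, $h$ is $\mathcal{C}^{\sss{1}}$ there. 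The routine parts are the two chain-rule computations; the main obstacle is the orbit identification, and within it the need to produce a \emph{rotation} rather than a mere orthogonal transformation matching the two pairs — which is exactly why the orientation of the frame $\left(\nmb_{\sss{1}}, \OP{\nmb_{\sss{1}}}\nmb_{\sss{2}}, \sk{\nmb_{\sss{1}}}\nmb_{\sss{2}}\right)$ must be invoked — together with the separate treatment of the degenerate configurations $\nmb_{\sss{1}}\tp\nmb_{\sss{2}} = \pm 1$.
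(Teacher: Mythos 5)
Your proof is correct, but it takes a genuinely different route from the paper's. The paper argues analytically: it changes variables to $\vmb_{\sss{1}} = (\nmbi[1]+\nmbi[2])/\norm{\nmbi[1]+\nmbi[2]}$, $\vmb_{\sss{2}} = (\nmbi[1]-\nmbi[2])/\norm{\nmbi[1]-\nmbi[2]}$, $v_{\sss{3}} = \nmbi[1]\tp\nmbi[2]$ on the full-measure set where $(\nmbi[1]\tp\nmbi[2])^{\sss{2}}\ne 1$, pushes the PDE through the chain rule to show that in the new coordinates it becomes exactly the "orthogonal-pair" PDE of Proposition~\ref{prop:UniquenessAuxiliar} in $(\vmb_{\sss{1}},\vmb_{\sss{2}})$, and then invokes that proposition to kill the dependence on $(\vmb_{\sss{1}},\vmb_{\sss{2}})$. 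You instead integrate the infinitesimal identity into the global statement $\eta(R\nmbi[1],R\nmbi[2])=\eta(\nmbi[1],\nmbi[2])$ for all $R\in\SO[3]$ (via the curve $t\mapsto(e^{\sk{\bm{\omega}}t}\nmbi[1],e^{\sk{\bm{\omega}}t}\nmbi[2])$ and surjectivity of the exponential map), and then classify the orbits of the diagonal action as the level sets of $\nmbi[1]\tp\nmbi[2]$. Your version buys three things: it bypasses the somewhat delicate change-of-variables computation and the auxiliary orthogonal-pair lemma entirely (you only cite it as a special case); it covers the degenerate configurations $\nmbi[1]\tp\nmbi[2]=\pm 1$ directly inside the orbit argument, whereas the paper works on a full-measure set and leaves the extension to the boundary implicit; and it explicitly establishes that the resulting $h$ is $\mathcal{C}^{\sss{1}}$ on $(0,2)$ via a smooth section, a regularity point the paper's proof does not address. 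What the paper's route buys in exchange is that it stays entirely at the level of first-order PDE manipulations and reuses machinery (Proposition~\ref{prop:UniquenessAuxiliar}) already developed for its own sake, so no group-theoretic input (surjectivity of $\exp$ onto $\SO[3]$, orbit classification) is needed. Both arguments are sound; yours is the more conceptual and, in my view, the more complete of the two.
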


\begin{proof}
	\normalfont
	Condition~\eqref{eq:RestrictionFUnique2} is satisfied for $(\nmb_{\sss{1}},\nmb_{\sss{2}}) \in  \mathcal{S}^{\sss{2}} \times \mathcal{S}^{\sss{2}}$, and, in particular, it is satisfied for $(\nmb_{\sss{1}},\nmb_{\sss{2}}) \in  \varTheta \triangleq \{ (\vmb_{\sss{1}},\vmb_{\sss{2}}) \in \mathcal{S}^{\sss{2}} \times \mathcal{S}^{\sss{2}}: (\vmb_{\sss{1}}\tp\vmb_{\sss{2}})^2 \ne 1 \}$ which is a set of full measure w.r.t. $\mathcal{S}^{\sss{2}} \times \mathcal{S}^{\sss{2}}$.
	For $(\nmb_{\sss{1}},\nmb_{\sss{2}}) \in  \varTheta$, consider the variables
	\begin{align}
		\vmb_{\sss{1}} = \frac{\nmbi[1] + \nmbi[2]}{\norm{\nmbi[1] + \nmbi[2]}},
		\vmb_{\sss{2}} = \frac{\nmbi[1] - \nmbi[2]}{\norm{\nmbi[1] - \nmbi[2]}},
		v_{\sss{3}}    = \nmbi[1]\tp\nmbi[2],
	\end{align}
	and notice that, by construction, $\nmb_{\sss{1}}$ and $\nmb_{\sss{2}}$ can be obtained from $\vmb_{\sss{1}}$, $\vmb_{\sss{2}}$ and $v_{\sss{3}}$ (indeed, $\nmbi[1,2] = v_{\sss{3}}\vmb_{\sss{1}} \pm \sqrt{1 - v_{\sss{3}}^2} \vmb_{\sss{2}}$), meaning that a bijection exists between $(\nmb_{\sss{1}},\nmb_{\sss{2}})$ and $(\vmb_{\sss{1}},\vmb_{\sss{2}},v_{\sss{3}})$; moreover, $\vmb_{\sss{1}}$ is orthogonal to $\vmb_{\sss{2}}$, i.e., $\vmb_{\sss{1}}\tp\vmb_{\sss{2}} = 0$.
	The proof hereafter explores the previous change of variables. 
	First, notice that there exists a function $\tilde{h}(\cdot,\cdot,\cdot)$ such that $
	\eta(\nmb_{\sss{1}},\nmb_{\sss{2}}) = \tilde{h}(
	\vmb_{\sss{1}}(\nmb_{\sss{1}},\nmb_{\sss{2}}),
	\vmb_{\sss{2}}(\nmb_{\sss{1}},\nmb_{\sss{2}}),
	v_{\sss{3}}(\nmb_{\sss{1}},\nmb_{\sss{2}})
	)$ or alternatively $
		\eta(
		\nmb_{\sss{1}}(\vmb_{\sss{1}},\vmb_{\sss{2}},v_{\sss{3}}),
		\nmb_{\sss{2}}(\nmb_{\sss{1}},\nmb_{\sss{2}},v_{\sss{3}})
		)
		=
		\tilde{h}(\vmb_{\sss{1}},\vmb_{\sss{2}},v_{\sss{3}})$.
	Then, from~\eqref{eq:RestrictionFUnique2}, it follows that (for simplicity, all arguments in the equations below are omitted)
	\begin{align}
		\hspace{-0.4cm}
		& 
		\sk{\nmb_{\sss{1}}}
		\left(
			\frac{\partial \vmb_{\sss{1}}}{\partial \nmb_{\sss{1}}} 
			\frac{\partial \tilde{h}}{\partial \vmb_{\sss{1}}} 
			+
			\frac{\partial \vmb_{\sss{2}}}{\partial \nmb_{\sss{1}}}
			\frac{\partial \tilde{h}}{\partial \vmb_{\sss{2}}} 
			+
			\frac{\partial v_{\sss{3}}}{\partial \nmb_{\sss{1}}}
			\frac{\partial \tilde{h}}{\partial v_{\sss{3}}} 
		\right)
		+
		\cdots
		\\
		\hspace{-0.6cm}
		&
		\sk{\nmb_{\sss{2}}}
		\left(
			\frac{\partial \vmb_{\sss{1}}}{\partial \nmb_{\sss{2}}} 
			\frac{\partial \tilde{h}}{\partial \vmb_{\sss{1}}} 
			+
			\frac{\partial \vmb_{\sss{2}}}{\partial \nmb_{\sss{2}}}
			\frac{\partial \tilde{h}}{\partial \vmb_{\sss{2}}} 
			+
			\frac{\partial v_{\sss{3}}}{\partial \nmb_{\sss{2}}}
			\frac{\partial \tilde{h}}{\partial v_{\sss{3}}} 
		\right)
		=
		\zvec	
		\label{eq:partial1}
		\\
		\hspace{-0.6cm}
		\Leftrightarrow
		&
		\left(
			\sk{\nmb_{\sss{1}}}
			\frac{\partial \vmb_{\sss{1}}}{\partial \nmb_{\sss{1}}} 
			+
			\sk{\nmb_{\sss{2}}}
			\frac{\partial \vmb_{\sss{1}}}{\partial \nmb_{\sss{2}}} 
		\right)
		\frac{\partial \tilde{h}}{\partial \vmb_{\sss{1}}} 
		+ \cdots
		\\
		\hspace{-0.6cm}
		& 
		\left(
			\sk{\nmb_{\sss{1}}}
			\frac{\partial \vmb_{\sss{2}}}{\partial \nmb_{\sss{1}}} 
			+
			\sk{\nmb_{\sss{2}}}
			\frac{\partial \vmb_{\sss{2}}}{\partial \nmb_{\sss{2}}} 
		\right)
		\frac{\partial \tilde{h}}{\partial \vmb_{\sss{2}}} 
		+\cdots
		\\
		\hspace{-0.6cm}
		& 
		\left(
			\sk{\nmb_{\sss{1}}}
			\nmb_{\sss{2}} 
			+
			\sk{\nmb_{\sss{2}}}
			\nmb_{\sss{1}}
		\right)
		\frac{\partial \tilde{h}}{\partial v_{\sss{3}}} 
		=
		\zvec
		\label{eq:partial2}	
		\\	
		\hspace{-0.6cm}
		\Leftrightarrow
		& \sk{\vmb_{\sss{1}}}
		\frac{\partial \tilde{h}}{\partial \vmb_{\sss{1}}} 
		+
		\sk{\vmb_{\sss{2}}}
		\frac{\partial \tilde{h}}{\partial \vmb_{\sss{2}}} 
		=
		\zvec	
		\label{eq:partial3}	
	\end{align}
	where the chain rule has been applied to derive~\eqref{eq:partial1} from~\eqref{eq:RestrictionFUnique2}; and from~\eqref{eq:partial2} to~\eqref{eq:partial3}, the relations (recall that $\frac{\partial }{\partial \xmb} \left(\frac{\xmb}{\norm{\xmb}}\right) = \left( \Idmat - \frac{\xmb}{\norm{\xmb}} \frac{\xmb\tp}{\norm{\xmb}} \right) \frac{1}{\norm{\xmb}}$)
	\begin{align}
		\frac{\partial \vmb_{\sss{1}}}{\partial \nmb_{\sss{i}}}
		& =
		\hphantom{\pm}
		\left(
			\Idmat - \vmb_{\sss{1}} \vmb_{\sss{1}}\tp
		\right)
		\frac{1}{\norm{\nmbi[1] + \nmbi[2]}}
		,
		\, 
		\text{for}
		\, 
		i \in \{1,2\},
		\\
		\frac{\partial \vmb_{\sss{2}}}{\partial \nmb_{\sss{j}}}
		& =
		\pm
		\left(
			\Idmat - \vmb_{\sss{2}} \vmb_{\sss{2}}\tp
		\right)
		\frac{1}{\norm{\nmbi[1] - \nmbi[2]}}
		,
		\,
		\text{for}
		\, 		
		j \in \{1,2\},		
	\end{align}	
	together with the relations $
	\sk{\nmb_{\sss{1}}} \vmb_{\sss{1}}
	+
	\sk{\nmb_{\sss{2}}} \vmb_{\sss{1}}
	=
	\zvec$ and $
	\sk{\nmb_{\sss{1}}} \vmb_{\sss{2}}
	-
	\sk{\nmb_{\sss{2}}} \vmb_{\sss{2}}
	=
	\zvec
	$ have been used.
	%
	%
	It follows from Proposition~\ref{prop:UniquenessAuxiliar} and from~\eqref{eq:partial3}, that $\tilde{h}(\vmb_{\sss{1}}, \vmb_{\sss{2}}, v_{\sss{3}})$ does not depend on $\vmb_{\sss{1}}$ and $\vmb_{\sss{2}}$, and, as such, $\eta(\nmb_{\sss{1}},\nmb_{\sss{2}}) = \tilde{h}(\cdot,\cdot,\nmb_{\sss{1}}\tp\nmb_{\sss{2}}) = h(1 - \nmb_{\sss{1}}\tp\nmb_{\sss{2}})$.
\end{proof}
%
%
%
%

\section{Auxiliary Examples (Section~\ref{subsec:Preliminaries})}
\label{app:PreliminariesAuxiliaryExamples}

\begin{exa}
	In Fig.~\ref{fig:GraphsIncidenceMatrix}, three graphs with equivalent incidence matrices are presented (incidence matrices in~\eqref{eq:B1}, \eqref{eq:B2} and~\eqref{eq:B3}).
	(Two matrices $D,E \in \Rn[n\times m]$ are said to be equivalent iff there exist non-singular $P\in \Rn[n\times n]$ and $Q\in \Rn[m\times m]$ such that $D = P E Q$~\cite{zwillinger2014table}.)
	The incidence matrix in Fig.~\ref{fig:Graph1} is given by
	\begin{align}
		B_{\sss{1}} 
		= 
		\begin{bmatrix}
			\hphantom{-}1 & \hphantom{-}0 & \hphantom{-}1 \\
			-1            & \hphantom{-}1 & \hphantom{-}0 \\
			\hphantom{-}0 & -1            & -1
		\end{bmatrix}.
		\label{eq:B1}
	\end{align}
	Consider a graph, with incidence matrix $\tilde{B}_{\sss{1}} \in \Rn[N \times M]$, and a second graph, with incidence matrix $\tilde{B}_{\sss{2}} \in \Rn[\sss{N \times M}]$,  obtained from the previous after a swap of nodes. 
	Then $\tilde{B}_{\sss{1}} = \Gamma_{\sss{1}} \tilde{B}_{\sss{2}} \bar{\Idmat}$, where $\Gamma_{\sss{1}} \in \Rn[\sss{N \times N}]$ is a permutation matrix corresponding to the swap of nodes (permutation of rows) and $\bar{\Idmat}\in \Rn[\sss{M \times M}]$ is a diagonal matrix with diagonal elements $\pm 1$ (column $k \in \mathcal{M}$ needs to be multiplied by $-1$ when, after permutation of rows, $i$ becomes larger than $j$ with $(i,j) = \bar{\kappa}^{\sss{-1}}(k)$).
	The graph in Fig.~\ref{fig:Graph2} is obtained from that in Fig.~\ref{fig:Graph1} by swapping nodes~1 and~2, and, as such, its incidence matrix is given by
	\begin{align}
		B_{\sss{2}} 
		= &
		\Gamma_{\sss{1}}
		B_{\sss{1}} 
		\bar{\Idmat}
		= 
		\begin{bmatrix}
			0 & 1 & 0 \\
			1 & 0 & 0 \\
			0 & 0 & 1
		\end{bmatrix}		
		\begin{bmatrix}
			\hphantom{-}1 & \hphantom{-}0 & \hphantom{-}1 \\
			-1            & \hphantom{-}1 & \hphantom{-}0 \\
			\hphantom{-}0 & -1            & -1
		\end{bmatrix}
		\begin{bmatrix}
			-1            & 0 & 0 \\
			\hphantom{-}0 & 1 & 0 \\
			\hphantom{-}0 & 0 & 1
		\end{bmatrix}			
		\\
		= &		
		\begin{bmatrix}
			-1            & \hphantom{-}1 & \hphantom{-}0 \\		
			\hphantom{-}1 & \hphantom{-}0 & \hphantom{-}1 \\
			\hphantom{-}0 & -1            & -1
		\end{bmatrix}	
		\begin{bmatrix}	
			-1            & 0 & 0 \\
			\hphantom{-}0 & 1 & 0 \\
			\hphantom{-}0 & 0 & 1
		\end{bmatrix}	
		=
		\begin{bmatrix}
			\hphantom{-}1 & \hphantom{-}1 & \hphantom{-}0 \\		
			-1            & \hphantom{-}0 & \hphantom{-}1 \\
			\hphantom{-}0 & -1            & -1
		\end{bmatrix},
		\label{eq:B2}
	\end{align}	
	where column 1 needs to be multiplied by -1 since rows~1 and~2 have permuted and $\bar{\kappa}(1,2) = 1$. 
	Consider again a graph, with incidence matrix $\tilde{B}_{\sss{1}} \in \Rn[N \times M]$, and a second graph, with incidence matrix $\tilde{B}_{\sss{2}} \in \Rn[\sss{N \times M}]$,  obtained from the previous after a swap of edges (edges' numbers).
	Then $\tilde{B}_{\sss{1}} = \tilde{B}_{\sss{2}} \Gamma_{\sss{2}}$, where $\Gamma_{\sss{2}} \in \Rn[\sss{M \times M}]$ is a permutation matrix corresponding to the swap of edges (permutation of columns).	
	The graph in Fig.~\ref{fig:Graph3} is obtained from that in Fig.~\ref{fig:Graph2} by swapping edges~1 and~3, and, as such, its incidence matrix is given by
	\begin{align}
		\hspace{-0.5cm}
		\Scale[0.85]{
		B_{\sss{3}} 
		= 
		B_{\sss{2}} 
		\Gamma_{\sss{2}}
		=
		\begin{bmatrix}
			\hphantom{-}1 & \hphantom{-}1 & \hphantom{-}0 \\		
			-1            & \hphantom{-}0 & \hphantom{-}1 \\
			\hphantom{-}0 & -1            & -1
		\end{bmatrix}	
		\begin{bmatrix}
			0 & 0 & 1 \\
			0 & 1 & 0 \\
			1 & 0 & 0
		\end{bmatrix}			
		=		
		\begin{bmatrix}
			\hphantom{-}0 & \hphantom{-}1 & \hphantom{-}1 \\		
			\hphantom{-}1 & \hphantom{-}0 & -1 \\
			-1            & -1            & \hphantom{-}0
		\end{bmatrix}.
		}
		\label{eq:B3}
	\end{align}			
\end{exa}
%
%
\begin{figure*}
	\centering
	\subcapcentertrue	
	\subfigure[Graph with incidence matrix $B_{\sss{1}}$ in~(XI.1)]{
			\centering
			\begin{tikzpicture}[-,>=stealth',shorten >=1pt,auto,node distance=3cm,
			                    thick,main node/.style={circle,draw,font=\sffamily\Large\bfseries},scale=0.4,every node/.style={scale=0.5}]
			
			  \node[main node] (1) {$\nmbi[1]$};
			  \node[main node] (2) [above left of=1] {$\nmbi[2]$};
			  \node[main node] (3) [below left of=2] {$\nmbi[3]$};

			  \path[every node/.style={font=\sffamily\small}]
			    (1) edge node[xshift=0.7cm, yshift=0.25cm] {{\fontsize{0.05cm}{1em}$\bar{\kappa}(1,2) = 1 $}} (2)
			    (2) edge node[xshift=-1.3cm, yshift=0.25cm] {{\fontsize{0.05cm}{1em}$\bar{\kappa}(2,3) = 2 $}} (3)
			    (3) edge node[xshift= 0.0cm, yshift=-0.4cm] {{\fontsize{0.05cm}{1em}$\bar{\kappa}(1,3) = 3 $}} (1)
			    ;
			\end{tikzpicture}	
			\label{fig:Graph1}
	}
	\subfigure[Same graph as in Fig.~\ref{fig:Graph1} apart from a swap between nodes~1 and~2 (incidence matrix $B_{\sss{2}}$ in~(XI.2)).]{
			\centering
			\begin{tikzpicture}[-,>=stealth',shorten >=1pt,auto,node distance=3cm,
			                    thick,main node/.style={circle,draw,font=\sffamily\Large\bfseries},scale=0.4,every node/.style={scale=0.5}]
			
			  \node[main node] (1) {$\nmbi[2]$};
			  \node[main node] (2) [above left of=1] {$\nmbi[1]$};
			  \node[main node] (3) [below left of=2] {\color{gray}$\nmbi[3]$};

			  \path[every node/.style={font=\sffamily\small}]
			    (1) edge node[xshift=0.7cm, yshift=0.25cm] {\color{gray}{\fontsize{0.05cm}{1em}$\bar{\kappa}(1,2) = 1 $}} (2)
			    (2) edge node[xshift=-1.3cm, yshift=0.25cm] {\color{gray}{\fontsize{0.05cm}{1em}$\bar{\kappa}(1,3) = 2 $}} (3)
			    (3) edge node[xshift= 0.0cm, yshift=-0.4cm] {\color{gray}{\fontsize{0.05cm}{1em}$\bar{\kappa}(2,3) = 3 $}} (1)
			    ;
			\end{tikzpicture}	
			\label{fig:Graph2}
	}
	\subfigure[Same graph as in Fig.~\ref{fig:Graph2} apart from a swap between edges~1 and~3 (incidence matrix $B_{\sss{3}}$ in~(XI.3)).]{
			\centering
			\begin{tikzpicture}[-,>=stealth',shorten >=1pt,auto,node distance=3cm,
			                    thick,main node/.style={circle,draw,font=\sffamily\Large\bfseries},scale=0.4,every node/.style={scale=0.5}]
			
			  \node[main node] (1) {\color{gray}$\nmbi[2]$};
			  \node[main node] (2) [above left of=1] {\color{gray}$\nmbi[1]$};
			  \node[main node] (3) [below left of=2] {\color{gray}$\nmbi[3]$};

			  \path[every node/.style={font=\sffamily\small}]
			    (1) edge node[xshift=0.7cm, yshift=0.25cm] {{\fontsize{0.05cm}{1em}$\bar{\kappa}(1,2) = 3 $}} (2)
			    (2) edge node[xshift=-1.3cm, yshift=0.25cm] {\color{gray}{\fontsize{0.05cm}{1em}$\bar{\kappa}(1,3) = 2 $}} (3)
			    (3) edge node[xshift= 0.0cm, yshift=-0.4cm] {{\fontsize{0.05cm}{1em}$\bar{\kappa}(2,3) = 1 $}} (1)
			    ;
			\end{tikzpicture}	
			\label{fig:Graph3}
	}			
	\caption{Three graphs with equivalent incidence matrices.}
	\label{fig:GraphsIncidenceMatrix}
\end{figure*}
\begin{exa}
	Figure~\ref{fig:IndependentCycles} displays a graph with two independent cycles, with incidence matrix and its null space in~\eqref{eq:IndependentCycles}.
	\begin{align}
		\hspace{-0.5cm}
		\Scale[0.7]{
			B_{\sss{1}} =
			\begin{bmatrix}
				 \hphantom{-}1 &  \hphantom{-}0 &  \hphantom{-}1 &  \hphantom{-}1 &  \hphantom{-}1 &  \hphantom{-}0 \\
			    -1                    &  \hphantom{-}1 &  \hphantom{-}0 &  \hphantom{-}0 &  \hphantom{-}0 &  \hphantom{-}0 \\
				 \hphantom{-}0 & -1                    &  -1                   &  \hphantom{-}0 &  \hphantom{-}0 &  \hphantom{-}0 \\
				 \hphantom{-}0 &  \hphantom{-}0 &  \hphantom{-}0 & -1                    &  \hphantom{-}0 &  \hphantom{-}1 \\
				 \hphantom{-}0 &  \hphantom{-}0 &  \hphantom{-}0 &  \hphantom{-}0 &  -1                   &  -1
			\end{bmatrix}
			, \,\,
			\mathcal{N}(B_{\sss{1}})
			=
			\text{span}
			\left\{
				\begin{bmatrix}
				 \hphantom{-}1  \\
			     \hphantom{-}1  \\
				 -1             \\
				 \hphantom{-}0  \\
				 \hphantom{-}0  \\
				 \hphantom{-}0 
				\end{bmatrix}
				,
				\begin{bmatrix}
				 \hphantom{-}0   \\
			     \hphantom{-}0   \\
				 \hphantom{-}0   \\
				 \hphantom{-}1   \\
				 -1              \\
				 \hphantom{-}1
				\end{bmatrix}						
			\right\}.	
		}
		\label{eq:IndependentCycles}
	\end{align}
\end{exa}
\begin{exa}
	Figure~\ref{fig:SharedCycles} displays a graph with one independent cycle and two cycles that share only one edge, with incidence matrix and its null space in~\eqref{eq:SharedCycles} ($B_{\sss{1}}$ defined in~\eqref{eq:IndependentCycles}).
	\begin{align}
		\hspace{-0.5cm}
		\Scale[0.7]{
			B =
			\left[
				\begin{array}{cc}
					 B_{\sss{1}} & 
					 \begin{array}{cc}
					  \hphantom{-}1 & \hphantom{-}0  \\
					  \hphantom{-}0 & \hphantom{-}0  \\
					  \hphantom{-}0 & \hphantom{-}1  \\
					  \hphantom{-}0 & \hphantom{-}0  \\
					  \hphantom{-}0 & \hphantom{-}0  \\
					 \end{array} \\
					 \zvec & 
					 \begin{array}{cc}
					  -1 & -1
					 \end{array}
				\end{array}
			\right]
			, \,\,
			\mathcal{N}(B)
			=	
	 		\text{span}
	 		\left\{
	 			\begin{bmatrix}
	  			 \hphantom{-}1   \\
	  		     \hphantom{-}1   \\
	  			 -1              \\
	  			 \hphantom{-}0   \\
	  			 \hphantom{-}0   \\
	  			 \hphantom{-}0   \\
	   			 \hphantom{-}0   \\
	   			 \hphantom{-}0  
	 			\end{bmatrix}
	 			,
	 			\begin{bmatrix}
	  			 \hphantom{-}0   \\
	  		     \hphantom{-}0   \\
	  			 \hphantom{-}1   \\
	  			 \hphantom{-}0   \\
	  			 \hphantom{-}0   \\
	  			 \hphantom{-}0   \\
	   		     -1              \\
	   			 \hphantom{-}1  
	 			\end{bmatrix}		
	  			,
	  			\begin{bmatrix}
	  			 \hphantom{-}0   \\
	  		     \hphantom{-}0   \\
	  			 \hphantom{-}0   \\
	  			 \hphantom{-}1   \\
	  			 -1              \\
	  			 \hphantom{-}1   \\
	   		     \hphantom{-}0   \\
	   			 \hphantom{-}0  			  
	  			\end{bmatrix}								
	 		\right\}.	
 		}	
		\label{eq:SharedCycles}
	\end{align}
\end{exa}

\section{Proofs of Propositions~\ref{prop:BIndependentCycles}, \ref{prop:BAlmostIndependentCycles}}
\label{app:Proofs}

\begin{proof}[of Proposition~\ref{prop:BIndependentCycles}]
	\normalfont
	Without loss of generality, consider a graph with only one cycle composed of $M_c$ edges. Then, its associated incidence matrix $B$ can be divided in two parts, i.e., $B \triangleq [B_{\sss{\bar{c}}} \, B_{\sss{c}}]$: $B_{\sss{c}}$ corresponds to all edges that are part of the cycle; and $B_{\sss{\bar{c}}}$ corresponds to all other edges. $B_{\sss{c}}$ can be further partitioned in two parts, i.e., $B_{\sss{c}} \triangleq [B_{\sss{c}}^{\sss{1}} \, B_{\sss{c}}^{\sss{2}}]$, where $B_{\sss{c}}^{\sss{2}}$ is a single column matrix corresponding to a single edge of the cycle and such that $[ B_{\sss{\bar{c}}} \, B_{\sss{c}}^{\sss{1}}]$ corresponds to a tree.
	Since $[ B_{\sss{\bar{c}}} \, B_{\sss{c}}^{\sss{1}}]$ is a tree, it must be positive definite (see Proposition~\ref{prop:Bdefinitepositive}); thus, it follows that the null space of $B$ has dimension 1, i.e., $| \mathcal{N}(B) | = 1$, and there exists a unique (up to a scalar multiplication) non-zero vector $\emb \in \Rn[M]$ such that $B \emb = \zvec$. Without loss of generality, assume that
	\begin{align}
		\hspace{-0.2cm}
		B_{\sss{c}} 
		\triangleq 
		\setlength{\dashlinegap}{2pt}
		\left[
			\begin{array}{c:c}
				B_{\sss{c}}^{\sss{1}} & B_{\sss{c}}^{\sss{2}} \\
			\end{array}
		\right]
		=
		\Scale[0.75]{
		\setlength{\dashlinegap}{2pt}
		\left[
			\begin{array}{cccc:c}
				\hphantom{-}1      &  \hphantom{-}0     & \cdots & \hphantom{-}0      & \hphantom{-}1     \\
				-1                 &  \hphantom{-}1     & \cdots & \hphantom{-}0      & \hphantom{-}0      \\
				\hphantom{-}0      & -1                 & \cdots & \hphantom{-}0      & \hphantom{-}0      \\
				\hphantom{-}\vdots & \hphantom{-}\vdots & \ddots & \hphantom{-}\vdots & \hphantom{-}\vdots \\
				\hphantom{-}0      & \hphantom{-}0      & \cdots & \hphantom{-}1      & \hphantom{-}0    \\
				\hphantom{-}0      & \hphantom{-}0      & \cdots & -1                 & -1    \\
				\hphantom{-}\zvec  & \hphantom{-}\zvec  & \zvec  & \hphantom{-}\zvec  & \hphantom{-}\zvec
			\end{array}
		\right],	
		}
		\label{eq:Bcycle}					
	\end{align}  		
	which means, $B_{\sss{c}}^{\sss{2}} = \sum_{\sss{i = 1}}^{\sss{ M_c - 1}} (B_{\sss{c}}^{\sss{1}})_{\sss{i}}$. Consequently, the non-zero vector $\emb = [\zvec_{\sss{M - M_c}}\tp \, \onesvec_{\sss{ M_c - 1}}\tp \, -1]\tp$ belongs to the null-space of $B$, since $[ B_{\sss{\bar{c}}} \, B_{\sss{c}}^{\sss{1}} \, B_{\sss{c}}^{\sss{2}}] \emb = \sum_{\sss{i = 1}}^{\sss{ M_c - 1}} (B_{\sss{c}}^{\sss{1}})_{\sss{i}} - B_{\sss{c}}^{\sss{2}} = \zvec$. 
	For $B \otimes  \Idmat_{n}$, its null-space is spanned by the columns of $
						\emb \otimes \Idmat_{n}
					 =
					 [
						 	\zvec \,  \Idmat_{n} \, \cdots \,  \Idmat_{n} \, -\Idmat_{n}
					 ]\tp$.
	In general, the incidence matrix will be of the form $B = \Gamma_{\sss{1}} [B_{\sss{\bar{c}}} \, B_{\sss{c}}^{\sss{1}} \, B_{\sss{c}}^{\sss{2}}] \bar{\Idmat} \, \Gamma_{\sss{2}}$, with $\Gamma_{\sss{1}} \in \Rn[N \times N]$ and $\Gamma_{\sss{2}} \in \Rn[M \times M]$ as permutation matrices and  $\bar{\Idmat}\in \Rn[\sss{M \times M}]$ as a diagonal matrix with diagonal elements $\pm 1$; i.e., the incidence matrix $B$ will be equivalent to $[B_{\sss{\bar{c}}} \, B_{\sss{c}}^{\sss{1}} \, B_{\sss{c}}^{\sss{2}}] $  apart from a reordering of nodes' and edges' numbering.
	In that case, the null space of B will be spanned by $\Gamma_{\sss{2}}\tp \bar{\Idmat}  [ \zvec_{\sss{M - M_c}}\tp \, \onesvec_{\sss{ M_c - 1}}\tp \, -1]\tp$, or, alternatively,  $\mathcal{N}(B) = \{ \emb \in \Rn[M]: e_{\sss{k}} = \pm e_{\sss{l}}, \forall k,l \in C  \}$, where $C$ is the set of the edges that form the cycle.
	For $m$ independent cycles, $\{C_{\sss{i}}\}_{\sss{i=\{1,\cdots,m\}}}$, i.e., cycles that do not share edges, $\mathcal{N}(B) = \{ \emb \in \Rn[M]: e_{\sss{k}} = \pm e_{\sss{l}}, \forall k,l \in C_i, i = \{ 1, \cdots, m\}   \}$.
	Similar conclusions extend to $\mathcal{N}(B \otimes \Idmat)$.
\end{proof}

\begin{proof}[of Proposition~\ref{prop:BAlmostIndependentCycles}]
	\normalfont
	Without loss of generality, consider a graph with only one pair of cycles that share only one edge. 
	Similarly to the proof of Proposition~\ref{prop:BIndependentCycles}, we partition the incidence matrix as $B \triangleq [B_{\sss{\bar{c}}} \, B_{\sss{c}}]$ and through a similar argument we conclude that $|\mathcal{N}(B)| = 2$. In this scenario, $B_{\sss{c}}$ is decomposed in three parts. Without loss of generality, assume that
	\begin{align}
		&
		B_{\sss{c}} 
		\triangleq 
		\setlength{\dashlinegap}{2pt}
		\left[
			\begin{array}{c:c:c}
				B_{\sss{c}}^{\sss{1}} & B_{\sss{c}}^{\sss{2}} & B_{\sss{\text{adj}}}\\
			\end{array}
		\right]
		\\
		&
		=
		\Scale[0.75]{
		\setlength{\dashlinegap}{2pt}
		\left[
			\begin{array}{ccccc:cccc:c}
		\hphantom{-}1       & \hphantom{-}0     &\cdots & \hphantom{-}0      & \hphantom{-}0     & \hphantom{-}1     &  \hphantom{-}0      & \cdots &  \hphantom{-}0      & \hphantom{-}1       \\
		-1                  & \hphantom{-}1     &\cdots & \hphantom{-}0      & \hphantom{-}0     & \hphantom{-}0     &  \hphantom{-}0      & \cdots &  \hphantom{-}0      & \hphantom{-}0       \\
		\hphantom{-}\vdots  & -1                &\ddots & \hphantom{-}\vdots & \hphantom{-}0     & \hphantom{-}0     &  \hphantom{-}\vdots & \ddots &  \hphantom{-}\vdots & \hphantom{-}\vdots  \\
		\hphantom{-} 0      & \hphantom{-}0     &\cdots & \hphantom{-} 1     & \hphantom{-}0     & \hphantom{-}0     &  \hphantom{-}0      & \cdots &  \hphantom{-}0      &  \hphantom{-} 0     \\
		\hphantom{-} 0      & \hphantom{-}0     &\zvec  & -1                 & \hphantom{-}1     & \hphantom{-}0     &  \hphantom{-}0      & \zvec  &  \hphantom{-}0      &  \hphantom{-}0      \\
		\hphantom{-} 0      & \hphantom{-}0     &\zvec  & \hphantom{-} 0     & \hphantom{-}0     & -1                &  \hphantom{-}1      & \cdots &  \hphantom{-}0      &  \hphantom{-}0      \\
		\hphantom{-} 0      & \hphantom{-}0     &\cdots & \hphantom{-}0      & \hphantom{-}\vdots& \hphantom{-}0     &  -1                 & \ddots &  \hphantom{-}\vdots &  \hphantom{-}\vdots \\	 
		\hphantom{-} 0      & \hphantom{-}0     &\cdots & \hphantom{-}0      & \hphantom{-}0     & \hphantom{-}0     &  \hphantom{-}0      & \ddots &  \hphantom{-}1      &  \hphantom{-} 0     \\			
		\hphantom{-} 0      & \hphantom{-}0     &\cdots & \hphantom{-}0      & -1                & \hphantom{-}\vdots&  \hphantom{-}\vdots & \ddots &  -1                 &  -1                 \\	 			
		\hphantom{-} \zvec  & \hphantom{-}\zvec &\cdots & \hphantom{-}\zvec  & \hphantom{-}0     & \hphantom{-}0     &  \hphantom{-}\zvec  & \cdots &  \hphantom{-}\zvec  & \hphantom{-}\zvec
			\end{array}
		\right],	
		}
		\label{eq:Bcycle2}					
	\end{align}
	where $ [ B_{\sss{c}}^{\sss{1}} \, B_{\sss{\text{adj}}} ]$ corresponds to one cycle (of dimension $M_{c_{\sss{1}}} \ge 3$), $[ B_{\sss{c}}^{\sss{2}} \, B_{\sss{\text{adj}}} ]$ corresponds to the second cycle (of dimension $M_{c_{\sss{2}}} \ge 3$), and $B_{\sss{\text{adj}}}$ corresponds to the shared edge. As such, $B_{\sss{\text{adj}}} = \sum_{\sss{i = 1}}^{\sss{M_{c_{\sss{1}}}-1}} (B_{\sss{c}}^{\sss{1}})_{\sss{i}} = \sum_{\sss{i = 1}}^{\sss{M_{c_{\sss{2}}}-1}} (B_{\sss{c}}^{\sss{2}})_{\sss{i}} $, which means that $\mathcal{N}(B)$ is given by
	\begin{align}
		\text{span}
		\left\{
			\begin{bmatrix}
				\zvec\\
				\onesvec_{\sss{M_{c_{\sss{1}}}-1}} \\
				\zvec_{\sss{M_{c_{\sss{2}}}-1}} \\
				-1
			\end{bmatrix}
			,
			\begin{bmatrix}
				\zvec\\
				\zvec_{\sss{M_{c_{\sss{1}}}-1}} \\
				\onesvec_{\sss{M_{c_{\sss{2}}}-1}} \\
				-1
			\end{bmatrix}				
		\right\},
		\label{eq:2CyclesNullSpace2}
	\end{align}
	while $\mathcal{N}(B \otimes \Idmat_n)$ is spanned by the columns of 
	\begin{align}
			\begin{bmatrix}
				\zvec\\
				\onesvec_{\sss{M_{\sss{c_{\sss{1}}}}-1}} \otimes \Idmat_n \\
				\zvec_{\sss{M_{\sss{c_{\sss{2}}}}-1}} \otimes \Idmat_n \\
				-\Idmat_n
			\end{bmatrix}
			,
			\begin{bmatrix}
				\zvec\\
				\zvec_{\sss{M_{\sss{c_{\sss{1}}}}-1}}\otimes \Idmat_n \\
				\onesvec_{\sss{M_{\sss{c_{\sss{2}}}}-1}} \otimes \Idmat_n \\
				-\Idmat_n
			\end{bmatrix}				
		.
		\label{eq:2CyclesNullSpace}
	\end{align}	
	The null space in~\eqref{eq:2CyclesNullSpace2} can be equivalently written as $[\zvec\tp \, \alpha\onesvec_{\sss{M_{\sss{c_{\sss{1}}}}-1}}\tp \, \beta\onesvec_{\sss{M_{\sss{c_{\sss{2}}}}-1}}\tp \, -(\alpha + \beta) ]\tp$ for any $\alpha, \beta \in \Rn[]$.
	In general, the incidence matrix will be of the form $B = \Gamma_{\sss{1}} [ B_{\sss{\bar{c}}} \, B_{\sss{c}}^{\sss{1}} \, B_{\sss{c}}^{\sss{2}} \, B_{\sss{\text{adj}}} ] \bar{\Idmat} \, \Gamma_{\sss{2}}$, with $\Gamma_{\sss{1}} \in \Rn[N \times N]$ and $\Gamma_{\sss{2}} \in \Rn[M \times M]$ as permutation matrices and  $\bar{\Idmat}\in \Rn[\sss{M \times M}]$ as a diagonal matrix with diagonal elements $\pm 1$; i.e., the incidence matrix $B$ will be equivalent to $[ B_{\sss{\bar{c}}} \, B_{\sss{c}}^{\sss{1}} \, B_{\sss{c}}^{\sss{2}} \, B_{\sss{\text{adj}}} ]$  apart from a reordering of nodes' and edges' numbering.
	With that in mind, the null space in~\eqref{eq:2CyclesNullSpace2} can also be written as 
	$
	\mathcal{N}(B) = 
	\{ 
		\emb \in \Rn[M]:
		e_{\sss{k}} =  \pm e_{\sss{l}}, 
		\forall k,l \in 
		C_{}^{\sss{1}}
		\backslash 
		\{C_{}^{\sss{1}} \cap C_{}^{\sss{2}}\},
		e_{\sss{p}} =  \pm e_{\sss{q}}, 
		\forall p,q \in 
		C_{}^{\sss{2}}
		\backslash 
		\{C_{}^{\sss{1}} \cap C_{}^{\sss{2}}\},
	\}$		 
	, with $C_{}^{\sss{1}} = \{ M-M_{\sss{c_{\sss{1}}}}-M_{\sss{c_{\sss{2}}}} + 1, \ldots, M - M_{\sss{c_{\sss{2}}}} -1 \} \cup \{M\}$ and $C_{}^{\sss{2}} = \{ M - M_{\sss{c_{\sss{2}}}}, \ldots, M - 1 \} \cup \{M\}$ (i.e, $C_{}^{\sss{1}}$ corresponds to the set of edges of one cycle, and $C_{}^{\sss{2}}$ corresponds to the set of edges of the other cycle, where the $M^{\sss{th}}$ edge is the shared edge).
	Including independent cycles or other pairs of cycles that share only one edge does not affect the previous conclusions regarding the null space, and, therefore, the conclusions of the Proposition follow.
	%
\end{proof}	
		
		\section{Instability}
		\label{sec:Instability}
		Under Theorem's~\ref{thm:FullDomain} conditions, a configuration where two (or more) neighboring unit vectors are diametrically opposed, with all the other neighboring unit vectors synchronized, is an equilibrium solution.
In fact, consider a constant unit vector $\nmb^{\sss{\star}}$, $\bm{\omega} = 0$ and $\nmbi[i] = \pm \nmb^{\sss{\star}}$ for all $i \in \mathcal{N}$. Under these circumstances, $\Tmbi{i} = \zvec$ for all $i \in \mathcal{N}$ (see~\eqref{eq:DistributedControlLawDynamics}), implying that all those configurations correspond to equilibrium configurations.
Multiple equilibria exist because the control laws~\eqref{eq:DistributedControlLawDynamics} are continuous and the states $\nmbi[i]$ (for all $i \in \mathcal{N}$) evolve in a non-contractible set (see~\cite{liberzon2003switching}).
Notice, however, that the equilibrium $\bm{\omega} = 0$ and $\nmbi[i] = \nmb^{\sss{\star}}$ for all $i \in \mathcal{N}$ is guaranteed to be (locally) asymptotically stable (see Theorem~\ref{thm:LocalStability}).
For all other equilibria, their stability, or lack thereof, is a subject of current research.

In this section, we consider simpler dynamical agents and study the stability of equilibria where two (or more) neighboring unit vectors are diametrically opposed. 
Consider then agents $\nmbi[i]$, for $i \in \mathcal{N}$, with kinematics~\eqref{eq:RotationMatrixKinematics}, and whose angular velocities $\Omi{i}$ are assumed to be controllable (intuitively, this corresponds to controlling single integrators rather than double integrators).
Consider also the Lyapunov function $V(\nmb) = \sum{}_{\sss{k =1}}^{\sss{M}} d_{\sss{k}}(\nmbt{k},\nmbh{k})$, whose time derivative yields $\dot{V} =  \bm{\omega}\tp \Rmat (B \otimes \Idmat) \emb \sum{}_{\sss{k =1}}^{\sss{M}} d_{\sss{k}}(\nmbt{k},\nmbh{k})$.
This motivates the choice of the control law $ [\Omi{1}\tp \, \ldots \, \Omi{N}\tp ]\tp= \bm{\omega} = - \Rmat\tp (B \otimes \Idmat) \emb$, since it implies that $\dot{V} = - \norm{(B \otimes \Idmat) \emb}^2 \le - \lambda_{\sss{\min}}(B \tp B) \norm{\emb}^2$.
Since $\dot{V}(t)$ is uniformly continuous, Barbalat's Lemma suffices to conclude that $\emb$ converges to $\zvec$, which, in turn, implies that there exists a (possibly time-varying) unit vector $\nmb^{\sss{\star}}(t)$ such that $\lim_{t \rightarrow \infty} (\nmbi[i](t) \pm \nmb^{\sss{\star}}(t)) = \zvec$, for all $i \in \mathcal{N}$.

Let us focus on the equilibria configuration where $\nmb^{\sss{\star}}$ is time invariant, i.e, focus on scenarios where all the unit vectors converge a constant unit vector, either $+\nmb^{\sss{\star}}$ or $-\nmb^{\sss{\star}}$ (the study of stability, or lack thereof, for time-varying $\nmb^{\sss{\star}}(t)$ is a subject of current research).
For the remaining part of this section, it is shown that the previous equilibrium configuration is unstable whenever two (or more) neighboring unit vectors are diametrically opposed.
This conclusion follows from a linearization procedure, which is described next.
For a constant $\nmb^{\star}$, consider the variables
\begin{align}
	& \bm{x}_{\sss{i}}
	\triangleq
	\OP{\nmb^{\sss{\star}}} 
	\nmbi[i],
	\label{eq:xi}	
	\\
	& s_{\sss{i}} \triangleq \sign{\nmbi[i]\tp \nmb^{\sss{\star}}},
	\label{eq:si}
\end{align}
for all $i \in \mathcal{N}$. From~\eqref{eq:xi} and~\eqref{eq:si}, it is possible to reconstruct $\nmbi[i]$, since
\begin{align}
	\nmbi[i]
	& = 
	(\nmbi[i]\tp \nmb^{\sss{\star}} ) \nmb^{\sss{\star}}
	+
	\OP{\nmb^{\sss{\star}}} \nmbi[i]
	=
	(\nmbi[i]\tp \nmb^{\sss{\star}} ) \nmb^{\sss{\star}}
	+ 
	\bm{x}_{\sss{i}}
	\label{eq:nmbi2}
	\\
	& =
	\sign{\nmbi[i]\tp \nmb^{\sss{\star}}}
	\sqrt{1 - \norm{\bm{x}_{\sss{i}}}^2} \nmb^{\sss{\star}}
	+ 
	\bm{x}_{\sss{i}}
	\label{eq:nmbi4}	
	\\
	& =
	s_{\sss{i}} \sqrt{1 - \norm{\bm{x}_{\sss{i}}}^2} \nmb^{\sss{\star}}
	+ 
	\bm{x}_{\sss{i}}
	\label{eq:nmbi3}
\end{align}
where the relation $(\nmbi[1]\nmbi[2])^2 + \norm{\OP{\nmbi[1]}\nmbi[2]}^2 = (\nmbi[1]\nmbi[2])^2 + \norm{\sk{\nmbi[1]}\nmbi[2]}^2 = 1$ has been used from~\eqref{eq:nmbi2} to~\eqref{eq:nmbi4}. 
Since $ \nmbi[i]\tp \nmb^{\sss{\star}}$ can only change sign when $\nmbi[i]$ is orthogonal to $\nmb^{\sss{\star}}$, it follows that 
\begin{align}
	\dot{s}_{\sss{i}} = 0 ,
	\label{eq:sdot}
\end{align}
if $\norm{\bm{x}_{\sss{i}}} < 1$ for all time ($\norm{\bm{x}_{\sss{i}}} = \sqrt{1 - (\nmbi[i]\tp \nmb^{\sss{\star}})^2}$).
Additionally, from~\eqref{eq:nmbi3}, it follows that 
\begin{align}
	\nmbi[i]\tp\nmbi[k]
	& =
	\Scale[0.9]{
		(
			s_{\sss{i}} \sqrt{1 - \norm{\bm{x}_{\sss{i}}}^2} \nmb^{\sss{\star}}
			+ 
			\bm{x}_{\sss{i}}
		)\tp
		(
			s_{\sss{k}} \sqrt{1 - \norm{\bm{x}_{\sss{k}}}^2} \nmb^{\sss{\star}}
			+ 
			\bm{x}_{\sss{k}}
		)
	}
	\\
	& =	
	s_{\sss{i}} s_{\sss{k}}
	\sqrt{1 - \norm{\bm{x}_{\sss{i}}}^2} \sqrt{1 - \norm{\bm{x}_{\sss{k}}}^2} 
	+
	\bm{x}_{\sss{i}}\tp \bm{x}_{\sss{k}}		
	,
	\label{eq:innerprod}	
	\end{align}
for all $i,k \in \mathcal{N}$, and where the fact that, for all $i\in\mathcal{N}$, $\bm{x}_{\sss{i}}$ is orthogonal to $\nmb^{\sss{\star}}$ has been used. 
For $\bm{x}_{\sss{i}} = 	\bm{x}_{\sss{k}} = \zvec$, $s_{\sss{i}} s_{\sss{k}} = \sign{(\nmbi[i]\tp \nmb^{\sss{\star}} ) (\nmbi[k]\tp \nmb^{\sss{\star}} )} = 1$ if $\nmbi[i]$ and $\nmbi[k]$ are synchronized and $s_{\sss{i}} s_{\sss{k}} = \sign{(\nmbi[i]\tp \nmb^{\sss{\star}} ) (\nmbi[k]\tp \nmb^{\sss{\star}} )} = -1$ if $\nmbi[i]$ and $\nmbi[k]$ are diametrically opposed.
Thus, by continuity, for \emph{small} $\norm{\bm{x}_{\sss{i}}}$ and $\norm{\bm{x}_{\sss{k}}}$, $s_{\sss{i}} s_{\sss{k}} = \sign{(\nmbi[i]\tp \nmb^{\sss{\star}} ) (\nmbi[k]\tp \nmb^{\sss{\star}} )} = 1$ if $\nmbi[i]$ and $\nmbi[k]$ are \emph{almost} synchronized (i.e., $1 - \nmbi[i]\tp \nmbi[k]$ is \emph{small}) and $s_{\sss{i}} s_{\sss{k}} = \sign{(\nmbi[i]\tp \nmb^{\sss{\star}} ) (\nmbi[k]\tp \nmb^{\sss{\star}} )} = -1$ if $\nmbi[i]$ and $\nmbi[k]$ are \emph{almost} diametrically opposed (i.e., $-1 - \nmbi[i]\tp \nmbi[k]$ is \emph{small}).

Since $\nmb^{\sss{\star}}$ is constant, it follows from~\eqref{eq:xi} that
\begin{align}
	\dot{\bm{x}}_{\sss{i}}
	=
	&
	\OP{\nmb^{\sss{\star}}} 
	\nmbiDot[i]
	\\
	=
	&
	\OP{\nmb^{\sss{\star}}} 
	(-\sk{\nmbi[i]}\Omi{i})
	\\
	=
	&
	\Scale[0.95]{
		\OP{\nmb^{\sss{\star}}} 
		\left(
			-\sk{\nmbi[i]}\sk{\nmbi[i]} 
			\sum_{k \in \mathcal{N}_{\sss{i}}} 
			f^{\sss{\prime}}_{\sss{\kappa(i,k)}} (d_{\sss{\theta}}(\nmbi[i],\nmbi[k]))
			\nmbi[k] 
		\right)
	}
	\\
	=
	&
	\OP{\nmb^{\sss{\star}}} 
	\left(
		\left(
			\Idmat - \nmbi[i] \nmbi[i]\tp
		\right) 
		\sum_{k \in \mathcal{N}_{\sss{i}}} 
		f^{\sss{\prime}}_{\sss{\kappa(i,k)}} (d_{\sss{\theta}}(\nmbi[i],\nmbi[k]))
		\nmbi[k] 
	\right)	
	\\
	=
	&
	\Scale[0.92]{
		\sum\limits_{\sss{k \in \mathcal{N}_{\sss{i}}}}
		f^{\sss{\prime}}_{\sss{\kappa(i,k)}} (d_{\sss{\theta}}(\nmbi[i],\nmbi[k]))
		\left(
			\OP{\nmb^{\sss{\star}}} \nmbi[k]
			-
			(\nmbi[i]\tp\nmbi[k])
			\OP{\nmb^{\sss{\star}}} \nmbi[i]
		\right)	
	}
	\\
	=
	&
	\sum_{k \in \mathcal{N}_{\sss{i}}}
	f^{\sss{\prime}}_{\sss{\kappa(i,k)}} (d_{\sss{\theta}}(\nmbi[i],\nmbi[k]))
	\left(
		\bm{x}_j
		-
		(\nmbi[i]\tp\nmbi[k])
		\bm{x}_{\sss{i}}
	\right)
	\\
	=
	&
	\sum_{k \in \mathcal{N}_{\sss{i}}}
	f^{\sss{\prime}}_{\sss{\kappa(i,k)}} (d_{\sss{\theta}}(\nmbi[i],\nmbi[k]))
	\left(
		\bm{x}_j
		-
		\bm{x}_{\sss{i}}
	\right)
	+
	\\
	&
	\sum_{k \in \mathcal{N}_{\sss{i}}}
	f^{\sss{\prime}}_{\sss{\kappa(i,k)}} (d_{\sss{\theta}}(\nmbi[i],\nmbi[k]))
	\left(
		1
		-
		(\nmbi[i]\tp\nmbi[k])
	\right)
	\bm{x}_{\sss{i}}
	,		
	\label{eq:xiDot}
\end{align}
where $\nmbi[i]$ and $\nmbi[k]$ can be expressed as in~\eqref{eq:nmbi3}. 
By stacking all $\bm{x}_{\sss{i}}$ as $\bm{x} \triangleq \threerow{\bm{x}_{\sss{1}}\tp}{\ldots}{\bm{x}_{\sss{N}}\tp}\tp$ and $s_{\sss{i}}$ as $\bm{s} \triangleq \threerow{s_{\sss{1}}\tp}{\ldots}{s_{\sss{N}}\tp}\tp$, \eqref{eq:sdot} and~\eqref{eq:xiDot} can be written compactly as
\begin{align}
	\dot{\bm{x}}
	= &
	-
	(B \otimes \Idmat)
	D_{\sss{1}}(\bm{x},\bm{s})
	(B \otimes \Idmat)\tp
	\bm{x}
	+
	D_{\sss{2}}(\bm{x},\bm{s})	
	\bm{x}
	\\
	=: &	 
	A(\bm{x},\bm{s}) \bm{x}
	\label{eq:xDot},
	\\
	\dot{\bm{s}}
	= &
	\zvec \quad \text{if $\norm{\bm{x}_{\sss{i}}} < 1$ for all $i \in \mathcal{N}$}
	\label{eq:sDot}
\end{align}
with
\begin{align}
	& D_{\sss{1}} (\bm{x},\bm{s})
	= 
	\left(
		\underset{\sss{k \in \mathcal{M}}}{\oplus}
		f^{\sss{\prime}}_{\sss{k}}(d_{\sss{\theta}}(\nmbt{k},\nmbh{k}))
	\right)
	\otimes 
	\Idmat
	\\
	& 
	D_{\sss{2}} (\bm{x},\bm{s})
	= 
	\left(
		\underset{\sss{i \in \mathcal{N}}}{\oplus}
		\sum\limits_{\sss{j \in \mathcal{N}_{\sss{i}}}}
		f^{\sss{\prime}}_{\sss{\kappa(i,j)}} (d_{\sss{\theta}}(\nmbi[i],\nmbi[j]))
		\left(
			1
			-
			(\nmbi[i]\tp\nmbi[k])
		\right)
	\right)	
	\otimes  \Idmat
	,
\end{align}
where both $D_{\sss{1}} (\bm{x},\bm{s})$ and $D_{\sss{2}} (\bm{x},\bm{s})$ are positive semi-definite.
As expected, the states $\bm{x}^{\star} = \zvec$ and $\bm{s}^{\star} = \bar{\onesvec} \in \{\zmb \in \Rn[N] : \zmb_{\sss{i}}^2 = 1, \forall i = \{1,\cdots,N\}\}$ are equilibria of the system with dynamics~\eqref{eq:xDot} and~\eqref{eq:sDot}.
The linearized system for~\eqref{eq:xDot} and~\eqref{eq:sDot}, and around each equilibrium, is given by 
\begin{align}
	\begin{bmatrix}
		\dot{\bm{x}} 
		\\
		\dot{\bm{s}}
	\end{bmatrix}
	= &
	\begin{bmatrix}
		A(\zvec,\bm{s}^{\star}) & \zvec
		\\
		\zvec & \zvec
	\end{bmatrix}
	\begin{bmatrix}
		\bm{x} 
		\\
		\bm{s}
	\end{bmatrix}
	\label{eq:LinearizedAMatrix}
	\\
	= &
	-
	\begin{bmatrix}
		(B \otimes \Idmat) D_{\sss{1}}(\zvec,\bm{s}^{\star}) (B \otimes \Idmat)\tp - D_{\sss{2}}(\zvec,\bm{s}^{\star}) & \zvec
		\\
		\zvec & \zvec
	\end{bmatrix}
	\begin{bmatrix}
		\bm{x} 
		\\
		\bm{s}
	\end{bmatrix}	
\end{align}
where the condition $\norm{\bm{x}_{\sss{i}}} < 1$, for all $i \in \mathcal{N}$, is dropped since the linearization is performed around $\bm{x}^{\star} = \zvec$. 
Notice that
\begin{align}
	&
	\onesvec_{\sss{3N}}\tp 
	\left(
		(B \otimes \Idmat) D_{\sss{1}}(\zvec,\bm{s}^{\star}) (B \otimes \Idmat)\tp 
		- 
		D_{\sss{2}}(\zvec,\bm{s}^{\star})
	\right)
	\onesvec_{\sss{3N}}
	=
	\\
	= & 
	- 
	\onesvec_{\sss{3N}}\tp  
	D_{\sss{2}}(\zvec,\bm{s}^{\star}) 
	\onesvec_{\sss{3N}}
	\\
	= &
 	-
 	3
	\sum\limits_{\sss{l \in \mathcal{N}}}
	\sum\limits_{\sss{k \in \mathcal{N}_{\sss{l}}}}
	(1 - s_{\sss{l}}^{\sss{\star}} s_{\sss{k}}^{\sss{\star}}) f^{\sss{\prime}}_{\sss{\kappa(l,k)}}(2)
\end{align}
is zero if and only if $\bm{s}^{\star} = \onesvec_{\sss{N}}$ and negative otherwise (this includes all scenarios where at least two neighboring unit vectors are diametrically opposed), as long as $f^{\sss{\prime}}_{\sss{k}}(2) > 0$ for all $k \in \mathcal{M}$.
As such, the matrix in~\eqref{eq:LinearizedAMatrix}, which is symmetric, is not positive semi-definite for $\bm{s}^{\star} \ne \onesvec_{\sss{N}}$ (which means it has at least one eigenvalue with negative real part -- Theorem~4.1.8 in~\cite{horn2012matrix}), which means any equilibrium where two neighboring unit vectors are diametrically opposed is unstable. 
On the other hand, the matrix in~\eqref{eq:LinearizedAMatrix} is positive semi-definite for $\bm{s}^{\star} = \onesvec_{\sss{N}}$ (this corresponds to the scenario where all unit vectors are synchronized), and the linearization cannot be used to infer stability of this equilibrium (nonetheless, this equilibrium is asymptotically stable).
		
		\section{Convergence to Constant Unit Vector}
		\label{app:ConvergenceToConstant}
		Theorem~\ref{thm:NoFullDomain} provides conditions for convergence to a synchronized network, but it does not provide any insight on whether $ \lim\limits_{t \rightarrow \infty} \nmbi[i](t)$ exists, i.e., whether the synchronized network converges to a constant unit vector or if it converges to a time-varying unit vector.
\begin{exa}
	Consider the following unit vector,
	\begin{align}
		\nmb(t) = 
		\begin{bmatrix}
			\cos(\ln (t))) &
			\sin(\ln (t))) &
			0
		\end{bmatrix}\tp
	\end{align}
	whose limit as time grows to infinity does not exist. Nonetheless, its time derivative is given by
	\begin{align}
		\dot{\nmb}(t)
		& =
		\frac{1}{t}
		\begin{bmatrix}
			-            \sin(\ln (t))) \\
			\hphantom{-} \cos(\ln (t))) \\
			0
		\end{bmatrix}
		=
		\begin{bmatrix}
			0             & -\frac{1}{t} & 0  \\
			\frac{1}{t}   & 0            & 0  \\
			0             & 0            & 0  \\
		\end{bmatrix}
		\begin{bmatrix}
			\cos(\ln (t))) \\
			\sin(\ln (t))) \\
			0
		\end{bmatrix}
		\\
		& =
		\sk{\frac{1}{t} \threerow{0}{0}{1}\tp} \nmb(t)
		\triangleq
		\sk{\bm{\omega}(t)} \nmb(t)
	\end{align}
	where $\bm{\omega}(\cdot)$ converges to $\zvec$. This means that convergence of $\bm{\omega}(\cdot)$ to $\zvec$ does not guarantee convergence of its unit vector to a constant unit vector. 
\end{exa}

\begin{prop}
	\label{prop:ComparisonTest}
	Consider two functions on $[a,b]$, $h_{\sss{1}}(x)$ and $h_{\sss{2}}(x)$, such that $0 \le h_{\sss{1}}(x)\le h_{\sss{2}}(x)$ for all $x \in [a,b]$. Then, if $\int_{a}^{b} h_{\sss{2}}(x) dx$ is convergent, so is $\int_{a}^{b} h_{\sss{1}}(x) dx$ (comparison test)~\cite{buck1965advanced}.
\end{prop}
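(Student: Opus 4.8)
The plan is to reduce the convergence of the improper integral $\int_{a}^{b} h_{\sss{1}}(x)\,dx$ to the classical fact that a bounded, monotone real function admits a limit. Since the hypotheses are those of an improper integral (either $b$ is infinite or one of the integrands fails to be bounded near $b$), I would first introduce the partial-integral functions $F_{\sss{1}}(t) := \int_{a}^{t} h_{\sss{1}}(x)\,dx$ and $F_{\sss{2}}(t) := \int_{a}^{t} h_{\sss{2}}(x)\,dx$, defined for every $t \in [a,b)$. These are well defined because on each compact subinterval $[a,t]$ the function $h_{\sss{2}}$ is integrable by assumption and $0 \le h_{\sss{1}}(x) \le h_{\sss{2}}(x)$, so $h_{\sss{1}}$ is integrable on $[a,t]$ as well.

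First I would observe that, because $h_{\sss{1}}(x) \ge 0$ and $h_{\sss{2}}(x) \ge 0$ on $[a,b)$, both $F_{\sss{1}}$ and $F_{\sss{2}}$ are nondecreasing on $[a,b)$: for $t_{\sss{1}} \le t_{\sss{2}}$ one has $F_{\sss{i}}(t_{\sss{2}}) - F_{\sss{i}}(t_{\sss{1}}) = \int_{t_{\sss{1}}}^{t_{\sss{2}}} h_{\sss{i}}(x)\,dx \ge 0$. Next, from $h_{\sss{1}}(x) \le h_{\sss{2}}(x)$ pointwise and monotonicity of the integral, $F_{\sss{1}}(t) \le F_{\sss{2}}(t)$ for all $t \in [a,b)$.

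Then I would invoke the assumed convergence of $\int_{a}^{b} h_{\sss{2}}(x)\,dx$, which by definition means that $\lim_{t \to b^{-}} F_{\sss{2}}(t) = L$ exists and is finite. Since $F_{\sss{2}}$ is nondecreasing with this finite limit, $F_{\sss{2}}(t) \le L$ for every $t$, whence $F_{\sss{1}}(t) \le L$ for every $t \in [a,b)$. Thus $F_{\sss{1}}$ is nondecreasing and bounded above by $L$, so by the monotone convergence theorem for real functions $\lim_{t \to b^{-}} F_{\sss{1}}(t)$ exists (and is at most $L$). This limit is precisely $\int_{a}^{b} h_{\sss{1}}(x)\,dx$, so the integral converges, which is the claim.

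There is essentially no hard step here: the only point deserving a word of care is the passage from ``$h_{\sss{2}}$ integrable on $[a,t]$ and $0 \le h_{\sss{1}} \le h_{\sss{2}}$'' to ``$h_{\sss{1}}$ integrable on $[a,t]$,'' which is immediate in the setting of this paper, where the integrands of interest are continuous, and is in any case a standard fact of Riemann (or Lebesgue) integration. Everything else is bookkeeping with monotone, bounded functions, so I do not anticipate any real obstacle.
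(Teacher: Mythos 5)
Your proof is correct. The paper does not actually prove this proposition---it simply cites it as a standard fact from a real-analysis textbook---so there is no in-paper argument to compare against; your argument (partial integrals $F_{1},F_{2}$ are nondecreasing since the integrands are nonnegative, $F_{1}\le F_{2}\le L$, and a bounded nondecreasing function has a limit) is exactly the textbook proof one would expect, and it handles the only delicate point, integrability of $h_{1}$ on compact subintervals, appropriately for the continuous integrands that arise in the paper.
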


\begin{prop}
	\label{prop:AbsoluteValueConvergence}
	Consider a function $h(x)$ on $[a,b]$. If $\int_{a}^{b} |h(x)| dx$ is convergent, then so is $\int_{a}^{b} h(x) dx$ \cite{buck1965advanced}.
\end{prop}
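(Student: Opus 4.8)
The final statement to prove is Proposition~\ref{prop:AbsoluteValueConvergence}: if $\int_a^b |h(x)|\,dx$ converges, then so does $\int_a^b h(x)\,dx$. Here $[a,b]$ is to be understood as an interval that may be unbounded or on which $h$ may be unbounded --- i.e., an improper integral --- since otherwise the statement is trivial for Riemann-integrable $h$. The plan is to reduce absolute convergence to ordinary convergence by splitting $h$ into its positive and negative parts and invoking the comparison test, Proposition~\ref{prop:ComparisonTest}, which has just been established.

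First I would introduce the positive and negative parts $h^{+}(x) = \max\{h(x),0\}$ and $h^{-}(x) = \max\{-h(x),0\}$, so that $h = h^{+} - h^{-}$ and $|h| = h^{+} + h^{-}$. Both $h^{+}$ and $h^{-}$ are nonnegative, and each is dominated pointwise by $|h|$: indeed $0 \le h^{+}(x) \le |h(x)|$ and $0 \le h^{-}(x) \le |h(x)|$ for all $x \in [a,b]$. Since $\int_a^b |h(x)|\,dx$ is convergent by hypothesis, Proposition~\ref{prop:ComparisonTest} applied with $h_{\sss{1}} = h^{+}$, $h_{\sss{2}} = |h|$ gives convergence of $\int_a^b h^{+}(x)\,dx$, and applied with $h_{\sss{1}} = h^{-}$, $h_{\sss{2}} = |h|$ gives convergence of $\int_a^b h^{-}(x)\,dx$.

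Finally, since $h = h^{+} - h^{-}$, and the improper integrals of $h^{+}$ and $h^{-}$ both converge, linearity of the (improper) integral --- that is, the limit defining the improper integral of a difference equals the difference of the limits when both exist --- yields that $\int_a^b h(x)\,dx = \int_a^b h^{+}(x)\,dx - \int_a^b h^{-}(x)\,dx$ converges. This completes the argument.

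The main obstacle, such as it is, is essentially bookkeeping rather than a genuine difficulty: one must make sure the notion of ``convergent'' for the improper integral is handled consistently on whichever side of $[a,b]$ is improper (or at an interior singularity), and that $h^{+}, h^{-}$ inherit local integrability from $h$ so that the truncated integrals are well defined before taking the limit. Given that Proposition~\ref{prop:ComparisonTest} is already available and phrased in exactly the form needed, the proof is short; I would keep it to the three steps above and not belabor the measure-theoretic technicalities, matching the level of the cited reference~\cite{buck1965advanced}.
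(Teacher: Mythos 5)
Your argument is correct and is the standard one: the paper itself gives no proof of this proposition, relying instead on the citation to the reference, and your decomposition $h = h^{+} - h^{-}$ with $0 \le h^{\pm} \le |h|$ followed by the comparison test (Proposition~\ref{prop:ComparisonTest}) and linearity is exactly the textbook route. Nothing is missing; the only variant worth mentioning is the equally common shortcut of applying the comparison test to $0 \le h + |h| \le 2|h|$ and then subtracting $\int |h|$, which avoids introducing the positive and negative parts but buys nothing essential.
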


\begin{prop}
	\label{prop:OmegaConvergenceExpToZero}
	Consider a unit vector $\nmb$ with angular velocity $\bm{\omega}$. If $\bm{\omega}$ converges exponentially fast to zero, then $\nmb$ converges to a constant unit vector. 
\end{prop}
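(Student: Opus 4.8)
\textbf{Proof proposal for Proposition~\ref{prop:OmegaConvergenceExpToZero}.}

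The plan is to show that $\nmb(t)$ is a Cauchy sequence as $t\to\infty$ by controlling $\int_{t_1}^{t_2}\|\dot{\nmb}(s)\|\,ds$, and the key estimate is $\|\dot{\nmb}(s)\| = \|\sk{\bm{\omega}(s)}\nmb(s)\| \le \|\bm{\omega}(s)\|$. First I would recall that, since $\nmb(\cdot)$ is a unit vector with angular velocity $\bm{\omega}(\cdot)$, it evolves according to $\dot{\nmb}(t) = \sk{\bm{\omega}(t)}\nmb(t)$, so that for any $t_1 \le t_2$ we have
\begin{align}
	\norm{\nmb(t_2) - \nmb(t_1)}
	=
	\norm{\int_{t_1}^{t_2} \sk{\bm{\omega}(s)}\nmb(s)\,ds}
	\le
	\int_{t_1}^{t_2} \norm{\sk{\bm{\omega}(s)}\nmb(s)}\,ds
	\le
	\int_{t_1}^{t_2} \norm{\bm{\omega}(s)}\,ds,
\end{align}
where the last inequality uses $\norm{\sk{\bm{\omega}}\nmb} \le \norm{\bm{\omega}}\norm{\nmb} = \norm{\bm{\omega}}$ for $\nmb \in \Stwo$.

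Next I would exploit the exponential decay hypothesis: there exist constants $c > 0$ and $\lambda > 0$ such that $\norm{\bm{\omega}(t)} \le c\, e^{-\lambda t}$ for all $t \ge 0$. Then $\int_{0}^{\infty} \norm{\bm{\omega}(s)}\,ds \le \int_{0}^{\infty} c\,e^{-\lambda s}\,ds = \frac{c}{\lambda} < \infty$, i.e. the integral $\int_0^\infty \norm{\bm{\omega}(s)}\,ds$ is convergent. Invoking Propositions~\ref{prop:ComparisonTest} and~\ref{prop:AbsoluteValueConvergence} (with $h_{\sss{1}}(s) = \norm{\sk{\bm{\omega}(s)}\nmb(s)}$ componentwise, bounded by $h_{\sss{2}}(s) = \norm{\bm{\omega}(s)}$), each component of $\int_0^\infty \dot{\nmb}(s)\,ds$ converges absolutely, hence converges.

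Finally I would conclude: given $\varepsilon > 0$, the tail bound $\int_{t_1}^{\infty} c\,e^{-\lambda s}\,ds = \frac{c}{\lambda}e^{-\lambda t_1} \to 0$ shows that for $t_1$ large enough, $\norm{\nmb(t_2) - \nmb(t_1)} < \varepsilon$ for all $t_2 \ge t_1$; thus $\nmb(\cdot)$ satisfies the Cauchy criterion as $t\to\infty$ and therefore $\nmb^{\sss{\infty}} := \lim_{t\to\infty}\nmb(t)$ exists. Since $\norm{\nmb(t)} = 1$ for all $t$, continuity of the norm gives $\norm{\nmb^{\sss{\infty}}} = 1$, so $\nmb(\cdot)$ converges to a constant unit vector. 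The only mildly delicate point is making the comparison-test machinery apply cleanly to a vector-valued integrand; this is handled by working componentwise (or, more directly, by the vector-valued triangle inequality for integrals used in the first display), so there is no real obstacle here — the exponential decay does all the heavy lifting, and the contrast with the preceding example (where $\norm{\bm{\omega}(t)} = 1/t$ is not integrable) makes clear why mere convergence of $\bm{\omega}(\cdot)$ to $\zvec$ is insufficient.
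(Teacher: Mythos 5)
Your proposal is correct and follows essentially the same route as the paper: both arguments reduce to the estimate $\|\sk{\bm{\omega}}\nmb\|\le\|\bm{\omega}\|\le C e^{-\lambda t}$ and the integrability of the exponential bound via Propositions~\ref{prop:ComparisonTest} and~\ref{prop:AbsoluteValueConvergence}. The only cosmetic difference is that you verify the Cauchy criterion for $\nmb(\cdot)$ directly, whereas the paper projects onto an arbitrary constant unit vector $\emb$ and shows each scalar function $\emb\tp\nmb(t)$ converges; the substance is identical.
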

\begin{proof}
	Consider an arbitrary constant unit vector $\emb \in \mathcal{S}^{2}$ (for example, the standard basis vectors in $\Rn[3]$). Then $\emb\tp \nmb(t)$ is a scalar function of time, and it satisfies
	$
		\emb\tp \nmb(t) 
		=
		\int_{0}^{t} \emb\tp \sk{\bm{\omega}(\tau)} \nmb(\tau) d\tau
 	$.
 	Since $|\emb\tp \sk{\bm{\omega}(\tau)} \nmb(\tau)| \le \norm{\bm{\omega}(\tau)} \le C \exp(- \lambda \tau )$ (for some positive $C$ and $\lambda$), it follows from Propositions~\ref{prop:ComparisonTest} and~\ref{prop:AbsoluteValueConvergence} that $\lim\limits_{t \rightarrow \infty }\emb\tp \nmb(t)$ exists. Finally, since $\emb$ is an arbitrary unit vector, $\lim\limits_{t \rightarrow \infty } \nmb(t)$ also exists.
\end{proof}

The goal in this section is to study whether a network that converges to a synchronized configuration actually converges to a constant unit vector (in an unknown fixed orientation frame).
An analysis for the dynamic case (control at the torque level) is not performed, but rather an analysis for the kinematic case (control at the angular velocity level). 
Consider then that $\bm{\omega}$ is the control variable chosen as $\bm{\omega} = \Rmat\tp (B \otimes \Idmat) \emb$.

Consider that $g_{\sss{k}}$ is of class $\mathcal{P} \cup \mathcal{P}^{0}$ for all $k \in \mathcal{M}$. Also, assume that, for each edge $k$, there exists an interval $I = [0\,\,b]$, for some $0 < b < \pi$, such that if $d_{\theta} (\nmbt{k},\nmbh{k}) \in I$ then
\begin{align}
	\munderbar{\alpha}_{\sss{k}} \norm{\embi{k}}^{\sss{2}}
	\le
	d_{\sss{k}}(\nmbt{k},\nmbh{k})
	\le
	\bar{\alpha}_{\sss{k}}\norm{\embi{k}}^{\sss{2}},
	\label{eq:ConditionConvergence}
\end{align}
where $\embi{k} = g_{\sss{k}} (d_{\theta} (\nmbt{k},\nmbh{k}) ) \sk{\nmbt{k}} \nmbh{k}$.

Then, if the network converges to a synchronized configuration, i.e. $\nmbi{1} = \cdots  = \nmbi{N}$, it follows that $\lim\limits_{t \rightarrow \infty} \nmbi[i](t) = \nmb^{\star}$ for some constant $\nmb^{\star}$ and for all $i \in \mathcal{N}$. 

In order to come to this conclusion, consider the Lyapunov function $V(\nmb) = \sum_{k =1}^{M} d_{\sss{k}}(\nmbt{k},\nmbh{k})$ whose time derivative is rendered $\dot{V} = - \norm{(B \otimes \Idmat) \emb} \le 0$ by the chosen control law. If the network converges to the equilibrium configuration where $\nmbi{1} = \cdots  = \nmbi{N}$, it follows that $V(t)$ converges asymptotically to zero, and consequently, for all edges $k \in \mathcal{M}$, $d_{\theta} (\nmbt{k},\nmbh{k})$ will eventually enter the set $I$ (which has non-zero measure). Say $T$ is the time instant when, for all edges $k \in \mathcal{M}$, $d_{\theta} (\nmbt{k},\nmbh{k}) \in I$ for all $t\ge T$. From the previous discussion, a finite $T$ exists, and for $t\ge T$ one can say
\begin{align}
	\sum\limits_{\sss{k =1}}^{\sss{M}} \munderbar{\alpha}_{\sss{k}} \norm{\embi{k}(\nmbt{k},\nmbh{k})}^{\sss{2}}
	\le 
	& V(\nmb)
	\le
	\sum\limits_{\sss{k =1}}^{\sss{M}} \bar{\alpha}_{\sss{k}}\norm{\embi{k}(\nmbt{k},\nmbh{k})}^{\sss{2}}
	\\
	\min\limits_{\sss{k}} (\munderbar{\alpha}_{\sss{k}} ) \norm{\emb(\nmb)}^{\sss{2}}
	\le
	& V(\nmb)
	\le
	\max\limits_{\sss{k}}(\bar{\alpha}_{\sss{k}}) \norm{\emb(\nmb)}^{\sss{2}}.	
\end{align}
Since $\dot{V} = - \norm{(B \otimes \Idmat) \emb}^{\sss{2}} \le - \lambda_{\min}(B\tp B) \norm{\emb}^{\sss{2}} $, it follows that $\norm{\emb}$ converges exponentially fast to $\zvec$ (tree graph), and so does $\bm{\omega}_{\sss{i}}$ (since $ \bm{\omega} = \Rmat\tp (B \otimes \Idmat) \emb$) for all $i \in \mathcal{N}$. As such, it follows from Proposition~\ref{prop:OmegaConvergenceExpToZero} that all $\nmbi[i]$ converge to a constant unit vector.  

We can perform a very similar analysis for the dynamic case, but we need to construct a different Lyapunov function (than that provided in Section~\ref{subsubsec:LyapunovFunction}), specifically one whose time derivative depends on both $\norm{(B \otimes \Idmat) \emb}$  and $\norm{\Omi{i}}$. We do not present such Lyapunov function in this paper. Convergence to a constant vector in arbitrary graphs is also a topic for future research.

We present next an example of a distance function that satisfies conditions~\eqref{eq:ConditionConvergence}.

\begin{exa}
	Consider the distance function presented in Example~\ref{ex:DistanceFunctionCos} with $\alpha = 1$, i.e., $d(\nmb_{\sss{1}},\nmb_{\sss{2}}) = a ( 1 - \nmb_{\sss{1}}\tp\nmb_{\sss{2}})$. For this distance function $g = a$ and $\norm{\emb(\nmbi[1],\nmbi[2])} = a \norm{\sk{\nmb_{\sss{1}}} \nmb_{\sss{2}}} $. This distance function satisfies
	\begin{align}
		\frac{1}{a} \frac{\norm{\emb(\nmbi[1],\nmbi[2])}^{\sss{2}} }{2} \le d(\nmb_{\sss{1}},\nmb_{\sss{2}}) \le \frac{1}{a} \norm{\emb(\nmbi[1],\nmbi[2])}^{\sss{2}} 
	\end{align}
	for $d_{\theta} (\nmbi[1],\nmbi[2]) \in [0,\frac{\pi}{2}]$.
\end{exa}

\section{Extra Remarks}
\begin{rem}
	\label{rem:DistanceFunction}
	In Definition~\ref{eq:gClasses}, since $f(\cdot)$ is monotonous in $(0,2)$, $\lim_{\sss{s \rightarrow 0^{\sss{+}}}} f(s)$ and $\lim_{\sss{s \rightarrow 2^{\sss{-}}}} f(s)$ exist; while $\lim_{\sss{s \rightarrow 0^{+}}} f'(s)$ and $\lim_{\sss{s \rightarrow 2^{-}}} f'(s)$ do not necessarily exist.
	However, in what follows the important limits are $\lim_{\sss{s \rightarrow 0^{+}}} f'(s) \sqrt{s}$ and $\lim_{\sss{s \rightarrow 2^{-}}} f'(s) \sqrt{2 - s}$ which exist: 
	$\lim_{\sss{s \rightarrow 0^{+}}} f'(s) \sqrt{s} = 0$ since $\limsup_{\sss{s \rightarrow 0^{+}}} f'(s) < \infty$ and $\liminf_{\sss{s \rightarrow 0^{+}}} f'(s) \ge 0 $;
	$\lim_{\sss{s \rightarrow 2^{-}}} f'(s) \sqrt{2 - s} = 0$ when $\limsup_{\sss{s \rightarrow 2^{-}}} f'(s) < \infty$;
	and $\lim_{\sss{s \rightarrow 2^{-}}} f'(s) \sqrt{2 - s} $ may or may not be $0$ when $\limsup_{\sss{s \rightarrow 2^{-}}} f'(s) = \infty$.
	Also, if $f \in \mathcal{P}^{\sss{\infty}}$ then $\lim_{\sss{s \rightarrow 2^{-}}} f'(s) \sqrt{2 - s} \ne 0$: indeed, suppose $\lim_{\sss{s \rightarrow 2^{-}}} f'(s) \sqrt{2 - s} = 0$, which implies that $\sup_{\sss{s \in (0,2)}} f'(s) \sqrt{2 - s} =: b < \infty$; 
	since
	\begin{align}
		\lim_{\sss{s \rightarrow 2^{\sss{-}}}} f(s) 
		=
		& 
		\lim_{\sss{s \rightarrow 0^{\sss{+}}}} f(s) 
		+ 
		\lim_{\sss{s \rightarrow 2^{\sss{-}}}} \int_{0}^{s} f'(x) dx 
		\\
		=
		& 
		\lim_{\sss{s \rightarrow 2^{\sss{-}}}} \int_{0}^{s} \frac{1}{\sqrt{2 - x}} (f'(x) \sqrt{2 - x}) dx 
		\\
		\le
		& 
		b \lim_{\sss{s \rightarrow 2^{\sss{-}}}} \int_{0}^{s} \frac{1}{\sqrt{2 - x}} dx
		=
		2\sqrt{2} b < \infty,
	\end{align} 
	which implies that $f \not\in \mathcal{P}^{\sss{\infty}}$.
	As such, $f \in \mathcal{P}^{\sss{\infty}} \Rightarrow f \in \mathcal{P}^{\sss{\bar{0}}}$.
	Figure~\ref{fig:ClassesOf_f_Functions} illustrates how the properties that $f(\cdot)$ satisfies affects the classes it belongs to.
\end{rem}
\begin{defn}
	\label{def:SigmaFunction}
	Given $\bm{\sigma}: \Rn[3] \mapsto \Rn[3]$, we say $\bm{\sigma} \in \Sigma$ if $\exists \sigma \in \mathcal{C}^{\sss{1}}(\Rn[{}]_{\ge 0},\Rn[{}]_{\ge 0}) : \bm{\sigma}(\xmb) = \sigma(\|\xmb\|) \xmb$, and $\sigma'^{\sss{\max}} = \sup_{\sss{\xmb \in \Rn[3]}} \|D \bm{\sigma}(\xmb)\| = \sup_{\sss{\xmb \in \Rn[3]}} \| \frac{\partial \bm{\sigma}(\xmb) }{\partial \xmb} \| < \infty$ (i.e., $\bm{\sigma}(\cdot)$ is Lipschitz, see Lemma~3.1 in~\cite{khalil2002nonlinear})
	Moreover, we denote $\sigma^{\sss{\max}} = \sup_{\sss{\xmb \in \Rn[3]}} \norm{\bm{\sigma}(\xmb)}$.
\end{defn}

\begin{rem}\label{rem:ExponentialConvergence}
	\normalfont
	Theorems~\ref{thm:NoFullDomain} and~\ref{thm:LocalStability} provide asymptotic results, namely $\lim_{\sss{t \rightarrow \infty }} \emb(\nmb(t)) = \zvec$ (or alternatively that $\lim_{\sss{t \rightarrow \infty }} \text{dist}(\nmb(t),\Omega^{\sss{\star}}) = 0$, where $\Omega^{\sss{\star}} =\{ \nmb \in (\mathcal{S}^{\sss{2}})^{\sss{N}}:  \nmb = \onesvec_{\sss{N}} \otimes \nmb^{\sss{\star}}, \forall  \nmb^{\sss{\star}} \in \mathcal{S}^{\sss{2}} \} $).
	An open question is when can it be guaranteed that those limits converge to $0$ exponentially fast.
	Similarly to~\cite{olfati2006swarms}, consider $\dot{\theta}_{\sss{1}} = f'(1  - \cos(\theta_{\sss{1}} - \theta_{\sss{2}})) \sin(\theta_{\sss{2}} - \theta_{\sss{1}})$ and $\dot{\theta}_{\sss{2}} = f'(1  - \cos(\theta_{\sss{1}} - \theta_{\sss{2}})) \sin(\theta_{\sss{2}} - \theta_{\sss{1}})$, where we omitted the time dependencies;
	it follows that, if we denote $\theta = \theta_{\sss{1}} - \theta_{\sss{2}}$, then $\dot{\theta} = - 2f'(1  - \cos(\theta)) \sin(\theta)$.
	In~\cite{olfati2006swarms}, only $f(s) = a_{\sss{12}} s \Rightarrow f'(s) = a_{\sss{12}} $ is considered, but in this manuscript we consider a wider set of functions. 
	Consider then the function $f(s) = \frac{1}{8}(\sqrt{s (2 - s)} (s -1) + \arccos(1- s))$, and notice that $f \in \mathcal{P}_{\sss{0}}$ (see Fig.~\ref{fig:gClasses}).  
	It follows that $\dot{\theta} = - |\sin(\theta)|\sin(\theta)$; therefore, if $\theta(0) \in [0,\frac{\pi}{2}]$, it follows that $\theta(t) = \text{arccot}(t + \cot(\theta(0) ))$, which does not converge exponentially fast to $0$ (since $\int_{\sss{0}}^{\sss{\infty}} \theta(t) dt = \infty $).
	Consider now the function $f(s) = s$, and notice that $f \in \mathcal{P}_{\sss{\bar{0}}}$.
	It follows that $\dot{\theta} = - 2\sin(\theta) < -\theta$, where the inequality follows for $\theta \in [0,\frac{\pi}{2}]$; therefore, if $\theta(0) \in [0,\frac{\pi}{2}]$, it follows that $\theta(t) \le \exp(-t)$, which does converge exponentially fast to $0$.
	This suggests that, in this manuscript's framework, exponential convergence cannot be guaranteed for all $f\in\bar{\mathcal{P}}$, while asymptotic stability can still be guaranteed.
\end{rem}
		
\end{document}